\DeclareTextSymbolDefault{\textquotedbl}{T1}
\providecommand{\tabularnewline}{\\}
\numberwithin{equation}{section}
\numberwithin{figure}{section}
\theoremstyle{plain}
\newtheorem{thm}{\protect\theoremname}
\theoremstyle{definition}
\newtheorem{defn}[thm]{\protect\definitionname}
\theoremstyle{remark}
\newtheorem{rem}[thm]{\protect\remarkname}
\theoremstyle{plain}
\newtheorem{prop}[thm]{\protect\propositionname}
\theoremstyle{plain}
\newtheorem{cor}[thm]{\protect\corollaryname}
\theoremstyle{plain}
\newtheorem{lem}[thm]{\protect\lemmaname}
\newtheorem{assumption}{Assumption}
\providecommand{\corollaryname}{Corollary}
\providecommand{\definitionname}{Definition}
\providecommand{\lemmaname}{Lemma}
\providecommand{\propositionname}{Proposition}
\providecommand{\remarkname}{Remark}
\providecommand{\theoremname}{Theorem}
\begin{document}
\title{Analytic theory of multicavity klystrons}
\author{Alexander Figotin}
\address{Department of Mathematics, University of California at Irvine, CA
92967, USA.}
\email{afigotin@uci.edu}
\begin{abstract}
Multicavity Klystron (MCK) is a high power microwave (HPM) vacuum
electronic device used to amplify radio-frequency (RF) signals. MCKs
have numerous applications, including radar, radio navigation, space
communication, television, radio repeaters, and charged particle accelerators.
The microwave-generating interactions in klystrons take place mostly
in coupled resonant cavities positioned periodically along the electron
beam axis. Importantly, there is no electromagnetic coupling between
cavities. The cavities are coupled only by the flow of bunched electrons
drifting from one cavity to the next. We advance here a Lagrangian
field theory theory of MCKs with the space being represented by one-dimensional
continuum. The theory integrates into it the space-charge effects
including the so-called debunching (electron-to-electron repulsion).
The corresponding Euler-Lagrange equations are ODEs with coefficients
varying periodically in the space. Utilizing the system periodicity
we develop the instrumental features of the Floquet theory including
the monodromy matrix and its Floquet multipliers. We use them to derive
closed form expressions for a number of physically significant quantities.
Those include in particular the dispersion relations and the frequency
dependent gain foundational to the RF signal amplification. We assume
that MCKs operate in voltage amplification mode associated with the
maximal gain. 
\end{abstract}

\keywords{Multicavity klystron, cascade amplifier, high power microwave generation,
RF signal amplification.}
\maketitle

\section{Introduction\label{sec:int-twtj}}

A klystron is a specialized linear-beam vacuum tube, invented in 1935
by American electrical engineers Russell and Sigurd Varian. Klystron
is used as an amplifier for high radio frequencies, from UHF up into
the microwave range. It was the first genuine microwave electronic
device to take full advantage of the principle of bunching and phasing,
\cite[7.1]{Tsim}. The original description by brothers Varians of
the klystron concept is as follows, \cite{VarVar}:
\begin{quotation}
``A dc stream of cathode rays of constant current and speed is sent
through a pair of grids between which is an oscillating electric field,
parallel to the stream and of such strength as to change the speeds
of the cathode rays by appreciable but not too large fractions of
their initial speed. After passing these grids the electrons with
increased speeds begin to overtake those with decreased speeds ahead
of them. This motion groups the electrons into bunches separated by
relatively empty spaces. At any points beyond the grids, therefore,
the cathode ray current can be resolved into the original dc plus
a nonsinusoidal ac. A considerable fraction of its power can then
be converted into power of high frequency oscillations by running
the stream through a second pair of grids between which is an ac electric
field such as to take energy away from the electrons in bunches. These
two ac fields are best obtained by making the grids form parts of
the surfaces of resonators of type described in This journal by Hansen.''
\end{quotation}
Usage of cavity resonators in the klystron was a revolutionary idea
of Hansen and the Varians, \cite[7.1]{Tsim}. In the pursuit of higher
power and efficiency the original design of Vairan klystrons evolve
significantly over years featuring today multiple cavities and multiple
electron beam, \cite[7.7]{Tsim}. The advantages of klystrons are
their high power and efficiency, potentially wide bandwidth, phase
and amplitude stability, \cite[9.1]{BenSweScha}.

The distinct features of the klystron operation are as follows, \cite[9.1]{BenSweScha}:
\begin{quotation}
''Klystrons have two distinguishing features. First, the microwave-generating
interactions in these devices take place in resonant cavities at discrete
locations along the beam. Second, the drift tube connecting the cavities
is designed so that electromagnetic wave propagation at the operating
frequency is cut off between the cavities; without electromagnetic
coupling between cavities, they are coupled only by the bunched beam,
which drifts from one cavity to the next. This latter feature of these
devices, the lack of feedback between cavities, makes them perhaps
the best-suited of HPM devices to operate as amplifiers.''
\end{quotation}
Importantly, in klystrons the electron bunching is provided by cavity
resonators (often of toroidal shape) acting as $LC$-circuit resonators.
These cavities often utilize the lowest-frequency fundamental modes.
For these modes the electric field energy is localized near the cavity
gaps exposed to the e-beam whereas the magnetic field energy is stored
in cavity toroidal tubes, \cite[7.1]{Tsim}. Cavities (resonator cavities)
would interact with e-beam effectively if they satisfy the following
conditions, \cite[2.3]{Shev}:
\begin{quotation}
``In order to be used in an electron tube, a cavity resonator must
have a region with a relatively strong high-frequency field which
is polarized along the direction of electron flow. This region should,
in the majority of cases, be so small that the electron transit time
is less than the period of change of the field. Hollow toroidal resonators
satisfy these conditions. Toroidal resonators consist of cylinders
with a very prominent \textquotedbl bulge\textquotedbl{} in the middle.''
\end{quotation}
For more information on klystrons and their operation we refer the
reader to \cite[9]{BenSweScha}, \cite{ChoWes}, \cite[7.2]{Grigo},
\cite[3]{Nusi}, \cite[10]{GerWat}, \cite[10, 11]{Gilm1}, \cite{MAEAD},
\cite[4.3]{Paol}, \cite[7]{Tsim}, \cite[16]{ValMid}.

\begin{figure}[h]
\centering{}\includegraphics[bb=0cm 4cm 34cm 17cm,clip,scale=0.4]{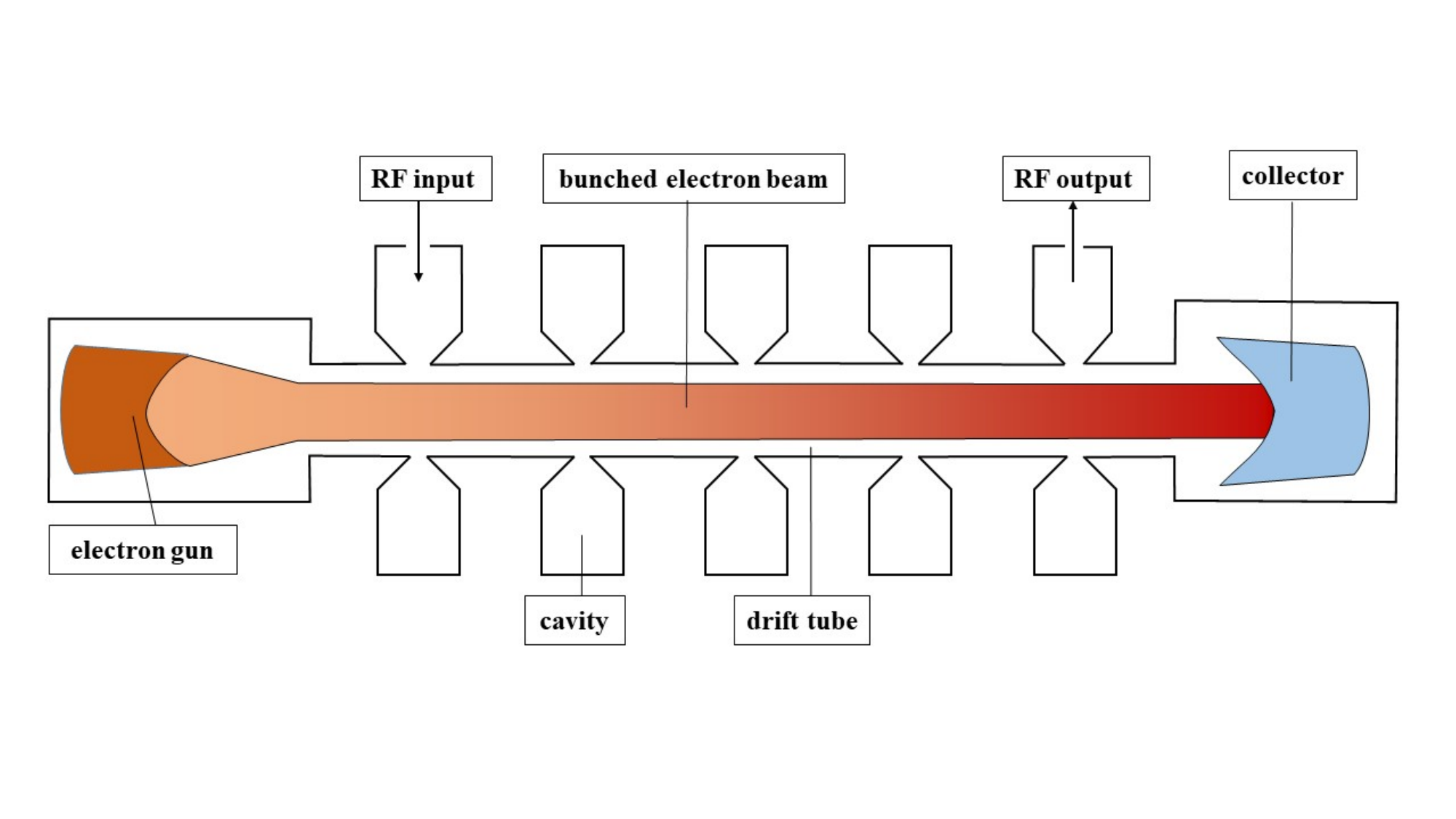}\caption{\label{fig:mck} A schematic presentation of a multicavity klystron
(also known as cascade amplifier) that exploits constructive interaction
between the pencil-like electron beam and an array of electromagnetic
cavities (often of toroidal shape). The interaction causes the electron
bunching and consequent amplification of the RF signal.}
\end{figure}

The conceptual design of a multicavity klystron (MCK) also known as
\emph{cascade amplifier}, \cite[IIb]{Werne}, is shown in Fig. \ref{fig:mck}
and more detailed description of its operation is as follows, \cite[9.3]{BenSweScha},
\cite[8]{Gilm1}, \cite[7.7]{Tsim}. The e-beam enters the gap region
of the first cavity (the \emph{buncher}) where the electron velocities
are modulated by the electric field in the gap driven by an RF signal.
The e-beam-cavity interaction through the cavity gap has the following
features: (i) the input RF voltage in the cavity gap generates the
electric field and that in turn initiates electron bunching (velocity
modulation) and RF current in the e-beam; (ii) the RF current in the
e-beam induces a current in the walls of the cavity and the induced
current acts back on the e-beam enhancing the e-beam modulation. Exiting
the gap region of the first cavity the velocity-modulated e-beam passes
through the \emph{drift region} and enters the gap region of the second
cavity. When drifting between cavities the faster electrons ``overtake''
the slower electron resulting in charge wave bunches on the e-beam.
\emph{Importantly, the drift tube separating the cavities is designed
so that there is no electromagnetic communication between the cavities
except for the bunched e-beam}. Under properly designed conditions
the e-beam charge wave interacts constructively in the gap of the
second cavity achieving an amplified electron bunching upon its exit.
This process of the e-beam charge wave amplification continues on
as the e-beam electrons pass through the drift region and the consequent
cavities. At the end of the process the e-beam enters the gap region
of the last cavity (\emph{extraction cavity, catcher}) where the power
output signal is extracted. Actual multicavity klystron is a very
complicated device with many independent parameters and with \emph{three
important modes to be considered in choosing these parameters: the
voltage amplifier, power amplifier, and bandwidth amplifier modes,}
\cite[7.7]{Tsim}\emph{.} MCKs can be broadband exceeding 10\% with
reasonably flat power output across the band, \cite{Kreu}, \cite[11.3]{Gilm1},
and their efficiency can exceed 70\%, \cite[11.1]{Gilm1}.

The subject our work here is the construction of an analytic theory
of multicavity klystrons operating in the voltage amplification mode
associated with the maximal gain. This theory features in particular
exact formulas for the MCK instability frequencies, its dispersion
relations and optimal values of the MCK parameters providing for maximal
gain.

The paper organized as follows. In Section \ref{sec:twt-mod} we concisely
review our prior work on the analytic theory of traveling wave tubes
(TWT) for its significant elements are utilized for the construction
of the analytic theory of MCK's. In Section \ref{sec:mck-mod} we
introduce the Lagrangian of the MCK system featuring a periodic array
of cavity resonators and derive the corresponding Euler-Lagrange evolution
equations. This Lagrangian has a term that integrates into it the
space-charge effects including the so-called debunching (electron-to-electron
repulsion). We show also in this section that the Euler-Lagrange equations
have the Hamiltonian structure and develop all elements of the Floquet
theory including the MCK monodromy matrix and its Floquet multipliers.
In Section \ref{sec:floqmul} we derive formulas for the Floquet multipliers
that provide a basis for the evaluation of the MCK dispersion relations.
In Section \ref{sec:instfreq} we introduce and study the MCK instability
parameter that determines the region of instability frequencies. In
Section \ref{sec:gain-freq} we derive formulas for the gain as a
function of frequency and its maximal value. In Section \ref{sec:typ}
we evaluate typical values of the MCK gain and its significant parameters.
In Section \ref{sec:disp} we derive explicit formulas for the MCK
dispersion relations. In Section \ref{sec:epd} we find the exceptional
points of degeneracy of the MCK dispersion relations and study their
properties. In Section \ref{sec:lagvar} we construct the Lagrangian
variational framework for the MCK system. In Appendices we provide
information on a number of mathematical subjects relevant to the construction
of the analytic theory of MCK's.

While quoting monographs we identify the relevant sections as follows.
Reference {[}X,Y{]} refers to Section/Chapter ``Y'' of monograph
(article) ``X'', whereas {[}X, p. Y{]} refers to page ``Y'' of
monograph (article) ``X''. For instance, reference {[}2, VI.3{]}
refers to monograph {[}2{]}, Section VI.3; reference {[}2, p. 131{]}
refers to page 131 of monograph {[}2{]}.

\section{Concise review of an analytic model of the traveling wave tube\label{sec:twt-mod}}

When constructing an analytic model of the multicavity klystron we
use some elements of an analytic model of the traveling wave tube
(TWT) introduced and studied in our monograph \cite[4, 24]{FigTWTbk}.
We concisely review here this model of TWT. According to the simplest
version of the model an ideal TWT is represented by a single-stream
e-beam interacting with single transmission line just as in the Pierce
model \cite[I]{Pier51}. The main parameter describing the single-stream
e-beam is e-beam intensity
\begin{equation}
\beta=\frac{\sigma_{\mathrm{B}}}{4\pi}R_{\mathrm{sc}}^{2}\omega_{\mathrm{p}}^{2}=\frac{e^{2}}{m}R_{\mathrm{sc}}^{2}\sigma_{\mathrm{B}}\mathring{n},\quad\omega_{\mathrm{p}}^{2}=\frac{4\pi\mathring{n}e^{2}}{m},\label{eq:T1B1betas1a}
\end{equation}
where $-e$ is electron charge with $e>0$, $m$ is the electron mass,
$\omega_{\mathrm{p}}$ is the e-beam plasma frequency, $\sigma_{\mathrm{B}}$
is the area of the cross-section of the e-beam, s $\mathring{v}>0$
is stationary velocity of electrons in the e-beam and $\mathring{n}$
is the density of the number of electrons. The constant $R_{\mathrm{sc}}$
is the\emph{ }plasma frequency reduction factor that accounts phenomenologically
for finite dimensions of the e-beam cylinder as well as geometric
features of the slow-wave structure, \cite{BraMih}, \cite[9.2]{Gilm1},
\cite[3.3.3]{Nusi}. The frequency
\begin{equation}
\omega_{\mathrm{rp}}=R_{\mathrm{sc}}\omega_{\mathrm{p}}\label{eq:omred1a}
\end{equation}
is known as reduced plasma frequency, \cite[9.2]{Gilm1}.

Assuming the Gaussian system of units the physical dimensions a complete
set of the e-beam parameters as in Tables \ref{tab:e-beam-units}
and \ref{tab:ebeam-par}.
\begin{table}[h]
\centering{}%
\begin{tabular}{|l||l||l|}
\hline 
\noalign{\vskip\doublerulesep}
Frequency & Plasma frequency & $\omega_{\mathrm{p}}=\sqrt{\frac{4\pi\mathring{n}e^{2}}{m}}$\tabularnewline[0.2cm]
\hline 
\hline 
\noalign{\vskip\doublerulesep}
Velocity & e-beam velocity & $\mathring{v}$\tabularnewline[0.2cm]
\hline 
\hline 
\noalign{\vskip\doublerulesep}
Wavenumber &  & $k_{\mathrm{q}}=\frac{\omega_{\mathrm{rp}}}{\mathring{v}}=\frac{R_{\mathrm{sc}}\omega_{\mathrm{p}}}{\mathring{v}}$\tabularnewline[0.2cm]
\hline 
\hline 
\noalign{\vskip\doublerulesep}
Length & Wavelength for $k_{\mathrm{q}}$ & $\lambda_{\mathrm{rp}}=\frac{2\pi\mathring{v}}{\omega_{\mathrm{rp}}},\:\omega_{\mathrm{rp}}=R_{\mathrm{sc}}\omega_{\mathrm{p}}$\tabularnewline[0.2cm]
\hline 
\noalign{\vskip\doublerulesep}
Time & Wave time period & $\mathring{\tau}=\frac{2\pi}{\omega_{\mathrm{p}}}$\tabularnewline[0.2cm]
\hline 
\end{tabular}\vspace{0.3cm}
\caption{\label{tab:e-beam-units} Natural units relevant to the e-beam.}
\end{table}
\begin{table}
\centering{}%
\begin{tabular}{|r||r||r|}
\hline 
\noalign{\vskip\doublerulesep}
$i$ & current & $\frac{\left[\text{charge}\right]}{\left[\text{time}\right]}$\tabularnewline[0.2cm]
\hline 
\noalign{\vskip\doublerulesep}
$q$ & charge & $\left[\text{charge}\right]$\tabularnewline[0.2cm]
\hline 
\noalign{\vskip\doublerulesep}
$\mathring{n}$ & number of electrons p/u of volume & $\frac{\left[\text{1}\right]}{\left[\text{length}\right]^{3}}$\tabularnewline[0.2cm]
\hline 
\noalign{\vskip\doublerulesep}
$\lambda_{\mathrm{rp}}=\frac{2\pi\mathring{v}}{\omega_{\mathrm{rp}}},\:\omega_{\mathrm{rp}}=R_{\mathrm{sc}}\omega_{\mathrm{p}}$ & the electron plasma wavelength & $\left[\text{length}\right]$\tabularnewline[0.2cm]
\hline 
\noalign{\vskip\doublerulesep}
$g_{\mathrm{B}}=\frac{\sigma_{\mathrm{B}}}{4\lambda_{\mathrm{rp}}}$ & the e-beam spatial scale & $\left[\text{length}\right]$\tabularnewline[0.2cm]
\hline 
\noalign{\vskip\doublerulesep}
$\beta=\frac{\sigma_{\mathrm{B}}}{4\pi}R_{\mathrm{sc}}^{2}\omega_{\mathrm{p}}^{2}=\frac{e^{2}}{m}R_{\mathrm{sc}}^{2}\sigma_{\mathrm{B}}\mathring{n}$ & e-beam intensity & $\frac{\left[\text{length}\right]^{2}}{\left[\text{time}\right]^{2}}$\tabularnewline[0.2cm]
\hline 
\noalign{\vskip\doublerulesep}
$\beta^{\prime}=\frac{\beta}{\mathring{v}^{2}}=\frac{\pi\sigma_{\mathrm{B}}}{\lambda_{\mathrm{rp}}^{2}}=\frac{4\pi g_{\mathrm{B}}}{\lambda_{\mathrm{rp}}}$ & dimensionless e-beam intensity & $\left[\text{dim-less}\right]$\tabularnewline[0.2cm]
\hline 
\end{tabular}\vspace{0.3cm}
\caption{\label{tab:ebeam-par}Physical dimensions of the e-beam parameters.
Abbreviations: dimensionless \textendash{} dim-less, p/u \textendash{}
per unit.}
\end{table}

We would like to point to an important spatial scale related to the
e-beam, namely
\begin{equation}
\lambda_{\mathrm{rp}}=\frac{2\pi\mathring{v}}{R_{\mathrm{sc}}\omega_{\mathrm{p}}},\quad\omega_{\mathrm{rp}}=R_{\mathrm{sc}}\omega_{\mathrm{p}},\label{eq:aBvom1a}
\end{equation}
which is the distance passed by an electron for the time period $\frac{2\pi}{\omega_{\mathrm{rp}}}$
associated with the plasma oscillations at the reduced plasma frequency
$\omega_{\mathrm{rp}}$. This scale is well known in the theory of
klystrons and is referred to as \emph{the electron plasma wavelength},
\cite[9.2]{Gilm1}. Another spatial scale related to the e-beam that
arises in our analysis later on is
\begin{equation}
g_{\mathrm{B}}=\frac{\sigma_{\mathrm{B}}}{4\lambda_{\mathrm{rp}}},\label{eq:Bvom1ba}
\end{equation}
and we will refer to it as\emph{ e-beam spatial scale}. Using these
spatial scales we obtain the following representation for the dimensionless
form $\beta^{\prime}$ of the e-beam intensity
\begin{equation}
\beta^{\prime}=\frac{\beta}{\mathring{v}^{2}}=\frac{\pi\sigma_{\mathrm{B}}}{\lambda_{\mathrm{rp}}^{2}}=\frac{4\pi g_{\mathrm{B}}}{\lambda_{\mathrm{rp}}}.\label{eq:Bvom1b}
\end{equation}

As for the single transmission line, its shunt capacitance per unit
of length is a real number $C>0$ and its inductance per unit of length
is another real number $L>0$. The coupling constant $0<b\leq1$ is
a number also, see \cite[3]{FigTWTbk} for more details. The TL single
characteristic velocity $w$ and the single \emph{TL principal coefficient}
$\theta$ defined by
\begin{equation}
w=\frac{1}{\sqrt{CL}},\quad\theta=\frac{b^{2}}{C}.\label{eq:T1B1betas1b}
\end{equation}
Following to \cite[3]{FigTWTbk} we assume that
\begin{equation}
0<\mathring{v}<w.\label{eq:T1B1betas1c}
\end{equation}

\subsection{TWT system Lagrangian and evolution equations\label{subsec:two-lag-ev}}

Following to developments in \cite{FigTWTbk} we introduce the \emph{TWT
principal parameter} $\bar{\gamma}=\theta\beta$. This parameter in
view of equations (\ref{eq:T1B1betas1a}) and (\ref{eq:T1B1betas1b})
can be represented as follows
\begin{equation}
\gamma=\theta\beta=\frac{b^{2}}{C}\frac{\sigma_{\mathrm{B}}}{4\pi}R_{\mathrm{sc}}^{2}\omega_{\mathrm{p}}^{2}=\frac{b^{2}}{C}\frac{e^{2}}{m}R_{\mathrm{sc}}^{2}\sigma_{\mathrm{B}}\mathring{n},\quad\theta=\frac{b^{2}}{C},\quad\beta=\frac{e^{2}}{m}R_{\mathrm{sc}}^{2}\sigma_{\mathrm{B}}\mathring{n}.\label{eq:DsT1B1se1a}
\end{equation}
The TWT-system Lagrangian $\mathcal{L}{}_{\mathrm{TB}}$ in the simplest
case of a single transmission line and one stream e-beam is of the
form, \cite[4, 24]{FigTWTbk}:
\begin{gather}
\mathcal{L}\left(\left\{ Q\right\} ,\left\{ q\right\} \right)=\mathcal{L}_{\mathrm{Tb}}\left(\left\{ Q\right\} ,\left\{ q\right\} \right)+\mathcal{L}_{\mathrm{B}}\left(\left\{ q\right\} \right),\label{eq:T1B1betase1b}\\
\mathcal{L}_{\mathrm{Tb}}=\frac{L}{2}\left(\partial_{t}Q\right)^{2}-\frac{1}{2C}\left(\partial_{z}Q+b\partial_{z}q\right)^{2},\;\mathcal{L}_{\mathrm{B}}=\frac{1}{2\beta}\left(\partial_{t}q+\mathring{v}\partial_{z}q\right)^{2}-\frac{2\pi}{\sigma_{\mathrm{B}}}q^{2},\nonumber 
\end{gather}
where
\begin{gather}
\left\{ Q\right\} =Q,\partial_{z}Q,\partial_{t}Q,\quad Q=Q\left(z,t\right);\quad\left\{ q\right\} =q,\partial_{z}q,\partial_{t}q,\quad q=q\left(z,t\right),\label{eq:T1B1betase1c}
\end{gather}
and $q\left(z,t\right)$ and $Q\left(z,t\right)$\emph{ are charges}
associated with the e-beam and the TL defined as time integrals of
the corresponding e-beam currents $i(z,t)$ and TL current $I(z,t)$,
that is
\begin{equation}
q(z,t)=\int^{t}i(z,t^{\prime})\,\mathrm{d}t^{\prime},\quad.Q(z,t)=\int^{t}I(z,t^{\prime})\,\mathrm{d}t^{\prime}.\label{eq:MTLQs1a}
\end{equation}
Note that term $-\frac{2\pi}{\sigma_{\mathrm{B}}}q^{2}$ in the Lagrangian
$\mathcal{L}_{\mathrm{B}}$ defined in equations (\ref{eq:T1B1betase1b})
represents space-charge effects including the so-called debunching
(electron-to-electron repulsion). The corresponding Euler-Lagrange
equations is the following system of second-order differential equations
\begin{gather}
L\partial_{t}^{2}Q-\partial_{z}\left[C^{-1}\left(\partial_{z}Q+b\partial_{z}q\right)\right]=0,\label{eq:T1B1betasr2a}\\
\frac{1}{\beta}\left(\partial_{t}+\mathring{v}\partial_{z}\right)^{2}q+\frac{4\pi}{\sigma_{\mathrm{B}}}q-b\partial_{z}\left[C^{-1}\left(\partial_{z}Q+b\partial_{z}q\right)\right]=0,\label{eq:T1B1betasr2b}
\end{gather}
where $\mathring{v}$ is the stationary velocity of electrons in the
e-beam, $\sigma_{\mathrm{B}}$ is the area of the cross-section of
the e-beam and $\beta$ is the e-beam intensity defined by equations
(\ref{eq:DsT1B1se1a}).

\section{An analytic model of multicavity klystron\label{sec:mck-mod}}

In the pursuit of powerful pulse microwave radiation the synchronization
of multiple high-frequency sources emerged as a possible solution
to the problem, \cite[7.7]{Tsim}. High-power multicavity klystron
(MCK) is a powerful amplifier that employs this kind of synchronization
and it is the primary subject of our studies here. In particular,
we advance the Lagrangian variational framework that includes: (i)
the MCK system of evolution equations; (ii) closed form expressions
for the MCK dispersion relations derived based on the Floquet theory;
(iii) exact description of the frequency region of the MCK instability;
(iv) exact formulas for the MCK gain as well as for the optimal values
of the MCK parameters that yield the maximal gain. The proposed MCK
model utilizes some of the elements of our analytic model of the traveling
wave tube reviewed in Section \ref{sec:twt-mod}. As to the features
of electron bunching special to klystrons they are as follows, \cite[9.2]{Gilm1}:
\begin{quotation}
``A very important characteristic of the bunching process with space
charge forces is that all electrons are either speeded up or slowed
down to the same velocity (the dc beam velocity) at the same axial
position $\frac{\lambda_{\mathrm{rp}}}{4}$. In addition, even if
the amplitude of the modulating field is changed so that initial electron
velocities are changed, the axial position of the bunch remains the
same. This result is extremely important to the klystron engineer
because, unlike the situation when space charge forces are ignored,
the cavity location for maximum RF beam current is not a function
of signal level, of gap width, or of frequency of operation''.
\end{quotation}
As we already pointed out an actual multicavity klystron is a complicated
device that can be designed to operate in one of \emph{three modes:
the voltage amplifier, power amplifier, and bandwidth amplifier,}
\cite[7.7]{Tsim}\emph{.} We are interested here in the voltage amplifier
mode for it yields the maximal gain, \cite[7.7.1]{Tsim}. \emph{When
in this mode the resonance frequencies of all cavities are identical
and equal to the input operating frequency, a design known as the
synchronous tuning regime with high amplification for a sufficiently
small beam current}. 

When integrating into the mathematical model the identified significant
features of MCKs we make a number of simplifying assumptions. In particular,
we use the following basic assumptions of one-dimensional model of
space-charge waves in velocity-modulated beams: (i) all quantities
of interest depend only on a single space variable $z$; (ii) the
electric field has only an $z$-component; (iii) there are no transverse
velocities of electrons; (iv) ac values are small compared with dc
values; (v) electrons have a constant dc velocity which is much smaller
than the speed of light; (vi) electron beams are nondense, \cite[7.6.1]{Tsim}.

\begin{assumption} \label{ass:mckmod}(ideal model of the e-beam
and cavities interaction).
\begin{enumerate}
\item E-beam is a flow of electrons confined effectively to $z$-axis (see
Fig. \ref{fig:mck}) in consistency with the MCK operation when all
significant energy transport is confined to $z$-axis.
\item The e-beam interacts with a periodic array of cavity resonators of
toroidal shape through their electric field along $z$-axis in small
cavity gaps. The cavity gap centers form a set of equidistant points
on $z$ axis which is a lattice:
\begin{equation}
a\mathbb{Z}:\mathbb{Z}=\left\{ \ldots,-2,-1,0,1,2,\ldots\right\} ,\quad a>0,\label{eq:kaZep1a}
\end{equation}
where $a$ is the MCK \emph{period}. The cavity resonators do not
interact with each other directly but they interact only with the
e-beam at the lattice points as in (\ref{eq:kaZep1a}). This interaction
feature is accomplished by designing the \emph{electron drift tube
(drift space)} so that its low cutoff frequency is above the klystron
operating frequencies.
\item Each cavity interacts with the e-beam at the corresponding lattice
points $a\ell$, $\ell\in\mathbb{Z}$ only by utilizing the single
resonating cavity mode at frequency $\omega_{0}=\frac{1}{\sqrt{l_{0}c_{0}}}$
where $c_{0}$ and $l_{0}$ are respectively the capacitance and the
inductance of each cavity resonator. This assumption enforces the
voltage amplifier mode yielding the maximal gain.
\end{enumerate}
\end{assumption}

An MCK state is described by charges $q=q\left(z,t\right)$, $z\in\mathbb{R}$
and $Q=Q\left(z,t\right)$, $z\in a\mathbb{Z}$ associated with respectively
the e-beam charge-wave and the cavity resonators defined as the time
integrals of the relevant currents
\begin{equation}
Q=Q\left(z,t\right)=\int^{t}I\left(z,t^{\prime}\right)\,\mathrm{d}t^{\prime},\quad.q=q\left(z,t\right)=\int^{t}i\left(z,t^{\prime}\right)\,\mathrm{d}t^{\prime}.\label{eq:kaZep1b}
\end{equation}
Since according to Assumptions \ref{ass:mckmod} the interaction occurs
only at the discrete set $a\mathbb{Z}$ (lattice) of points embedded
into one-dimensional continuum of real numbers $\mathbb{R}$ some
degree of singularity of function $q\left(z,t\right)$ is expected.
As the analysis shows it is appropriate to impose the following \emph{jump-continuity
conditions} on charge function $q\left(z,t\right)$.

\begin{assumption} \label{ass:jumpcon}(jump-continuity of charge
functions).
\begin{enumerate}
\item Functions $q\left(z,t\right)$, $z\in\mathbb{R}$ and their time derivatives
$\partial_{t}^{j}q\left(z,t\right)$ for $j=1,2$ are continuous for
all real $t$ and $z$.
\item Functions $Q\left(z,t\right)$, $z\in a\mathbb{Z}$ and their time
derivatives $\partial_{t}^{j}Q\left(z,t\right)$ for $j=1,2$ are
continuous for all real $t$.
\item Derivatives $\partial_{t}^{j}q\left(z,t\right)$, $\partial_{z}^{j}q\left(z,t\right)$
for $j=1,2$, and the mixed derivatives $\partial_{z}\partial_{t}q\left(z,t\right)=\partial_{t}\partial_{z}q\left(z,t\right)$
exist and continuous for all real real $t$ and $z$ except for the
interaction points on the lattice $a\mathbb{Z}$.
\item For a function $F\left(z\right)$ and a real number $b$ symbols $F$$\left(b-0\right)$
and $F$$\left(b+0\right)$ stand for its left and right limit at
$b$ assuming their existence, that is
\begin{equation}
F\left(b\pm0\right)=\lim_{z\rightarrow b\pm0}F\left(z\right).\label{eq:klimpm1a}
\end{equation}
We also denote by $\left[F\right]\left(b\right)$ the jump of function
$F\left(z\right)$ at $b$, that is
\begin{equation}
\left[F\right]\left(b\right)=F\left(b+0\right)-F\left(b-0\right).\label{eq:klimpm1b}
\end{equation}
\item The following right and left limits exist
\begin{equation}
\partial_{z}^{j}q\left(a\ell\pm0,t\right),\;j=1,2;\;\ell\in\mathbb{Z},\label{eq:klimpm1c}
\end{equation}
and these limits are continuously differentiable functions of $t$.
The values $\partial_{z}q\left(a\ell\pm0,t\right)$ can be different
and consequently the jumps $\left[\partial_{z}q\right]\left(a\ell,t\right)$
can be nonzero.
\end{enumerate}
\end{assumption}

The physical dimensions of quantities related to cavities are summarized
in Table \ref{tab:cavity-par}.
\begin{table}
\centering{}%
\begin{tabular}{|r||r||r|}
\hline 
\noalign{\vskip\doublerulesep}
$I$ & Current & $\frac{\left[\text{charge}\right]}{\left[\text{time}\right]}$\tabularnewline
\hline 
\noalign{\vskip\doublerulesep}
$Q$ & Charge & $\left[\text{charge}\right]$\tabularnewline
\hline 
\noalign{\vskip\doublerulesep}
$c_{0}$ & Cavity capacitance & $\left[\text{length}\right]$\tabularnewline
\hline 
\noalign{\vskip\doublerulesep}
$l_{0}$ & Cavity inductance & $\frac{\left[\text{time}\right]^{2}}{\left[\text{length}\right]}$\tabularnewline
\hline 
\noalign{\vskip\doublerulesep}
$b$ & Coupling parameter & $\left[\text{dim-less}\right]$\tabularnewline
\hline 
\end{tabular}\vspace{0.3cm}
\caption{\label{tab:cavity-par}Physical dimensions of cavity related quantities.
Abbreviations: dimensionless \textendash{} dim-less}
\end{table}

\subsection{MCK Lagrangian and Euler-Lagrange equations\label{subsec:mck-lag}}

To simplify expressions of quantities of interest we use notations
\begin{equation}
\left\{ Q\right\} =Q,\:\:\partial_{t}Q,\quad Q=Q\left(z,t\right),\quad z\in a\mathbb{Z};\quad\left\{ q\right\} =q,\:\partial_{z}q,\:\partial_{t}q,\quad q=q\left(z,t\right),\quad z\in\mathbb{R},\label{eq:aZep1Q}
\end{equation}
\begin{equation}
\left\{ x\right\} =Q,\:\partial_{t}Q,\:q,\:\partial_{t}q,\:\partial_{z}q.\label{eq:eZep1xk}
\end{equation}

The dynamical properties of our MCK model are implemented through
the Lagrangian variational formalism. Namely, the MCK system Lagrangian
$\mathcal{L}$ is defined as the sum of its two components: (i) $\mathcal{L}_{\mathrm{B}}$
is the e-beam Lagrangian; (ii) $\mathcal{L}_{\mathrm{CB}}$ the cavities
and e-beam interaction Lagrangian. That is

\begin{equation}
\mathcal{L}\left(\left\{ x\right\} \right)=\mathcal{L}_{\mathrm{B}}\left(\left\{ q\right\} \right)+\mathcal{L}_{\mathrm{CB}}\left(x\right),\label{eq:aZep1Lk}
\end{equation}
where we used notations (\ref{eq:aZep1Q}) and (\ref{eq:eZep1xk}).
The expressions for $\mathcal{L}_{\mathrm{B}}$ are similar to the
Lagrangian components in equations (\ref{eq:T1B1betase1b}), (\ref{eq:T1B1betase1c}),
namely

\begin{gather}
\mathcal{L}_{\mathrm{B}}\left(\left\{ q\right\} \right)=\frac{1}{2\beta}\left(\partial_{t}q+\mathring{v}\partial_{z}q\right)^{2}-\frac{2\pi}{\sigma_{\mathrm{B}}}q^{2},\label{eq:aZep1ck}
\end{gather}
and the interaction Lagrangian $\mathcal{L}_{\mathrm{CB}}$ is defined
by
\begin{equation}
\mathcal{L}_{\mathrm{CB}}\left(x\right)=\sum_{\ell=-\infty}^{\infty}\delta\left(z-a\ell\right)\left\{ \frac{l_{0}}{2}\left(\partial_{t}Q\left(a\ell\right)\right)^{2}-\frac{1}{2c_{0}}\left[Q\left(a\ell\right)+bq\left(a\ell\right)\right]^{2}\right\} .\label{eq:aZep1dk}
\end{equation}
Note that term $-\frac{2\pi}{\sigma_{\mathrm{B}}}q^{2}$ in the Lagrangian
$\mathcal{L}_{\mathrm{B}}$ defined in equations (\ref{eq:aZep1ck})
represents space-charge effects including the so-called debunching
(electron-to-electron repulsion). Parameters $\beta$ and $\sigma_{\mathrm{B}}$
are respectively the e-beam intensity and the area of the cross-section
of the e-beam defined in Section \ref{sec:twt-mod}.

We would like to point out that: (i) expression (\ref{eq:aZep1dk})
for the interaction Lagrangian $\mathcal{L}_{\mathrm{CB}}$ limits
the interaction by design to points $a\ell$ as indicated by delta
functions $\delta\left(z-a\ell\right)$ and (ii) the factors before
delta functions $\delta\left(z-a\ell\right)$ are expressions similar
to density $\mathcal{L}_{\mathrm{Tb}}$ in equations (\ref{eq:T1B1betase1b})
adapted to set of discrete interaction points $a\ell$; (iii) cavity
capacitance $c_{0}$ is of particular significance for the interaction
between the cavities and the e-beam. \emph{Note that according to
equations (\ref{eq:aZep1Lk}), (\ref{eq:aZep1ck}) and (\ref{eq:aZep1dk})
Lagrangian $\mathcal{L}$ is a periodic function of $z$ of the period
$a$. }

As we derive in Section \ref{sec:lagvar} the Euler-Lagrange (EL)
equations for points $z$ outside the lattice $a\mathbb{Z}$ are

\begin{gather}
\frac{1}{\beta}\left(\partial_{t}+\mathring{v}\partial_{z}\right)^{2}q+\frac{4\pi}{\sigma_{\mathrm{B}}}q=0,\quad z\neq a\ell,\quad\ell\in\mathbb{Z},\label{eq:eZep1ek}
\end{gather}
or equivalently
\begin{gather}
\left(\frac{1}{v}\partial_{t}+\partial_{z}\right)^{2}q+\frac{4\pi\beta}{\sigma_{\mathrm{B}}\mathring{v}^{2}}q=0,\quad z\neq a\ell,\quad\ell\in\mathbb{Z}.\label{eq:eZep1eak}
\end{gather}
The EL equations at the interaction points $a\ell$ (see equations
(\ref{eq:Sact3ek})) are
\begin{equation}
\left[q\right]\left(a\ell\right)=0,\label{eq:eZep1fak}
\end{equation}
\begin{gather}
\partial_{t}^{2}Q\left(a\ell\right)+\omega_{0}^{2}\left[Q\left(a\ell,t\right)+bq\left(a\ell,t\right)\right]=0,\quad\left[\partial_{z}q\right]\left(a\ell\right)=-\frac{b\beta_{0}}{\mathring{v}^{2}}\left[Q\left(a\ell\right)+bq\left(a\ell\right)\right],\label{eq:eZep1fk}
\end{gather}
where we make use of parameters
\begin{equation}
\omega_{0}=\frac{1}{\sqrt{l_{0}c_{0}}},\quad\beta_{0}=\frac{\beta}{c_{0}},\label{eq:eZep1hk}
\end{equation}
and jumps $\left[q\right]\left(a\ell\right)$ are defined by equation
(\ref{eq:klimpm1b}). We refer to $\beta_{0}$ as \emph{cavity e-beam
interaction parameter} and to $\omega_{0}$ as \emph{cavity resonance
frequency}. Note that equations (\ref{eq:eZep1fak}) is just an acknowledgment
of the continuity of charges $q\left(z,t\right)$ at the interaction
points in consistency with Assumption \ref{ass:jumpcon}. Equations
(\ref{eq:eZep1fak}), (\ref{eq:eZep1fk}) can be viewed as the boundary
conditions that are complementary to the differential equations (\ref{eq:eZep1ek})
and (\ref{eq:eZep1eak}).

In what follows the following parameters play an important role in
the analysis
\begin{equation}
f_{\mathrm{B}}=\sqrt{\frac{4\pi\beta}{\sigma_{\mathrm{B}}\mathring{v}^{2}}}=\frac{\omega_{\mathrm{rp}}}{\mathring{v}}=\frac{2\pi}{\lambda_{\mathrm{rp}}},\quad\lambda_{\mathrm{rp}}=\frac{2\pi\mathring{v}}{\omega_{\mathrm{rp}}},\quad\omega_{\mathrm{rp}}=R_{\mathrm{sc}}\omega_{\mathrm{p}},\label{eq:eZep2a}
\end{equation}
where $\omega_{\mathrm{rp}}$ and $\lambda_{\mathrm{rp}}$ are respectively
the \emph{reduced plasma frequency} and \emph{the electron plasma
wavelength}.

The Fourier transform in $t$ (see Appendix \ref{sec:four}) of equations
(\ref{eq:eZep1eak}), (\ref{eq:eZep1fak}) and (\ref{eq:eZep1fk})
(see also equations (\ref{eq:Lagdim1ek}) and (\ref{eq:Lagdim1fk}))
yields the following ordinary differential equations in $z$
\begin{equation}
\left(\partial_{z}-\mathrm{i}\frac{\omega}{\mathring{v}}\right)^{2}\check{q}+f_{\mathrm{B}}^{2}\check{q}=0,\quad f_{\mathrm{B}}=\sqrt{\frac{4\pi\beta}{\sigma_{\mathrm{B}}\mathring{v}^{2}}}=\frac{\omega_{\mathrm{rp}}}{\mathring{v}}=\frac{2\pi}{\lambda_{\mathrm{rp}}},\quad z\neq a\ell,\quad\ell\in\mathbb{Z},\label{eq:eZep3a}
\end{equation}
subjects to the boundary conditions at the interaction points
\begin{gather}
\left[\check{q}\right]\left(a\ell\right)=0,\quad\ell\in\mathbb{Z},\label{eq:eZep3b}\\
\left[\partial_{z}\check{q}\right]\left(a\ell\right)=-\frac{b^{2}\beta_{0}}{\mathring{v}^{2}}\frac{\omega^{2}}{\omega^{2}-\omega_{0}^{2}}\check{q}\left(a\ell\right),\quad\omega_{0}=\frac{1}{\sqrt{l_{0}c_{0}}},\quad\beta_{0}=\frac{\beta}{c_{0}},\label{eq:eZep3c}
\end{gather}
where $\check{q}$ is the time Fourier transform of $q$. Note also
that
\begin{equation}
\check{Q}\left(a\ell\right)=\frac{\omega_{0}^{2}}{\omega^{2}-\omega_{0}^{2}}b\check{q}\left(a\ell\right),\quad\ell\in\mathbb{Z},.\label{eq:eZep3d}
\end{equation}
where $\check{Q}$ is the time Fourier transform of $Q$.

\emph{Hence, the EL differential equations (\ref{eq:eZep3a}) together
with the boundary conditions (\ref{eq:eZep3b}) and (\ref{eq:eZep3c})
form the complete set of equation describing the MCK evolution.} Boundary
conditions (\ref{eq:eZep3b}) and (\ref{eq:eZep3b}) can be recast
into matrix form as follows
\begin{equation}
X\left(a\ell+0\right)=\mathsf{S}_{\mathrm{b}}X\left(a\ell-0\right),\quad\mathsf{S}_{\mathrm{b}}=\left[\begin{array}{rr}
1 & 0\\
-\frac{b^{2}\beta_{0}}{\mathring{v}^{2}}\frac{\omega^{2}}{\omega^{2}-\omega_{0}^{2}} & 1
\end{array}\right],\quad X=\left[\begin{array}{r}
\check{q}\\
\partial_{z}\check{q}
\end{array}\right].\label{eq:eZep3e}
\end{equation}

\subsection{Euler-Lagrange equations in dimensionless variables\label{subsec:EL-dim-var}}

As to basic variables related to the e-beam and the klystron cavities
we refer the reader to Section \ref{sec:twt-mod} and Tables \ref{tab:ebeam-par},
\ref{tab:cavity-par}. The primary dimensionless variables of importance
are
\begin{equation}
z^{\prime}=\frac{z}{a},\quad\partial_{z^{\prime}}=a\partial_{z},\quad t^{\prime}=\frac{\mathring{v}}{a}t,\quad\partial_{t^{\prime}}=\frac{a}{\mathring{v}}\partial_{t},\quad\omega^{\prime}=\frac{\omega}{\omega_{a}}=\frac{a}{2\pi\mathring{v}}\omega,\quad\omega_{0}^{\prime}=\frac{\omega_{0}}{\omega_{a}}=\frac{a}{2\pi\mathring{v}}\omega_{0},\label{eq:unitL1bk}
\end{equation}
\begin{equation}
\beta^{\prime}=\frac{\beta}{\mathring{v}^{2}},\quad\sigma_{\mathrm{B}}^{\prime}=\frac{\sigma_{\mathrm{B}}}{a^{2}},\quad c_{0}^{\prime}=\frac{c_{0}}{a},\quad l_{0}^{\prime}=\frac{\mathring{v}^{2}}{a}l_{0},\label{eq:unitL1ck}
\end{equation}
\begin{equation}
\beta_{0}^{\prime}=\frac{\beta^{\prime}}{c_{0}^{\prime}}=\frac{a\beta}{c_{0}\mathring{v}^{2}},\quad f_{\mathrm{B}}^{\prime}=af_{\mathrm{B}}=\sqrt{\frac{4\pi\beta^{\prime}}{\sigma_{\mathrm{B}}^{\prime}}}=\frac{aR_{\mathrm{sc}}\omega_{\mathrm{p}}}{\mathring{v}}=\frac{2\pi R_{\mathrm{sc}}\omega_{\mathrm{p}}}{\omega_{a}}=\frac{2\pi a}{\lambda_{\mathrm{rp}}},\label{eq:unitL1dk}
\end{equation}
\begin{equation}
a\delta\left(z\right)=\delta\left(z^{\prime}\right),\quad z=az^{\prime}.\label{eq:unitL1ek}
\end{equation}
For the reader convenience we collected in Table \ref{tab:mck-par}
all significant parameters associated with MCK.
\begin{table}
\centering{}%
\begin{tabular}{|r||r||r|}
\hline 
\noalign{\vskip\doublerulesep}
$a$ & the MCK period & $\left[\text{length}\right]$\tabularnewline[0.2cm]
\hline 
\noalign{\vskip\doublerulesep}
$\mathring{v}$ & the e-beam stationary velocity & $\frac{\left[\text{length}\right]}{\left[\text{time}\right]}$\tabularnewline[0.2cm]
\hline 
\noalign{\vskip\doublerulesep}
$\omega_{a}=\frac{2\pi\mathring{v}}{a}$ & the period frequency & $\frac{\left[\text{1}\right]}{\left[\text{time}\right]}$\tabularnewline[0.2cm]
\hline 
\noalign{\vskip\doublerulesep}
$\omega_{\mathrm{p}}=\sqrt{\frac{4\pi\mathring{n}e^{2}}{m}}$ & the plasma frequency & $\frac{\left[\text{1}\right]}{\left[\text{time}\right]}$\tabularnewline[0.2cm]
\hline 
\noalign{\vskip\doublerulesep}
$\lambda_{\mathrm{rp}}=\frac{2\pi\mathring{v}}{\omega_{\mathrm{rp}}},\:\omega_{\mathrm{rp}}=R_{\mathrm{sc}}\omega_{\mathrm{p}}$ & the electron plasma wavelength & $\left[\text{length}\right]$\tabularnewline[0.2cm]
\hline 
\noalign{\vskip\doublerulesep}
$g_{\mathrm{B}}=\frac{\sigma_{\mathrm{B}}}{4\lambda_{\mathrm{rp}}}$ & the e-beam spatial scale & $\left[\text{length}\right]$\tabularnewline[0.2cm]
\hline 
\noalign{\vskip\doublerulesep}
$f^{\prime}=\frac{2\pi\omega_{\mathrm{rp}}}{\omega_{a}}=\frac{2\pi a}{\lambda_{\mathrm{rp}}}$ & normalized period in units of $\frac{\lambda_{\mathrm{rp}}}{2\pi}$ & $\left[\text{dim-less}\right]$\tabularnewline[0.2cm]
\hline 
\noalign{\vskip\doublerulesep}
$\mathring{n}$ & the number of electrons p/u of volume & $\frac{\left[\text{1}\right]}{\left[\text{length}\right]^{3}}$\tabularnewline[0.2cm]
\hline 
\noalign{\vskip\doublerulesep}
$c_{0},\;l_{0}$ & the cavity capacitance, inductance & $\left[\text{length}\right]$$,\;\frac{\left[\text{time}\right]^{2}}{\left[\text{length}\right]}$\tabularnewline[0.2cm]
\hline 
\noalign{\vskip\doublerulesep}
$\omega_{0}=\frac{1}{\sqrt{l_{0}c_{0}}}$ & the cavity resonant frequency & $\frac{\left[\text{1}\right]}{\left[\text{time}\right]}$\tabularnewline[0.2cm]
\hline 
\noalign{\vskip\doublerulesep}
$\beta=\frac{\sigma_{\mathrm{B}}R_{\mathrm{sc}}^{2}\omega_{\mathrm{p}}^{2}}{4\pi}=\frac{e^{2}R_{\mathrm{sc}}^{2}\sigma_{\mathrm{B}}\mathring{n}}{m}=\frac{\pi\sigma_{\mathrm{B}}\mathring{v}^{2}}{\lambda_{\mathrm{rp}}^{2}}$ & the e-beam intensity & $\frac{\left[\text{length}\right]^{2}}{\left[\text{time}\right]^{2}}$\tabularnewline[0.2cm]
\hline 
\noalign{\vskip\doublerulesep}
$\beta^{\prime}=\frac{\beta}{\mathring{v}^{2}}=\frac{\pi\sigma_{\mathrm{B}}}{\lambda_{\mathrm{rp}}^{2}}=\frac{4\pi g_{\mathrm{B}}}{\lambda_{\mathrm{rp}}}$ & dim-less e-beam intensity & $\left[\text{dim-less}\right]$\tabularnewline[0.2cm]
\hline 
\noalign{\vskip\doublerulesep}
$\beta_{0}^{\prime}=\frac{\beta^{\prime}}{c_{0}^{\prime}}=\frac{a\beta}{c_{0}\mathring{v}^{2}}$ & the first interaction par. & $\left[\text{dim-less}\right]$\tabularnewline[0.2cm]
\hline 
\noalign{\vskip\doublerulesep}
$B\left(\omega\right)=B_{0}\frac{\omega^{2}}{\omega^{2}-\omega_{0}^{2}},\quad B_{0}=b^{2}\beta_{0}^{\prime}$ & the second interaction par. & $\left[\text{dim-less}\right]$\tabularnewline[0.2cm]
\hline 
\noalign{\vskip\doublerulesep}
$K_{0}=\frac{B_{0}}{2f}=\frac{b^{2}\beta_{0}^{\prime}}{2f}=\frac{b^{2}\sigma_{\mathrm{B}}}{4\lambda_{\mathrm{rp}}c_{0}}=\frac{b^{2}g_{\mathrm{B}}}{c_{0}}$ & the MCK gain coefficient & $\left[\text{dim-less}\right]$\tabularnewline[0.2cm]
\hline 
\noalign{\vskip\doublerulesep}
$K\left(\omega\right)=\frac{B\left(\omega\right)}{2f}=K_{0}\frac{\omega^{2}}{\omega^{2}-\omega_{0}^{2}}$ & the MCK gain par. & $\left[\text{dim-less}\right]$\tabularnewline[0.2cm]
\hline 
\end{tabular}\vspace{0.3cm}
\caption{\label{tab:mck-par} MCK significant parameters. Abbreviations: dimensionless
\textendash{} dim-less, p/u \textendash{} per unit, par. - parameter.
For the sake of notation simplicity we often omit \textquotedblleft prime\textquotedblright{}
super-index indicating that the dimensionless version of the relevant
parameter is involved when it is clear from the context.}
\end{table}

The dimensionless form $\mathcal{L}^{\prime}$ of the Lagrangians
is as follows:
\begin{equation}
\mathcal{L}^{\prime}=\mathcal{L}_{\mathrm{B}}^{\prime}+\mathcal{L}_{\mathrm{CB}}^{\prime};\quad\mathcal{L}_{\mathrm{B}}^{\prime}=\frac{1}{2\beta^{\prime}}\left(\partial_{t^{\prime}}q+\partial_{z^{\prime}}q\right)^{2}-\frac{2\pi}{\sigma_{\mathrm{B}}^{\prime}}q^{2},\quad\ell\in\mathbb{Z},\label{eq:Lagdim1ak}
\end{equation}
\begin{equation}
\mathcal{L}_{\mathrm{CB}}^{\prime}=\sum_{\ell=-\infty}^{\infty}\delta\left(z^{\prime}-\ell\right)\left\{ \frac{l_{0}^{\prime}}{2}\left(\partial_{t}Q\left(a\ell\right)\right)^{2}-\frac{1}{2c_{0}^{\prime}}\left[Q\left(a\ell\right)+bq\left(a\ell\right)\right]^{2}\right\} .\label{eq:Lagdim1bk}
\end{equation}
The dimensionless form of the EL equations in between interaction
points that corresponds to the Lagrangian $\mathcal{L}^{\prime}$
defined by equations (\ref{eq:Lagdim1ak}) and (\ref{eq:Lagdim1bk})
is
\begin{gather}
\left(\partial_{t^{\prime}}+\partial_{z^{\prime}}\right)^{2}q+f_{\mathrm{B}}^{\prime2}q=0,\quad z^{\prime}\neq\ell,\quad\ell\in\mathbb{Z};\quad f_{\mathrm{B}}^{\prime}=\sqrt{\frac{4\pi\beta^{\prime}}{\sigma_{\mathrm{B}}^{\prime}}}=\frac{2\pi R_{\mathrm{sc}}\omega_{\mathrm{p}}}{\omega_{a}}=\frac{2\pi a}{\lambda_{\mathrm{rp}}},\label{eq:Lagdim1ck}
\end{gather}
and the EL equations at the interaction points $\ell$ are
\begin{gather}
\partial_{t^{\prime}}^{2}Q\left(a\ell\right)+\omega_{0}^{\prime2}\left[Q\left(a\ell\right)+bq\left(a\ell\right)\right]=0,\quad\omega_{0}^{\prime}=\frac{1}{\sqrt{l_{0}^{\prime}c_{0}^{\prime}}},\quad\beta_{0}^{\prime}=\frac{\beta^{\prime}}{c_{0}^{\prime}},\label{eq:Lagdim1dk}\\
\left[\partial_{z^{\prime}}q\right]\left(a\ell\right)=-b\beta_{0}^{\prime}\left[Q\left(a\ell\right)+bq\left(a\ell\right)\right],\quad\ell\in\mathbb{Z},\nonumber 
\end{gather}
where jumps $\left[q\right]\left(a\ell\right)$ are defined by equation
(\ref{eq:klimpm1b}). Note that equations (\ref{eq:Lagdim1dk}) can
naturally be viewed as boundary (interface) conditions complimentary
to the ordinary differential equations (\ref{eq:Lagdim1ck}).

\emph{To simplify notations we will omit prime symbol in equations
but rather will simply acknowledge their dimensionless form}. So we
will use from now on the following \emph{dimensionless form of the
EL equations} (\ref{eq:Lagdim1ck}), (\ref{eq:Lagdim1dk})
\begin{gather}
\left(\partial_{t}+\partial_{z}\right)^{2}q+f^{2}q=0,\quad z\neq\ell,\quad\ell\in\mathbb{Z};\quad f=\frac{2\pi R_{\mathrm{sc}}\omega_{\mathrm{p}}}{\omega_{a}}=\frac{2\pi a}{\lambda_{\mathrm{rp}}},\label{eq:Lagdim1ek}
\end{gather}
\begin{gather}
\partial_{\prime}^{2}Q\left(a\ell\right)+\omega_{0}^{2}\left[Q\left(a\ell\right)+bq\left(a\ell\right)\right]=0,\quad\omega_{0}=\frac{1}{\sqrt{l_{0}c_{0}}},\quad\beta_{0}=\frac{\beta}{c_{0}},.\label{eq:Lagdim1fk}\\
\left[\partial_{z}q\right]\left(a\ell\right)=-b\beta_{0}\left[Q\left(a\ell\right)+bq\left(a\ell\right)\right],\quad\ell\in\mathbb{Z}.\nonumber 
\end{gather}
\emph{Note that in view of the definition of normalized frequency
$f=\frac{2\pi a}{\lambda_{\mathrm{rp}}}$ in equation (\ref{eq:Lagdim1ek})
we may view parameter $\frac{f}{2\pi}=\frac{a}{\lambda_{\mathrm{rp}}}$
in equations (\ref{eq:Lagdim1ek}) as the MCK period measured in natural
to the e-beam spatial unit $\lambda_{\mathrm{rp}}$.}

The Fourier transform in $t$ (see Appendix \ref{sec:four}) of equations
(\ref{eq:Lagdim1ek}), (\ref{eq:Lagdim1fk}) yields
\begin{equation}
\left(\partial_{z}-\mathrm{i}\omega\right)^{2}\check{q}+f^{2}\check{q}=0,\quad z\neq\ell,\label{eq:Lagdim2ak}
\end{equation}
subjects to the boundary conditions at the interaction points
\begin{gather}
\left[\check{q}\right]\left(a\ell\right)=0,\quad\left[\partial_{z}\check{q}\right]\left(a\ell\right)=-B\left(\omega\right)\check{q}\left(a\ell\right)\quad\ell\in\mathbb{Z},\label{eq:Lagdim2aak}
\end{gather}
where $\check{q}$ is the time Fourier transform of $q$ and $B\left(\omega\right)$
is a new important parameter defined by
\begin{equation}
B=B\left(\omega\right)=\frac{b^{2}\beta_{0}\omega^{2}}{\omega^{2}-\omega_{0}^{2}}=B_{0}\frac{\omega^{2}}{\omega^{2}-\omega_{0}^{2}},\quad B_{0}=b^{2}\beta_{0}=\frac{b^{2}\beta}{c_{0}},\label{eq:Lagdim2B1}
\end{equation}
we refer to it as\emph{ cavity e-beam interaction parameter}. The
representation of coefficient $B_{0}$ by equations (\ref{eq:Lagdim2B1})
suggests that it similar to the TWT principal parameter $\gamma$
defined by equations (\ref{eq:DsT1B1se1a}). Note that according to
equations (\ref{eq:Lagdim2B1}) the following representations hold
for the parameter $B_{0}$
\begin{equation}
B_{0}=b^{2}\beta_{0}=\lim_{\omega\rightarrow\infty}B\left(\omega\right)=\left.B\left(\omega\right)\right|_{\omega_{0}=0},\label{eq:Lagdim2B2}
\end{equation}
indicating that $B_{0}$ is the high-frequency limit of $B\left(\omega\right)$
and at the same time it is the value of $B\left(\omega\right)$ when
the cavity resonant frequency $\omega_{0}$ vanishes, that $\omega_{0}=0$.

The Fourier transform in time of equation (\ref{eq:Lagdim1fk}) yields
\begin{equation}
\check{Q}\left(a\ell\right)=\frac{\omega_{0}^{2}}{\omega^{2}-\omega_{0}^{2}}b\check{q}\left(a\ell\right),\quad\ell\in\mathbb{Z},\label{eq:Lagdim2abk}
\end{equation}
where $\check{Q}$ is the time Fourier transform of $Q$, and equation
(\ref{eq:Lagdim2abk}) was used to obtain the second equation in (\ref{eq:Lagdim2aak}).

Boundary conditions (\ref{eq:Lagdim2aak}) can be recast into matrix
form as follows
\begin{equation}
X\left(a\ell+0\right)=\mathsf{S}_{\mathrm{b}}X\left(a\ell-0\right),\quad\mathsf{S}_{\mathrm{b}}=\left[\begin{array}{rr}
1 & 0\\
-B\left(\omega\right) & 1
\end{array}\right],\quad X=\left[\begin{array}{r}
\check{q}\\
\partial_{z}\check{q}
\end{array}\right],\quad B\left(\omega\right)=\frac{b^{2}\beta_{0}\omega^{2}}{\omega^{2}-\omega_{0}^{2}}.\label{eq:Lagdim2ack}
\end{equation}

In order to use the standard form of the Floquet theory reviewed in
Appendix \ref{sec:floquet} we recast the ordinary differential equations
(\ref{eq:Lagdim2ak}) with boundary (interface) conditions (\ref{eq:Lagdim2aak})
as the following single second-order ordinary differential equation
with singular, frequency dependent, periodic potential:
\begin{equation}
\partial_{z}^{2}\check{q}-2\mathrm{i}\omega\partial_{z}\check{q}+\left(f^{2}-\omega^{2}\right)\check{q}-B\left(\omega\right)p\left(z\right)\check{q}=0,\quad p\left(z\right)=\sum_{\ell=-\infty}^{\infty}\delta\left(z-\ell\right),\quad\check{q}=\check{q}\left(z\right),\label{eq:Lagdim4a}
\end{equation}
where the second interaction parameter $B\left(\omega\right)$ is
defined by equation (\ref{eq:Lagdim2B1}).

Analysis of equations (\ref{eq:Lagdim4a}) based on the Floquet theory
(see Appendix \ref{sec:floquet}) becomes now the primary subject
of our studies. The second-order ordinary differential equation (\ref{eq:Lagdim4a})
can in turn be recast into the following matrix ordinary differential
equation
\begin{gather}
\partial_{z}X=A\left(z\right)X,\quad A\left(z\right)=A\left(z,\omega\right)=\left[\begin{array}{rr}
0 & 1\\
\omega^{2}-f^{2}+B\left(\omega\right)p\left(z\right) & 2\mathrm{i}\omega
\end{array}\right],\quad X=\left[\begin{array}{r}
q\\
\partial_{z}q
\end{array}\right],\label{eq:Lagdim4b}\\
B\left(\omega\right)=\frac{b^{2}\beta_{0}\omega^{2}}{\omega^{2}-\omega_{0}^{2}}=2fK_{0}\frac{\omega^{2}}{\omega^{2}-\omega_{0}^{2}},\quad p\left(z\right)=\sum_{\ell=-\infty}^{\infty}\delta\left(z-\ell\right).\nonumber 
\end{gather}
Note that normalized period $f=\frac{2\pi a}{\lambda_{\mathrm{rp}}}$
and the MCK gain coefficient $K_{0}=\frac{b^{2}g_{\mathrm{B}}}{c_{0}}$
play particularly significant roles for the MCK properties. 

One can verify by straightforward evaluation that equation (\ref{eq:Lagdim4a})
has the Hamiltonian structure (see Appendix \ref{sec:Ham}) with the
following selection for the metric matrix
\begin{equation}
G=G^{*}=\left[\begin{array}{rr}
2\omega & \mathrm{i}\\
-\mathrm{i} & 0
\end{array}\right],\quad G^{-1}=\left[\begin{array}{rr}
0 & \mathrm{i}\\
-\mathrm{i} & -2\omega
\end{array}\right],\quad\det\left[G\right]=-1.\label{eq:Lagdim4c}
\end{equation}
The eigenvalues are eigenvectors of metric matrix $G$ are as follows:
\begin{equation}
\omega+\sqrt{\omega^{2}+1},\;\left[\begin{array}{r}
\frac{\mathrm{i}}{-\omega+\sqrt{\omega^{2}+1}}\\
1
\end{array}\right];\quad\omega-\sqrt{\omega^{2}+1},\;\left[\begin{array}{r}
-\frac{\mathrm{i}}{\omega+\sqrt{\omega^{2}+1}}\\
1
\end{array}\right].\label{eq:Lagdim4d}
\end{equation}
Using expressions (\ref{eq:Lagdim4b}) and (\ref{eq:Lagdim4c}) for
respectively matrices $A\left(z\right)$ and $G$ one can readily
verify that $A\left(z\right)$ is $G$-skew-Hermitian matrix, that
is
\begin{equation}
GA\left(z\right)+A^{*}\left(z\right)G=0,\label{eq:Lagdim4e}
\end{equation}
and that according to Appendix \ref{sec:Ham} implies that the system
(\ref{eq:Lagdim4b}) is Hamiltonian. Consequently, according to Appendix
\ref{sec:Ham} the matrizant $\Phi\left(z\right)$ of the Hamiltonian
system (\ref{eq:Lagdim4b}) ) is $G$-unitary and its spectrum $\sigma\left\{ \Phi\left(z\right)\right\} $
is symmetric with respect to the unit circle, that is
\begin{equation}
\Phi^{*}\left(z\right)G\Phi\left(z\right)=G,\quad\zeta\in\sigma\left\{ \Phi\right\} \Rightarrow\frac{1}{\bar{\zeta}}\in\sigma\left\{ \Phi\right\} .\label{eq:Lagdim4f}
\end{equation}

\subsection{The monodromy matrix\label{subsec:monodr-mck}}

We remind that according to the Floquet theory reviewed in Appendix
\ref{sec:floquet} the monodromy matrix encodes significant information
about the relevant first order periodic ODE related to the eigenmodes. 

All analysis here is carried out for dimensionless variables. We begin
with an observation that characteristic polynomial associated with
equation (\ref{eq:Lagdim2ak}) is
\begin{equation}
A_{\mathrm{B}}\left(s\right)=s^{2}-2\mathrm{i}{\it \omega}s+f^{2}-{\it {\it \omega}^{2}},\quad s=\exp\left\{ \mathrm{i}k\right\} ,\quad k=k\left(\omega\right),\label{eq:ebcha3akL}
\end{equation}
where $s$ is the spectral parameter and $k$ can be interpreted as
complex-valued wave number. Note that in accordance with the general
theory of differential equations (see Appendices \ref{sec:co-mat},
\ref{sec:matpol} and \ref{sec:dif-jord}) the spectral parameter
$s$ in the expression (\ref{eq:ebcha3akL}) of the characteristic
polynomial $A_{\mathrm{B}}\left(s\right)$ represents symbolically
the differential operator $\partial_{z}$.

The expression of $2\times2$ companion matrix $\mathscr{C}_{\mathrm{B}}$
(see Appendix \ref{sec:matpol}) of the scalar characteristic polynomial
$A_{\mathrm{B}}\left(s\right)$ defined by equation (\ref{eq:ebcha3akL})
and the corresponding exponential $\exp\left\{ z\mathscr{C}_{\mathrm{B}}\right\} $
are
\begin{equation}
\mathscr{C}_{\mathrm{B}}=\left[\begin{array}{rr}
0 & 1\\
{\it {\it \omega}^{2}}-f^{2} & 2\mathrm{i}{\it \omega}
\end{array}\right]=\mathscr{Z}_{\mathrm{B}}\left[\begin{array}{rr}
\mathrm{i}\left(\omega-f\right) & 0\\
0 & \mathrm{i}\left(\omega+f\right)
\end{array}\right]\mathscr{Z}_{\mathrm{B}}^{-1},\quad\mathscr{Z}_{\mathrm{B}}=\left[\begin{array}{rr}
-\frac{\mathrm{i}}{\omega-f} & -\frac{\mathrm{i}}{\omega+f}\\
1 & 1
\end{array}\right],\label{eq:ebcha3bkL}
\end{equation}
\begin{gather}
\exp\left\{ z\mathscr{C}_{\mathrm{B}}\right\} =\frac{1}{f}\exp\left\{ \mathrm{i}\omega z\right\} \left[\begin{array}{rr}
f\cos\left(fz\right)-\mathrm{i}{\it \omega}\sin\left(fz\right) & \sin\left(fz\right)\\
\left({\it {\it \omega}^{2}}-f^{2}\right)\sin\left(fz\right) & f\cos\left(fz\right)+\mathrm{i}{\it \omega}\sin\left(fz\right)
\end{array}\right]=\label{eq:abcha3ckL}\\
=\mathscr{Z}_{\mathrm{B}}\exp\left\{ \left[\begin{array}{rr}
\mathrm{i}\left(\omega-f\right)z & 0\\
0 & \mathrm{i}\left(\omega+f\right)z
\end{array}\right]\right\} \mathscr{Z}_{\mathrm{B}}^{-1}.\nonumber 
\end{gather}

Applying now the Floquet theory (see Appendix \ref{sec:floquet})
to the first-order ODE equivalent of the EL equations (\ref{eq:Lagdim2ak}),
(\ref{eq:Lagdim2aak}) and (\ref{eq:Lagdim2B1}) and their alternative
form (\ref{eq:Lagdim4a}) we find that the corresponding $2\times2$
matrizant $\Phi\left(z\right)$ satisfies
\begin{gather}
\Phi\left(z\right)=\exp\left\{ \left(z-\ell\right)\mathscr{C}_{\mathrm{B}}\right\} \Phi\left(\ell+0\right),\quad\ell<z<\ell+1,\label{eq:mckPhi1akL}\\
\Phi\left(0+0\right)=\mathbb{I},\quad\Phi\left(\ell+0\right)=\mathsf{S}_{\mathrm{b}}\Phi\left(\ell-0\right),\quad\ell\in\mathbb{Z},\nonumber 
\end{gather}
where $\exp\left\{ z\mathscr{C}_{\mathrm{TB}}\right\} $ is defined
by equation (\ref{eq:abcha3ckL}) and $2\times2$ matrix $\mathsf{S}_{\mathrm{b}}$
is defined by equations (\ref{eq:Lagdim2ack}). Equations (\ref{eq:mckPhi1akL})
imply in turn the following expression for the monodromy matrix $\mathscr{T}=\Phi\left(1+0\right)$:
\begin{gather}
\mathscr{T}=\Phi\left(1+0\right)=\mathsf{S}_{\mathrm{b}}\exp\left\{ \mathscr{C}_{\mathrm{B}}\right\} =\nonumber \\
=e^{\mathrm{i}\omega}\left[\begin{array}{rr}
\cos\left(f\right)-{\it \mathrm{i}\omega}\mathrm{sinc}\,\left(f\right) & \mathrm{sinc}\,\left(f\right)\\
\mathrm{sinc}\,\left(f\right)\omega^{2}+2{\it \mathrm{i}\omega\left(\cos\left(f\right)+b_{f}\right)}-\frac{2b_{f}\cos\left(f\right)+\cos^{2}\left(f\right)+1}{\mathrm{sinc}\,\left(f\right)} & \mathrm{i}\omega\mathrm{sinc}\,\left(f\right)-\cos\left(f\right)-2b_{f}
\end{array}\right],\label{eq:abcha3dkL}
\end{gather}
where
\begin{equation}
b_{f}=\frac{B\left(\omega\right)}{2}\mathrm{sinc}\,\left(f\right)-\cos\left(f\right),\quad\mathrm{sinc}\,\left(f\right)=\frac{\sin\left(f\right)}{f},\quad B\left(\omega\right)=\frac{b^{2}\beta_{0}\omega^{2}}{\omega^{2}-\omega_{0}^{2}}.\label{eq:abcha3ekL}
\end{equation}

Note according to relations (\ref{eq:Lagdim4f}) the \emph{monodromy
matrix $\mathscr{T}=\varPhi\left(1+0\right)$ represented by equations}
(\ref{eq:abcha3dkL})\emph{ is $G$-unitary for metric matrix $G$
defined by equations} (\ref{eq:Lagdim4c})\emph{ and its spectrum
$\sigma\left\{ \mathscr{T}\right\} $ is symmetric with respect to
the unit circle, that is it satisfies relations} (\ref{eq:Lagdim4f}).

We show in Section \ref{sec:floqmul} that the parameter $b_{f}$
defined in equations (\ref{eq:abcha3ekL}) completely determines the
Floquet multipliers (eigenvalues) of monodromy matrix $\mathscr{T}$and
consequently plays a key role in the analysis of the MCK instability.

\section{The Floquet multipliers, the instability and the gain\label{sec:floqmul}}

The MCK instability is manifested by exponentially growing eigenmodes.
In the case of periodic differential equation (\ref{eq:Lagdim4a})
according to the Floquet theory (see Appendix \ref{sec:floquet})
and particularly Remark \ref{rem:disprel}) a Floquet eigenmode grows
exponentially if and only if the absolute value of the corresponding
Floquet multiplier $s$ (that is an eigenvalue of the MCK monodromy
matrix) is greater than 1, that is $\left|s\right|>1$. Consequently
the MCK instability is reduced mathematically to the identification
of conditions when the Floquet multipliers $s$ satisfy inequality
$\left|s\right|>1$. With that in mind we proceed as follows.

We introduce first new frequency dependent parameter
\begin{equation}
K=K\left(\omega\right)=\frac{B\left(\omega\right)}{2f}=K_{0}\frac{\omega^{2}}{\omega^{2}-\omega_{0}^{2}},\quad K_{0}=\frac{b^{2}\beta_{0}}{2f}=\frac{b^{2}\sigma_{\mathrm{B}}}{4\lambda_{\mathrm{rp}}c_{0}}=\frac{b^{2}g_{\mathrm{B}}}{c_{0}},\quad g_{\mathrm{B}}=\frac{\sigma_{\mathrm{B}}}{4\lambda_{\mathrm{rp}}},\label{eq:bfKpsi1a}
\end{equation}
where $\lambda_{\mathrm{rp}}$ and $g_{\mathrm{B}}$ are respectively
the electron plasma wavelength and the e-beam spatial scale (see Table
\ref{tab:mck-par}). We refer to $K=K\left(\omega\right)$ as the
\emph{gain parameter} and to $K_{0}$ the \emph{coefficient of the
gain parameter} for they determine the MCK gain as we show below (see
Definition \ref{def:gain}). Note that gain parameter $K\left(\omega\right)$
defined by equations (\ref{eq:bfKpsi1a}) is evidently a function
of $\frac{\omega}{\omega_{0}}$:
\begin{equation}
K\left(\omega\right)=K_{0}\frac{\omega^{2}}{\omega^{2}-\omega_{0}^{2}}=K_{0}\frac{\check{\omega}^{2}}{\check{\omega}^{2}-1},\quad\check{\omega}=\frac{\omega}{\omega_{0}},\quad K_{0}=\frac{b^{2}\beta_{0}}{2f}=\frac{b^{2}g_{\mathrm{B}}}{c_{0}},\quad g_{\mathrm{B}}=\frac{\sigma_{\mathrm{B}}}{4\lambda_{\mathrm{rp}}}.\label{eq:bfKpsi1c}
\end{equation}
As to the gain coefficient $K_{0}$ in view of equations (\ref{eq:bfKpsi1a})
it satisfies the following identities
\begin{equation}
K_{0}=\lim_{\omega\rightarrow\infty}K\left(\omega\right)=\left.K\left(\omega\right)\right|_{\omega_{0}=0}=\frac{b^{2}\beta_{0}}{2f}=\frac{b^{2}g_{\mathrm{B}}}{c_{0}},\quad g_{\mathrm{B}}=\frac{\sigma_{\mathrm{B}}}{4\lambda_{\mathrm{rp}}}.\label{eq:bfKpsi1ca}
\end{equation}
Using the gain parameter $K$ we recast parameter $b_{f}$ in equations
(\ref{eq:abcha3ekL}) as follows:
\begin{equation}
b_{f}=b_{f}\left(\omega\right)=K\left(\omega\right)\sin\left(f\right)-\cos\left(f\right),\quad K\left(\omega\right)=K_{0}\frac{\omega^{2}}{\omega^{2}-\omega_{0}^{2}},\quad K_{0}=\frac{b^{2}g_{\mathrm{B}}}{c_{0}}.\label{eq:bfKpsi1b}
\end{equation}
and acknowledging its key role for the MCK instability we name $b_{f}$
\emph{instability parameter}. It turns out that the high-frequency
limit $b_{f}^{\infty}$ of instability parameter $b_{f}$ defined
by
\begin{equation}
b_{f}^{\infty}=\lim_{\omega\rightarrow\infty}b_{f}\left(\omega\right)=K_{0}\sin\left(f\right)-\cos\left(f\right),\quad K_{0}=\frac{b^{2}g_{\mathrm{B}}}{c_{0}},\label{eq:bfKinf1}
\end{equation}
plays significant role in the analysis of the MCK instability and
its gain.

We turn now to two Floquet multipliers $s_{\pm}$ which are the eigenvalues
of the monodromy matrix $\mathscr{T}$ defined by equations (\ref{eq:abcha3dkL})
and (\ref{eq:abcha3ekL}). Hence, $s_{\pm}$ are solutions to the
characteristic equation $\det\left\{ \mathscr{T}-s\mathbb{I}\right\} =0$
which is the following quadratic equation:
\begin{gather}
s=e^{\mathrm{i}k}=e^{\mathrm{i}\omega}S:\;S^{2}+2b_{f}S+1=0;\quad S=S_{\pm}=-b_{f}\pm\sqrt{b_{f}^{2}-1},\label{eq:bfKpsi1ba}
\end{gather}
readily implying
\begin{gather}
s_{\pm}=e^{\mathrm{i}k_{\pm}}=e^{\mathrm{i}\omega}S_{\pm}=e^{\mathrm{i}\omega}\left(-b_{f}\pm\sqrt{b_{f}^{2}-1}\right),\quad b_{f}=b_{f}\left(\omega\right)=K\left(\omega\right)\sin\left(f\right)-\cos\left(f\right),\label{eq:bfKpsi1bb}\\
K\left(\omega\right)=K_{0}\frac{\omega^{2}}{\omega^{2}-\omega_{0}^{2}},\quad K_{0}=\frac{b^{2}\beta_{0}}{2f}=\frac{b^{2}g_{\mathrm{B}}}{c_{0}},\nonumber 
\end{gather}
We will also use the following normalized form of the characteristic
equation (\ref{eq:bfKpsi1ba})
\begin{equation}
s^{2}+2b_{f}e^{\mathrm{i}\omega}s+e^{2\mathrm{i}\omega}=0,\quad b_{f}=b_{f}\left(\omega\right)=K_{0}\frac{\omega^{2}}{\omega^{2}-\omega_{0}^{2}}\sin\left(f\right)-\cos\left(f\right).\label{eq:bfKpsi1bc}
\end{equation}

Equations (\ref{eq:bfKpsi1ba}) show that\emph{ parameter $b_{f}$
completely determines the two Floquet multipliers $s_{\pm}$} justifying
its its name the instability parameter.\emph{ Importantly, the characteristic
equation (\ref{eq:bfKpsi1ba}) can be viewed as an expression of the
dispersion relations between the frequency $\omega$ and the wavenumber
$k$} as we discuss in Section \ref{sec:disp}.

Note that quantities $S_{\pm}$ that solve quadratic equation (\ref{eq:bfKpsi1ba})
and $s_{\pm}=e^{\mathrm{i}\omega}S_{\pm}$ satisfy the following elementary
identities
\begin{equation}
S_{+}S_{-}=1;\quad s_{+}s_{-}=e^{2\mathrm{i}\omega},\quad\left|s_{+}\right|\left|s_{-}\right|=1,\label{eq:sSbf1a}
\end{equation}
implying that if the pairs $S_{+}$ and $S_{-}$ and consequently
$s_{+}$ and $s_{-}$ are outside the unit circle then the both pairs
are symmetric with respect to it as illustrated in Figures \ref{fig:mck-eig1}
(a) and \ref{fig:mck-eig1} (a). Not also that expressions (\ref{eq:bfKpsi1bb})
for $s_{\pm}$ imply that $s_{+}$ and $s_{-}$ are equal if and only
if $b_{f}^{2}=1$, that is
\begin{equation}
s_{+}=s_{-}\Leftrightarrow b_{f}=\pm1.\label{eq:sSbf1b}
\end{equation}
\begin{figure}[h]
\begin{centering}
\includegraphics[scale=0.35]{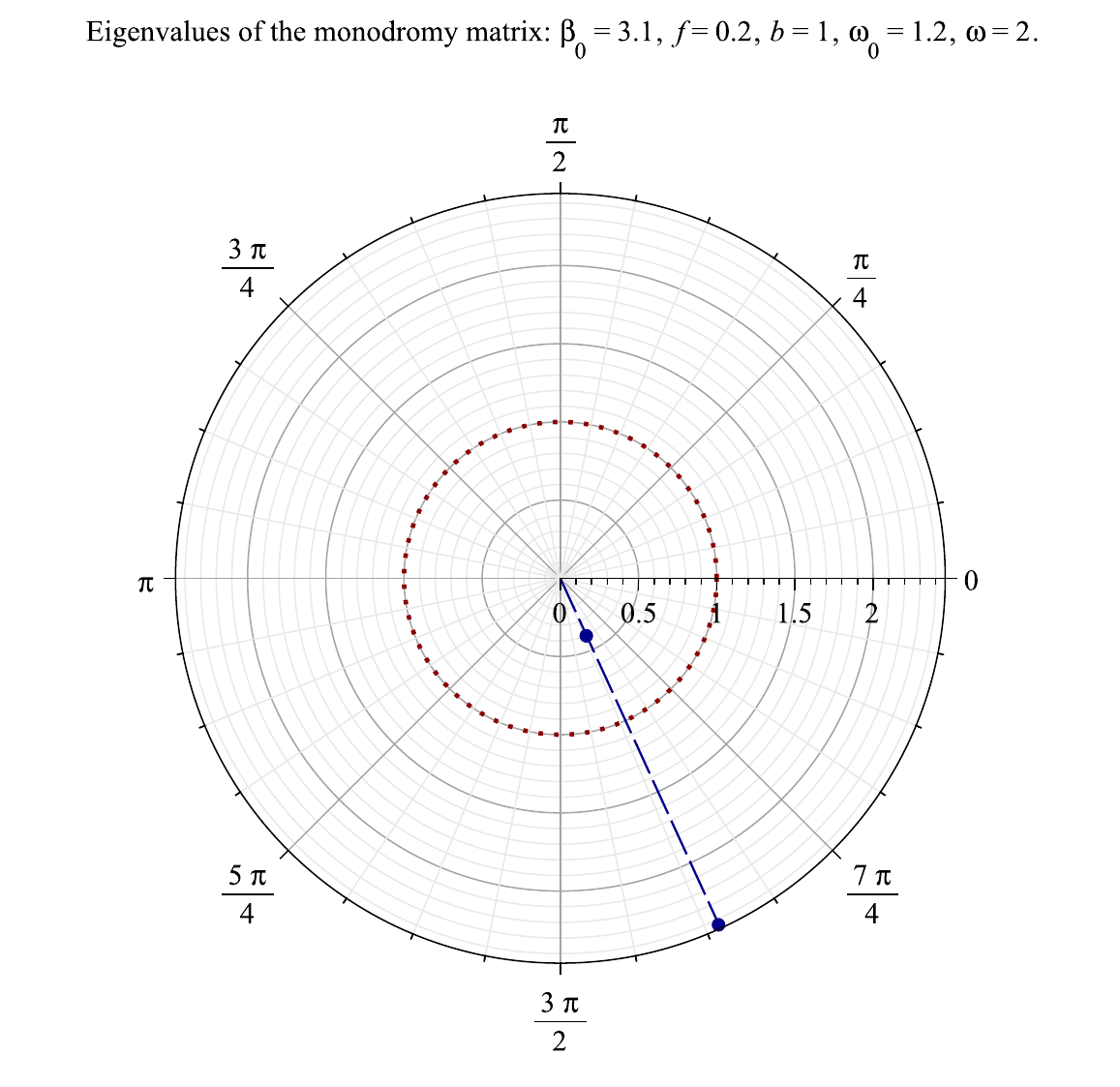}\hspace{0.5cm}\includegraphics[scale=0.33]{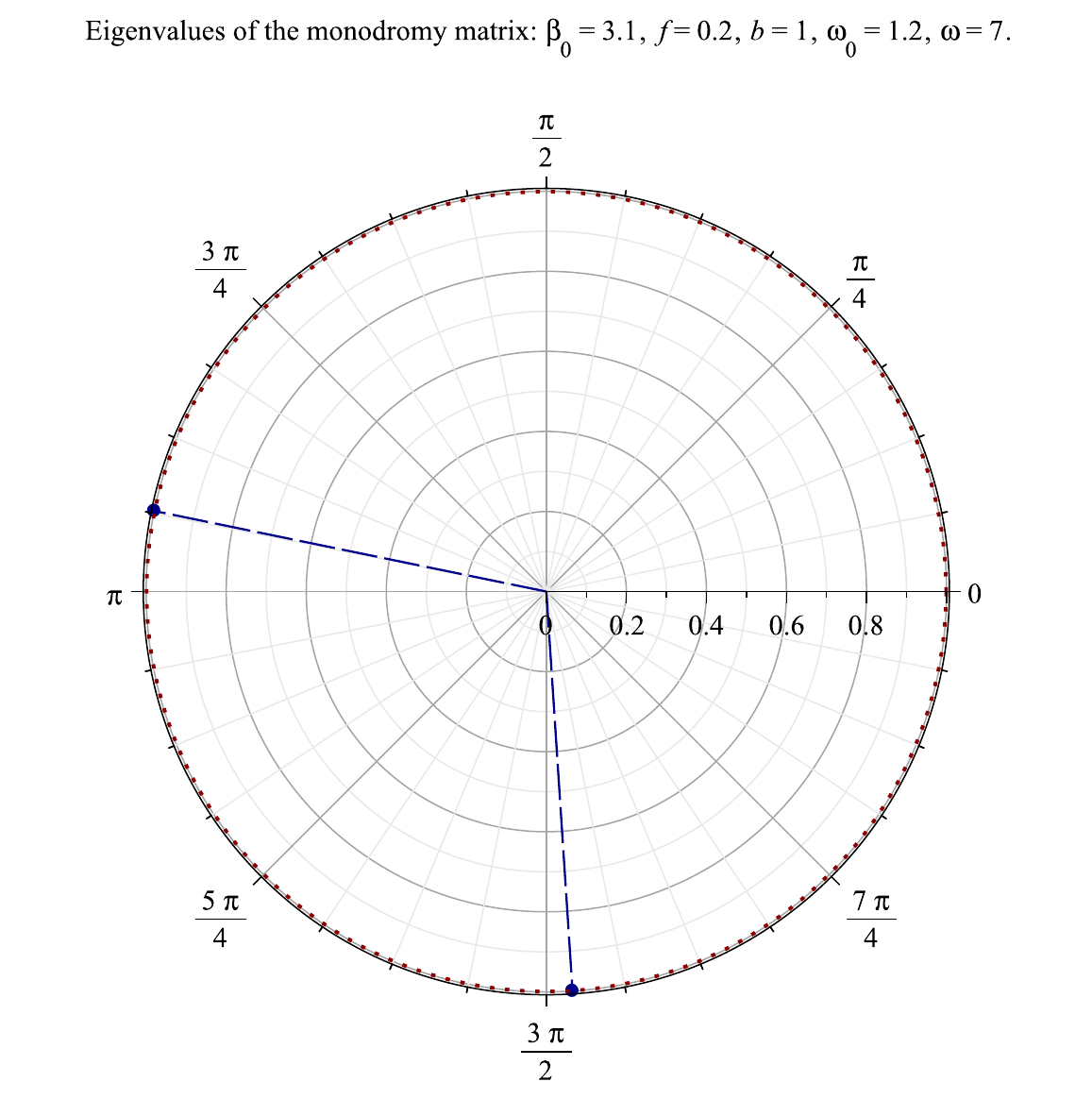}
\par\end{centering}
\centering{}(a)\hspace{7cm}(b)\caption{\label{fig:mck-eig1} The two complex eigenvalues (the Floquet multipliers)
$s_{\pm}=e^{\mathrm{i}\omega}S_{\pm}$ of the monodromy matrix $\mathscr{T}$
defined by equations (\ref{eq:bfKpsi1bb}) for $\beta_{0}=3.1$, $f=0.2$,
$b=1$, $\omega_{0}=1.2$ and two values of $\omega$: (a) $\omega_{0}=1.2<\omega=2<\varOmega_{0.2}^{+}$:
two eigenvalues shown as solid (blue) dots reside outside the the
unit circle; (b) $\omega=7>\varOmega_{0.2}^{+}$: two eigenvalues
shown as solid (blue) dots reside on the unit circle. The horizontal
and vertical axes represent respectively $\Re\left\{ s\right\} $
and $\Im\left\{ s\right\} $. In both cases $\omega=2,7>\omega_{0}=1.2$,
that is the chosen frequencies $\omega$ are above the resonant frequency
$\omega_{0}=1.2$. The doted red circle represents the unit circle.
See Fig. \ref{fig:Om-plusmin} showing plots of functions $\varOmega_{f}^{\pm}$.}
\end{figure}
\begin{figure}[h]
\begin{centering}
\includegraphics[scale=0.32]{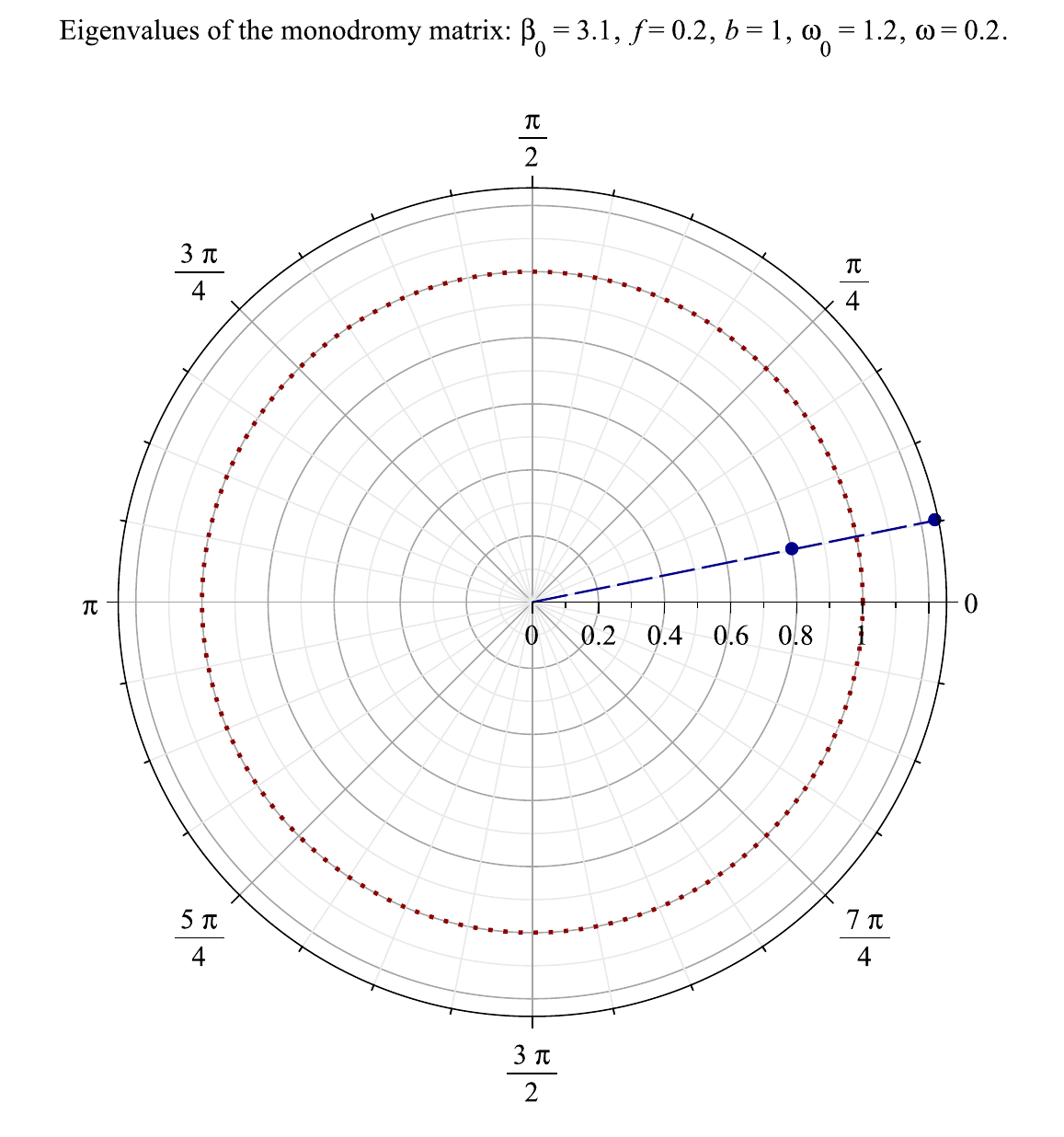}\hspace{1.2cm}\includegraphics[scale=0.3]{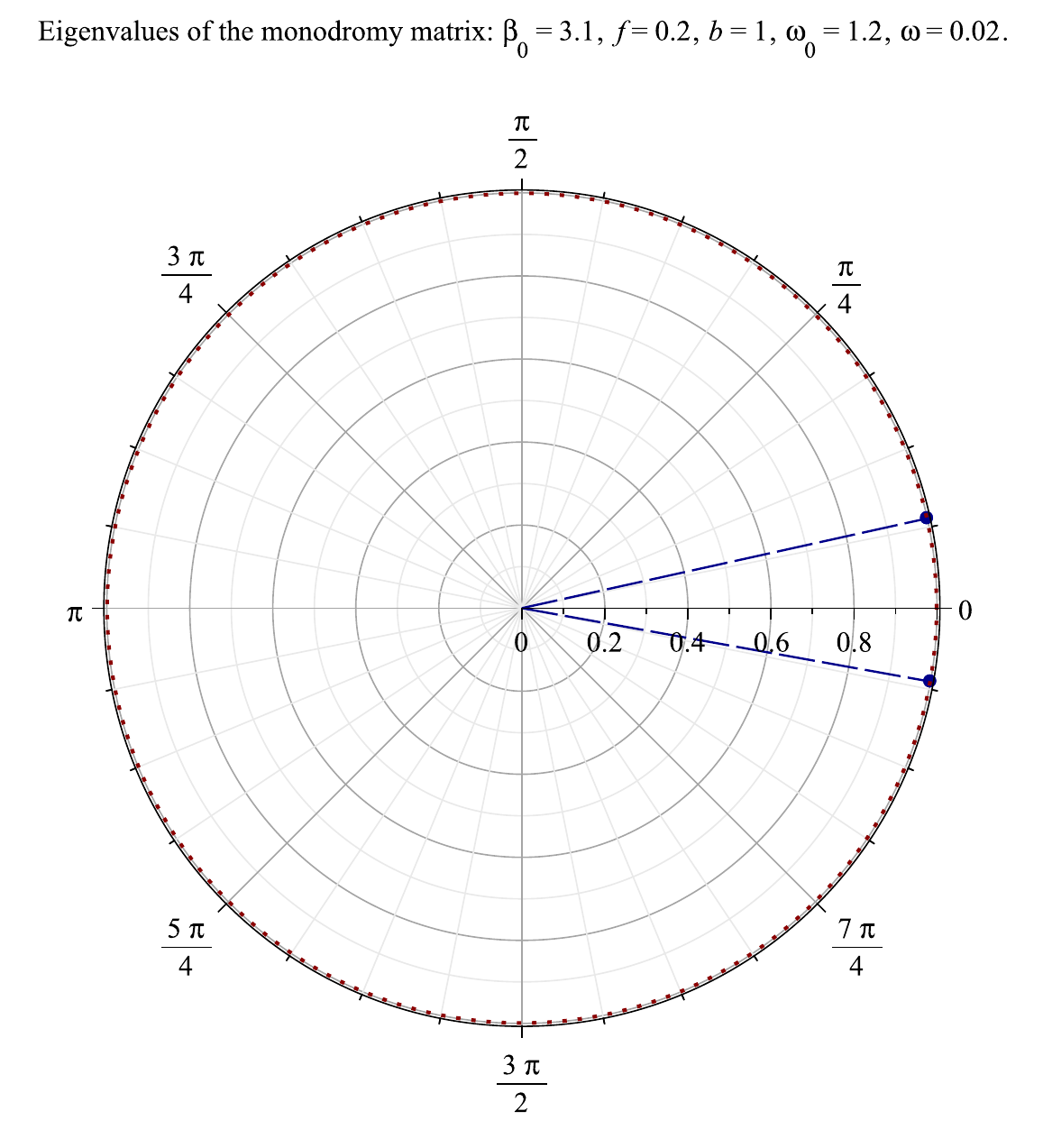}
\par\end{centering}
\centering{}(a)\hspace{7cm}(b)\caption{\label{fig:mck-eig2} The two complex eigenvalues (the Floquet multipliers)
$s_{\pm}=e^{\mathrm{i}\omega}S_{\pm}$ of the monodromy matrix $\mathscr{T}$
defined by equations (\ref{eq:abcha3dkL}) for $\beta_{0}=3.1$, $f=0.2$,
$b=1$, $\omega_{0}=1.2$ and two values of $\omega$: (a) $\varOmega_{0.2}^{-}<\omega=0.2<\omega_{0}=1.2$:
two eigenvalues shown as solid (blue) dots reside outside the the
unit circle; (b) $\omega=0.02<\varOmega_{0.2}^{-}$: two eigenvalues
shown as solid (blue) dots reside on the unit circle. The horizontal
and vertical axes represent respectively $\Re\left\{ s\right\} $
and $\Im\left\{ s\right\} $. In both cases $\omega=0.02,0.2<\omega_{0}=1.2$,
that is the chosen frequencies $\omega$ are below the resonant frequency
$\omega_{0}=1.2$. The doted red circle represents the unit circle.
See Fig. \ref{fig:Om-plusmin} showing plots of functions $\varOmega_{f}^{\pm}$.}
\end{figure}

Figures \ref{fig:mck-eig1} and \ref{fig:mck-eig1} illustrate possible
locations of the two Floquet multipliers $s_{\pm}$ in the complex
plane $\mathbb{C}$.

Using (\ref{eq:bfKpsi1bb}) and carrying out elementary algebraic
transformations we obtain the following statement.
\begin{thm}[Floquet multipliers]
\label{thm:floqmultbf} The instability parameter $b_{f}=b_{f}\left(\omega\right)$
defined by equations (\ref{eq:bfKpsi1b}) and its absolute value $\left|b_{f}\right|$
are respectively $2\pi$-periodic and $\pi$-periodic functions of
$f$, that is
\begin{equation}
b_{f+2\pi}=b_{f};\quad b_{f+\pi}=-b_{f};\quad\left|b_{f+\pi}\right|=\left|b_{f}\right|,\quad b_{f}=b_{f}\left(\omega\right).\label{eq:bfspm1a}
\end{equation}
Let also $s_{\pm}$ be the two MCK Floquet multipliers solving quadratic
equation (\ref{eq:bfKpsi1ba}). Then exactly one of the following
three possibilities always occurs:
\begin{gather}
\left|b_{f}\right|>1:\quad s_{\pm}=-\mathrm{sign}\,\left\{ b_{f}\right\} \left(\left|b_{f}\right|\pm\sqrt{b_{f}^{2}-1}\right)\exp\left\{ \mathrm{i}\omega\right\} ,\quad\left|s_{-}\right|<1<\left|s_{+}\right|;\label{eq:bfspm1b}
\end{gather}
\begin{equation}
\left|b_{f}\right|<1:\quad s_{\pm}=-\mathrm{sign}\,\left\{ b_{f}\right\} \exp\left\{ \mathrm{i}\left[\omega\pm\arccos\left(\left|b_{f}\right|\right)\right]\right\} ,\quad\left|s_{\pm}\right|=1;\label{eq:bfspm1c}
\end{equation}
\begin{equation}
\left|b_{f}\right|=1:\quad s_{\pm}=-\mathrm{sign}\,\left\{ b_{f}\right\} \exp\left\{ \mathrm{i}\omega\right\} ,\quad s_{+}=s_{-},\quad\left|s_{\pm}\right|=1.\label{eq:bfspm1d}
\end{equation}
where
\begin{equation}
b_{f}=b_{f}\left(\omega\right)=K_{0}\frac{\omega^{2}}{\omega^{2}-\omega_{0}^{2}}\sin\left(f\right)-\cos\left(f\right),\quad K_{0}=\frac{b^{2}\beta_{0}}{2}=\frac{b^{2}g_{\mathrm{B}}}{c_{0}},\quad g_{\mathrm{B}}=\frac{\sigma_{\mathrm{B}}}{4\lambda_{\mathrm{rp}}}.\label{eq:bfspm1e}
\end{equation}
Relations (\ref{eq:bfspm1a})-(\ref{eq:bfspm1e}) imply
\begin{equation}
s_{\pm}\left(f+2\pi\right)=s_{\pm}\left(f\right);\quad s_{\pm}\left(f+\pi\right)=-s_{\pm}\left(f\right);\quad\left|s_{\pm}\left(f+\pi\right)\right|=\left|s_{\pm}\left(f\right)\right|.\label{eq:bfspm1g}
\end{equation}
\end{thm}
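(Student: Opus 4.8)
The plan is to reduce the whole statement to elementary facts about the real quadratic $S^{2}+2b_{f}S+1=0$, carried out in three short steps. The first step is the periodicity relations (\ref{eq:bfspm1a}) for $b_{f}$, which are immediate from the definition $b_{f}=b_{f}(\omega)=K(\omega)\sin(f)-\cos(f)$ in (\ref{eq:bfKpsi1b}): the coefficient $K(\omega)=K_{0}\omega^{2}/(\omega^{2}-\omega_{0}^{2})$ carries no dependence on $f$, while $\sin$ and $\cos$ are $2\pi$-periodic and change sign when $f$ is replaced by $f+\pi$, so $b_{f+2\pi}=b_{f}$, $b_{f+\pi}=-b_{f}$, and $|b_{f+\pi}|=|b_{f}|$. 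I would also record at this stage that $b_{f}$ is a \emph{real} number for every real $\omega\neq\pm\omega_{0}$, since $K_{0}=b^{2}g_{\mathrm{B}}/c_{0}>0$ and $\omega^{2}/(\omega^{2}-\omega_{0}^{2})$ is real; this realness is exactly what drives the case split below.

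Second, for the trichotomy I would start from (\ref{eq:bfKpsi1ba}): $s_{\pm}=e^{\mathrm{i}\omega}S_{\pm}$ with $S_{\pm}=-b_{f}\pm\sqrt{b_{f}^{2}-1}$ the two roots of $S^{2}+2b_{f}S+1=0$. Since $|e^{\mathrm{i}\omega}|=1$ we have $|s_{\pm}|=|S_{\pm}|$, and Vieta's relation gives $S_{+}S_{-}=1$, the first identity in (\ref{eq:sSbf1a}). Now split on the sign of the real discriminant $b_{f}^{2}-1$. If $|b_{f}|>1$ the two roots are real and nonzero with product $1$, so one lies strictly outside and the other strictly inside the unit circle; writing $-b_{f}\pm\sqrt{b_{f}^{2}-1}=-\mathrm{sign}\{b_{f}\}\bigl(|b_{f}|\mp\sqrt{b_{f}^{2}-1}\bigr)$ and attaching the index $+$ to the root of larger modulus yields (\ref{eq:bfspm1b}). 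If $|b_{f}|<1$ then $\sqrt{b_{f}^{2}-1}=\mathrm{i}\sqrt{1-b_{f}^{2}}$, so $S_{\pm}=-b_{f}\pm\mathrm{i}\sqrt{1-b_{f}^{2}}$ is a complex-conjugate pair with $|S_{\pm}|^{2}=b_{f}^{2}+\bigl(1-b_{f}^{2}\bigr)=1$; using $|b_{f}|=\cos(\arccos|b_{f}|)$ and $\sqrt{1-b_{f}^{2}}=\sin(\arccos|b_{f}|)$ one checks that the unordered pair $\{S_{+},S_{-}\}$ equals $\{-\mathrm{sign}\{b_{f}\}\exp(\mathrm{i}\arccos|b_{f}|),\,-\mathrm{sign}\{b_{f}\}\exp(-\mathrm{i}\arccos|b_{f}|)\}$, and multiplying by $e^{\mathrm{i}\omega}$ gives (\ref{eq:bfspm1c}). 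Finally if $|b_{f}|=1$ the quadratic has the double root $S=-b_{f}=-\mathrm{sign}\{b_{f}\}$, which is (\ref{eq:bfspm1d}) together with $s_{+}=s_{-}$; more generally the discriminant vanishes precisely when $b_{f}=\pm1$, which is (\ref{eq:sSbf1b}). Since the conditions $|b_{f}|>1$, $|b_{f}|<1$, $|b_{f}|=1$ are mutually exclusive and exhaust all real values of $|b_{f}|$, exactly one of the three possibilities always occurs, and (\ref{eq:bfspm1e}) is just (\ref{eq:bfKpsi1b}) with $K(\omega)$ written out.

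Third, the periodicity relations (\ref{eq:bfspm1g}) for $s_{\pm}$ follow by feeding (\ref{eq:bfspm1a}) into the closed forms (\ref{eq:bfspm1b})--(\ref{eq:bfspm1d}): the quantities $|b_{f}|$, $\sqrt{b_{f}^{2}-1}$ and $\arccos|b_{f}|$ are all $\pi$-periodic in $f$, hence also $2\pi$-periodic, and the factor $e^{\mathrm{i}\omega}$ is independent of $f$, so $s_{\pm}(f+2\pi)=s_{\pm}(f)$; under $f\mapsto f+\pi$ only $\mathrm{sign}\{b_{f}\}$ changes, flipping sign, whence $s_{\pm}(f+\pi)=-s_{\pm}(f)$ and therefore $|s_{\pm}(f+\pi)|=|s_{\pm}(f)|$. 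Since $|b_{f+\pi}|=|b_{f}|$, the shift by $\pi$ never moves $\omega$ across the boundaries of the three regimes, so the same closed form is valid on both sides of each identity. I do not expect a genuine obstacle here; the only point that needs care throughout is the bookkeeping of $\mathrm{sign}\{b_{f}\}$ and of which of the two roots carries the label $+$ in each regime, everything else being a one-line computation.
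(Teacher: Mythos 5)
Your proposal is correct and follows essentially the same route the paper takes (the paper relegates the argument to ``elementary algebraic transformations'' of the quadratic $S^{2}+2b_{f}S+1=0$ together with the explicit form of $b_{f}$): periodicity of $b_{f}$ from the trigonometric factors, the trichotomy from the sign of the real discriminant $b_{f}^{2}-1$ with $|s_{\pm}|=|S_{\pm}|$ and $S_{+}S_{-}=1$, and the periodicity of $s_{\pm}$ by substitution. The only caveat is cosmetic: your identity $-b_{f}\pm\sqrt{b_{f}^{2}-1}=-\mathrm{sign}\{b_{f}\}\left(|b_{f}|\mp\sqrt{b_{f}^{2}-1}\right)$ holds literally only for $b_{f}>1$ (for $b_{f}<-1$ the signs inside are $\pm$), but since you pass to the unordered pair and attach the index $+$ to the larger-modulus root, the conclusion (\ref{eq:bfspm1b}) is unaffected.
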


As it was already pointed out $\left|s_{\pm}\right|\neq1$ determines
the onset of the MCK instability. According to Theorem \ref{thm:floqmultbf}
the absolute value of each of the Floquet multipliers $\left|s_{\pm}\left(f\right)\right|$
is $\pi$-periodic functions of $f$. Consequently, if we are interested
in smaller values of normalized period $f=\frac{2\pi a}{\lambda_{\mathrm{rp}}}$
, the parameter that effects the instability, we may impose the following
assumption.

\begin{assumption} \label{ass:fpi}(smaller MCK period). The MCK
normalized period $f$ satisfies the following bounds:
\begin{equation}
0<f=\frac{2\pi a}{\lambda_{\mathrm{rp}}}<\pi.\label{eq:bfspm2a}
\end{equation}

\end{assumption}

\section{Instability parameter and instability frequencies\label{sec:instfreq}}

We assume here that Assumption \ref{ass:fpi}, that is $0<f<\pi$,
holds. As to the MCK instability according to Theorem \ref{thm:floqmultbf}
its presence is determined entirely by the instability parameter $b_{f}\left(\omega\right)$
defined by equations (\ref{eq:bfKpsi1b}). More precisely, the instability
occurs if and only if $\left|b_{f}\left(\omega\right)\right|>1$ and
we want to identify all points $\left(f,\omega\right)$ when it is
the case and Figure \ref{fig:Om-plusmin} illustrates ultimate results
of our analysis of the matter.

When searching for all points $\left(f,\omega\right)$ such that $\left|b_{f}\left(\omega\right)\right|>1$
we want to identify first all points $\left(f,\omega\right)$ for
which $b_{f}\left(\omega\right)=\pm1$, that is
\begin{equation}
b_{f}\left(\omega\right)=K_{0}\frac{\omega^{2}}{\omega^{2}-\omega_{0}^{2}}\sin\left(f\right)-\cos\left(f\right)=\pm1.\label{eq:bfKom1a}
\end{equation}
To separate out variables $f$ and $\omega$ in equations (\ref{eq:bfKom1a})
we recast them into
\begin{equation}
r\left(\omega\right)=\frac{\omega^{2}}{\omega^{2}-\omega_{0}^{2}}=\frac{1+\cos\left(f\right)}{K_{0}\sin\left(f\right)}=\frac{1}{K_{0}\tan\left(\frac{f}{2}\right)},\quad\omega>0,\quad0<f<\pi,\label{eq:bfKom1b}
\end{equation}
\begin{equation}
r\left(\omega\right)=\frac{\omega^{2}}{\omega^{2}-\omega_{0}^{2}}=\frac{-1+\cos\left(f\right)}{K_{0}\sin\left(f\right)}=-\frac{\tan\left(\frac{f}{2}\right)}{K_{0}},\quad\omega>0,\quad0<f<\pi.\label{eq:bfKom1c}
\end{equation}
Note that function $r\left(\omega\right)$ in equations (\ref{eq:bfKom1b})
and (\ref{eq:bfKom1c}) has the following properties: (i) it is a
monotonically decreasing function of $\omega\geq0$ with a simple
pole at $\omega=\omega_{0}$; (ii) it maps one-to-one interval $\left(\omega_{0},+\infty\right)$
onto $\left(1,+\infty\right)$ and interval $\left(0,\omega_{0}\right)$
onto $\left(-\infty,0\right)$. The monotonicity of $r\left(\omega\right)$
and expressions for $b_{f}\left(\omega\right)$ in equations (\ref{eq:bfKom1a})
and $b_{f}^{\infty}$ in equations (\ref{eq:bfKinf1}) readily imply
the following low bound
\begin{equation}
b_{f}\left(\omega\right),b_{f}^{\infty}>-\cos\left(f\right)>-1,\quad0<f<\pi,\quad\omega>\omega_{0}.\label{eq:bfomf1a}
\end{equation}
Equations (\ref{eq:bfKinf1}) for $b_{f}^{\infty}$ imply also the
following factorization

\begin{equation}
b_{f}^{\infty}-1=K_{0}\sin\left(f\right)-\cos\left(f\right)-1=K_{0}\sin\left(f\right)\left[1-\frac{1}{K_{0}\tan\left(\frac{f}{2}\right)}\right].\label{eq:bfomf1b}
\end{equation}
The high-frequency limit $b_{f}^{\infty}$ is of importance in the
analysis of the MCK instabilities and its significant properties are
collected in the following statement.
\begin{thm}[the high-frequency limit of the instability parameter]
\label{thm:bfinf} Let the high-frequency limit $b_{f}^{\infty}$
of the instability parameter be defined by equations (\ref{eq:bfKinf1}).
Then there exists a unique value $f_{\mathrm{cr}}$ on interval $\left(0,\pi\right)$
of the normalized period $f$ such that

\begin{equation}
b_{f_{\mathrm{cr}}}^{\infty}=1,\quad0<f_{\mathrm{cr}}<\pi,\label{eq:bfomf1c}
\end{equation}
and we refer to it as the\emph{ critical value} and the following
representation holds
\begin{equation}
f_{\mathrm{cr}}=2\arctan\left(\frac{1}{K_{0}}\right),\quad\text{ where }K_{0}=\frac{b^{2}g_{\mathrm{B}}}{c_{0}},\quad g_{\mathrm{B}}=\frac{\sigma_{\mathrm{B}}}{4\lambda_{\mathrm{rp}}},\quad\left|\arctan\left(*\right)\right|<\frac{\pi}{2}.\label{eq:bfomf1d}
\end{equation}
The following identities hold for $f_{\mathrm{cr}}$:
\begin{equation}
\tan\left(\frac{f_{\mathrm{cr}}}{2}\right)=\frac{1}{K_{0}},\quad\sin\left(\frac{f_{\mathrm{cr}}}{2}\right)=\frac{1}{\sqrt{1+K_{0}^{2}}},\quad\cos\left(\frac{f_{\mathrm{cr}}}{2}\right)=\frac{K_{0}}{\sqrt{1+K_{0}^{2}}}.\label{eq:bfomf1e}
\end{equation}
Figure \ref{fig:mck-fcr} shows how $f_{\mathrm{cr}}=2\arctan\left(\frac{1}{K_{0}}\right)$
varies with $K_{0}$. 

The high-frequency limit $b_{f}^{\infty}$ can be alternatively represented
by the following equations:
\begin{equation}
b_{f}^{\infty}=\sqrt{1+K_{0}^{2}}\sin\left(f-\frac{f_{\mathrm{cr}}}{2}\right)=\sqrt{1+K_{0}^{2}}\sin\left(f-\arctan\left(\frac{1}{K_{0}}\right)\right),\label{eq:bfomf1f}
\end{equation}
\begin{equation}
b_{f}^{\infty}=-\sqrt{1+K_{0}^{2}}\cos\left(f+\arctan\left(K_{0}\right)\right),\quad\left|\arctan\left(*\right)\right|<\frac{\pi}{2}.\label{eq:bfomf1g}
\end{equation}
In addition to that $b_{f}^{\infty}$ satisfies the following relations
\begin{equation}
b_{f}^{\infty}>-1,\quad0<f<\pi;\label{eq:bfomf2a}
\end{equation}
\begin{equation}
b_{f}^{\infty}<1,\quad0<f<f_{\mathrm{cr}};\quad b_{f_{\mathrm{cr}}}^{\infty}=1;\quad b_{f}^{\infty}>1,\quad f_{\mathrm{cr}}<f<\pi.\label{eq:bfomf2b}
\end{equation}
Importantly, the MCK normalized period $f_{\mathrm{cr}}$ signifies
the onset of the MCK instability for all frequencies $\omega>\omega_{0}$,
that is for $f_{\mathrm{cr}}<f<\pi$ the MCK system is unstable for
all $\omega>\omega_{0}$, see Fig. \ref{fig:mck-gain1}.
\end{thm}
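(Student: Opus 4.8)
The plan is to exploit the elementary fact that $b_{f}^{\infty}=K_{0}\sin f-\cos f$ is a single sinusoid in $f$, put it in amplitude--phase form, and then read off every assertion. First I introduce the phase $\varphi=\arctan\left(1/K_{0}\right)$, which lies in $\left(0,\pi/2\right)$ because $K_{0}=b^{2}g_{\mathrm{B}}/c_{0}>0$; in particular $\cos\varphi=K_{0}/\sqrt{1+K_{0}^{2}}$ and $\sin\varphi=1/\sqrt{1+K_{0}^{2}}$, which are precisely the identities (\ref{eq:bfomf1e}) once we set $f_{\mathrm{cr}}/2:=\varphi$, i.e. $f_{\mathrm{cr}}=2\arctan\left(1/K_{0}\right)\in\left(0,\pi\right)$, giving (\ref{eq:bfomf1d}). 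With these values,
\[
b_{f}^{\infty}=K_{0}\sin f-\cos f=\sqrt{1+K_{0}^{2}}\left(\cos\varphi\sin f-\sin\varphi\cos f\right)=\sqrt{1+K_{0}^{2}}\sin\left(f-\varphi\right),
\]
which is (\ref{eq:bfomf1f}). The alternative form (\ref{eq:bfomf1g}) then follows from the classical identity $\arctan\left(1/K_{0}\right)+\arctan\left(K_{0}\right)=\pi/2$, valid for $K_{0}>0$, which gives $\varphi=\pi/2-\arctan\left(K_{0}\right)$ and hence $\sin\left(f-\varphi\right)=\sin\left(f-\pi/2+\arctan\left(K_{0}\right)\right)=-\cos\left(f+\arctan\left(K_{0}\right)\right)$.

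Next I settle existence and uniqueness of the critical value, i.e. (\ref{eq:bfomf1c}). Using the amplitude--phase form, $b_{f}^{\infty}=1$ is equivalent to $\sin\left(f-\varphi\right)=1/\sqrt{1+K_{0}^{2}}=\sin\varphi$. On $0<f<\pi$ the argument $f-\varphi$ runs over $\left(-\varphi,\pi-\varphi\right)$, and the only solutions of $\sin\xi=\sin\varphi$ in that range are $\xi=\varphi$ and $\xi=\pi-\varphi$, i.e. $f=2\varphi$ and $f=\pi$. The endpoint $f=\pi$ is excluded by the strict inequality $f<\pi$, so $f_{\mathrm{cr}}=2\varphi$ is the unique root in $\left(0,\pi\right)$.

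For the sign relations I proceed as follows. The lower bound (\ref{eq:bfomf2a}) is immediate from the original expression: for $0<f<\pi$ one has $\sin f>0$, hence $b_{f}^{\infty}=K_{0}\sin f-\cos f>-\cos f>-1$, the last step because $\cos f<1$ on $\left(0,\pi\right)$; this is exactly the $b_{f}^{\infty}$ part of (\ref{eq:bfomf1a}). For (\ref{eq:bfomf2b}) I compare $\sin\left(f-\varphi\right)$ with $\sin\varphi=\sin\left(f_{\mathrm{cr}}/2\right)$, using $\varphi<\pi/2$. When $0<f<f_{\mathrm{cr}}$ the argument $f-\varphi$ stays in $\left(-\varphi,\varphi\right)\subset\left(-\pi/2,\pi/2\right)$, where $\sin$ is strictly increasing, so $\sin\left(f-\varphi\right)<\sin\varphi$, i.e. $b_{f}^{\infty}<1$. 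When $f_{\mathrm{cr}}<f<\pi$ the argument lies in $\left(\varphi,\pi-\varphi\right)$, an interval whose two endpoints share the value $\sin\varphi$ and on which $\sin$ is unimodal (increasing up to its interior maximum at $\pi/2$, then decreasing); hence $\sin\left(f-\varphi\right)>\sin\varphi$ throughout, i.e. $b_{f}^{\infty}>1$; equality holds precisely at $f=f_{\mathrm{cr}}$. This last case is the only mildly delicate point, since plain monotonicity fails there because $f-\varphi$ crosses $\pi/2$; the symmetric-endpoint/unimodality observation replaces it.

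Finally, for the instability statement I use the properties of $r\left(\omega\right)=\omega^{2}/\left(\omega^{2}-\omega_{0}^{2}\right)$ recorded just before the theorem: for $\omega>\omega_{0}$ one has $r\left(\omega\right)>1$. Since $\sin f>0$ for $0<f<\pi$, it follows that $b_{f}\left(\omega\right)=K_{0}r\left(\omega\right)\sin f-\cos f>K_{0}\sin f-\cos f=b_{f}^{\infty}$. Therefore, when $f_{\mathrm{cr}}<f<\pi$ we get $b_{f}\left(\omega\right)>b_{f}^{\infty}>1$ for every $\omega>\omega_{0}$, so $\left|b_{f}\left(\omega\right)\right|>1$, and Theorem \ref{thm:floqmultbf}, case (\ref{eq:bfspm1b}), produces an unstable Floquet multiplier $\left|s_{+}\right|>1$ at each such $\omega$. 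Beyond the amplitude--phase rewriting and the $\arctan$ identity there is no real obstacle; the proof is bookkeeping for one shifted sine on $\left(0,\pi\right)$.
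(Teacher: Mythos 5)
Your proof is correct, and it takes a genuinely different route from the paper's. The paper works with the factorization $b_{f}^{\infty}-1=K_{0}\sin\left(f\right)c\left(f\right)$, $c\left(f\right)=1-\frac{1}{K_{0}\tan\left(\frac{f}{2}\right)}$, proves $c\left(f\right)$ is strictly increasing on $\left(0,\pi\right)$ by computing $\partial_{f}c\left(f\right)=\frac{1}{2K_{0}\sin^{2}\left(\frac{f}{2}\right)}>0$ and noting it runs from $-\infty$ to $1$, which gives existence and uniqueness of $f_{\mathrm{cr}}$ and the sign pattern (\ref{eq:bfomf2b}) in one stroke; the amplitude--phase forms (\ref{eq:bfomf1f}), (\ref{eq:bfomf1g}) are only recorded as identities to be checked by elementary trigonometry. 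You instead make the harmonic-addition representation $b_{f}^{\infty}=\sqrt{1+K_{0}^{2}}\sin\left(f-\varphi\right)$, $\varphi=\arctan\left(\frac{1}{K_{0}}\right)$, the engine of the whole argument: uniqueness of $f_{\mathrm{cr}}$ becomes the statement that $\sin\xi=\sin\varphi$ has the single solution $\xi=\varphi$ in $\left(-\varphi,\pi-\varphi\right)$, and the inequalities (\ref{eq:bfomf2b}) follow from monotonicity of $\sin$ on $\left(-\varphi,\varphi\right)$ plus the unimodal/symmetric-endpoint observation on $\left(\varphi,\pi-\varphi\right)$ --- the one place where naive monotonicity fails and which you correctly handle by splitting at $\pi/2$. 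The lower bound (\ref{eq:bfomf2a}) you prove exactly as the paper does, from $\sin f>0$ and $\cos f<1$. You also supply an explicit argument for the final instability claim (monotone $r\left(\omega\right)>1$ for $\omega>\omega_{0}$ gives $b_{f}\left(\omega\right)>b_{f}^{\infty}>1$, then case (\ref{eq:bfspm1b}) of Theorem \ref{thm:floqmultbf}), which the paper's proof leaves implicit and effectively defers to Theorem \ref{thm:Omegapm}. The trade-off: the paper's auxiliary function $c\left(f\right)$ yields uniqueness with no case analysis, while your phase form unifies the representation identities and the sign analysis into one picture of a shifted sine, at the modest cost of the endpoint/unimodality bookkeeping.
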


\begin{proof}
Note that equation (\ref{eq:bfomf1c}) can be recast as
\begin{equation}
b_{f}^{\infty}-1=K_{0}\sin\left(f\right)c\left(f\right),\quad c\left(f\right)=\left[1-\frac{1}{K_{0}\tan\left(\frac{f}{2}\right)}\right],\quad0<f<\pi.\label{eq:bfomf2c}
\end{equation}
Since $\sin\left(f\right)\neq0$ for $0<f<\pi$ equation (\ref{eq:bfomf2c})
is equivalent to 
\begin{equation}
c\left(f\right)=1-\frac{1}{K_{0}\tan\left(\frac{f}{2}\right)}=0,\quad0<f<\pi.\label{eq:bfomf2d}
\end{equation}
Function $c\left(f\right)$ is monotonically increasing on interval
$\left(0,\pi\right)$ since
\begin{equation}
\partial_{f}c\left(f\right)=\frac{1}{2K_{0}\sin^{2}\left(\frac{f}{2}\right)}>0,\quad0<f<\pi,\label{eq:bfomf2e}
\end{equation}
and it varies from $-\infty$ to $1$ on this interval. Hence equation
(\ref{eq:bfomf2d}) and consequently equation (\ref{eq:bfomf1b})
have a unique solution $f_{\mathrm{cr}}$ on interval $\left(0,\pi\right)$
satisfying equation (\ref{eq:bfomf1d}). 

The validity of equations (\ref{eq:bfomf1e})-(\ref{eq:bfomf1g})
can be verified by carrying out elementary trigonometric transformations
of involved expressions. Inequality (\ref{eq:bfomf2a}) readily follows
from the first equation in (\ref{eq:bfomf1b}) $\sin\left(f\right)>0$
for $0<f<\pi$. Relations (\ref{eq:bfomf2a}) follow straightforwardly
from equations (\ref{eq:bfomf2c}) and (\ref{eq:bfomf2d}) and already
established monotonicity of function $c\left(f\right)$.
\end{proof}
The following asymptotic formulas hold for $f_{\mathrm{cr}}$ defined
by equations (\ref{eq:bfomf1d}) :
\begin{equation}
f_{\mathrm{cr}}=\pi-2K_{0}+\frac{2K_{0}^{3}}{3}+O\left(K_{0}^{5}\right),\quad K_{0}\rightarrow0,\label{eq:bfomf2f}
\end{equation}
\begin{equation}
f_{\mathrm{cr}}=\frac{2}{K_{0}}-\frac{2}{3K_{0}^{3}}+O\left(K_{0}^{5}\right),\quad K_{0}\rightarrow+\infty.\label{eq:befomf2g}
\end{equation}
\begin{figure}[h]
\begin{centering}
\includegraphics[scale=0.23]{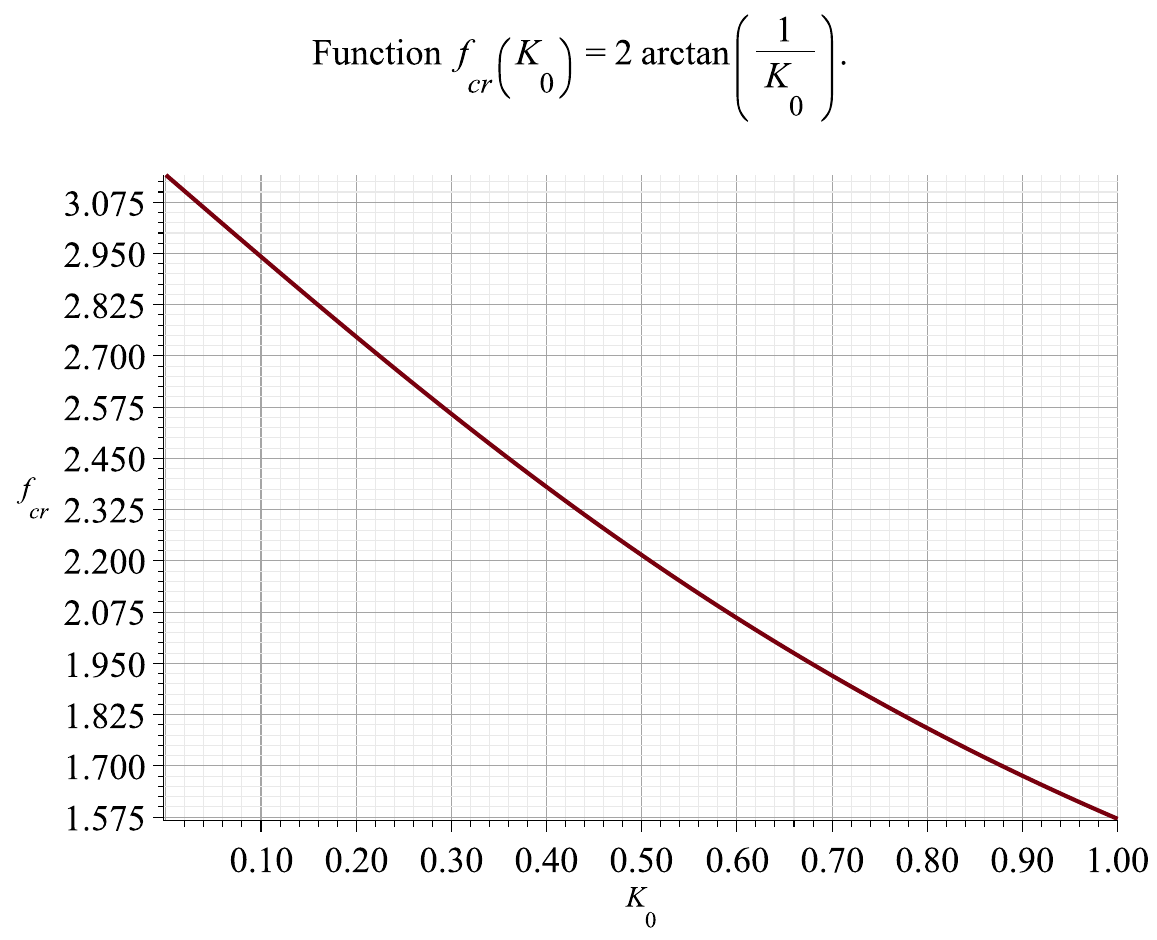}\hspace{0.4cm}\includegraphics[scale=0.23]{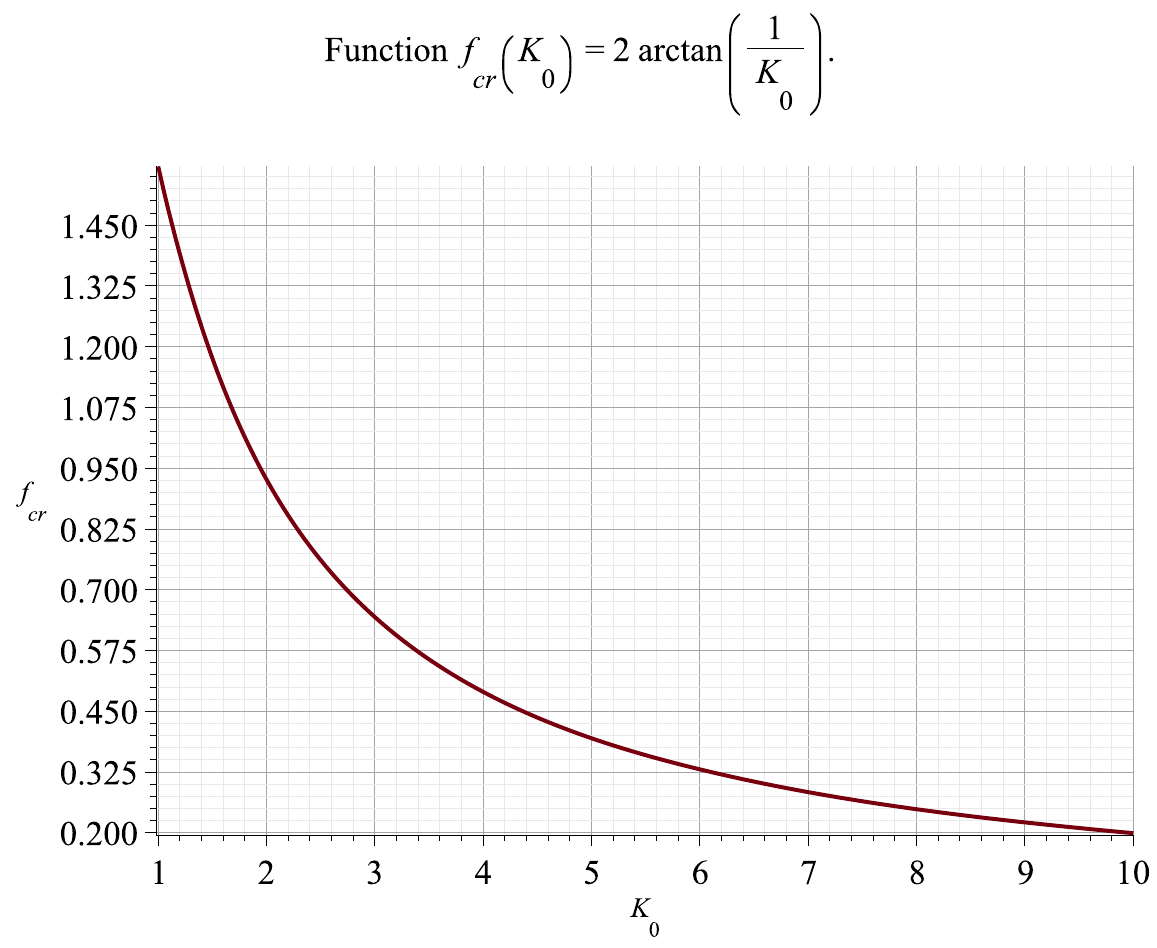}\hspace{0.4cm}\includegraphics[scale=0.25]{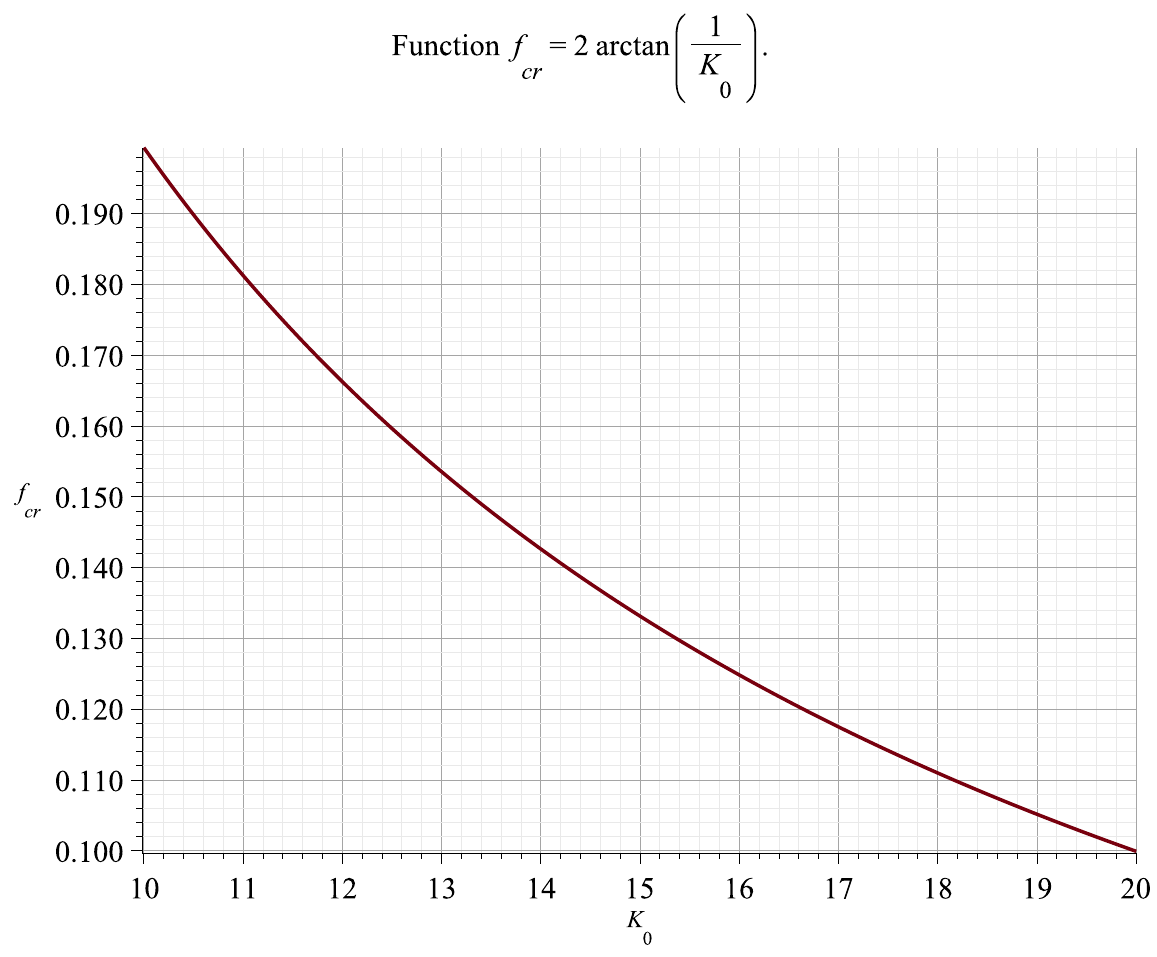}
\par\end{centering}
\centering{}(a)\hspace{4.5cm}(b)\hspace{4.5cm}(c)\caption{\label{fig:mck-fcr} Plots of $f_{\mathrm{cr}}=2\arctan\left(\frac{1}{K_{0}}\right)$
for different ranges of $K_{0}$: (a) $0\protect\leq K_{0}\protect\leq1$;
(b) $1\protect\leq K_{0}\protect\leq10$; (c) $10\protect\leq K_{0}\protect\leq20$.
In all plots the horizontal and vertical axes represent respectively
$K_{0}$ and $f_{\mathrm{cr}}$.}
\end{figure}

The next statement specifies the set of points $\left(f,\omega\right)$
associated with the instability, namely the points for which $\left|b_{f}\left(\omega\right)\right|>1$.
\begin{thm}[instability frequencies]
\label{thm:Omegapm} Let functions $\varOmega_{f}^{\pm}>0$ of $f$
for $0<f<\pi$ be defined by the following relations:
\begin{gather}
\varOmega_{f}^{+}=\left\{ \begin{array}{ccc}
\omega_{0}\sqrt{\frac{1}{1-K_{0}\tan\left(\frac{f}{2}\right)}}>\omega_{0} & \text{if} & 0<f<f_{\mathrm{cr}},\\
+\infty & \text{if} & f_{\mathrm{cr}}\leq f<\pi
\end{array}\right.,\label{eq:Ompom1a}
\end{gather}
\begin{equation}
\varOmega_{f}^{-}=\omega_{0}\sqrt{\frac{\tan\left(\frac{f}{2}\right)}{\tan\left(\frac{f}{2}\right)+K_{0}}}<\omega_{0},\quad0<f<\pi.\label{eq:Ompom1b}
\end{equation}
Then the values of the instability parameter $b_{f}\left(\omega\right)$
satisfy the following relations:
\begin{gather}
\omega>\omega_{0},\quad0<f<f_{\mathrm{cr}}:\quad b_{f}\left(\varOmega_{f}^{+}\right)=1;\label{eq:Ompom1c}\\
b_{f}\left(\omega\right)>1,\quad\omega_{0}<\omega<\varOmega_{f}^{+};\quad-1<b_{f}\left(\omega\right)<1,\quad\omega>\varOmega_{f}^{+};\nonumber 
\end{gather}
\begin{equation}
\omega>\omega_{0},\quad f_{\mathrm{cr}}\leq f<\pi:\quad b_{f}\left(\omega\right)>b_{f}^{\infty}>1;\label{eq:Ompom1d}
\end{equation}
\begin{gather}
0<\omega<\omega_{0},\quad0<f<\pi:\quad b_{f}\left(\varOmega_{f}^{-}\right)=-1;\label{eq:Ompom1e}\\
-1<b_{f}\left(\omega\right)<1,\quad0<\omega<\varOmega_{f}^{-};\quad b_{f}\left(\omega\right)<-1,\quad\varOmega_{f}^{-}<\omega<\omega_{0}.\nonumber 
\end{gather}
Hence, for any $0<f<\pi$ interval $\left(\varOmega_{f}^{-},\varOmega_{f}^{+}\right)$
identifies frequencies of the MCK instability and gain, that is
\begin{equation}
\left(\varOmega_{f}^{-},\varOmega_{f}^{+}\right)=\left\{ \omega:\;\left|b_{f}\left(\omega\right)\right|>1\right\} ,\quad0<f<\pi,\label{eq:Ompom3a}
\end{equation}
and, in particular, for $f_{\mathrm{cr}}\leq f<\pi$ the frequency
instability interval extends to infinity, that is
\begin{equation}
\left(\varOmega_{f}^{-},\varOmega_{f}^{+}\right)=\left(\varOmega_{f}^{-},+\infty\right),\quad f_{\mathrm{cr}}\leq f<\pi.\label{eq:Ompom3b}
\end{equation}
Figure \ref{fig:Om-plusmin} provides for graphical representation
of functions $\varOmega_{f}^{\pm}>0$ with shadowed area identifying
points $\left(f,\omega\right)$ of instability where $\left|b_{f}\left(\omega\right)\right|>1$.
\end{thm}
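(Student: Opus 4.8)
The plan is to reduce the whole statement to the monotonicity of the auxiliary function $r(\omega)=\omega^{2}/(\omega^{2}-\omega_{0}^{2})$ recorded just before the statement, together with the sign fact $\sin f>0$ on $(0,\pi)$. Writing $b_{f}(\omega)=K_{0}\sin(f)\,r(\omega)-\cos f$, the map $r\mapsto b_{f}$ is affine with positive slope $K_{0}\sin f$, so on each of the two intervals $(0,\omega_{0})$ and $(\omega_{0},+\infty)$ on which $r$ is continuous and strictly decreasing, $b_{f}(\cdot)$ is continuous and strictly decreasing. First I would read off the ranges: on $(0,\omega_{0})$ one has $r(0^{+})=0$, $r(\omega_{0}^{-})=-\infty$, so $b_{f}$ decreases from $-\cos f$ to $-\infty$, whence $b_{f}(\omega)<-\cos f<1$ throughout; on $(\omega_{0},+\infty)$ one has $r(\omega_{0}^{+})=+\infty$, $r(+\infty)=1$, so $b_{f}$ decreases from $+\infty$ to $b_{f}^{\infty}=K_{0}\sin f-\cos f$, whence by (\ref{eq:bfomf1a}) one has $b_{f}(\omega)>-\cos f>-1$ throughout. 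Consequently the strip $|b_{f}|\le1$ can be exited only through $b_{f}<-1$ on the lower interval and through $b_{f}>1$ on the upper interval, which already localises the instability set.

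Next I would solve the two threshold equations explicitly. By (\ref{eq:bfKom1b})--(\ref{eq:bfKom1c}) and the half-angle identities $\frac{1+\cos f}{\sin f}=\cot\frac f2$ and $\frac{\cos f-1}{\sin f}=-\tan\frac f2$, one has $b_{f}(\omega)=1\Leftrightarrow r(\omega)=\frac{1}{K_{0}\tan(f/2)}$ and $b_{f}(\omega)=-1\Leftrightarrow r(\omega)=-\frac{\tan(f/2)}{K_{0}}$. Inverting $r$ via $\omega^{2}=\omega_{0}^{2}/(1-1/r)$ turns these prescriptions into the closed forms $\varOmega_{f}^{+}$ and $\varOmega_{f}^{-}$ of the statement. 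For the $-1$ level the target value $-\tan(f/2)/K_{0}$ is negative, hence lies in the range $(-\infty,0)$ of $r|_{(0,\omega_{0})}$, so a unique root $\varOmega_{f}^{-}\in(0,\omega_{0})$ always exists (with $\varOmega_{f}^{-}<\omega_{0}$ visible from the formula), and strict monotonicity of $b_{f}$ gives $b_{f}(\omega)<-1$ precisely for $\varOmega_{f}^{-}<\omega<\omega_{0}$ and $-1<b_{f}(\omega)<1$ for $0<\omega<\varOmega_{f}^{-}$. For the $+1$ level the target value $\frac{1}{K_{0}\tan(f/2)}$ is positive and lies in the range $(1,+\infty)$ of $r|_{(\omega_{0},+\infty)}$ if and only if $K_{0}\tan(f/2)<1$, i.e. if and only if $f<f_{\mathrm{cr}}$ by Theorem~\ref{thm:bfinf}; in that case there is a unique root $\varOmega_{f}^{+}\in(\omega_{0},+\infty)$ with $b_{f}>1$ on $(\omega_{0},\varOmega_{f}^{+})$ and $-1<b_{f}<1$ on $(\varOmega_{f}^{+},+\infty)$, whereas for $f_{\mathrm{cr}}\le f<\pi$ there is no root and $b_{f}(\omega)>b_{f}^{\infty}\ge1$ for all finite $\omega>\omega_{0}$ (the second inequality strict for $f_{\mathrm{cr}}<f<\pi$), matching the convention $\varOmega_{f}^{+}=+\infty$ and relation (\ref{eq:Ompom1d}).

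Finally I would assemble the pieces across the pole at $\omega_{0}$, where $|b_{f}|=+\infty$: the instability set $\{\omega>0:\,|b_{f}(\omega)|>1\}$ is the union $(\varOmega_{f}^{-},\omega_{0})\cup\{\omega_{0}\}\cup(\omega_{0},\varOmega_{f}^{+})=(\varOmega_{f}^{-},\varOmega_{f}^{+})$, which is (\ref{eq:Ompom3a}); (\ref{eq:Ompom3b}) is then the special case $f_{\mathrm{cr}}\le f<\pi$ with $\varOmega_{f}^{+}=+\infty$. The one point demanding care is the sign bookkeeping of $b_{f}$ on the two sides of the pole, so as to be certain that the instability region is a single interval straddling $\omega_{0}$ rather than two disjoint pieces; the strict monotonicity of $b_{f}$ on each component, together with the bounds $b_{f}<1$ below $\omega_{0}$ and $b_{f}>-1$ above $\omega_{0}$, settles this. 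Converting $b_{f}=\pm1$ into the displayed expressions for $\varOmega_{f}^{\pm}$ is then the same kind of elementary trigonometric and algebraic manipulation already used in the proof of Theorem~\ref{thm:bfinf}.
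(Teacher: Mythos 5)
Your proposal is correct and follows essentially the same route as the paper's proof: the paper likewise exploits the monotonic decrease of $b_{f}(\omega)$ on $\left(0,\omega_{0}\right)$ and $\left(\omega_{0},+\infty\right)$, computes its value ranges $\left(-\infty,-\cos f\right)$ and $\left(b_{f}^{\infty},+\infty\right)$, and combines them with Theorem \ref{thm:bfinf} and the threshold equations (\ref{eq:bfKom1b})--(\ref{eq:bfKom1c}) to obtain $\varOmega_{f}^{\pm}$ and the interval (\ref{eq:Ompom3a}). You merely spell out more explicitly the inversion of $r(\omega)$ yielding the closed forms of $\varOmega_{f}^{\pm}$, which the paper treats as part of the setup preceding the theorem.
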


\begin{proof}
Let us start with points $\left(f,\omega\right)$ with $\omega>\omega_{0}$.
Note that in this case $b_{f}\left(\omega\right)$ is a monotonically
decreasing function of $\omega$ on interval $\left(\omega_{0},+\infty\right)$
for any $0<f<\pi$. Consequently the set of its values on interval
$\left(\omega_{0},+\infty\right)$ satisfies
\begin{equation}
b_{f}\left(\left(\omega_{0},+\infty\right)\right)=\left(b_{f}^{\infty},+\infty\right).\label{eq:Ompom2a}
\end{equation}
Combining relation (\ref{eq:Ompom2a}) with the results of Theorem
\ref{thm:bfinf} we obtain relations (\ref{eq:Ompom1d}).

Let us consider now points $\left(f,\omega\right)$ with $0<\omega<\omega_{0}$
and $0<f<\pi$. In this case $b_{f}\left(\omega\right)$ is also a
monotonically decreasing function of $\omega$ on interval $\left(0,\omega_{0}\right)$
for any $0<f<\pi$. Consequently the set of its values on interval
$\left(0,\omega_{0}\right)$ satisfies
\begin{equation}
b_{f}\left(\left(0,\omega_{0}\right)\right)=\left(-\cos\left(f\right),-\infty\right).\label{eq:Ompom2b}
\end{equation}
Combining relation (\ref{eq:Ompom2b}) with the results of Theorem
\ref{thm:bfinf} we obtain relations (\ref{eq:Ompom1e}). Representations
(\ref{eq:Ompom3a}) and (\ref{eq:Ompom3b}) for the frequency instability
interval $\left(\varOmega_{f}^{-},\varOmega_{f}^{+}\right)$ follow
from equations (\ref{eq:Ompom1a}) for $\varOmega_{f}^{+}$ and relations
(\ref{eq:Ompom1d}).
\end{proof}
Note that function $\varOmega_{f}^{+}$ approaches $+\infty$ as $f\rightarrow f_{\mathrm{cr}}$
and the following asymptotic formula can be obtained. Combing equation
(\ref{eq:bfomf1d}) defining $f_{\mathrm{cr}}$ and equations (\ref{eq:Ompom1a})
defining $\varOmega_{f}^{+}$ we obtain the following alternative
representation for function $\varOmega_{f}^{+}$:
\begin{equation}
\varOmega_{f}^{+}=\frac{\omega_{0}}{\sqrt{K_{0}}}\sqrt{\frac{1}{\tan\left(\frac{f_{\mathrm{cr}}}{2}\right)-\tan\left(\frac{f}{2}\right)}}=\frac{\omega_{0}}{\sqrt{K_{0}}}\sqrt{\frac{\cos\left(\frac{f_{\mathrm{cr}}}{2}\right)\cos\left(\frac{f}{2}\right)}{\sin\left(\frac{f_{\mathrm{cr}}-f}{2}\right)}},\quad0<f<f_{\mathrm{cr}}.\label{eq:hpmK2c}
\end{equation}
Then using representation (\ref{eq:hpmK2c}) we find the following
asymptotic expansion of $\varOmega_{f}^{+}$ at $f_{\mathrm{cr}}$:
\begin{gather}
\varOmega_{f}^{+}=\frac{\omega_{0}}{\sqrt{K_{0}}}\left[\frac{\left|\cos\left(\frac{f_{\mathrm{cr}}}{2}\right)\right|}{\sqrt{\frac{f_{\mathrm{cr}}-f}{2}}}+\frac{\mathrm{sign}\,\left\{ \cos\left(\frac{f_{\mathrm{cr}}}{2}\right)\right\} \sin\left(\frac{f_{\mathrm{cr}}}{2}\right)}{2}\sqrt{\frac{f_{\mathrm{cr}}-f}{2}}+O\left(\left(f_{\mathrm{cr}}-f\right)^{\frac{3}{2}}\right)\right],\quad f\rightarrow f_{\mathrm{cr}}.\label{eq:hpmK2d}
\end{gather}
\begin{figure}[h]
\begin{centering}
\includegraphics[scale=0.35]{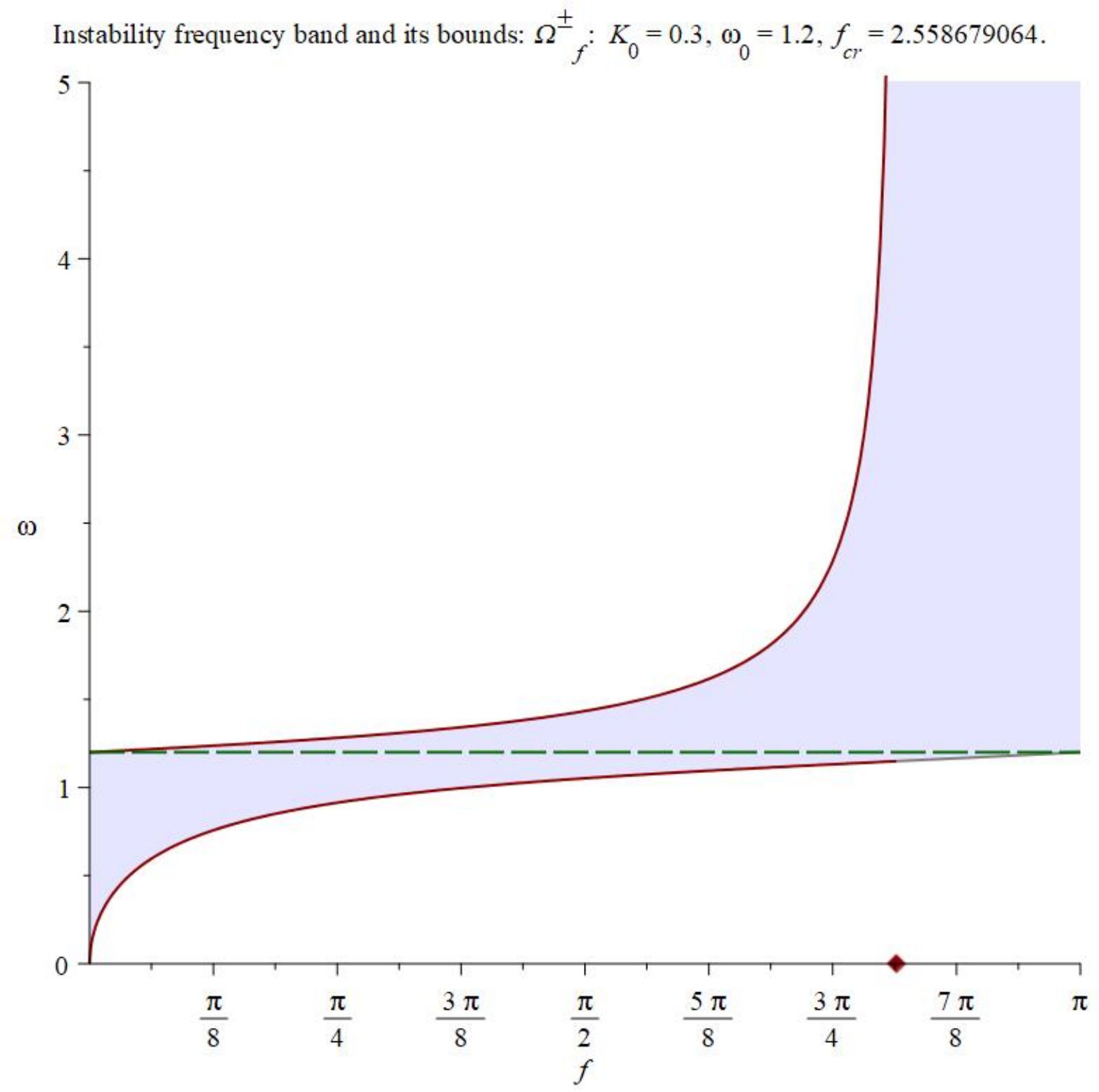}
\par\end{centering}
\centering{}\caption{\label{fig:Om-plusmin} Plots of two functions $\omega=\varOmega_{f}^{\pm}$:
$\varOmega_{f}^{-}<\omega_{0}<\varOmega_{f}^{+}$ defined by equations
(\ref{eq:Ompom1a}) for the case when $K_{0}=0.3$, $\omega_{0}=1.2$,
and $f_{\mathrm{cr}}\protect\cong2.558679064$: $\varOmega_{f}^{+}$
for $0<f<f_{\mathrm{cr}}$, $\varOmega_{f}^{-}$ for $0<f<\pi$. The
horizontal and vertical axes represent respectively variables $f$
and $\omega$. The upper and lower solid (brown) curves represent
respectively functions $\varOmega_{f}^{+}$ and $\varOmega_{f}^{-}$
and the dashed (green) line represent $\omega=\omega_{0}=1.2$. The
diamond solid (brown) dot marks the value of $f_{\mathrm{cr}}$. The
shaded (light blue) region between the two curves for functions $\varOmega_{f}^{+}$
and $\varOmega_{f}^{-}$ identify points $\left(f,\omega\right)$
of instability, that is points for which $\left|b_{f}\left(\omega\right)\right|>1$
and consequently the corresponding Floquet multipliers $s$ satisfy
$\left|s_{+}\right|>1>\left|s_{-}\right|$ (see Theorem \ref{thm:floqmultbf}).
For the points of instability the relevant Floquet modes either grow
or decay exponentially. The remaining points correspond to the case
when $\left|b_{f}\left(\omega\right)\right|\protect\leq1$. In the
later case the Floquet multipliers $s$ satisfy $\left|s_{\pm}\right|=1$
and the corresponding Floquet modes are bounded and oscillatory. The
points $\left(f,\varOmega_{f}^{+}\right)$ for $0<f<f_{\mathrm{cr}}$
correspond to $b_{f}=1$ and $\left(f,\varOmega_{f}^{-}\right)$ for
$0<f<\pi$ correspond to $b_{f}=-1$. Points $\left(f,\omega_{0}\pm0\right)$
for $0<f<\pi$ laying on the dashed (green) line correspond respectively
to $b_{f}=\pm\infty$.}
\end{figure}

Combining statements of Theorems \ref{thm:Omegapm}, \ref{thm:bfinf}
and \ref{thm:floqmultbf}, particularly equations (\ref{eq:Ompom3a}),
(\ref{eq:Ompom3a}) and relations (\ref{eq:bfomf2b}), (\ref{eq:bfspm1b}),
we obtain the following statement.
\begin{thm}[Floquet multiplier and instability]
\label{thm:floqinst} Let Floquet multiplier $s_{+}=s_{+}\left(f,\omega\right)$
be defined by equations \ref{eq:bfspm1b}. Then function $\left|s_{+}\left(f,\omega\right)\right|>1$
if and only if $\varOmega_{f}^{-}<\omega<\varOmega_{f}^{+}$ for $0<f<\pi$
. The later relations describe all unstable states of the MCK.

For any $0<f<\pi$ function $\left|s_{+}\left(f,\omega\right)\right|$
is monotonically increasing for $\varOmega_{f}^{-}<\omega<\omega_{0}$
and monotonically decreasing for $\omega_{0}<\omega<\varOmega_{f}^{+}$
and $\lim_{\omega\rightarrow\omega_{0}}\left|s_{+}\left(f,\omega\right)\right|=+\infty$.
In addition to that, the following lower bound holds:
\begin{equation}
\left|s_{+}\left(f,\omega\right)\right|>b_{f}^{\infty}>1,\quad f_{\mathrm{cr}}<f<\pi,\quad\omega>\omega_{0},\label{eq:hpmK2e}
\end{equation}
where high-frequency limit instability $b_{f}^{\infty}$ of the instability
parameter is defined by equations (\ref{eq:bfKinf1}).
\end{thm}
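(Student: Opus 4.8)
The plan is to reduce everything to the behaviour of the single scalar function $g\left(t\right)=t+\sqrt{t^{2}-1}$, $t\geq1$, evaluated at $t=\left|b_{f}\left(\omega\right)\right|$, using only facts already established in Theorems \ref{thm:floqmultbf}, \ref{thm:bfinf} and \ref{thm:Omegapm} and in the course of their proofs. First I would record the immediate consequence of formulas (\ref{eq:bfspm1b})--(\ref{eq:bfspm1d}) of Theorem \ref{thm:floqmultbf}, combined with the identity $\left|s_{+}\right|\left|s_{-}\right|=1$ from (\ref{eq:sSbf1a}), namely
\[
\left|s_{+}\left(f,\omega\right)\right|=
\begin{cases}
g\left(\left|b_{f}\left(\omega\right)\right|\right)=\left|b_{f}\left(\omega\right)\right|+\sqrt{b_{f}\left(\omega\right)^{2}-1}, & \left|b_{f}\left(\omega\right)\right|>1,\\
1, & \left|b_{f}\left(\omega\right)\right|\leq1.
\end{cases}
\]
In particular $\left|s_{+}\left(f,\omega\right)\right|>1$ if and only if $\left|b_{f}\left(\omega\right)\right|>1$, and equation (\ref{eq:Ompom3a}) of Theorem \ref{thm:Omegapm} identifies the latter set with the open interval $\left(\varOmega_{f}^{-},\varOmega_{f}^{+}\right)$. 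This proves the first assertion and the claim that these relations describe exactly the unstable states of the MCK.

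For the monotonicity statement I would use that $g$ is strictly increasing on $[1,\infty)$, since $g'\left(t\right)=1+t/\sqrt{t^{2}-1}>0$ there, and that $g\left(t\right)\to+\infty$ as $t\to+\infty$; hence $\left|s_{+}\right|=g\left(\left|b_{f}\right|\right)$ inherits whatever monotonicity $\left|b_{f}\left(\omega\right)\right|$ has. On $\left(\omega_{0},\varOmega_{f}^{+}\right)$ Theorem \ref{thm:Omegapm} gives $b_{f}\left(\omega\right)>1$, and the proof of that theorem shows that $b_{f}\left(\omega\right)$ is strictly decreasing there (it equals $K_{0}\sin f$ times the strictly decreasing function $r\left(\omega\right)=\omega^{2}/(\omega^{2}-\omega_{0}^{2})$ minus the constant $\cos f$); thus $\left|b_{f}\right|=b_{f}$ decreases and so does $\left|s_{+}\right|$. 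On $\left(\varOmega_{f}^{-},\omega_{0}\right)$ one has $b_{f}\left(\omega\right)<-1$ and $b_{f}\left(\omega\right)$ is again strictly decreasing (from $-\cos f$ at $\omega=0$ toward $-\infty$ at $\omega_{0}^{-}$), so $\left|b_{f}\right|=-b_{f}$ is strictly increasing and so is $\left|s_{+}\right|$. Finally, since $r\left(\omega\right)\to+\infty$ as $\omega\to\omega_{0}^{-}$ and $r\left(\omega\right)\to-\infty$ as $\omega\to\omega_{0}^{+}$, we get $\left|b_{f}\left(\omega\right)\right|\to+\infty$, hence $\left|s_{+}\left(f,\omega\right)\right|=g\left(\left|b_{f}\left(\omega\right)\right|\right)\to+\infty$ as $\omega\to\omega_{0}$.

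For the lower bound (\ref{eq:hpmK2e}), when $f_{\mathrm{cr}}<f<\pi$ and $\omega>\omega_{0}$, equation (\ref{eq:Ompom1d}) of Theorem \ref{thm:Omegapm} gives $b_{f}\left(\omega\right)>b_{f}^{\infty}>1$, where $b_{f}^{\infty}>1$ on this $f$-range is part of Theorem \ref{thm:bfinf}, equation (\ref{eq:bfomf2b}). Since $g$ is increasing on $[1,\infty)$ and $g\left(t\right)=t+\sqrt{t^{2}-1}>t$ for $t>1$, it follows that $\left|s_{+}\left(f,\omega\right)\right|=g\left(b_{f}\left(\omega\right)\right)>g\left(b_{f}^{\infty}\right)>b_{f}^{\infty}>1$, as claimed.

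I do not anticipate a genuine obstacle: the argument is a bookkeeping composition of results already in hand. The only place demanding care is tracking the sign of $b_{f}\left(\omega\right)$ across the resonance $\omega_{0}$ — it is below $-1$ on $\left(\varOmega_{f}^{-},\omega_{0}\right)$ and above $1$ on $\left(\omega_{0},\varOmega_{f}^{+}\right)$ — so that the direction of monotonicity of $\left|b_{f}\right|$, and therefore of $\left|s_{+}\right|=g\left(\left|b_{f}\right|\right)$, is inherited with the correct sign; the point $\omega=\omega_{0}$ itself is a pole of $b_{f}$ that causes no trouble, because $\left|s_{+}\right|$ simply diverges to $+\infty$ there.
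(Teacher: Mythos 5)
Your proposal is correct and follows essentially the same route as the paper, which obtains Theorem \ref{thm:floqinst} precisely by combining Theorem \ref{thm:floqmultbf} (the formula $\left|s_{+}\right|=\left|b_{f}\right|+\sqrt{b_{f}^{2}-1}$ when $\left|b_{f}\right|>1$, else $\left|s_{+}\right|=1$), the instability-interval characterization (\ref{eq:Ompom3a}) and monotonicity of $b_{f}\left(\omega\right)$ from Theorem \ref{thm:Omegapm}, and the bound $b_{f}^{\infty}>1$ for $f_{\mathrm{cr}}<f<\pi$ from Theorem \ref{thm:bfinf}. Your explicit bookkeeping via the increasing map $t\mapsto t+\sqrt{t^{2}-1}$ is exactly the intended filling-in of those details, including the sign tracking of $b_{f}$ across the pole at $\omega_{0}$.
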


\section{Gain and its dependence of the frequency\label{sec:gain-freq}}

Based on the prior analysis we introduce the MCK gain $G$ in $\mathrm{dB}$
per one period as a the rate of the exponential growth of the MCK
eigenmodes associated with Floquet multipliers $s_{\pm}$ defined
by equations (\ref{eq:bfKpsi1bb}) (see Theorem \ref{thm:floqmultbf}).
More precisely the definition is as follows.
\begin{defn}[MCK gain per one period]
\label{def:gain} Let $s_{\pm}$ be the MCK Floquet multipliers satisfying
by equations (\ref{eq:bfspm1b})-(\ref{eq:bfspm1d}). Then the corresponding
to them gain $G$ in $\mathrm{dB}$ per one period is defined by
\begin{gather}
G=G\left(f,\omega,K_{0}\right)=\left\{ \begin{array}{ccc}
20\left|\log\left(\left|s_{+}\right|\right)\right|=20\left|\log\left(\left|\left|b_{f}\right|+\sqrt{b_{f}^{2}-1}\right|\right)\right| & \text{if} & \left|b_{f}\right|>1\\
0 & \text{if} & \left|b_{f}\right|\leq1
\end{array}\right.,\label{eq:sSbf1c}\\
b_{f}=b_{f}\left(\omega\right)=K\left(\omega\right)\sin\left(f\right)-\cos\left(f\right),\quad K\left(\omega\right)=K_{0}\frac{\omega^{2}}{\omega^{2}-\omega_{0}^{2}},\quad K_{0}=\frac{b^{2}\beta_{0}}{2f}=\frac{b^{2}g_{\mathrm{B}}}{c_{0}}.\nonumber 
\end{gather}
\end{defn}

Fig. \ref{fig:mck-gain-om1} shows the frequency dependence of the
gain $G$ per one period which is consistent with statements of Theorem
\ref{thm:floqinst} including that $G\left(f,\omega,K_{0}\right)$
is a monotonically increasing and decreasing function of $\omega$
on respective intervals $\left(\varOmega_{f}^{-},\omega_{0}\right)$
and $\left(\omega_{0},\varOmega_{f}^{+}\right)$.
\begin{figure}[h]
\begin{centering}
\hspace{-0.2cm}\includegraphics[scale=0.38]{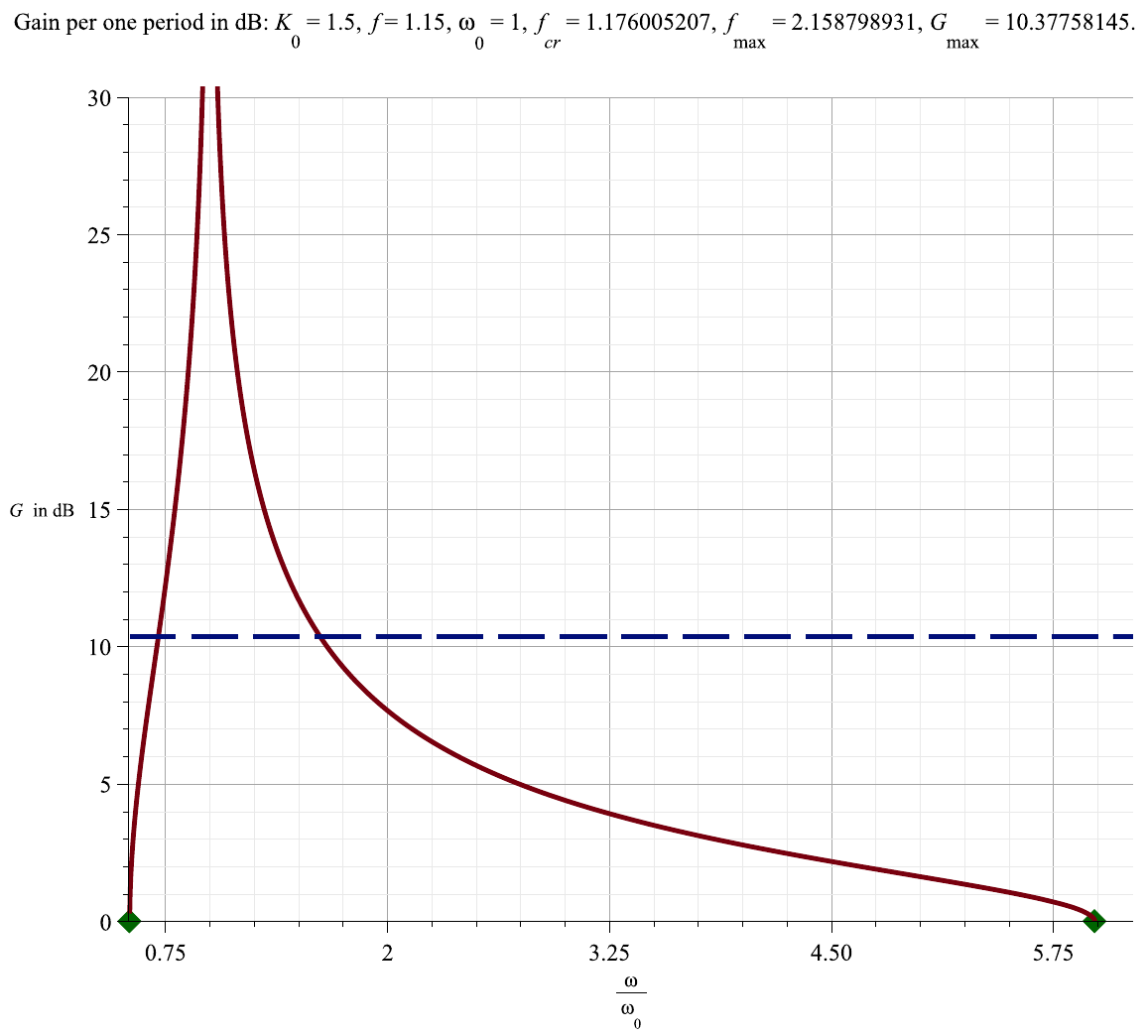}\hspace{0.1cm}\includegraphics[scale=0.38]{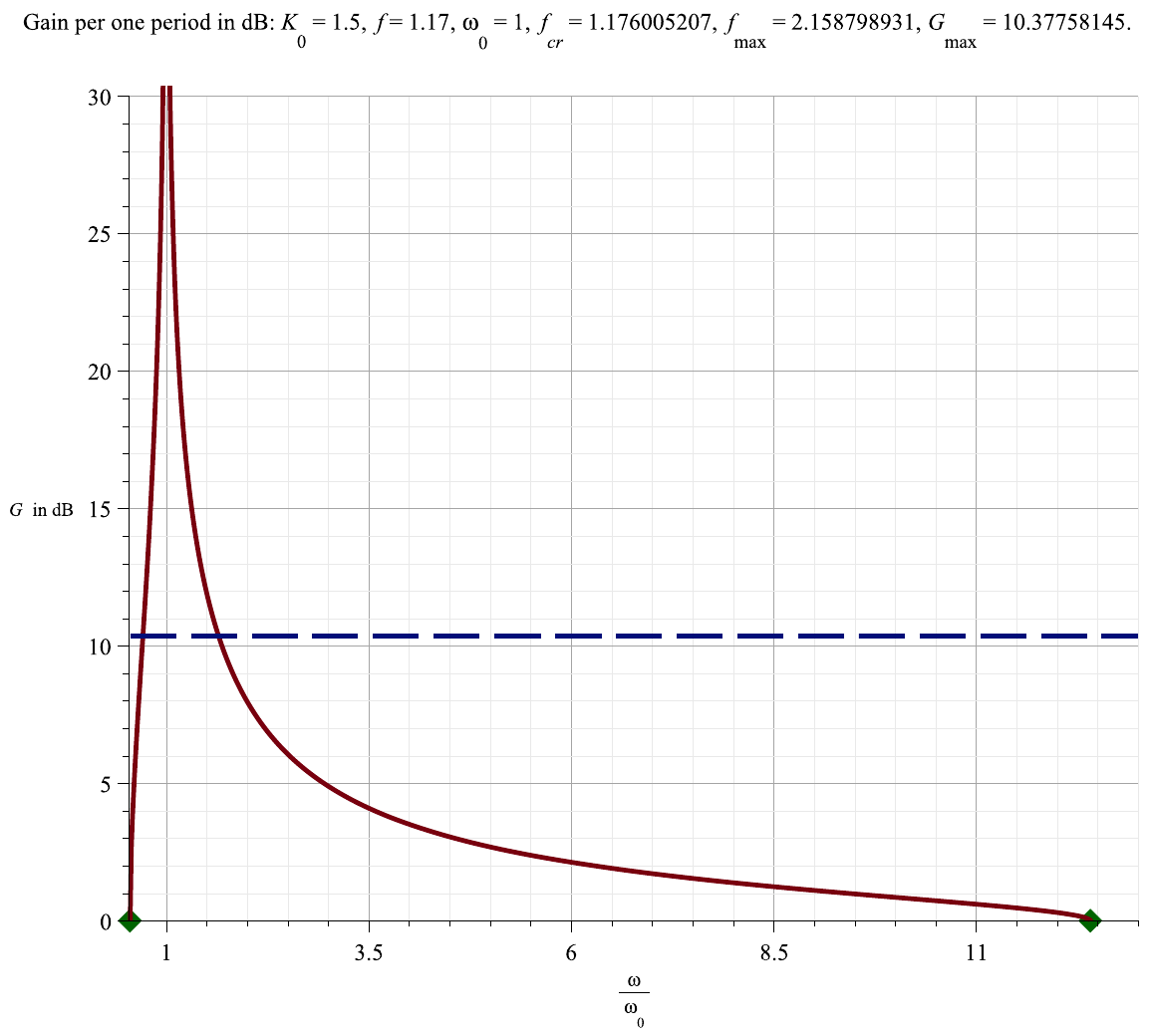}
\par\end{centering}
\centering{}(a)\hspace{6cm}(b)\caption{\label{fig:mck-gain-om1} Plots of gain $G$ as a function of frequency
$\omega$ for $K_{0}=1$, $\omega_{0}=1$ for which $f_{\mathrm{cr}}\protect\cong1.176005207$,
$f_{\mathrm{max}}\protect\cong2.158798931$, $G_{\mathrm{max}}=10.3775845$
and for different values of $f$: (a) $f=1.15<f_{\mathrm{cr}}\protect\cong1.176005207$;
(b) $f=1.17<f_{\mathrm{cr}}\protect\cong1.176005207$. In all plots
the horizontal and vertical axes represent respectively frequency
$\omega$ and gain $G$ in $\mathrm{dB}$. The solid (brown) curves
represent gain $G$ as a function of frequency $\omega$, the dashed
(blue) line $G=G_{\mathrm{max}}$ represent the maximal value $G_{\mathrm{max}}$
of $G$ in the high frequency limit. The diamond solid (green) dots
mark the values of $\varOmega_{f}^{-}$ and $\varOmega_{f}^{+}$ which
are the frequency boundaries of the instability.}
\end{figure}
In view of Definition \ref{def:gain} a state of the MCK is unstable
and has positive gain $G>0$ if and only if $\left|b_{f}\left(\omega\right)\right|>1$.

\subsection{Maximal values of the gain\label{subsec:gain-hf}}

In Sections \ref{sec:floqmul} and \ref{sec:instfreq} we carried
out detailed studies of the MCK instability including its dependence
on frequency $\omega$ and dimensionless period $f$, see Theorems
\ref{thm:floqmultbf}, \ref{thm:Omegapm} and \ref{thm:floqinst}.
In particular Theorem \ref{thm:floqinst} implies the following sharp
lower bound holds for the gain $G$ per one period (see Fig. \ref{fig:mck-gain1}):
\begin{gather}
G=G\left(f,\omega,K_{0}\right)>G^{\infty}\left(f\right)=\lim_{\omega\rightarrow+\infty}20\left|\log\left(\left|s_{+}\right|\right)\right|=20\left|\log\left(\left|b_{f}^{\infty}+\sqrt{\left(b_{f}^{\infty}\right)^{2}-1}\right|\right)\right|>0,\label{eq:sSbf1d}\\
b_{f}^{\infty}=K_{0}\sin\left(f\right)-\cos\left(f\right)>1,\quad f_{\mathrm{cr}}<f<\pi,\quad\omega>\omega_{0}.\nonumber 
\end{gather}
Consequently, the maximal value of gain $G$ is attained when $b_{f}^{\infty}$
gets its maximal value for $f=f_{\mathrm{max}}$ such that $f_{\mathrm{cr}}<f_{\mathrm{max}}<\pi$.
To find $f_{\mathrm{max}}$ and the corresponding $G_{\mathrm{max}}$
we us representation (\ref{eq:bfomf1g}) for $b_{f}^{\infty}$ which
we copy here for the reader's convenience
\begin{equation}
b_{f}^{\infty}=-\sqrt{1+K_{0}^{2}}\cos\left(f+\arctan\left(K_{0}\right)\right),\quad f_{\mathrm{cr}}<f<\pi,\quad\left|\arctan\left(*\right)\right|<\frac{\pi}{2}.\label{eq:sSbf1e}
\end{equation}
An elementary trigonometric analysis of above representation shows
$b_{f}^{\infty}$ attains its maximal value of $\sqrt{1+K_{0}^{2}}$
at $f_{\mathrm{max}}$ which as follows

\begin{equation}
f_{\mathrm{max}}=\pi-\arctan\left(K_{0}\right),\quad b_{f_{\mathrm{max}}}^{\infty}=\sqrt{K_{0}^{2}+1}.\label{eq:sSbf2a}
\end{equation}
Consequently, the desired maximum value $G_{\mathrm{max}}$ of $G$
in view of relations (\ref{eq:sSbf1d}) is
\begin{equation}
G_{\mathrm{max}}=20\left|\log\left(\left|K_{0}+\sqrt{K_{0}^{2}+1}\right|\right)\right|=20\frac{\ln\left(K_{0}+\sqrt{K_{0}^{2}+1}\right)}{\ln\left(10\right)}.\label{eq:sSbf2b}
\end{equation}
Fig. \ref{fig:mck-gain1} shows the dependence of the gain $G$ on
frequency $\omega$ and its asymptotic behavior as $\omega\rightarrow+\infty$.
\begin{figure}[h]
\begin{centering}
\includegraphics[scale=0.35]{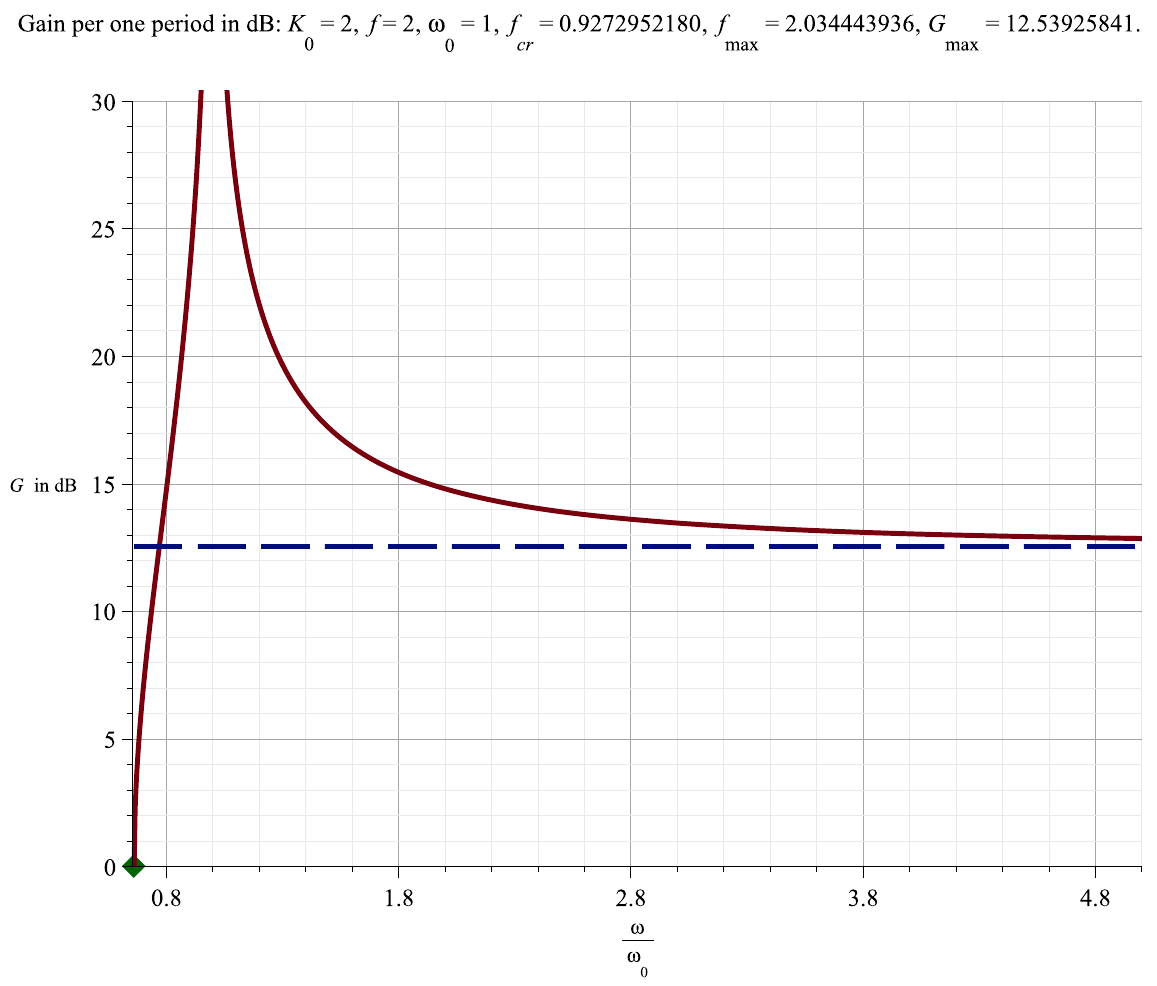}\hspace{1cm}\includegraphics[scale=0.36]{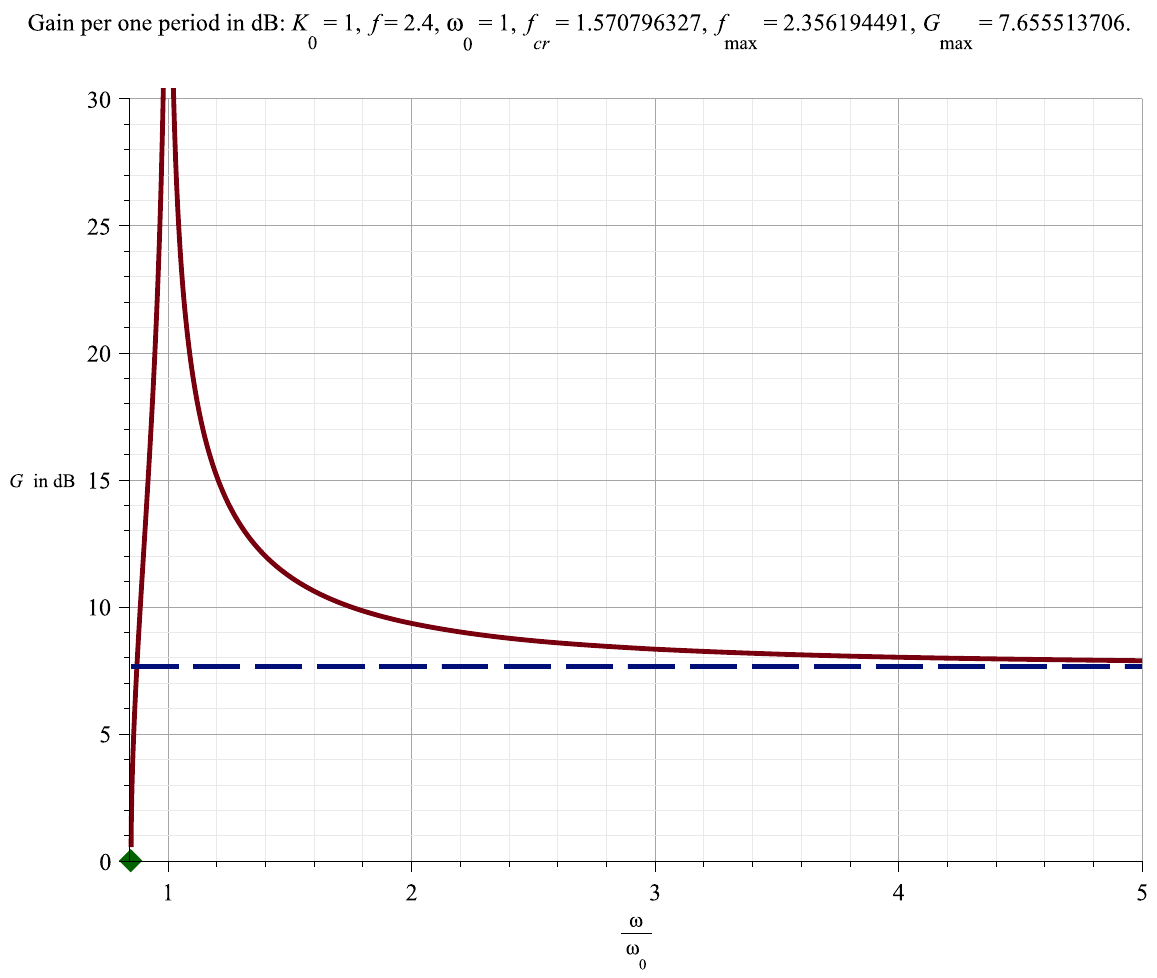}
\par\end{centering}
\centering{}\hspace{1cm}(a)\hspace{7.5cm}(b)\caption{\label{fig:mck-gain1} Plots of gain $G$ as a function of frequency
$\omega$ defined by equations (\ref{eq:sSbf1c}) for $\omega_{0}=1$
and (a) $K_{0}=2$, $f=2>f_{\mathrm{cr}}$ with $f_{\mathrm{cr}}\protect\cong0.9272952180$,
$f_{\mathrm{max}}\protect\cong2.034443936$ and $G_{\mathrm{max}}\protect\cong12.53925841$;
(b) $K_{0}=1$, $f=2.4>f_{\mathrm{cr}}$ with $f_{\mathrm{cr}}\protect\cong1.570796327$,
$f_{\mathrm{max}}\protect\cong2.356194491$ and $G_{\mathrm{max}}\protect\cong7.655513706$.
In all plots the horizontal and vertical axes represent respectively
frequency $\omega$ and gain $G$ in $\mathrm{dB}$. The solid (brown)
curves represent gain $G$ as a function of frequency $\omega$, the
dashed (blue) line $G=G_{\mathrm{max}}$ represent the maximal value
$G_{\mathrm{max}}$ of $G$ in the high frequency limit (see Section
\ref{subsec:gain-hf}). The diamond solid (green) dots mark the values
of $\varOmega_{f}^{-}$ which is the lower frequency boundary of the
instability interval.}
\end{figure}
In particular, we have the following asymptotic formulas for $f_{\mathrm{max}}$
and $G_{\mathrm{max}}$:
\begin{equation}
f_{\mathrm{max}}=\pi-K_{0}+\frac{K_{0}^{3}}{3}+O\left(K_{0}^{5}\right),\quad K_{0}\rightarrow0,\label{eq:sSbf2ca}
\end{equation}
\begin{equation}
f_{\mathrm{max}}=\frac{\pi}{2}+\frac{1}{K_{0}}-\frac{3}{K_{0}^{3}}+O\left(K_{0}^{5}\right),\quad K_{0}\rightarrow+\infty,\label{eq:sSbf2da}
\end{equation}
\begin{equation}
G_{\mathrm{max}}=\frac{20K_{0}}{\ln\left(10\right)}-\frac{10K_{0}^{3}}{3\ln\left(10\right)}+O\left(K_{0}^{5}\right)\cong8.685889638K_{0},\quad K_{0}\rightarrow+0\label{eq:sSbf2c}
\end{equation}
\begin{equation}
G_{\mathrm{max}}=\frac{20\ln\left(2K_{0}\right)}{\ln\left(10\right)}+\frac{5}{\ln\left(10\right)K_{0}^{2}}+O\left(\frac{1}{K_{0}^{4}}\right),\quad K_{0}\rightarrow+\infty.\label{eq:sSbf2d}
\end{equation}
Fig. \ref{fig:mck-gain1} shows the dependence of the gain $G_{\mathrm{max}}$
on $K_{0}$.
\begin{figure}[h]
\begin{centering}
\includegraphics[scale=0.35]{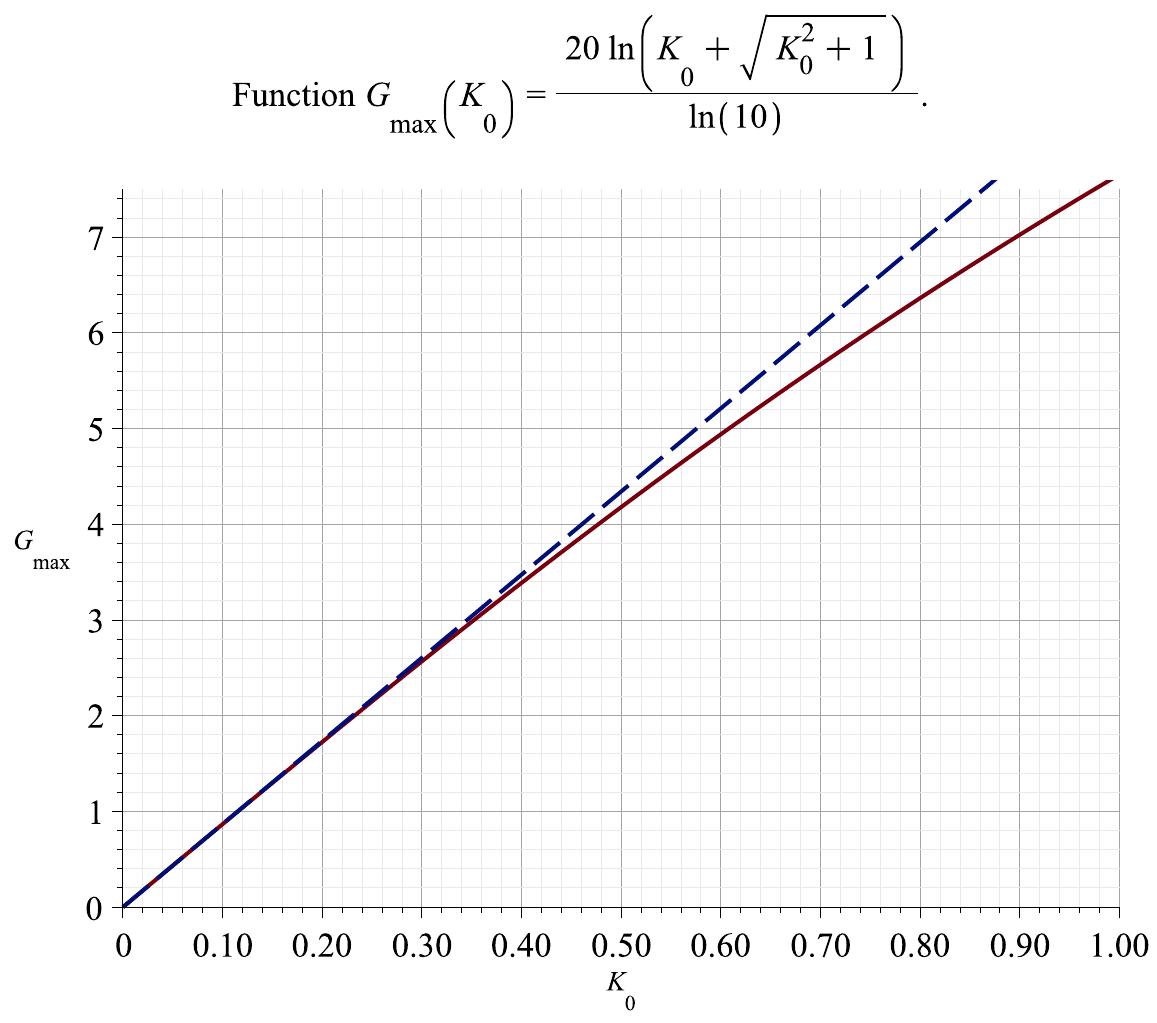}\hspace{1cm}\includegraphics[scale=0.35]{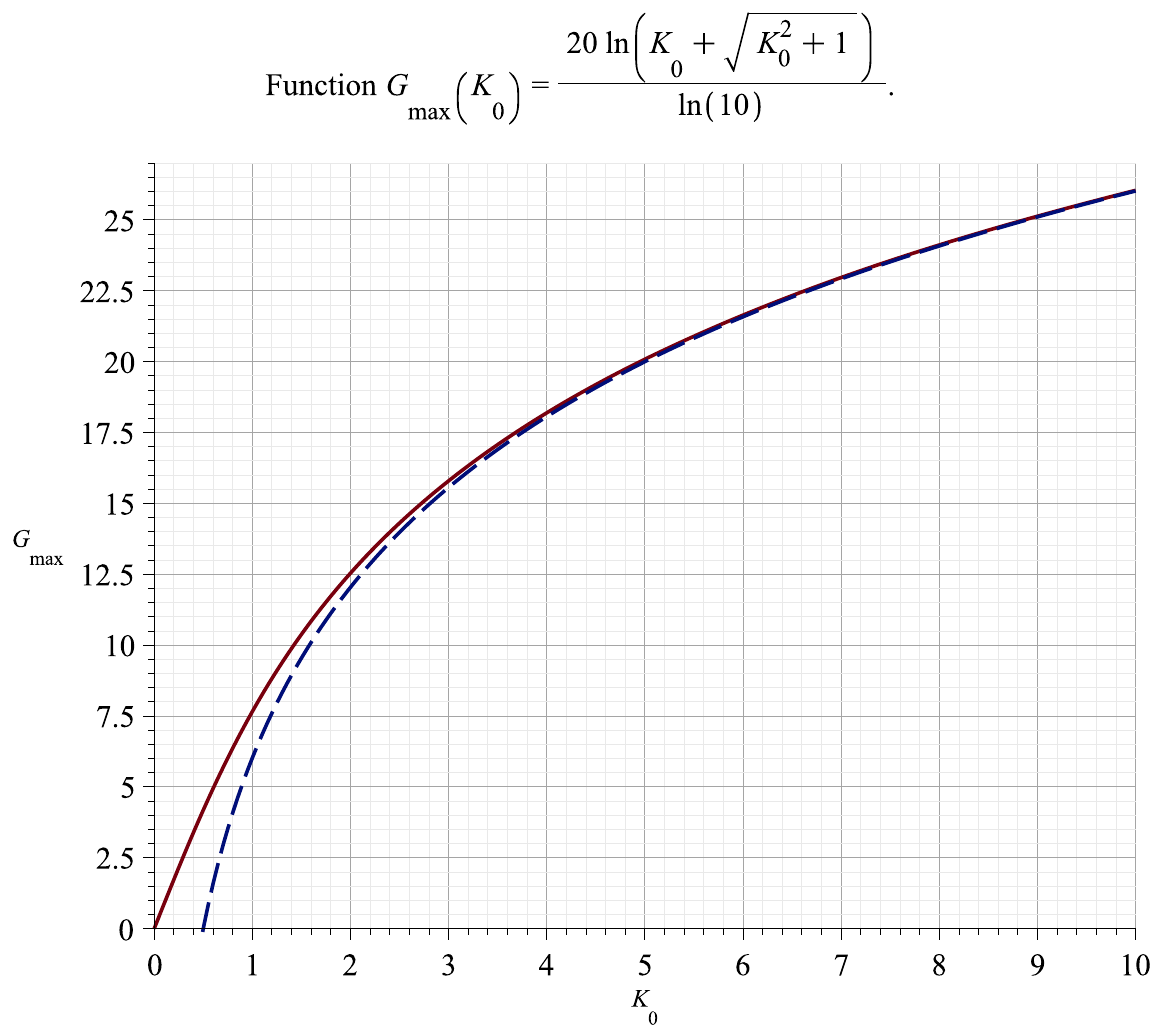}
\par\end{centering}
\centering{}(a)\hspace{7cm}(b)\caption{\label{fig:mck-gain-max} Plots of optimal gain $G_{\mathrm{max}}=20\log\left(K_{0}+\sqrt{K_{0}^{2}+1}\right)$
as solid (red) curves and and its approximations by leading terms
in asymptotic expansions (\ref{eq:sSbf2c}) and (\ref{eq:sSbf2c})
as dashed (blue) curves: (a) $0\protect\leq K_{0}\protect\leq1$;
(b) $0\protect\leq K_{0}\protect\leq10$.}
\end{figure}

It is also an elementary exercise in trigonometry to verify that the
following identity holds
\begin{equation}
f_{\mathrm{max}}=\frac{f_{\mathrm{cr}}+\pi}{2},\text{ implying }\frac{\pi}{2}<f_{\mathrm{max}}<\pi.\label{eq:sSbf2e}
\end{equation}
To find out how $f_{\mathrm{max}}-f_{\mathrm{cr}}$ depends on $K_{0}$
we use equations (\ref{eq:bfomf1d}) and (\ref{eq:sSbf2a}) that yield
\begin{equation}
f_{\mathrm{max}}-f_{\mathrm{cr}}=\pi-\arctan\left(K_{0}\right)-2\arctan\left(\frac{1}{K_{0}}\right).\label{eq:sSbf3a}
\end{equation}
Equation (\ref{eq:sSbf3a}) in turn readily implies the following
relationships illustrated by Fig. \ref{fig:mck-fcr-fop}:
\begin{equation}
\frac{d}{dK_{0}}\left(f_{\mathrm{max}}-f_{\mathrm{cr}}\right)=\frac{1}{1+K_{0}^{2}}>0;\quad\lim_{K_{0}\rightarrow+0}\left(f_{\mathrm{max}}-f_{\mathrm{cr}}\right)=0;\quad f_{\mathrm{cr}}<f_{\mathrm{max}}<\pi.\label{eq:sSbf3b}
\end{equation}
\begin{figure}[h]
\begin{centering}
\includegraphics[scale=0.6]{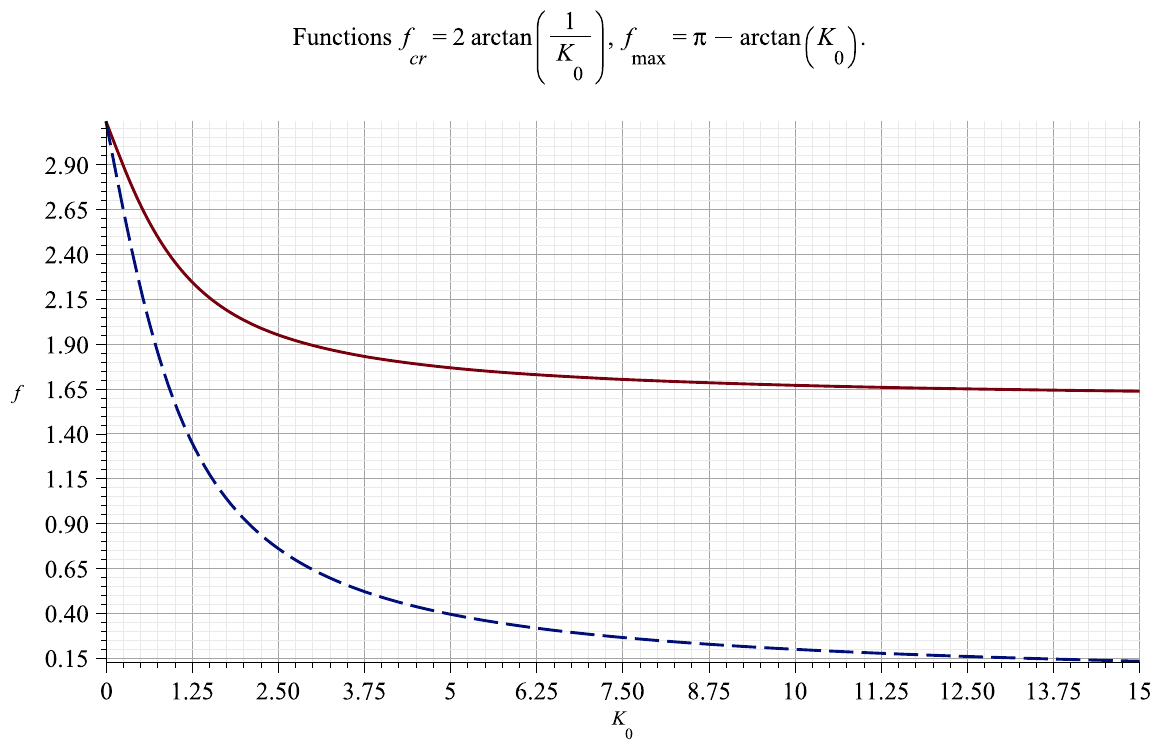}
\par\end{centering}
\centering{}\caption{\label{fig:mck-fcr-fop} Plots of $f_{\mathrm{max}}=\pi-\arctan\left(K_{0}\right)$
as solid (brown) curve and $f_{\mathrm{cr}}=2\arctan\left(\frac{1}{K_{0}}\right)$
as dashed (blue) curve. The horizontal and vertical axes represent
respectively $K_{0}$ and $f$. Asymptotic formulas (\ref{eq:bfomf2f}),
(\ref{eq:bfomf2f}) and (\ref{eq:sSbf2ca}), (\ref{eq:sSbf2ca}) describe
respectively the behavior of $f_{\mathrm{cr}}$ and $f_{\mathrm{max}}$
for small and large $K_{0}$.}
\end{figure}
\begin{figure}[h]
\begin{centering}
\includegraphics[scale=0.35]{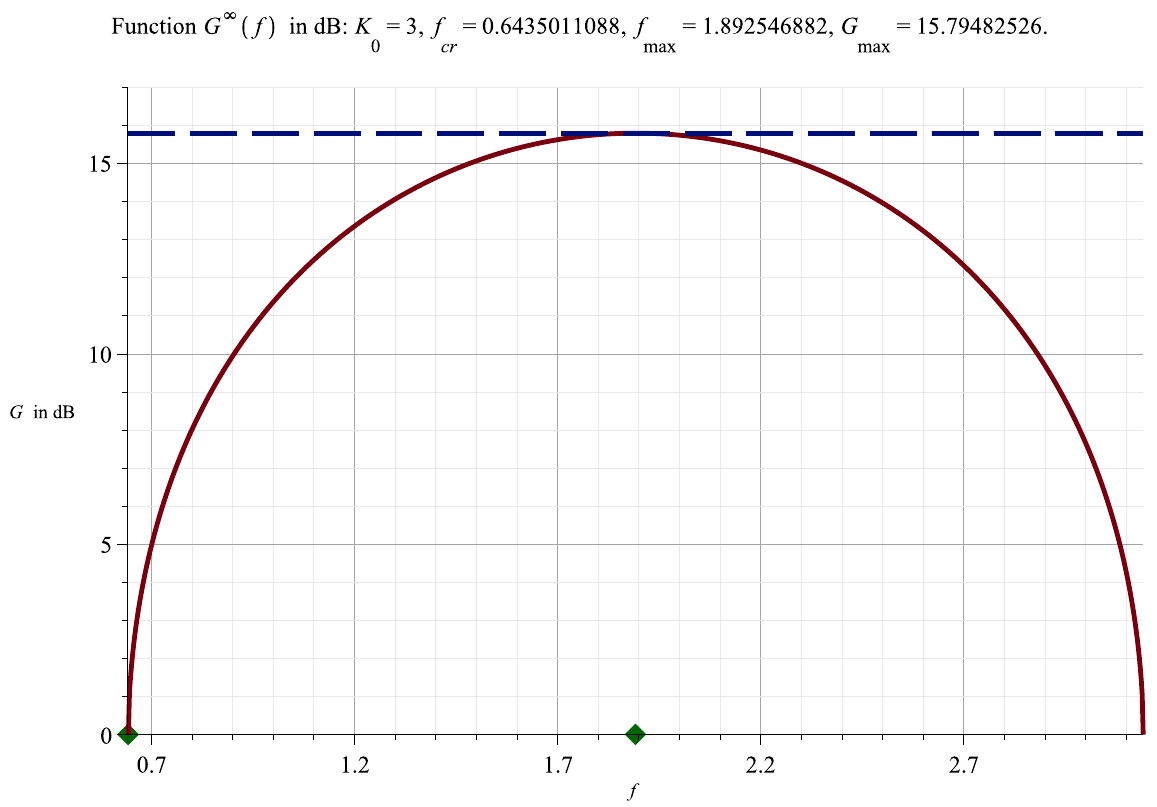}\hspace{1cm}\includegraphics[scale=0.36]{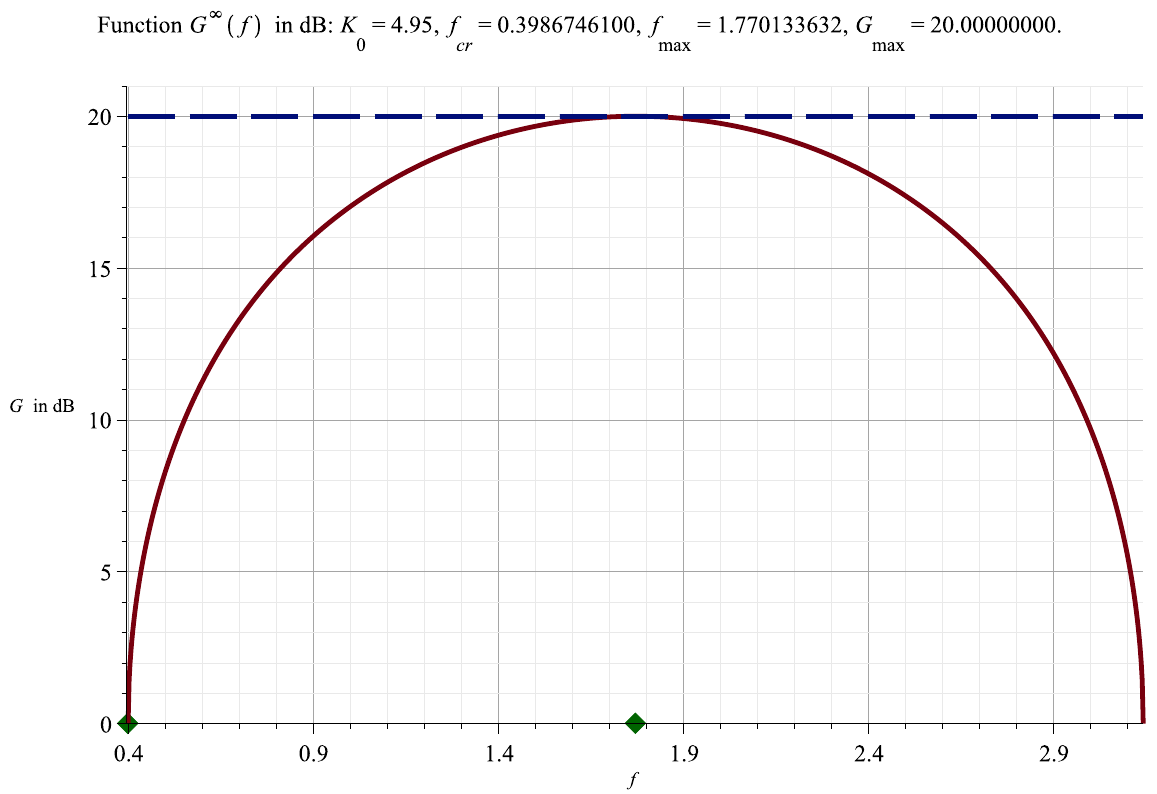}
\par\end{centering}
\centering{}\hspace{1cm}(a)\hspace{7.5cm}(b)\caption{\label{fig:mck-gain-inf} Plots of gain $G^{\infty}\left(f\right)$,
that is the gain in high frequency limit, as a function of $f$ defined
in relations (\ref{eq:sSbf1d}) for $f_{\mathrm{cr}}<f<\pi$ and (a)
$K_{0}=3$, with $f_{\mathrm{cr}}\protect\cong0.06435011088$, $f_{\mathrm{max}}\protect\cong1.892546882$
and $G_{\mathrm{max}}\protect\cong15.79482525$; (b) $K_{0}=4.95$,
with $f_{\mathrm{cr}}\protect\cong0.3986746100$, $f_{\mathrm{max}}\protect\cong1.770133632$
and $G_{\mathrm{max}}=20$. In all plots the horizontal and vertical
axes represent respectively frequency $f$ and gain $G$ in $\mathrm{dB}$.
The solid (brown) curves represent gain $G$ as a function of $f$.
The diamond solid (green) dots mark the values of $f_{\mathrm{cr}}$
and $f_{\mathrm{max}}$.}
\end{figure}

\section{Typical values of the MCK gain and its significant parameters\label{sec:typ}}

It would instructive to make an assessment of typical values the MCK
gain and its other parameters. There is the following rough empirical
formula that shows the dependence of the maximum power gain $G_{\mathrm{T}}\left(N\right)$
on the number $N$ of cavities in the klystron, \cite[7.7.1]{Tsim},
\cite[7.2.6]{Grigo}, \cite[16]{ValMid}:
\begin{equation}
G_{\mathrm{T}}\left(N\right)=15+20\left(N-2\right)\,\mathrm{dB}.\label{eq:GTmaxdB1a}
\end{equation}
Realistically achievable maximum amplification values though are smaller
and are of the order of $50\,\mathrm{dB}$ to $70\,\mathrm{dB}$.
The main limiting factors are noise and self-excitation of the klystron
because of parasitic feedback between cavities.

As to other universal values in the klystron theory the MCK theory
features a fundamental scale $z_{\mathrm{B}}$, called \emph{bunching
distance}, associated with one quarter a plasma oscillation cycle,
\cite[9.3.4]{BenSweScha}, \cite[9.2]{Gilm1}:
\begin{equation}
z_{\mathrm{B}}=\frac{\lambda_{\mathrm{rp}}}{4},\quad\lambda_{\mathrm{rp}}=\frac{2\pi\mathring{v}}{\omega_{\mathrm{rp}}}.\label{eq:GTmaxdB1b}
\end{equation}
The physical origin of the bunching distance scale $z_{\mathrm{B}}$
defined by equations (\ref{eq:GTmaxdB1b}) is evidently related to
the reduced plasma frequency $\lambda_{\mathrm{rp}}$ and it can be
explained as follows, \cite[9.2]{Gilm1} ($\lambda_{\mathrm{p}}=\frac{2\pi\mathring{v}}{\omega_{\mathrm{p}}}$):
\begin{quotation}
``At the axial position denoted by $\lambda_{\mathrm{p}}/4$ fast
electrons have been slowed to the dc velocity of the beam and slow
electrons have been accelerated to the dc beam velocity. At $\lambda_{\mathrm{p}}/4$,
all electrons have the same velocity. Also, at $\lambda_{\mathrm{p}}/4$
, the RF electron density and the RF current reach maximum values.
For small to medium RF signals, the RF current is nearly sinusoidal.
A very important characteristic of the bunching process with space
charge forces is that all electrons are either speeded up or slowed
down to the same velocity (the dc beam velocity) at the same axial
position ($\lambda_{\mathrm{p}}/4$). In addition, even if the amplitude
of the modulating field is changed so that initial electron velocities
are changed, the axial position of the bunch remains the same. This
result is extremely important to the klystron engineer because, unlike
the situation when space charge forces are ignored, the cavity location
for maximum RF beam current is not a function of signal level, of
gap width, or of frequency of operation.''
\end{quotation}
In other words the charge wave in the moving stream of electrons of
stationary (dc) velocity $\mathring{v}$ has the density that proportional
to a sinusoidal traveling wave, namely
\begin{equation}
\sin\left[\omega_{\mathrm{rp}}\left(\frac{z}{\mathring{v}}-t\right)\right]=\sin\left[k_{\mathrm{q}}z-\omega_{\mathrm{rp}}t\right],\quad k_{\mathrm{q}}=\frac{\omega_{\mathrm{rp}}}{\mathring{v}},\quad\lambda_{\mathrm{rp}}=\frac{2\pi}{k_{\mathrm{q}}}=\frac{2\pi\mathring{v}}{\omega_{\mathrm{rp}}},\label{eq:GTmaxdB1c}
\end{equation}
where $\lambda_{\mathrm{rp}}$ is the electron plasma wavelength as
in equations (\ref{eq:GTmaxdB1b}). In particular, equation (\ref{eq:GTmaxdB1c})
is consistent the formula (\ref{eq:GTmaxdB1b}) for bunching distance.

Comparing our formula (\ref{eq:sSbf2b}) for the maximal gain $G_{\mathrm{max}}$
for one period with Tsimring formula (\ref{eq:GTmaxdB1a}) and assuming
their consistency we readily arrive with the following equation for
a ``typical'' value $K_{0\mathrm{T}}$ for the instability coefficient
$K_{0}$:
\begin{equation}
G_{\mathrm{max}}=20\frac{\ln\left(K_{0}+\sqrt{K_{0}^{2}+1}\right)}{\ln\left(10\right)}=\lim_{N\rightarrow\infty}\frac{G_{\mathrm{T}}\left(N\right)}{N}=20.\label{eq:GTmaxdB1d}
\end{equation}
The solution $K_{0}=K_{0\mathrm{T}}$ to equation (\ref{eq:GTmaxdB1d})
is
\begin{equation}
K_{0\mathrm{T}}=4.95.\label{eq:GTmaxdB1e}
\end{equation}
yielding the corresponding values of $f_{\mathrm{cr}}$, $f_{\mathrm{max}}$
and $\quad G_{\mathrm{max}}:$
\begin{equation}
f_{\mathrm{crT}}\cong0.3986746100,\quad f_{\mathrm{maxT}}\cong1.770133632,\quad G_{\mathrm{maxT}}=20.\label{eq:GTmaxdB1f}
\end{equation}
Based on typical values of the MCK parameters as in equations (\ref{eq:GTmaxdB1e})
and (\ref{eq:GTmaxdB1f}) we generate the gain $G$ as a function
of frequency $\omega$ defined by equations (\ref{eq:sSbf1c}) shown
in Figs. \ref{fig:mck-gainT1} and \ref{fig:mck-gainT2}.
\begin{figure}[h]
\begin{centering}
\includegraphics[scale=0.35]{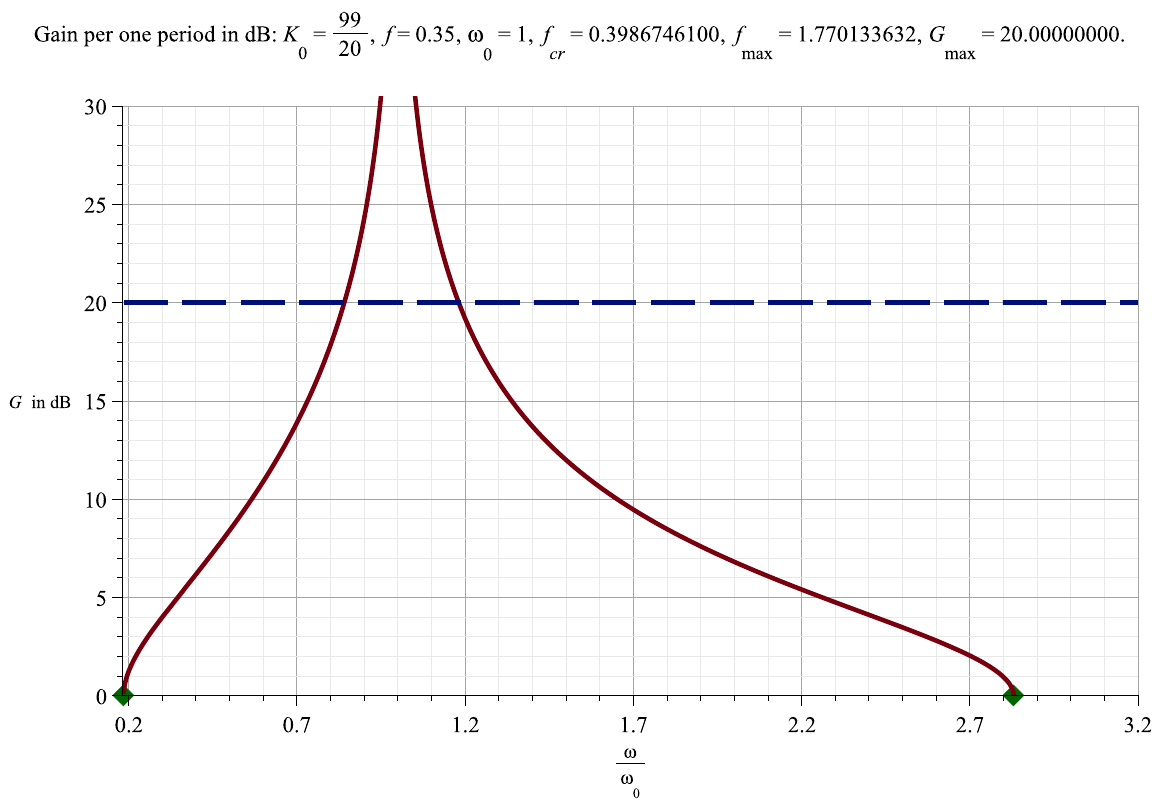}\hspace{1cm}\includegraphics[scale=0.36]{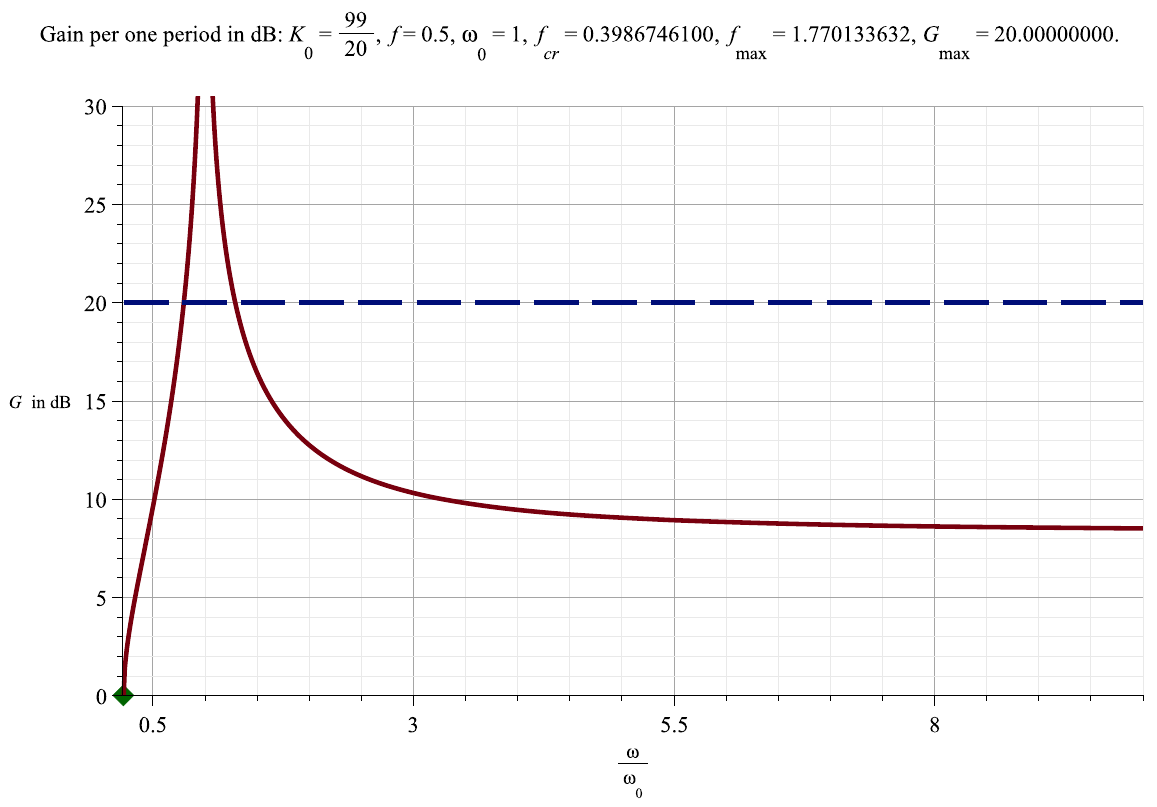}
\par\end{centering}
\centering{}\hspace{1cm}(a)\hspace{7.5cm}(b)\caption{\label{fig:mck-gainT1} Plots of gain $G$ as a function of frequency
$\omega$ defined by equations (\ref{eq:sSbf1c}) for $\omega_{0}=1$,
$K_{0}=K_{0\mathrm{T}}=4.95$ and consequently $f_{\mathrm{cr}}\protect\cong0.3986746100$,
$f_{\mathrm{max}}\protect\cong1.770133632$, $G_{\mathrm{max}}=20$
and: (a) $f=0.35<f_{\mathrm{cr}}$; (b) $f=0.5>f_{\mathrm{cr}}$.
In all plots the horizontal and vertical axes represent respectively
frequency $\omega$ and gain $G$ in $\mathrm{dB}$. The solid (brown)
curves represent gain $G$ as a function of frequency $\omega$, the
dashed (blue) line $G=G_{\mathrm{max}}$ represent the maximal value
$G_{\mathrm{max}}$ of $G$ in the high frequency limit (see Section
\ref{subsec:gain-hf}). The diamond solid (green) dots mark the values
of $\varOmega_{f}^{-}$ and $\varOmega_{f}^{+}$ which are the frequency
boundaries of the instability.}
\end{figure}
\begin{figure}[h]
\begin{centering}
\includegraphics[scale=0.5]{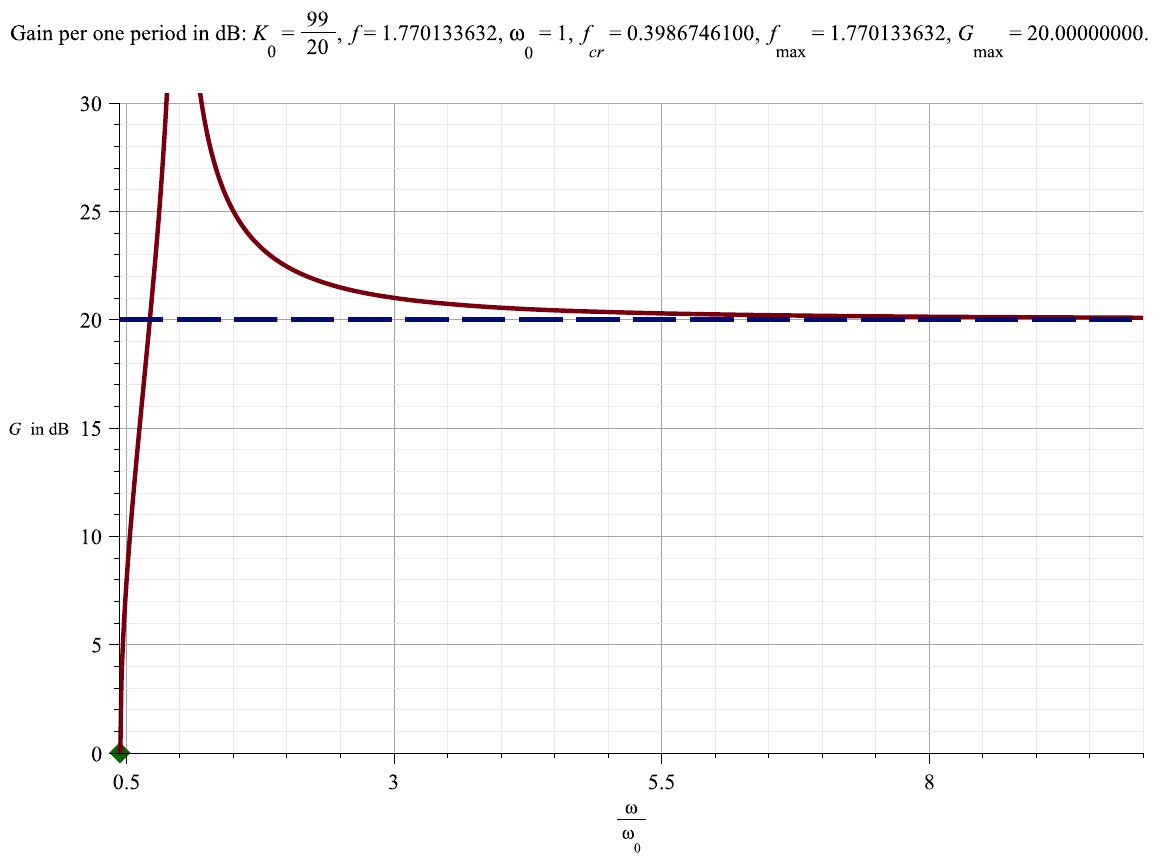}
\par\end{centering}
\centering{}\caption{\label{fig:mck-gainT2} Plot of gain $G$ as a function of frequency
$\omega$ defined by equations (\ref{eq:sSbf1c}) for $\omega_{0}=1$,
$K_{0}=K_{0\mathrm{T}}=4.95$ and consequently $f_{\mathrm{cr}}\protect\cong0.3986746100$,
$f_{\mathrm{max}}\protect\cong1.770133632$, $G_{\mathrm{max}}=20$
and $f=f_{\mathrm{max}}\protect\cong1.7701336320$. The horizontal
and vertical axes represent respectively frequency $\omega$ and gain
$G$ in $\mathrm{dB}$. The solid (brown) curve represents gain $G$
as a function of frequency $\omega$, the dashed (blue) line $G=G_{\mathrm{max}}$
represent the maximal value $G_{\mathrm{max}}$ of $G$ in the high
frequency limit (see Section \ref{subsec:gain-hf}). The diamond solid
(green) dot mark the value of $\varOmega_{f}^{-}$ which is the frequency
boundaries of the instability.}
\end{figure}

Note that according to equation (\ref{eq:Lagdim1ek}) the normalized
period $f=\frac{2\pi a}{\lambda_{\mathrm{rp}}}$. Combining that with
equations (\ref{eq:bfomf1d}), (\ref{eq:sSbf2a}) and (\ref{eq:sSbf2e})
for $f_{\mathrm{cr}}$ and $f_{\mathrm{max}}$ we obtain the following
expression for the corresponding values $a_{\mathrm{cr}}$ and $a_{\mathrm{max}}$
of the MCK period $a$:
\begin{gather}
a_{\mathrm{cr}}=\frac{f_{\mathrm{cr}}}{2\pi}\lambda_{\mathrm{rp}}=\frac{\arctan\left(\frac{1}{K_{0}}\right)}{\pi}\lambda_{\mathrm{rp}},\quad a_{\mathrm{max}}=\frac{f_{\mathrm{max}}}{2\pi}\lambda_{\mathrm{rp}}=\frac{\pi-\arctan\left(K_{0}\right)}{2\pi}\lambda_{\mathrm{rp}}=\frac{a_{\mathrm{cr}}}{2}+\frac{\lambda_{\mathrm{rp}}}{4},\label{eq:GTmaxdB2a}\\
\text{ where }K_{0}=\frac{b^{2}g_{\mathrm{B}}}{c_{0}},\quad g_{\mathrm{B}}=\frac{\sigma_{\mathrm{B}}}{4\lambda_{\mathrm{rp}}},\quad\left|\arctan\left(*\right)\right|<\frac{\pi}{2}.\nonumber 
\end{gather}
The MCK period $a=a_{\mathrm{cr}}$ signifies the onset of MCK instability
for all frequencies $\omega>\omega_{0}$, that is for any $a_{\mathrm{cr}}<a<\frac{\lambda_{\mathrm{rp}}}{2}$
the MCK system is unstable for all frequencies $\omega>\omega_{0}$.
The MCK period $a=a_{\mathrm{max}}$ is the one at which the MCK system
attains its maximal gain for all frequencies $\omega>\omega_{0}$,
see Figs. \ref{fig:mck-gainT1} (b) and \ref{fig:mck-gainT2}.

Equations (\ref{eq:GTmaxdB2a}) the following limit relations (see
Fig. \ref{fig:mck-fcr-fop}):
\begin{equation}
\lim_{K_{0}\rightarrow\infty}a_{\mathrm{cr}}=0,\quad\lim_{K_{0}\rightarrow0}a_{\mathrm{cr}}=\frac{\lambda_{\mathrm{rp}}}{2},\label{eq:GTmaxdB2b}
\end{equation}
\begin{equation}
\lim_{K_{0}\rightarrow\infty}a_{\mathrm{max}}=\frac{\lambda_{\mathrm{rp}}}{4},\quad\lim_{K_{0}\rightarrow0}a_{\mathrm{max}}=\frac{\lambda_{\mathrm{rp}}}{2}.\label{eq:GTmaxdB2c}
\end{equation}
Note the first limit relation in equations (\ref{eq:GTmaxdB2c}) yields
the well known in the klystron theory bunching distance $z_{\mathrm{B}}=\frac{\lambda_{\mathrm{rp}}}{4}$
in equations (\ref{eq:GTmaxdB1b}) as the limit of $a_{\mathrm{max}}$
for large values of $K_{0}$. 

Expressions (\ref{eq:GTmaxdB2a}) for $a_{\mathrm{cr}}$ and $a_{\mathrm{max}}$
imply also the following inequalities:
\begin{equation}
a_{\mathrm{cr}}<a_{\mathrm{max}};\quad0<a_{\mathrm{cr}}<\frac{\lambda_{\mathrm{rp}}}{2},\quad\frac{\lambda_{\mathrm{rp}}}{4}<a_{\mathrm{max}}=\frac{a_{\mathrm{cr}}}{2}+\frac{\lambda_{\mathrm{rp}}}{4}<\frac{\lambda_{\mathrm{rp}}}{2}.\label{eq:GTmaxdB2d}
\end{equation}

\section{Dispersion relations\label{sec:disp}}

We start with an observation that in view of the relation $s=\exp\left\{ \mathrm{i}k\right\} $
between the Floquet multiplier $s$ and the wave number $k$ (see
Section \ref{sec:floquet} and Remark \ref{rem:disprel}) \emph{the
characteristic equation (\ref{eq:bfKpsi1ba}) can be viewed as an
expression of the dispersion relations between the frequency $\omega$
and the wavenumber $k$ and we will refer to it as the MCK dispersion
relations or just the dispersion relations. }Dispersion relation (\ref{eq:bfKpsi1ba})
can be readily recast as
\begin{equation}
S+2b_{f}+S^{-1}=0,\quad S=s\exp\left\{ -\mathrm{i}{\it \omega}\right\} =\exp\left\{ \mathrm{i}\left(k-{\it \omega}\right)\right\} ,\quad s=\exp\left\{ \mathrm{i}k\right\} ,\label{eq:dispSbf1a}
\end{equation}
or, equivalently, as
\begin{gather}
\cos\left(k-{\it \omega}\right)+b_{f}\left(\omega\right)=0,\quad b_{f}\left(\omega\right)=K\left(\omega\right)\sin\left(f\right)-\cos\left(f\right),\label{eq:dispSbf1b}\\
K\left(\omega\right)=K_{0}\frac{\omega^{2}}{\omega^{2}-\omega_{0}^{2}},\quad K_{0}=\frac{b^{2}\beta_{0}}{2f}=\frac{b^{2}g_{\mathrm{B}}}{c_{0}}.\nonumber 
\end{gather}
Equation (\ref{eq:dispSbf1b}) in turn is equivalent to
\begin{gather}
k_{\pm}\left(\omega\right)=\omega\pm\arccos\left(-b_{f}\left(\omega\right)\right)+2\pi m,\quad b_{f}\left(\omega\right)=K\left(\omega\right)\sin\left(f\right)-\cos\left(f\right),\quad m\in\mathbb{Z},\label{eq:dispSbf1c}\\
K\left(\omega\right)=K_{0}\frac{\omega^{2}}{\omega^{2}-\omega_{0}^{2}},\quad K_{0}=\frac{b^{2}\beta_{0}}{2f}=\frac{b^{2}g_{\mathrm{B}}}{c_{0}}.\nonumber 
\end{gather}

When constructing the MCK dispersion relations we follow to the general
approach reviewed in Section \ref{sec:floquet} (see Remark \ref{rem:disprel})
for finding the dispersion relations for periodic systems. Using Theorem
\ref{thm:floqmultbf} we obtain the following statement relating the
frequency $\omega$ to the wavenumber $k=k_{\pm}\left(\omega\right)$.
\begin{thm}[MCK dispersion relations]
\label{thm:mckdis} Let $s_{\pm}$ be the MCK Floquet multipliers
and let $k_{\pm}\left(\omega\right)$ be the corresponding complex-valued
wave numbers satisfying
\begin{equation}
s_{\pm}=s_{\pm}\left(\omega\right)=\exp\left\{ \mathrm{i}k_{\pm}\left(\omega\right)\right\} ,\label{eq:spmpo1d}
\end{equation}
Then statements of Theorem \ref{thm:floqmultbf} imply the following
representation for $k_{\pm}\left(\omega\right)$:
\begin{equation}
k_{\pm}\left(\omega\right)=\left\{ \begin{array}{rcr}
-\frac{1+\mathrm{sign}\,\left\{ b_{f}\left(\omega\right)\right\} }{2}\pi+\omega+2\pi m\pm\mathrm{i}\ln\left[\left(\left|b_{f}\left(\omega\right)\right|+\sqrt{b_{f}^{2}\left(\omega\right)-1}\right)\right] & \text{if } & b_{f}^{2}>1\\
-\frac{1+\mathrm{sign}\,\left\{ b_{f}\left(\omega\right)\right\} }{2}\pi+\omega+2\pi m\pm\arccos\left(\left|b_{f}\left(\omega\right)\right|\right) & \text{if } & b_{f}^{2}\leq1
\end{array}\right.,\quad m\in\mathbb{Z},\label{eq:spmpo1e}
\end{equation}
where $0<f<\pi$ and 
\begin{equation}
b_{f}\left(\omega\right)=K_{0}\frac{\omega^{2}}{\omega^{2}-\omega_{0}^{2}}\sin\left(f\right)-\cos\left(f\right),\quad K_{0}=\frac{b^{2}\beta_{0}}{2}=\frac{b^{2}g_{\mathrm{B}}}{c_{0}},\quad g_{\mathrm{B}}=\frac{\sigma_{\mathrm{B}}}{4\lambda_{\mathrm{rp}}}.\label{eq:spmpol1ea}
\end{equation}
Requirement for $\Re\left\{ k_{\pm}\left(\omega\right)\right\} $
to be in the first (main) Brillouin zone $\left(-\pi,\pi\right]$
effectively selects the band number $m$ that depend on $\omega$
as follows. For any given $\omega>0$ and $0<f<\pi$ the band number
$m\in\mathbb{Z}$ is determined by the requirement to satisfy the
following inequalities:
\begin{equation}
\begin{array}{rcr}
-\pi<-\frac{1+\mathrm{sign}\,\left\{ b_{f}\left(\omega\right)\right\} }{2}\pi+\omega+2\pi m\leq\pi, & \text{if } & b_{f}^{2}\left(\omega\right)>1\\
-\pi<-\frac{1+\mathrm{sign}\,\left\{ b_{f}\left(\omega\right)\right\} }{2}\pi\pm\arccos\left(-b_{f}\left(\omega\right)\right)+\omega+2\pi m\leq\pi, & \text{if } & b_{f}^{2}\left(\omega\right)<1
\end{array}.\label{eq:spmpo1h}
\end{equation}
\emph{The equations (\ref{eq:spmpo1e}) for the complex-valued wave
numbers $k_{\pm}\left(\omega\right)$ represent the dispersion relations
of the MCK.} 
\end{thm}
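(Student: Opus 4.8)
The plan is to obtain $k_\pm(\omega)$ by inverting the defining relation $s_\pm=\exp\{\mathrm{i}k_\pm(\omega)\}$ in (\ref{eq:spmpo1d}), feeding in the three explicit closed forms of the Floquet multipliers already furnished by Theorem \ref{thm:floqmultbf}, equations (\ref{eq:bfspm1b})--(\ref{eq:bfspm1d}). The one algebraic device needed to handle the sign prefactor uniformly is the identity
\[
-\mathrm{sign}\left\{b_f\right\}=\exp\!\left\{-\mathrm{i}\,\frac{1+\mathrm{sign}\left\{b_f\right\}}{2}\,\pi\right\},
\]
which returns $-1=e^{\mathrm{i}\pi}$ when $b_f>0$ and $+1=e^{0}$ when $b_f<0$; substituting it into (\ref{eq:bfspm1b})--(\ref{eq:bfspm1d}) converts the factor $-\mathrm{sign}\left\{b_f\right\}$ into the additive phase $-\tfrac{1+\mathrm{sign}\{b_f\}}{2}\pi$ appearing in $k_\pm$.

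For the unstable branch, $b_f^{2}>1$, I would write $s_\pm=\rho_\pm e^{\mathrm{i}\varphi}$ with $\rho_\pm=|b_f|\pm\sqrt{b_f^{2}-1}>0$ and $\varphi=\omega-\tfrac{1+\mathrm{sign}\{b_f\}}{2}\pi$; taking the complex logarithm gives $\mathrm{i}k_\pm=\ln\rho_\pm+\mathrm{i}(\varphi+2\pi m)$ for $m\in\mathbb{Z}$, whence $k_\pm=\varphi+2\pi m-\mathrm{i}\ln\rho_\pm$. Using $\rho_+\rho_-=1$ (which is $S_+S_-=1$ from (\ref{eq:sSbf1a})) one rewrites $\ln\rho_\pm=\pm\ln\bigl(|b_f|+\sqrt{b_f^{2}-1}\bigr)$, producing the first line of (\ref{eq:spmpo1e}); the sign of the logarithmic (imaginary) term is consistent with $|s_-|<1<|s_+|$ and is confirmed by a direct substitution. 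For the oscillatory branch, $b_f^{2}\le1$, equation (\ref{eq:bfspm1c}) --- or (\ref{eq:bfspm1d}) with $\arccos(1)=0$ in the borderline case $b_f^{2}=1$ --- already exhibits $s_\pm$ as the unit complex number $\exp\{\mathrm{i}[\,\omega\pm\arccos(|b_f|)-\tfrac{1+\mathrm{sign}\{b_f\}}{2}\pi\,]\}$, so $k_\pm$ is real and equals that bracketed phase modulo $2\pi$, which is the second line of (\ref{eq:spmpo1e}).

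It remains to fix the band number $m$. The complex logarithm determines $\Re k_\pm$ only up to an additive multiple of $2\pi$; requiring $\Re k_\pm\in(-\pi,\pi]$ (the first Brillouin zone) selects, for every $\omega>0$ and $0<f<\pi$, a unique $m=m(\omega)\in\mathbb{Z}$, and writing this membership constraint explicitly in the two regimes --- using the elementary relation $\arccos(-b_f)=\pi-\arccos(b_f)$ to recast the phase in the oscillatory case --- yields exactly the inequalities (\ref{eq:spmpo1h}). The $2\pi$- and $\pi$-periodicity of $b_f$ in $f$ recorded in (\ref{eq:bfspm1a}), together with (\ref{eq:bfspm1g}), guarantees that the resulting dispersion relations transform consistently under shifts of $f$.

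I do not expect a substantive obstacle: the work lies entirely in the bookkeeping of the branch of the complex logarithm in tandem with the piecewise-constant factor $\mathrm{sign}\left\{b_f\right\}$. The one place that calls for care is checking that the representation (\ref{eq:spmpo1e}) joins up (modulo $2\pi$ in its real part) across the loci where $b_f(\omega)$ changes sign, i.e. where $K(\omega)\sin(f)=\cos(f)$, and behaves correctly as $\omega\to\omega_0\pm0$, where $b_f\to\pm\infty$; away from those the claim is a one-line substitution into Theorem \ref{thm:floqmultbf}.
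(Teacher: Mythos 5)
Your proposal is correct and is essentially the paper's own route: the paper offers no separate argument for Theorem \ref{thm:mckdis} beyond ``using Theorem \ref{thm:floqmultbf}'', i.e.\ precisely your inversion $k_{\pm}=-\mathrm{i}\log s_{\pm}$ of the closed forms (\ref{eq:bfspm1b})--(\ref{eq:bfspm1d}), with the prefactor $-\mathrm{sign}\,\{b_{f}\}$ absorbed as the phase $-\tfrac{1+\mathrm{sign}\,\{b_{f}\}}{2}\pi$, the identity $S_{+}S_{-}=1$ giving the $\pm\ln\bigl(\left|b_{f}\right|+\sqrt{b_{f}^{2}-1}\bigr)$ term, and the Brillouin-zone condition $\Re\left\{ k\right\} \in\left(-\pi,\pi\right]$ fixing $m$ as in (\ref{eq:spmpo1h}). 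The only discrepancy is a harmless labeling one: your derivation attaches $-\mathrm{i}\ln\bigl(\left|b_{f}\right|+\sqrt{b_{f}^{2}-1}\bigr)$ to the growing multiplier $s_{+}$ (so that $\left|s_{+}\right|=e^{-\Im\left\{ k_{+}\right\} }>1$), whereas (\ref{eq:spmpo1e}) writes $\pm\mathrm{i}\ln$ with the opposite pairing, which merely swaps which branch is called $k_{+}$ and leaves the set of dispersion branches unchanged.
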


\begin{rem}[real part of the wave number]
\label{rem:real-kpi} Note that according to expression\emph{ (\ref{eq:spmpo1e})
}in Theorem \ref{thm:mckdis}\emph{ }and relations (\ref{eq:Ompom1c})
in Theorem \ref{thm:Omegapm}\emph{ }(see also Figure \ref{fig:Om-plusmin})
we have\emph{
\begin{equation}
\Re\left\{ k_{\pm}\left(\omega\right)\right\} =\pi+\omega+2\pi m,\quad\omega_{0}<\omega<\varOmega_{f}^{+}.\label{eq:spmpo1j}
\end{equation}
}Figures \ref{fig:mck-disp1}, \ref{fig:mck-disp2} and \ref{fig:mck-disp3}
illustrate graphically equation\emph{ (\ref{eq:spmpo1j}) }by perfect
straight lines parallel to $\Re\left\{ k\right\} =\omega$ in the
shadowed area.
\end{rem}

There is yet another form of the dispersion relation (\ref{eq:dispSbf1b})
and (\ref{eq:dispSbf1c}) which is the high-frequency form:
\begin{gather}
D_{\mathrm{K}}\left(\omega,k\right)=D_{\mathrm{K}}^{\left(0\right)}\left(\omega,k\right)+\frac{K_{0}\omega_{0}^{2}}{\omega^{2}-\omega_{0}^{2}}=0,\label{eq:DKomk1aK}\\
D_{\mathrm{K}}^{\left(0\right)}\left(\omega,k\right)=\cos\left({\it \omega}-k\right)+b_{f}^{\infty},\quad b_{f}^{\infty}=K_{0}\sin\left(f\right)-\cos\left(f\right).\label{eq:DKomk1bK}
\end{gather}
This form readily yields the following high-frequency approximation
to the MCK dispersion relations
\begin{equation}
\cos\left({\it \omega}-k\right)+b_{f}^{\infty}=0,\quad b_{f}^{\infty}=K_{0}\sin\left(f\right)-\cos\left(f\right),\quad\left|b_{f}^{\infty}\right|\leq1,\label{eq:DKomk1cK}
\end{equation}
or, equivalently
\begin{equation}
\omega=k\pm\arccos\left(-b_{f}^{\infty}\right)+2\pi m,\quad b_{f}^{\infty}=K_{0}\sin\left(f\right)-\cos\left(f\right),\quad\left|b_{f}^{\infty}\right|\leq1,\quad m\in\mathbb{Z},\label{eq:DKom1dK}
\end{equation}
where inequality $\left|b_{f}^{\infty}\right|\leq1$ is necessary
and sufficient for the existence of real-valued $\omega$ and $k$
satisfying the dispersion relation.

\subsection{Plotting the dispersion relations}

The conventional dispersion relations are defined as the relations
between real-valued frequency $\omega$ and real-valued wavenumber
$k$ associated with the relevant eigenmodes. In the case of interest
$k$ can be complex-valued and to represent all system modes geometrically
we follow to \cite[7]{FigTWTbk}. First, we parametrize every mode
of the system uniquely by the pair $\left(k\left(\omega\right),\omega\right)$
where $\omega$ is its frequency and $k\left(\omega\right)$ is its
wavenumber. If $k\left(\omega\right)$ is degenerate, it is counted
a number of times according to its multiplicity. In view of the importance
to us of the mode instability, that is, when $\Im\left\{ k\left(\omega\right)\right\} \neq0$,
we partition all the system modes represented by pairs $\left(k\left(\omega\right),\omega\right)$
into two distinct classes \textendash{} oscillatory modes and unstable
ones \textendash{} based on whether the wavenumber $k\left(\omega\right)$
is real- or complex-valued with $\Im\left\{ k\left(\omega\right)\right\} \neq0$.
We refer to a mode (eigenmode) of the system as an \emph{oscillatory
mode} if its wavenumber $k\left(\omega\right)$ is real-valued. We
associate with such an oscillatory mode point $\left(k\left(\omega\right),\omega\right)$
in the $k\omega$-plane with $k$ being the horizontal axis and $\omega$
being the vertical one. Similarly, we refer to a mode (eigenmode)
of the system as a \emph{(convective) unstable mode} if its wavenumber
$k$ is complex-valued with a nonzero imaginary part, that is, $\Im\left\{ k\left(\omega\right)\right\} \neq0$.
We associate with such an unstable mode point $\left(\Re\left\{ k\left(\omega\right)\right\} ,\omega\right)$
in the $k\omega$-plane. Since we consider here only \emph{convective
unstable modes}, we refer to them shortly as \emph{unstable modes}.
Notice that every point $\left(\Re\left\{ k\left(\omega\right)\right\} ,\omega\right)$
is in fact associated with two complex conjugate system modes with
$\pm\Im\left\{ k\left(\omega\right)\right\} $.

Based on the above discussion, we represent the set of all oscillatory
and unstable modes of the system geometrically by the set of the corresponding
modal points $\left(k\left(\omega\right),\omega\right)$ and $\left(\Re\left\{ k\left(\omega\right)\right\} ,\omega\right)$
in the $k\omega$-plane. We name this set the \emph{dispersion-instability
graph}. To distinguish graphically points $\left(k\left(\omega\right),\omega\right)$
associated oscillatory modes when $k\left(\omega\right)$ is real-valued
from points $\left(\Re\left\{ k\left(\omega\right)\right\} ,\omega\right)$
associated unstable modes when $k\left(\omega\right)$ is complex-valued
with $\Im\left\{ k\left(\omega\right)\right\} \neq0$ we mark by a
shadow the region occupied by points with $\Im\left\{ k\left(\omega\right)\right\} \neq0$.
We remind once again that every point $\left(\omega,\Re\left\{ k\left(\omega\right)\right\} \right)$
with $\Im\left\{ k\left(\omega\right)\right\} \neq0$ represents exactly
two complex conjugate unstable modes associated with $\pm\Im\left\{ k\left(\omega\right)\right\} $. 

Figures \ref{fig:mck-disp1}, \ref{fig:mck-disp2} and \ref{fig:mck-disp3}
illustrate graphically the dispersion relations $k_{\pm}\left(\omega\right)$
described by equations (\ref{eq:spmpo1e}).
\begin{figure}[h]
\begin{centering}
\includegraphics[scale=0.18]{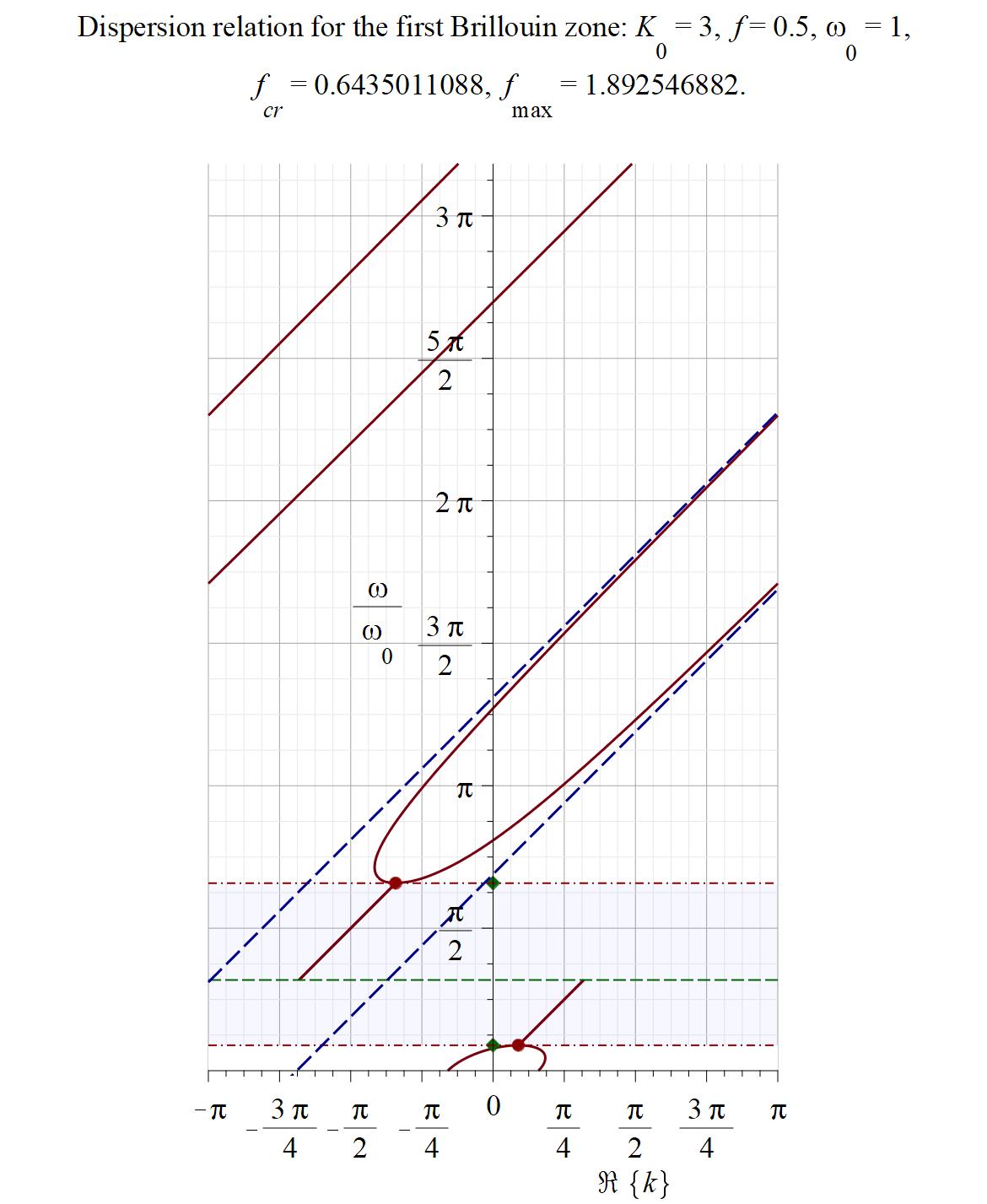}\hspace{0.1cm}\includegraphics[scale=0.18]{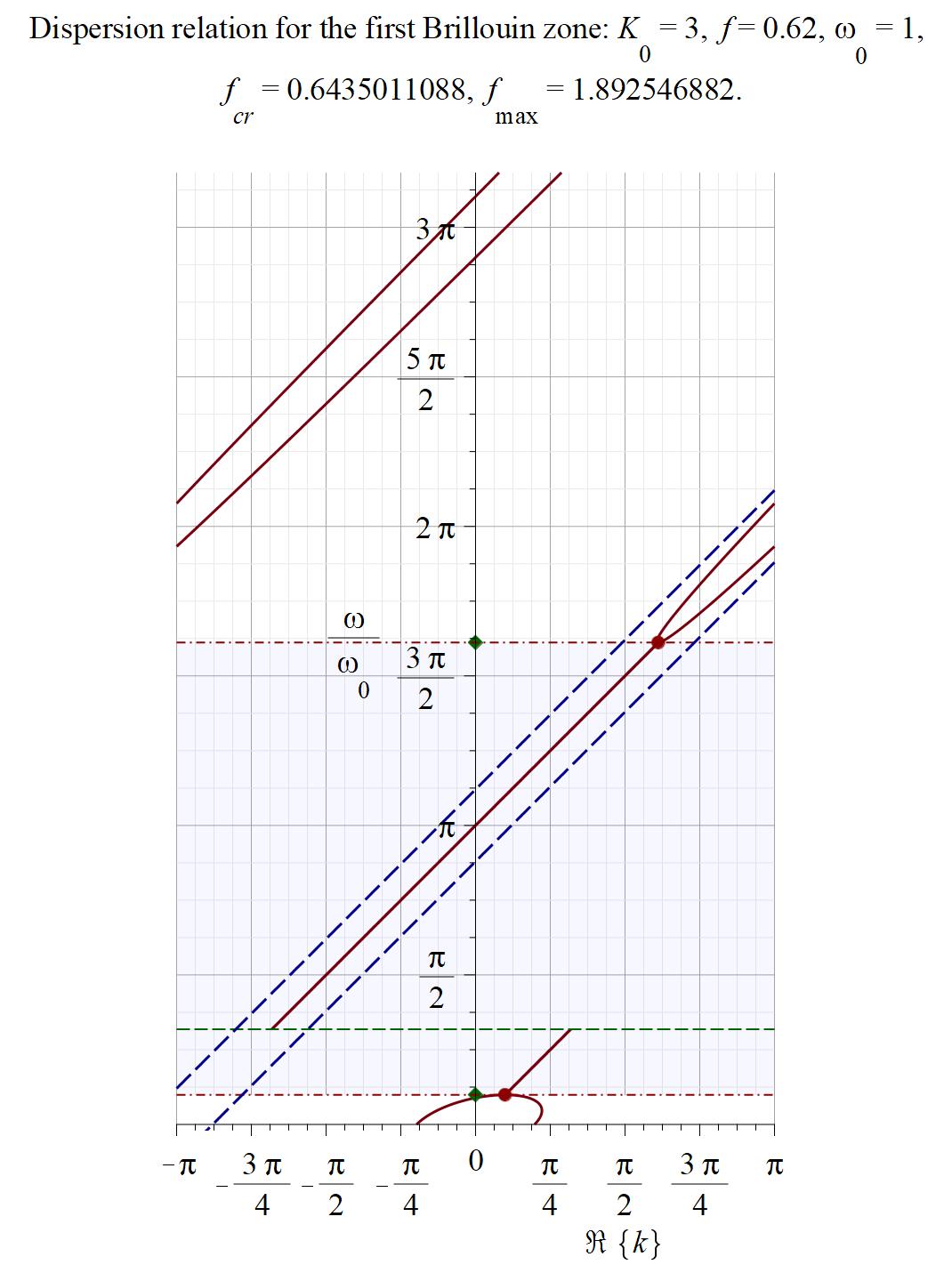}
\par\end{centering}
\centering{}(a)\hspace{6.5cm}(b)\caption{\label{fig:mck-disp1} The MCK dispersion-instability plots (solid
brown curves and lines) over main Brillouin zone $\left[-\pi,\pi\right]$
for $K_{0}=3$, $\omega_{0}=1$ for which $f_{\mathrm{cr}}\protect\cong0.6435011088$,
$f_{\mathrm{max}}\protect\cong1.892546882$: (a) $f=0.5<f_{\mathrm{cr}}\protect\cong0.6435011088$;
(b) $f=0.62<f_{\mathrm{cr}}\protect\cong0.6435011088$. In all plots
the horizontal and vertical axes represent respectively $\Re\left\{ k\right\} $
and $\frac{\omega}{\omega_{0}}$. Two solid (green) diamond dots identify
the values of $\varOmega_{f}^{-}$ and $\varOmega_{f}^{+}$ which
are the frequency boundaries of the instability. Two solid (brown)
disk dots identify points of the transition from the instability to
the stability which are also EPD points. Two (brown) dash-dot lines
$\omega=\varOmega_{f}^{\pm}$ identify the frequency boundaries of
the instability and the shaded (light blue) region between the lines
identify points $\left(\Re\left\{ k\right\} ,\omega\right)$ of instability.
Dashed (green) line $\omega=\omega_{0}$ identifies the resonance
frequency $\omega_{0}$. Note the plots have jump-discontinuity along
the dashed (green) line, namely $\Re\left\{ k_{\pm}\left(\omega\right)\right\} $
jumps by $\pi$ according to equations (\ref{eq:spmpo1e})\emph{ }as
the frequency $\omega$ passes through the resonance frequency $\omega_{0}$
and the sign of $b_{f}\left(\omega\right)$ changes. The shadowed
area marks points $\left(\Re\left\{ k\right\} ,\omega\right)$ associated
with the instability. The dashed (blue) straight lines lines correspond
to the high frequency approximation defined by equations (\ref{eq:DKom1dK}).}
\end{figure}
\begin{figure}[h]
\begin{centering}
\includegraphics[scale=0.2]{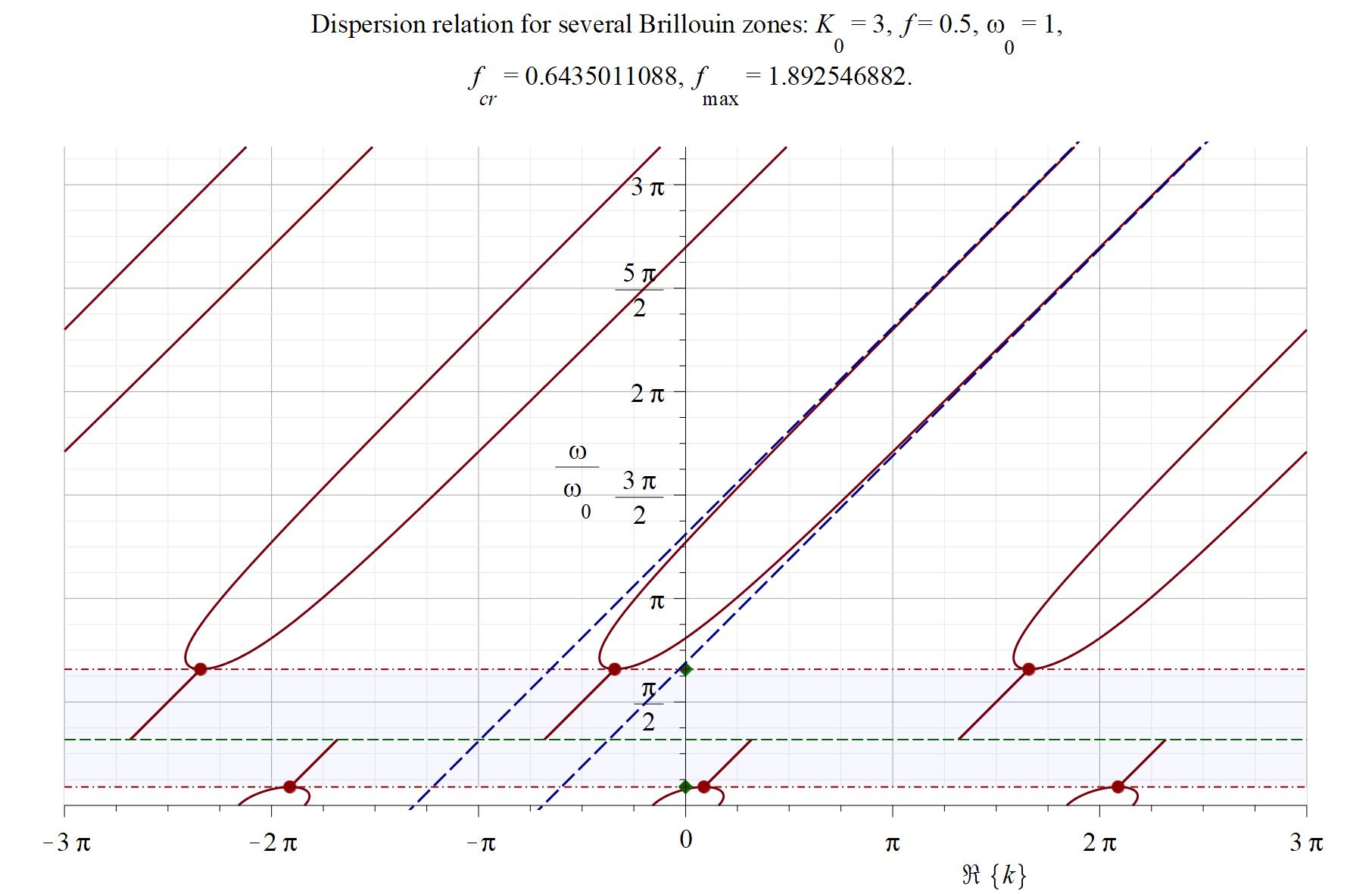}
\par\end{centering}
\centering{}\caption{\label{fig:mck-disp2} The MCK dispersion-instability plot (solid
brown curves and lines) over 3 Brillouin zones $3\left[-\pi,\pi\right]$
for $K_{0}=3$, $\omega_{0}=1$ for which $f_{\mathrm{cr}}\protect\cong0.6435011088$,
$f_{\mathrm{max}}\protect\cong1.892546882$ and $f=0.5<f_{\mathrm{cr}}\protect\cong0.6435011088$.
The horizontal and vertical axes represent respectively $\Re\left\{ k\right\} $
and $\frac{\omega}{\omega_{0}}$. Two solid (green) diamond dots identify
the values of $\varOmega_{f}^{-}$ and $\varOmega_{f}^{+}$ which
are the frequency boundaries of the instability. Solid (brown) disk
dots identify points of the transition from the instability to the
stability which are also EPD points. Two (brown) dash-dot lines $\omega=\varOmega_{f}^{\pm}$
identify the frequency boundaries of the instability and the shaded
(light blue) region between the lines identify points $\left(\Re\left\{ k\right\} ,\omega\right)$
of instability. Dashed (green) line $\omega=\omega_{0}$ identifies
the resonance frequency $\omega_{0}$. Note the plot has a jump-discontinuity
along the dashed (green) line, namely $\Re\left\{ k_{\pm}\left(\omega\right)\right\} $
jumps by $\pi$ according to equations (\ref{eq:spmpo1e})\emph{ }as
the frequency $\omega$ passes through the resonance frequency $\omega_{0}$
and the sign of $b_{f}\left(\omega\right)$ changes. The shadowed
area marks points $\left(\Re\left\{ k\right\} ,\omega\right)$ associated
with the instability. The dashed (blue) straight lines lines correspond
to the high frequency approximation defined by equations (\ref{eq:DKom1dK}).}
\end{figure}
\begin{figure}[h]
\begin{centering}
\includegraphics[scale=0.2]{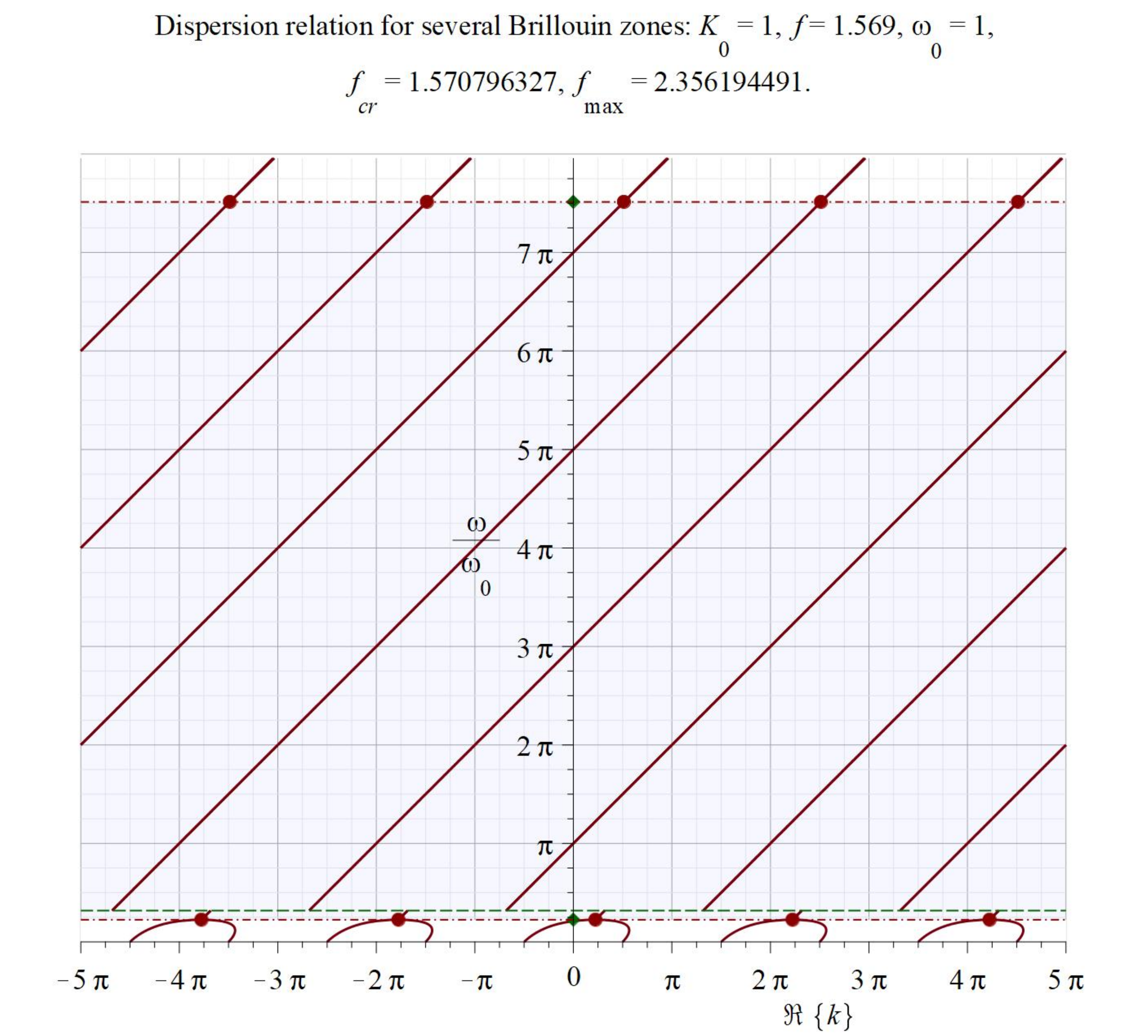}
\par\end{centering}
\centering{}\caption{\label{fig:mck-disp3} The MCK dispersion-instability plot (solid
brown curves and lines) over 3 Brillouin zones $3\left[-\pi,\pi\right]$
for $K_{0}=1$, $\omega_{0}=1$ for which $f_{\mathrm{cr}}\protect\cong1.570796327$,
$f_{\mathrm{max}}\protect\cong1.892546882$ and $f=1.569\protect\cong f_{\mathrm{cr}}\protect\cong1.570796327$.
The horizontal and vertical axes represent respectively $\Re\left\{ k\right\} $
and $\frac{\omega}{\omega_{0}}$. Two solid (green) diamond dots identify
the values of $\varOmega_{f}^{-}$ and $\varOmega_{f}^{+}$ which
are the frequency boundaries of the instability. Solid (brown) disk
dots identify points of the transition from the instability to the
stability which are also EPD points. Two (brown) dash-dot lines $\omega=\varOmega_{f}^{\pm}$
identify the frequency boundaries of the instability and the shaded
(light blue) region between the lines identify points $\left(\Re\left\{ k\right\} ,\omega\right)$
of instability. Dashed (green) line $\omega=\omega_{0}$ identifies
the resonance frequency $\omega_{0}$. Note the plot has a jump-discontinuity
along the dashed (green) line, namely $\Re\left\{ k_{\pm}\left(\omega\right)\right\} $
jumps by $\pi$ according to equations (\ref{eq:spmpo1e})\emph{ }as
the frequency $\omega$ passes through the resonance frequency $\omega_{0}$
and the sign of $b_{f}\left(\omega\right)$ changes. The shadowed
area marks points $\left(\Re\left\{ k\right\} ,\omega\right)$ associated
with the instability.}
\end{figure}

\section{Exceptional points of degeneracy\label{sec:epd}}

The concept of an exceptional point of degeneracy (EPD), \cite[II.1]{Kato},
refers to a system evolution matrix degeneracy when not only some
eigenvalues of the matrix coincide but the corresponding eigenvectors
coincide also. An important class of applications of EPDs is sensing,
\cite{CheN}. \cite{PeLiXu}, \cite{Wie}, \cite{Wie1}, \cite{KNAC},
\cite{OGC}. In our prior work in \cite{FigSynbJ} and \cite{FigPert}
we advanced and studied simple circuits exhibiting EPDs and their
applications to sensing. Our studies of traveling wave tubes (TWT)
in \cite[4, 7, 13, 14, 54, 55]{FigTWTbk} demonstrate that TWTs always
have EPDs. In \cite{FigtwtEPD} we developed applications of EPDs
to sensing based on TWTs. For more applications of EPDs to TWTs see
\cite{OTC}, \cite{OVFC}, \cite{OVFC1}, \cite{VOFC}.

In this section we study EPDs in the MCK system using the properties
of the MCK Floquet multipliers established in Sections \ref{sec:floqmul}
and \ref{sec:instfreq}. In particular, it follows from equation (\ref{eq:bfKpsi1bb})
that the degeneracy of the Floquet multipliers
\begin{equation}
s_{\pm}=e^{\mathrm{i}\omega}S_{\pm},\quad S_{\pm}=-b_{f}\pm\sqrt{b_{f}^{2}-1},\quad b_{f}=b_{f}\left(\omega\right),\label{eq:epdTpm1as}
\end{equation}
occurs if and only if $b_{f}\left(\omega\right)=\pm1$ where $b_{f}\left(\omega\right)$
is defined by equations (\ref{eq:spmpol1ea}). In this case the degenerate
Floquet multiplier is a single number for each of the values of $b_{f}$
which is
\begin{equation}
s=-e^{\mathrm{i}\omega}b_{f}\left(\omega\right)=\mp e^{\mathrm{i}\omega},\quad b_{f}\left(\omega\right)=K_{0}\frac{\omega^{2}}{\omega^{2}-\omega_{0}^{2}}\sin\left(f\right)-\cos\left(f\right)=\pm1.\label{eq:epdTpm1aa}
\end{equation}
The Floquet multiplier $s$ in equation (\ref{eq:epdTpm1aa}) is associated
with monodromy matrix $\mathscr{T}$ defined by equations (\ref{eq:abcha3dkL})
that for $b_{f}=\pm1$ takes the following form
\begin{equation}
\mathscr{T}_{\pm}=e^{\mathrm{i}\omega}\left[\begin{array}{rr}
\cos\left(f\right)-{\it \mathrm{i}\omega}\frac{\sin\left(f\right)}{f} & \frac{\sin\left(f\right)}{f}\\
\frac{\sin\left(f\right)}{f}\omega^{2}+2\mathrm{i}\omega\left({\it \cos\left(f\right)}\pm1\right)-\frac{f\left(1\pm\cos\left(f\right)\right)^{2}}{\sin\left(f\right)} & \mathrm{i}\omega\frac{\sin\left(f\right)}{f}-\cos\left(f\right)\mp2
\end{array}\right],\quad b_{f}=\pm1.\label{eq:epdTpm1a}
\end{equation}
Note that for the each value $b_{f}=1$ and $b_{f}=-1$ each of the
corresponding monodromy matrices $\mathscr{T}_{+}$ and $\mathscr{T}_{-}$
has a single and hence degenerate eigenvalue respectively $s=-e^{\mathrm{i}\omega}$
and $s=e^{\mathrm{i}\omega}$ and the index $\pm$ for matrix $\mathscr{T}$
corresponds to the sign of $b_{f}=\pm1$.

Using elementary identity $\sin^{2}\left(f\right)=1-\cos^{2}\left(f\right)$
we can recast representation (\ref{eq:epdTpm1a}) as
\begin{equation}
\mathscr{T}_{\pm}=e^{\mathrm{i}\omega}\left[\begin{array}{rr}
\cos\left(f\right)-{\it \mathrm{i}\omega}\frac{\sin\left(f\right)}{f} & \frac{\sin\left(f\right)}{f}\\
\frac{\sin\left(f\right)}{f}\omega^{2}+2\mathrm{i}\omega\left({\it \cos\left(f\right)}\pm1\right)+\frac{f\sin\left(f\right)\left(\cos\left(f\right)\pm1\right)}{{\it \cos\left(f\right)}\mp1} & \mathrm{i}\omega\frac{\sin\left(f\right)}{f}-\cos\left(f\right)\mp2
\end{array}\right],\quad b_{f}=\pm1.\label{eq:epdTpm1ab}
\end{equation}
The spectral analysis of the monodromy matrices $\mathscr{T}_{\pm}$
shows that their Jordan canonical forms $\mathscr{J}_{\pm}$ are
\begin{equation}
\mathscr{J}_{\pm}=\left[\begin{array}{rr}
\mp e^{\mathrm{i}\omega} & 1\\
0 & \mp e^{\mathrm{i}\omega}
\end{array}\right],\quad\mathscr{T}_{\pm}=\mathscr{Z}_{\pm}\mathscr{J}_{\pm}\mathscr{Z}_{\pm}^{-1},\label{eq:epdTpm1b}
\end{equation}
where matrices $\mathscr{Z}_{\pm}$ are defined by
\begin{equation}
\mathscr{Z}_{\pm}=\left[\begin{array}{rr}
e^{\mathrm{i}\omega}\left[\cos\left(f\right)\pm1-{\it \mathrm{i}\omega}\frac{\sin\left(f\right)}{f}\right] & 1\\
e^{\mathrm{i}\omega}\left[\frac{\sin\left(f\right)}{f}\omega^{2}+2\mathrm{i}\omega\left({\it \cos\left(f\right)}\pm1\right)+\frac{f\sin\left(f\right)\left(\cos\left(f\right)\pm1\right)}{{\it \cos\left(f\right)}\mp1}\right] & 0
\end{array}\right].\label{eq:epdTpm1c}
\end{equation}
Note that the first column $c_{\pm}$ of the corresponding matrix
$\mathscr{Z}_{\pm}$ is the single eigenvector of $\mathscr{T}_{\pm}$
and its second column $r_{\pm}$ is the relevant root vector of $\mathscr{T}_{\pm}$,
that is
\begin{equation}
c_{\pm}=\left[\begin{array}{r}
e^{\mathrm{i}\omega}\left[\cos\left(f\right)\pm1-{\it \mathrm{i}\omega}\frac{\sin\left(f\right)}{f}\right]\\
e^{\mathrm{i}\omega}\left[\frac{\sin\left(f\right)}{f}\omega^{2}+2\mathrm{i}\omega\left({\it \cos\left(f\right)}\pm1\right)+\frac{f\sin\left(f\right)\left(\cos\left(f\right)\pm1\right)}{{\it \cos\left(f\right)}\mp1}\right]
\end{array}\right],\quad r_{\pm}=\left[\begin{array}{r}
1\\
0
\end{array}\right],\label{eq:epdTpm1d}
\end{equation}
\begin{equation}
c_{\pm}=\left(\mathscr{T}_{\pm}\pm e^{\mathrm{i}\omega}\mathbb{I}\right)r_{\pm},\quad\left(\mathscr{T}_{\pm}\pm e^{\mathrm{i}\omega}\mathbb{I}\right)c_{\pm}=0.\label{eq:epdTpm1e}
\end{equation}
\begin{figure}[h]
\begin{centering}
\includegraphics[scale=0.2]{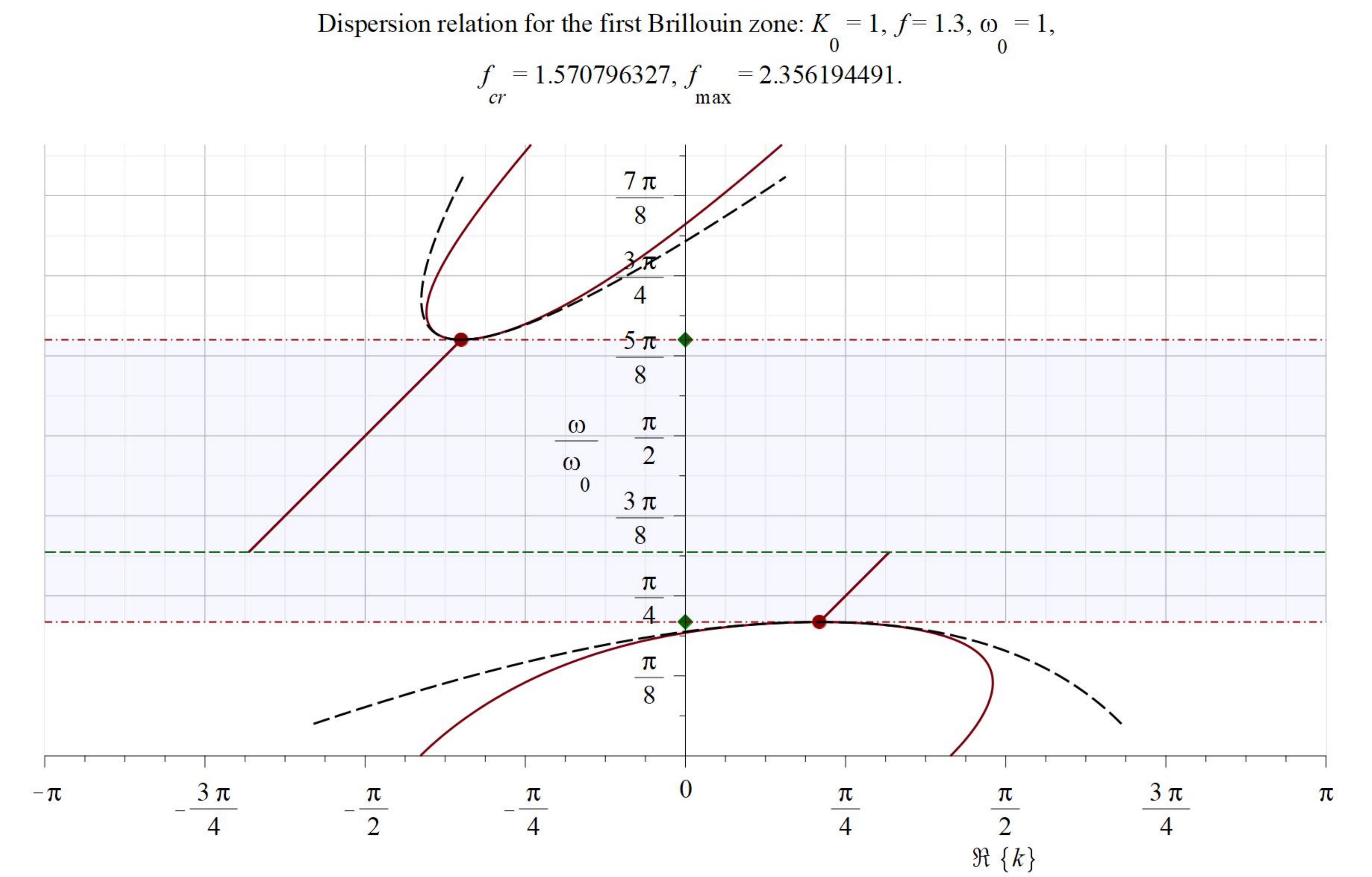}
\par\end{centering}
\centering{}\caption{\label{fig:mck-disp-epd} The dispersion-instability plot (solid brown
curves and lines) for the MCK with $K_{0}=1$, $\omega_{0}=1$ $f=1.3$
for which $f_{\mathrm{cr}}\protect\cong1.57096327$, $f_{\mathrm{max}}\protect\cong2.356194491$.
The horizontal and vertical axes represent respectively $\Re\left\{ k\right\} $
and $\frac{\omega}{\omega_{0}}$. Two solid (green) diamond dots identify
the values of $\varOmega_{f}^{-}$ and $\varOmega_{f}^{+}$ which
are the frequency boundaries of the instability. Solid (brown) disk
dots identify points of the transition from the instability to the
stability which are also EPD points. Two (brown) dash-dot lines $\omega=\varOmega_{f}^{\pm}$
identify the frequency boundaries of the instability and the shaded
(light blue) region between the lines identify points $\left(\Re\left\{ k\right\} ,\omega\right)$
of instability. Dashed (green) line $\omega=\omega_{0}$ identifies
the resonance frequency $\omega_{0}$. The two dashed (black) curves
represent the approximations to the dispersion relations described
by equations (\ref{eq:epdOm3a}) and (\ref{eq:epdOm3c}). Note the
plot has a jump-discontinuity along the dashed (green) line, namely
$\Re\left\{ k_{\pm}\left(\omega\right)\right\} $ jumps by $\pi$
according to equations (\ref{eq:spmpo1e})\emph{ }as the frequency
$\omega$ passes through the resonance frequency $\omega_{0}$ and
the sign of $b_{f}\left(\omega\right)$ changes.}
\end{figure}

According to the above analysis all EPD points of the MCK can be found
by solving equations $b_{f}\left(\omega\right)=\pm1$ for $\omega$.
Then based on Theorem \ref{thm:floqmultbf}, particularly equation
(\ref{eq:bfspm1d})), and Theorem \ref{thm:Omegapm}, particularly
relations (\ref{eq:Ompom1a})-(\ref{eq:Ompom1e}), we obtain the following
statement on EPDs of the MCK.
\begin{thm}[EPD points, their frequencies and wavenumbers]
\label{thm:epdom} If $0<f<f_{\mathrm{cr}}$ then there are exactly
two EPD points with the corresponding frequencies $\varOmega_{f}^{\pm}$
satisfying
\begin{equation}
b_{f}\left(\varOmega_{f}^{\pm}\right)=\pm1;\quad\varOmega_{f}^{+}=\omega_{0}\sqrt{\frac{1}{1-K_{0}\tan\left(\frac{f}{2}\right)}}>\omega_{0},\quad\varOmega_{f}^{-}=\omega_{0}\sqrt{\frac{\tan\left(\frac{f}{2}\right)}{\tan\left(\frac{f}{2}\right)+K_{0}}}<\omega_{0}.\label{eq:epdOm1a}
\end{equation}
where $b_{f}\left(\omega\right)$ is defined by equations (\ref{eq:spmpol1ea}).
If $f_{\mathrm{cr}}\leq f<\pi$ then there is exactly one EPD points
with the corresponding frequency $\varOmega_{f}^{-}$ satisfying
\begin{equation}
b_{f}\left(\varOmega_{f}^{-}\right)=-1;\quad\varOmega_{f}^{-}=\omega_{0}\sqrt{\frac{\tan\left(\frac{f}{2}\right)}{\tan\left(\frac{f}{2}\right)+K_{0}}}<\omega_{0}.\label{eq:epdOm1b}
\end{equation}
The expressions of the corresponding wavenumbers $k\left(\omega\right)$
with $\omega=\varOmega_{f}^{\pm}$ are provided by relations (\ref{eq:spmpo1e})
and (\ref{eq:spmpol1ea}) in Theorem \ref{thm:mckdis}.

The monodromy matrix $\mathscr{T}$ and its Floquet multipliers $s$
at the EPD points satisfy equations (\ref{eq:epdTpm1aa}), (\ref{eq:epdTpm1a})
and (\ref{eq:epdTpm1ab}) with the corresponding Jordan form $\mathscr{J}$
of $\mathscr{T}$ satisfying equations (\ref{eq:epdTpm1b}) and (\ref{eq:epdTpm1c}).
Note that matrix $\mathscr{J}$ is the Jordan block of dimension $2$
as expected for EPD points.
\end{thm}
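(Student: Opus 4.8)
The plan is to reduce the presence of an EPD to the scalar condition $b_{f}\left(\omega\right)=\pm1$, count its solutions by reusing the monotonicity and range analysis of $b_{f}$ already carried out for Theorems \ref{thm:floqmultbf}, \ref{thm:bfinf} and \ref{thm:Omegapm}, solve explicitly for the EPD frequencies, and finally verify that at each such point the monodromy matrix is genuinely defective (a single $2\times2$ Jordan block) rather than a scalar matrix. Recall from (\ref{eq:epdTpm1as}) that $s_{\pm}=e^{\mathrm{i}\omega}\left(-b_{f}\pm\sqrt{b_{f}^{2}-1}\right)$, so by (\ref{eq:sSbf1b}) the two Floquet multipliers coincide precisely when $b_{f}\left(\omega\right)=\pm1$; hence every EPD frequency solves $b_{f}\left(\omega\right)=\pm1$, while conversely at each such $\omega$ one must still confirm the eigenvector coincidence, which is the only genuinely new point.

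For the count, I would invoke that $b_{f}$ is strictly decreasing on each of the intervals $\left(0,\omega_{0}\right)$ and $\left(\omega_{0},+\infty\right)$, together with the ranges established in the proof of Theorem \ref{thm:Omegapm}: by (\ref{eq:Ompom2a}) the image of $\left(\omega_{0},+\infty\right)$ is $\left(b_{f}^{\infty},+\infty\right)$, and by (\ref{eq:Ompom2b}) the image of $\left(0,\omega_{0}\right)$ is the set of values below $-\cos f$. Since $-1<-\cos f<1$ for $0<f<\pi$, the equation $b_{f}=-1$ has exactly one root in $\left(0,\omega_{0}\right)$ and none in $\left(\omega_{0},+\infty\right)$ (where $b_{f}>-\cos f>-1$ by (\ref{eq:bfomf1a})); this root is $\varOmega_{f}^{-}$. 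The equation $b_{f}=1$ has no root in $\left(0,\omega_{0}\right)$ (where $b_{f}<-\cos f<1$) and has a root in $\left(\omega_{0},+\infty\right)$ if and only if $b_{f}^{\infty}<1$, which by relation (\ref{eq:bfomf2b}) of Theorem \ref{thm:bfinf} holds exactly for $0<f<f_{\mathrm{cr}}$; this root is $\varOmega_{f}^{+}$. Hence there are exactly two EPD points for $0<f<f_{\mathrm{cr}}$ and exactly one for $f_{\mathrm{cr}}\le f<\pi$, with the sign assignments $b_{f}\left(\varOmega_{f}^{\pm}\right)=\pm1$.

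Next I would derive the closed forms. Separating variables in $b_{f}\left(\omega\right)=\pm1$ as in (\ref{eq:bfKom1b}) and (\ref{eq:bfKom1c}) gives $r\left(\omega\right)=\omega^{2}/\left(\omega^{2}-\omega_{0}^{2}\right)=c$ with $c=1/\left(K_{0}\tan\left(f/2\right)\right)$ for $b_{f}=1$ and $c=-\tan\left(f/2\right)/K_{0}$ for $b_{f}=-1$. Solving this rational equation yields $\omega^{2}=\omega_{0}^{2}\,c/\left(c-1\right)=\omega_{0}^{2}/\left(1-c^{-1}\right)$, which simplifies to $\varOmega_{f}^{+}=\omega_{0}\left(1-K_{0}\tan\left(f/2\right)\right)^{-1/2}$ and $\varOmega_{f}^{-}=\omega_{0}\left(\tan\left(f/2\right)/\left(\tan\left(f/2\right)+K_{0}\right)\right)^{1/2}$; positivity of the radicand for $\varOmega_{f}^{+}$ is equivalent to $K_{0}\tan\left(f/2\right)<1$, i.e.\ to $f<f_{\mathrm{cr}}$ by the identity $\tan\left(f_{\mathrm{cr}}/2\right)=1/K_{0}$ in (\ref{eq:bfomf1e}), while the radicand for $\varOmega_{f}^{-}$ is positive for all $0<f<\pi$. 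These match (\ref{eq:epdOm1a}) and (\ref{eq:epdOm1b}), and the associated wavenumbers are obtained by substituting $b_{f}=\pm1$ into (\ref{eq:spmpo1e}), (\ref{eq:spmpol1ea}) of Theorem \ref{thm:mckdis}.

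Finally, for the Jordan structure I would substitute $b_{f}=\pm1$ into the monodromy matrix (\ref{eq:abcha3dkL}), using $2b_{f}\cos f+\cos^{2}f+1=\left(\cos f\pm1\right)^{2}$, to obtain (\ref{eq:epdTpm1a}) and then (\ref{eq:epdTpm1ab}) via $\sin^{2}f=1-\cos^{2}f$; its sole eigenvalue is $s=\mp e^{\mathrm{i}\omega}$ from (\ref{eq:epdTpm1aa}). The matrix $N_{\pm}:=\mathscr{T}_{\pm}\pm e^{\mathrm{i}\omega}\mathbb{I}$ has zero trace and zero determinant (the characteristic polynomial is a perfect square when $b_{f}^{2}=1$), hence is nilpotent; its $\left(1,2\right)$ entry equals $e^{\mathrm{i}\omega}\mathrm{sinc}\left(f\right)\neq0$ for $0<f<\pi$, so $N_{\pm}\neq0$ and therefore has rank one. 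Consequently $\mathscr{T}_{\pm}$ is non-diagonalizable and similar to the $2\times2$ Jordan block $\mathscr{J}_{\pm}$ of (\ref{eq:epdTpm1b}); exhibiting the eigenvector $c_{\pm}$ and a root vector $r_{\pm}$ as in (\ref{eq:epdTpm1d}), so that $c_{\pm}=N_{\pm}r_{\pm}$ and $N_{\pm}c_{\pm}=0$ as in (\ref{eq:epdTpm1e}), gives the conjugating matrix $\mathscr{Z}_{\pm}$ of (\ref{eq:epdTpm1c}) and completes the proof. The main obstacle is organizational rather than conceptual: all the quantitative input is already available in Theorems \ref{thm:Omegapm}, \ref{thm:bfinf} and \ref{thm:mckdis}, and the one genuinely new point---that $\mathscr{T}_{\pm}$ is defective and not scalar---is immediate from $\mathrm{sinc}\left(f\right)\neq0$; everything else is routine trigonometry and $2\times2$ linear algebra.
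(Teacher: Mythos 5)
Your proposal is correct and follows essentially the same route as the paper: reduce EPDs to the scalar condition $b_{f}\left(\omega\right)=\pm1$, get the frequencies $\varOmega_{f}^{\pm}$ and their count from the monotonicity/range analysis underlying Theorem \ref{thm:Omegapm} (relations (\ref{eq:Ompom1c})--(\ref{eq:Ompom1e}) together with Theorem \ref{thm:bfinf}), and read off the Jordan structure from the explicit matrices $\mathscr{T}_{\pm}$, $\mathscr{Z}_{\pm}$. Your rank-one-nilpotency argument (zero trace and determinant of $\mathscr{T}_{\pm}\pm e^{\mathrm{i}\omega}\mathbb{I}$ combined with the nonvanishing $(1,2)$ entry $e^{\mathrm{i}\omega}\,\mathrm{sinc}\left(f\right)$ for $0<f<\pi$) simply makes explicit the defectiveness that the paper asserts through the displayed Jordan form (\ref{eq:epdTpm1b})--(\ref{eq:epdTpm1e}), so it is a welcome but not divergent refinement.
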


We would like to derive asymptotic formulas for wave numbers $k_{\pm}\left(\omega\right)$
defined by equations (\ref{eq:spmpo1e}) when frequency $\omega$
is in a vicinity of EPD frequencies $\varOmega_{f}^{\pm}$. In order
to do that we set $\omega=\varOmega_{f}^{\pm}+\delta$ assuming that
$\delta$ is small and introduce the power series expansion for $b_{f}\left(\varOmega_{f}^{\pm}+\delta\right)$,
that is
\begin{equation}
b_{f}\left(\varOmega_{f}^{\pm}+\delta\right)=\pm1+\sum_{n=1}^{\infty}b_{n}^{\pm}\delta^{n},\quad\delta=\omega-\varOmega_{f}^{\pm}\rightarrow0,\quad b_{n}^{\pm}=\left.\partial_{\omega}\left[b_{f}\left(\omega\right)\right]\right|_{\omega=\varOmega_{f}^{\pm}},\;n=1,2,\ldots.\label{eq:epdOm1c}
\end{equation}
Using expression (\ref{eq:spmpol1ea}) for $b_{f}\left(\omega\right)$
and expressions (\ref{eq:epdOm1a}) for $\varOmega_{f}^{\pm}$ we
obtain the following representations for $b_{1}^{\pm}$:
\begin{equation}
b_{1}^{+}=-\frac{2\omega_{0}^{2}\sin(f)}{K_{0}\tan^{2}\left(\frac{f}{2}\right)\left(\varOmega_{f}^{+}\right)^{3}}<0,\quad\varOmega_{f}^{+}=\omega_{0}\sqrt{\frac{1}{1-K_{0}\tan\left(\frac{f}{2}\right)}}>\omega_{0},\quad0<f<f_{\mathrm{cr}},\label{eq:epdOm1d}
\end{equation}
\begin{equation}
b_{1}^{-}=-\frac{2\omega_{0}^{2}\sin(f)\tan^{2}\left(\frac{f}{2}\right)}{K_{0}\left(\varOmega_{f}^{-}\right)^{3}}<0,\quad\varOmega_{f}^{-}=\omega_{0}\sqrt{\frac{\tan\left(\frac{f}{2}\right)}{\tan\left(\frac{f}{2}\right)+K_{0}}}<\omega_{0}<0,\quad0<f<\pi.\label{eq:epdOm1e}
\end{equation}
Then based on equation (\ref{eq:epdTpm1aa}) for $S$, that is $S=-b_{f}\pm\sqrt{b_{f}^{2}-1},$
and relations (\ref{eq:epdOm1c})-(\ref{eq:epdOm1e}) we obtain 
\begin{equation}
S\left(\varOmega_{f}^{\pm}+\delta\right)=\mp1+\sqrt{\pm2b_{1}^{\pm}\delta}+b_{1}^{\pm}\delta-\frac{\left[\left(b_{1}^{\pm}\right)^{2}\pm2b_{2}^{\pm}\right]}{2\sqrt{\pm2b_{1}^{\pm}}}\delta^{\frac{3}{2}}+O\left(\delta^{2}\right),\quad\delta=\omega-\varOmega_{f}^{\pm}\rightarrow0.\label{eq:epdOm2a}
\end{equation}
Using equations (\ref{eq:spmpo1e}) for $k_{\pm}\left(\omega\right)$
in the case of the primary Brillouin zone with $m=0$ we get
\begin{equation}
k_{\mathrm{sign}\,\left(\sigma\right)}\left(\varOmega_{f}^{\pm}\right)=\varOmega_{f}^{\pm}-\frac{1\mp1}{2}\pi,\quad\sigma=\pm1,\quad0<f<\pi,\quad\sigma=\pm1,\label{eq:epdOm2b}
\end{equation}
\begin{equation}
k_{\mathrm{sign}\,\left(\sigma\right)}\left(\varOmega_{f}^{\pm}+\delta\right)=\varOmega_{f}^{\pm}-\frac{1\pm1}{2}\pi\mp\mathrm{i}\sigma\sqrt{\pm2b_{1}^{\pm}\delta}+\delta+O\left(\left|\delta\right|^{\frac{3}{2}}\right),\quad\delta=\omega-\varOmega_{f}^{\pm}\rightarrow0.\label{eq:epdOm2c}
\end{equation}
As to the real and imaginary parts of $k_{\mathrm{sign}\,\left(\sigma\right)}\left(\varOmega_{f}^{\pm}+\delta\right)$
equations (\ref{eq:spmpo1e}) and (\ref{eq:epdOm2c}) imply
\begin{equation}
\Re\left\{ k_{\mathrm{sign}\,\left(\sigma\right)}\left(\varOmega_{f}^{+}+\delta\right)\right\} =\left\{ \begin{array}{rcr}
\varOmega_{f}^{+}-\pi+\sigma\sqrt{2\left|b_{1}^{+}\right|}\sqrt{\delta}+\delta+O\left(\left|\delta\right|^{\frac{3}{2}}\right) & \text{if } & \delta>0\\
\varOmega_{f}^{+}-\pi+\delta & \text{if } & \delta<0
\end{array}\right.,\label{eq:epdOm3a}
\end{equation}
\begin{equation}
\Im\left\{ k_{\mathrm{sign}\,\left(\sigma\right)}\left(\varOmega_{f}^{+}+\delta\right)\right\} =\left\{ \begin{array}{rcr}
0 & \text{if } & \delta>0\\
\sigma\sqrt{2\left|b_{1}^{+}\right|}\sqrt{-\delta}+O\left(\left|\delta\right|^{\frac{3}{2}}\right) & \text{if } & \delta<0
\end{array}\right.,\label{eq:epdOm3b}
\end{equation}
\begin{equation}
\Re\left\{ k_{\mathrm{sign}\,\left(\sigma\right)}\left(\varOmega_{f}^{-}+\delta\right)\right\} =\left\{ \begin{array}{rcr}
\varOmega_{f}^{-}+\delta & \text{if } & \delta>0\\
\varOmega_{f}^{-}+\sigma\sqrt{2\left|b_{1}^{-}\right|}\sqrt{-\delta}+\delta+O\left(\left|\delta\right|^{\frac{3}{2}}\right) & \text{if } & \delta<0
\end{array}\right.,\label{eq:epdOm3c}
\end{equation}
\begin{equation}
\Im\left\{ k_{\mathrm{sign}\,\left(\sigma\right)}\left(\varOmega_{f}^{-}+\delta\right)\right\} =\left\{ \begin{array}{rcr}
\sigma\sqrt{2\left|b_{1}^{-}\right|}\sqrt{\delta}+O\left(\left|\delta\right|^{\frac{3}{2}}\right) & \text{if } & \delta>0\\
0 & \text{if } & \delta<0
\end{array}\right.,\label{eq:epdOm3d}
\end{equation}
\begin{figure}[h]
\begin{centering}
\includegraphics[scale=0.5]{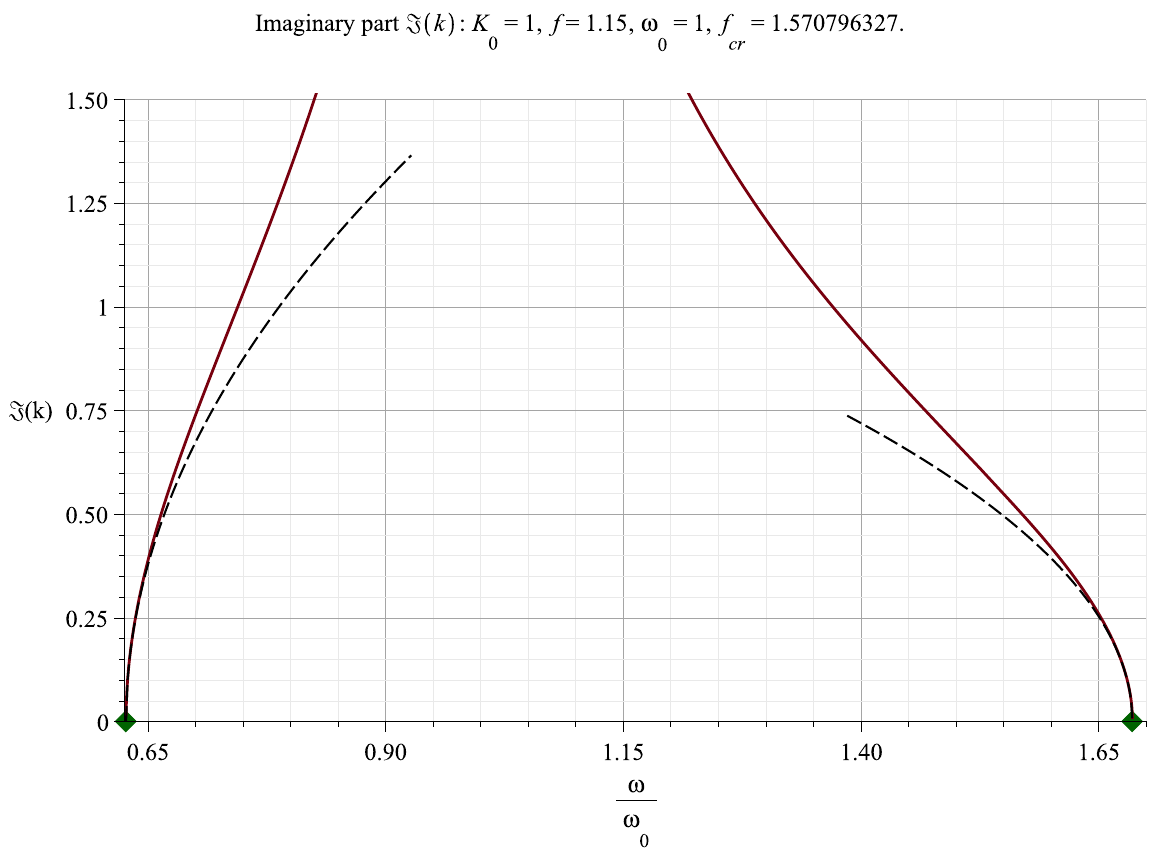}
\par\end{centering}
\centering{}\caption{\label{fig:mck-gain-om1-1} Plot of $\left|\Im\left\{ k_{\pm}\left(\omega\right)\right\} \right|$
as a function of frequency $\omega$ for $K_{0}=1$, $\omega_{0}=1$
and $f=1.15<f_{\mathrm{cr}}\protect\cong1.176005207$. The horizontal
and vertical axes represent respectively frequency $\omega$ and $\Im\left\{ k\right\} $.
The solid (brown) curves represent function $\left|\Im\left\{ k_{\pm}\left(\omega\right)\right\} \right|$.
The diamond solid (green) dots mark the values of $\varOmega_{f}^{-}$
and $\varOmega_{f}^{+}$ which are the frequencies of the MCK EPDs
and also are frequency boundaries of the instability. The two dashed
(black) curves represent the approximations of $\left|\Im\left\{ k_{\pm}\left(\omega\right)\right\} \right|$
described by equations (\ref{eq:epdOm3b}) and (\ref{eq:epdOm3d}). }
\end{figure}

\section{Lagrangian variational framework\label{sec:lagvar}}

We construct here the Lagrangian variational framework for our model
of the MCK. According to Assumption \ref{ass:mckmod} the model integrates
into it quantities associated with continuum of real numbers on one
hand and features associated with discrete points on the another hand.
The continuum features are represented by Lagrangian densities $\mathcal{L}_{\mathrm{B}}$
in equations (\ref{eq:aZep1ck}) whereas discrete features are represented
by Lagrangian $\mathcal{L}_{\mathrm{CB}}$ in equations (\ref{eq:aZep1dk})
with energies concentrated in a set of discrete points $a\mathbb{Z}$.
One possibility for constructing the desired Lagrangian variational
framework is to apply the general approach developed in \cite{FigRey2}
when the ``rigidity'' condition holds. Another possibility is to
directly construct the Lagrangian variational framework using some
ideas from \cite{FigRey2} and that is what we actually pursue here.

Following to the standard procedures of the Least Action principle
\cite[II.3]{ArnMech}, \cite[3]{GantM}, \cite[7]{GelFom}, \cite[8.6]{GoldM}
we start with setting up the action integral $S$ based on the Lagrangian
$\mathcal{L}$ defined by equations (\ref{eq:aZep1Lk}), (\ref{eq:aZep1ck})
and (\ref{eq:aZep1dk}). Using notations (\ref{eq:aZep1Q}) and (\ref{eq:eZep1xk})
we define the action integral $S$ as follows:
\begin{gather}
S\left(\left\{ x\right\} \right)=\int_{t_{0}}^{t_{1}}\mathrm{d}t\int_{z_{1}}^{z_{2}}\mathcal{L}\left(\left\{ x\right\} \right)\,\mathrm{d}z=S_{\mathrm{B}}\left(\left\{ q\right\} \right)+S_{\mathrm{CB}}\left(x\right),\quad t_{0}<t_{1},\quad z_{0}<z_{1},\label{eq:Sact1ak}
\end{gather}
where
\begin{equation}
S_{\mathrm{B}}\left(\left\{ q\right\} \right)=\int_{t_{0}}^{t_{1}}\mathrm{d}t\int_{z_{1}}^{z_{2}}\mathcal{L}_{\mathrm{B}}\left(\left\{ q\right\} \right)\,\mathrm{d}z=\int_{t_{0}}^{t_{1}}\mathrm{d}t\int_{z_{1}}^{z_{2}}\left[\frac{1}{2\beta}\left(\partial_{t}q+\mathring{v}\partial_{z}q\right)^{2}-\frac{2\pi}{\sigma_{\mathrm{B}}}q^{2}\right]\,\mathrm{d}z,\label{eq:Sact1ck}
\end{equation}
\begin{gather}
S_{\mathrm{CB}}\left(\left\{ Q\right\} \right)=\int_{t_{0}}^{t_{1}}\mathrm{d}t\int_{z_{1}}^{z_{2}}\mathcal{L}_{\mathrm{CB}}\left(Q,q\right)\,\mathrm{d}z=\label{eq:Sact1dk}\\
=\sum_{z_{1}<a\ell<z_{2}}\int_{t_{0}}^{t_{1}}\mathrm{d}t\left[\frac{l_{0}}{2}\left(\partial_{t}Q\left(a\ell\right)\right)^{2}-\frac{1}{2c_{0}}\left(Q\left(a\ell\right)+bq\left(a\ell\right)\right)\right]^{2}.\nonumber 
\end{gather}
To make expressions of the action integrals less cluttered we suppress
notationally their dependence on intervals $\left(z_{0},z_{1}\right)$
and $\left(t_{0},t_{1}\right)$ that can be chosen arbitrarily. We
consider then variation $\delta S$ of action $S$ assuming that variation
$\delta q$ of charge $q=q\left(z,t\right)$ vanishes outside intervals
$\left(z_{0},z_{1}\right)$ and $\left(t_{0},t_{1}\right)$, that
is 
\begin{equation}
\delta q\left(z,t\right)=0,\quad\left(z,t\right)\notin\left(z_{0},z_{1}\right)\times\left(t_{0},t_{1}\right),\label{eq:Sact1ek}
\end{equation}
implying, in particular, that $\delta q$ vanishes on the boundary
of the rectangle $\left(z_{0},z_{1}\right)\times\left(t_{0},t_{1}\right)$,
that is
\begin{equation}
\delta q\left(z,t\right)=0,\text{if }z=z_{0},z_{1}\text{ or if }t=t_{0},t_{1}.\label{eq:Sact1fk}
\end{equation}
We refer to variations $\delta Q$ and $\delta q$ satisfying equations
(\ref{eq:Sact1ek}) and hence (\ref{eq:Sact1ek}) for a rectangle
$\left(z_{0},z_{1}\right)\times\left(t_{0},t_{1}\right)$ as\emph{
admissible}.

Following to the least action principle we introduce the functional
differential $\delta S$ of the action by the following formula \cite[7(35)]{GelFom}
\begin{equation}
\delta S=\lim_{\varepsilon\rightarrow0}\frac{S\left(\left\{ x+\varepsilon\delta x\right\} \right)-S\left(\left\{ x\right\} \right)}{\varepsilon}.\label{eq:Sact2ak}
\end{equation}
Then the system configurations $x=x\left(z,t\right)$ that actually
can occur must satisfy
\begin{equation}
\delta S=\lim_{\varepsilon\rightarrow0}\frac{S\left(\left\{ x+\varepsilon\delta x\right\} \right)-S\left(\left\{ x\right\} \right)}{\varepsilon}=0\text{ for all admissible variations.}\label{eq:Sact2bk}
\end{equation}
Let us choose now any $z$ outside lattice $a\mathbb{Z}$. Then there
always exist a sufficiently small $\xi>0$ and an integer $\ell_{0}$
such that
\begin{equation}
a\ell_{0}<z_{0}=z-\xi<z<z_{1}=z+\xi<a\left(\ell_{0}+1\right).\label{eq:Sact2ck}
\end{equation}
If we apply now the variational principle (\ref{eq:Sact2bk}) for
all admissible variations $\delta Q$ and $\delta q$ such that space
interval $\left(z_{0},z_{1}\right)$ is compliant with inequalities
(\ref{eq:Sact2ck}) we readily find that
\begin{equation}
\delta S=\delta S_{\mathrm{B}}=0,\label{eq:Sact2dk}
\end{equation}
where $S_{\mathrm{B}}$ is defined by expression (\ref{eq:Sact1ck}).
Using equations (\ref{eq:Sact1fk}) and carrying out in the standard
way the integration by parts transformations we arrive at
\begin{gather}
\delta S_{\mathrm{B}}=-\int_{t_{0}}^{t_{1}}\mathrm{d}t\int_{z_{1}}^{z_{2}}\left[\frac{1}{\beta}\left(\partial_{t}+\mathring{v}\partial_{z}\right)^{2}q+\frac{4\pi}{\sigma_{\mathrm{B}}}q\right]\delta q\,\mathrm{d}z.\label{eq:Sact2fk}
\end{gather}
Combining equations (\ref{eq:Sact2dk}) and (\ref{eq:Sact2fk}) we
arrive at the following EL equations

\begin{gather}
\frac{1}{\beta}\left(\partial_{t}+\mathring{v}\partial_{z}\right)^{2}q+\frac{4\pi}{\sigma_{\mathrm{B}}}q=0,\quad z\neq a\ell,\quad\ell\in\mathbb{Z}.\label{eq:Sact2gk}
\end{gather}

Consider now the case when $z=a\ell_{0}$ for an integer $\ell_{0}$
and select space interval $\left(z_{0},z_{1}\right)$ as follows
\begin{equation}
a\left(\ell_{0}-1\right)<z_{0}=a\left(\ell_{0}-\frac{1}{2}\right)<z=a\ell_{0}<z_{1}=a\left(\ell_{0}+\frac{1}{2}\right)<a\left(\ell_{0}+1\right).\label{eq:Sact2hk}
\end{equation}
Notice that in this case both actions $S_{\mathrm{B}}$ and $S_{\mathrm{CB}}$
contribute to the variation $\delta S$. In particular, as consequence
of the presence of delta functions $\delta\left(z-a\ell\right)$ in
the expression of the Lagrangian $\mathcal{L}_{\mathrm{CB}}$ defined
by equation (\ref{eq:aZep1dk}) the space derivatives $\partial_{z}q$
can have jumps at $z=a\ell_{0}$ as it was already acknowledged by
Assumption \ref{ass:jumpcon}. Based on this circumstance we proceed
as follows: (i) we split the integral with respect to the space variable
$z$ into two integrals:

\begin{equation}
\int_{z_{0}}^{z_{1}}=\int_{a\left(\ell_{0}-\frac{1}{2}\right)}^{a\ell_{0}}+\int_{a\ell_{0}}^{a\left(\ell_{0}+\frac{1}{2}\right)};\label{eq:Sact3ak}
\end{equation}
(ii) we carry out the integration by parts for each of the two integrals
in the right-hand side of equation (\ref{eq:Sact3ak}); (iii) we use
already established EL equations (\ref{eq:Sact2gk}) to simplify the
integral expressions. When that is all done we arrive at the following:

\begin{equation}
\delta S_{\mathrm{B}}=-\int_{t_{0}}^{t_{1}}\frac{\mathring{v}^{2}}{\beta}\left[\partial_{z}q\right]\left(a\ell_{0},t\right)\delta q\left(a\ell_{0},t\right)\,\mathrm{d}t,\label{eq:Sact3bk}
\end{equation}
where jumps $\left[\partial_{z}q\right]\left(a\ell\right)$ are defined
by equation (\ref{eq:klimpm1b}), and
\begin{gather}
\delta S_{\mathrm{CB}}=-\int_{t_{0}}^{t_{1}}\left\{ l_{0}\partial_{t}^{2}Q\left(a\ell\right)+\frac{1}{c_{0}}\left[Q\left(a\ell_{0},t\right)+bq\left(a\ell_{0},t\right)\right]\right\} \delta Q\left(a\ell_{0},t\right)\,\mathrm{d}t-\label{eq:Sact3ck}\\
-\int_{t_{0}}^{t_{1}}\left\{ \frac{b}{c_{0}}\left[Q\left(a\ell_{0},t\right)+bq\left(a\ell_{0},t\right)\right]\right\} \delta q\left(a\ell_{0},t\right)\,\mathrm{d}t.\nonumber 
\end{gather}
Using the variational principle (\ref{eq:Sact2bk}), that is
\begin{equation}
\delta S_{\mathrm{B}}+\delta S_{\mathrm{CB}}=0,\label{eq:Sact3dk}
\end{equation}
and the fact that variations $\delta Q$ and $\delta q$ can be chosen
arbitrarily we arrive at the following
\begin{gather}
l_{0}\partial_{t}^{2}Q\left(a\ell\right)+\frac{1}{c_{0}}\left[Q\left(a\ell_{0},t\right)+bq\left(a\ell_{0},t\right)\right]=0,\label{eq:Sact3ek}\\
\frac{\mathring{v}^{2}}{\beta}\left[\partial_{z}q\right]\left(a\ell_{0},t\right)=-\frac{b}{c_{0}}\left[Q\left(a\ell_{0}\right)+bq\left(a\ell_{0},t\right)\right],\nonumber 
\end{gather}
where jumps $\left[\partial_{z}q\right]\left(a\ell\right)$ are defined
by equation (\ref{eq:klimpm1b}). We remind also that as consequence
of continuity of $q$ we also have 
\begin{equation}
\left[q\right]\left(a\ell_{0},t\right)=0.\label{eq:Sact3gk}
\end{equation}
Hence equations (\ref{eq:Sact3ek}) and (\ref{eq:Sact3gk}) can be
viewed as the EL equations at point $a\ell_{0}$. 

Equations (\ref{eq:Sact3ek}) at interaction point $a\ell_{0}$ are
perfectly consistent with boundary conditions (2.12) of the general
treatment in \cite{FigRey2}, which are
\begin{align}
-\frac{\partial L_{\mathrm{D}}}{\partial\partial_{1}\psi_{\mathrm{D}}^{\ell}}(b_{1},t)+\frac{\partial L_{\mathrm{B}}}{\partial\psi_{\mathrm{B}}^{\ell}(b_{1},t)}-\partial_{0}\left(\frac{\partial L_{\mathrm{B}}}{\partial\partial_{0}\psi_{\mathrm{B}}^{\ell}(b_{1},t)}\right) & =0,\quad\partial_{0}=\partial_{t},\quad\partial_{1}=\partial_{z};\label{eq:Sact3fk}\\
\frac{\partial L_{\mathrm{D}}}{\partial\partial_{1}\psi_{\mathrm{D}}^{\ell}}(b_{2},t)+\frac{\partial L_{\mathrm{B}}}{\partial\psi_{\mathrm{B}}^{\ell}(b_{2},t)}-\partial_{0}\left(\frac{\partial L_{\mathrm{B}}}{\partial\left(\partial_{0}\psi_{\mathrm{B}}^{\ell}(b_{2},t)\right)}\right) & =0.\nonumber 
\end{align}
where (i) $b_{1}=a\ell_{0}-0$ and $b_{2}=a\ell_{0}+0$; (ii) $L_{\mathrm{D}}$
corresponds to $\mathcal{L}_{\mathrm{B}}+\mathcal{L}_{\mathrm{CB}}$;
(iii) $L_{\mathrm{B}}$ corresponds to $\mathcal{L}_{\mathrm{CB}}$;
(iv) fields $\psi_{\mathrm{D}}^{\ell}$ correspond to charges $Q$
and $q$; (v) boundary fields $\psi_{\mathrm{B}}^{\ell}$ correspond
to $Q\left(a\ell_{0},t\right)$ and $q\left(a\ell_{0},t\right)$.
We remind the reader that boundary conditions (2.12) in \cite{FigRey2}
is an implementation of the ``rigidity'' requirement which is appropriate
for Lagrangian $\mathcal{L}_{\mathrm{CB}}$ defined by equation (\ref{eq:aZep1dk}).
If fact, the signs of the terms containing $L_{\mathrm{D}}$ in equations
(\ref{eq:Sact3fk}) are altered compare to original equations (2.12)
in \cite{FigRey2} to correct an unfortunate typo there.

\emph{Thus equations (\ref{eq:Sact2gk}), (\ref{eq:Sact3ek}) and
(\ref{eq:Sact3gk}) form a complete set of the EL equations}.

\textbf{\vspace{0.2cm}
}

\textbf{ACKNOWLEDGMENT:} This research was supported by AFOSR MURI
Grant FA9550-20-1-0409 administered through the University of New
Mexico. The author is grateful to E. Schamiloglu for sharing his deep
and vast knowledge of high power microwave devices and inspiring discussions.\textbf{\vspace{0.2cm}
}

\textbf{NOMENCLATURE:} 
\begin{itemize}
\item EL stands for the Euler-Lagrange (equations)
\item HF stands for high-frequency
\item MCK stands for multi-cavity klystron
\item $\mathbb{C}$ is a set of complex number.
\item $\bar{s}$ is complex-conjugate to complex number $s$
\item $\mathbb{C}^{n}$ is a set of $n$ dimensional column vectors with
complex complex-valued entries.
\item $\mathbb{C}^{n\times m}$ is a set of $n\times m$ matrices with complex-valued
entries.
\item $\mathbb{R}^{n\times m}$ is a set of $n\times m$ matrices with real-valued
entries.
\item $\mathrm{diag}\,\left(A_{1},A_{2},\ldots,A_{r}\right)$ is bock diagonal
matrix with indicated blocks.
\item $\mathrm{dim}\,\left(W\right)$ is the dimension of the vector space
$W$.
\item $\mathrm{ker}\,\left(A\right)$ is the kernel of matrix $A$, that
is the vector space of vector $x$ such that $Ax=0$.
\item $\det\left\{ A\right\} $ is the determinant of matrix $A$.
\item $\sigma$$\left\{ A\right\} $ is the spectrum of matrix $A$.
\item $\chi_{A}\left(s\right)=\det\left\{ s\mathbb{I}_{\nu}-A\right\} $
is the characteristic polynomial of a $\nu\times\nu$ matrix $A$.
\item $\mathbb{I}_{\nu}$ is $\nu\times\nu$ identity matrix.
\item $M^{\mathrm{T}}$ is a matrix transposed to matrix $M$.
\item EL stands for the Euler-Lagrange (equations).
\item ODE stands for ordinary differential equations.
\end{itemize}
\renewcommand{\sectionname}{Appendix}
\counterwithin{section}{part}
\renewcommand{\thesection}{\Alph{section}}
\setcounter{section}{0}
\renewcommand{\theequation}{\Alph{section}.\arabic{equation}}

\section{Fourier transform\label{sec:four}}

Our preferred form of the Fourier transforms as in \cite[7.2, 7.5]{Foll},
\cite[20.2]{ArfWeb}:
\begin{gather}
f\left(t\right)=\int_{-\infty}^{\infty}\hat{f}\left(\omega\right)\mathrm{e}^{-\mathrm{i}\omega t}\,\mathrm{d}\omega,\quad\hat{f}\left(\omega\right)=\frac{1}{2\pi}\int_{-\infty}^{\infty}f\left(t\right)e^{\mathrm{i}\omega t}\,\mathrm{d}t,\label{eq:fourier1a}\\
f\left(z,t\right)=\int_{-\infty}^{\infty}\hat{f}\left(k,\omega\right)\mathrm{e}^{-\mathrm{i}\left(\omega t-kz\right)}\,\mathrm{d}k\mathrm{d}\omega,\label{eq:fourier1b}\\
\hat{f}\left(k,\omega\right)=\frac{1}{\left(2\pi\right)^{2}}\int_{-\infty}^{\infty}f\left(z,t\right)e^{\mathrm{i}\left(\omega t-kz\right)}\,dz\mathrm{d}t.\nonumber 
\end{gather}
This preference was motivated by the fact that the so-defined Fourier
transform of the convolution of two functions has its simplest form.
Namely, the convolution $f\ast g$ of two functions $f$ and $g$
is defined by \cite[7.2, 7.5]{Foll},
\begin{gather}
\left[f\ast g\right]\left(t\right)=\left[g\ast f\right]\left(t\right)=\int_{-\infty}^{\infty}f\left(t-t^{\prime}\right)g\left(t^{\prime}\right)\,\mathrm{d}t^{\prime},\label{eq:fourier2a}\\
\left[f\ast g\right]\left(z,t\right)=\left[g\ast f\right]\left(z,t\right)=\int_{-\infty}^{\infty}f\left(z-z^{\prime},t-t^{\prime}\right)g\left(z^{\prime},t^{\prime}\right)\,\mathrm{d}z^{\prime}\mathrm{d}t^{\prime}.\label{eq:fourier2b}
\end{gather}
 Then its Fourier transform as defined by equations (\ref{eq:fourier1a})
and (\ref{eq:fourier1b}) satisfies the following properties:
\begin{gather}
\widehat{f\ast g}\left(\omega\right)=\hat{f}\left(\omega\right)\hat{g}\left(\omega\right),\label{eq:fourier3a}\\
\widehat{f\ast g}\left(k,\omega\right)=\hat{f}\left(k,\omega\right)\hat{g}\left(k,\omega\right).\label{eq:fourier3b}
\end{gather}

\section{Jordan canonical form\label{sec:jord-form}}

We provide here very concise review of Jordan canonical forms following
mostly to \cite[III.4]{Hale}, \cite[3.1,3.2]{HorJohn}. As to a demonstration
of how Jordan block arises in the case of a single $n$-th order differential
equation we refer to \cite[25.4]{ArnODE}.

Let $A$ be an $n\times n$ matrix and $\lambda$ be its eigenvalue,
and let $r\left(\lambda\right)$ be the least integer $k$ such that
$\mathcal{N}\left[\left(A-\lambda\mathbb{I}\right)^{k}\right]=\mathcal{N}\left[\left(A-\lambda\mathbb{I}\right)^{k+1}\right]$,
where $\mathcal{N}\left[C\right]$ is a null space of a matrix $C$.
Then we refer to $M_{\lambda}=\mathcal{N}\left[\left(A-\lambda\mathbb{I}\right)^{r\left(\lambda\right)}\right]$
is the \emph{generalized eigenspace} of matrix $A$ corresponding
to eigenvalue $\lambda$. Then the following statements hold, \cite[III.4]{Hale}.
\begin{prop}[generalized eigenspaces]
\label{prop:gen-eig} Let $A$ be an $n\times n$ matrix and $\lambda_{1},\ldots,\lambda_{p}$
be its distinct eigenvalues. Then generalized eigenspaces $M_{\lambda_{1}},\ldots,M_{\lambda_{p}}$
are linearly independent, invariant under the matrix $A$ and
\begin{equation}
\mathbb{C}^{n}=M_{\lambda_{1}}\oplus\ldots\oplus M_{\lambda_{p}}.\label{eq:Mgeneig1a}
\end{equation}
Consequently, any vector $x_{0}$ in $\mathbb{C}^{n}$can be represented
uniquely as
\begin{equation}
x_{0}=\sum_{j=1}^{p}x_{0,j},\quad x_{0,j}\in M_{\lambda_{j}},\label{eq:Mgeneig1b}
\end{equation}
and
\begin{equation}
\exp\left\{ At\right\} x_{0}=\sum_{j=1}^{p}e^{\lambda_{j}t}p_{j}\left(t\right),\label{eq:Mgeneig1c}
\end{equation}
where column-vector polynomials $p_{j}\left(t\right)$ satisfy
\begin{gather}
p_{j}\left(t\right)=\sum_{k=0}^{r\left(\lambda_{j}\right)-1}\left(A-\lambda_{j}\mathbb{I}\right)^{k}\frac{t^{k}}{k!}x_{0,j},\quad x_{0,j}\in M_{\lambda_{j}},\quad1\leq j\leq p.\label{eq:Mgeneig1d}
\end{gather}
\end{prop}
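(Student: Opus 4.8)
The plan is to derive every assertion from the \emph{primary decomposition} of $\mathbb{C}^{n}$ determined by $A$: construct explicit spectral projections onto the $M_{\lambda_{j}}$ out of the Cayley--Hamilton theorem and a Bezout identity, read off the direct sum and the uniqueness of the splitting, and then terminate the exponential series on each invariant piece. Invariance I would dispose of first and trivially: $A$ commutes with every polynomial in $A$, in particular with $\left(A-\lambda_{j}\mathbb{I}\right)^{k}$, so it carries each $\mathcal{N}\left[\left(A-\lambda_{j}\mathbb{I}\right)^{k}\right]$ into itself, hence each $M_{\lambda_{j}}$ is $A$-invariant.

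For the decomposition, write $\chi\left(\lambda\right)=\prod_{j=1}^{p}\left(\lambda-\lambda_{j}\right)^{n_{j}}$ for the characteristic polynomial and $q_{j}\left(\lambda\right)=\chi\left(\lambda\right)/\left(\lambda-\lambda_{j}\right)^{n_{j}}$. Since the $q_{1},\dots,q_{p}$ have no common zero, there are polynomials $a_{j}$ with $\sum_{j}a_{j}q_{j}=1$, and the operators $P_{j}=a_{j}\left(A\right)q_{j}\left(A\right)$ are, using $\chi\left(A\right)=0$, a complete commuting system of projections: $\sum_{j}P_{j}=\mathbb{I}$, $P_{i}P_{j}=\delta_{ij}P_{j}$ (the off-diagonal products vanish because $q_{i}q_{j}$ is divisible by $\chi$), and $\mathrm{Range}\left(P_{j}\right)=\mathcal{N}\left[\left(A-\lambda_{j}\mathbb{I}\right)^{n_{j}}\right]$ (the inclusion $\subseteq$ from $\left(A-\lambda_{j}\mathbb{I}\right)^{n_{j}}P_{j}=a_{j}\left(A\right)\chi\left(A\right)=0$, the inclusion $\supseteq$ because for $i\neq j$ the factor $\left(\lambda-\lambda_{j}\right)^{n_{j}}$ divides $q_{i}$, so $P_{i}$ kills $\mathcal{N}\left[\left(A-\lambda_{j}\mathbb{I}\right)^{n_{j}}\right]$ and there $\mathbb{I}=\sum_{i}P_{i}=P_{j}$). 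This gives at once $\mathbb{C}^{n}=\bigoplus_{j}\mathrm{Range}\left(P_{j}\right)$, linear independence of the summands, and uniqueness of the splitting $x_{0}=\sum_{j}x_{0,j}$ with $x_{0,j}=P_{j}x_{0}$. It remains to match $\mathrm{Range}\left(P_{j}\right)$ with the stabilised space $M_{\lambda_{j}}$: on $\mathrm{Range}\left(P_{j}\right)$ the operator $A-\lambda_{j}\mathbb{I}$ is nilpotent (of index at most $n_{j}$), while on $\mathrm{Range}\left(P_{i}\right)$, $i\neq j$, it is invertible; hence $\mathcal{N}\left[\left(A-\lambda_{j}\mathbb{I}\right)^{k}\right]\subseteq\mathrm{Range}\left(P_{j}\right)$ for all $k$ and this chain stabilises at some $r\left(\lambda_{j}\right)\leq n_{j}$, so $M_{\lambda_{j}}=\mathcal{N}\left[\left(A-\lambda_{j}\mathbb{I}\right)^{r\left(\lambda_{j}\right)}\right]=\mathrm{Range}\left(P_{j}\right)$.

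For the exponential formula, fix $x_{0}$ and decompose $x_{0}=\sum_{j}x_{0,j}$, $x_{0,j}=P_{j}x_{0}\in M_{\lambda_{j}}$. Since $M_{\lambda_{j}}$ is $A$-invariant and $A=\lambda_{j}\mathbb{I}+\left(A-\lambda_{j}\mathbb{I}\right)$ with commuting summands, $\exp\left\{At\right\}x_{0,j}=e^{\lambda_{j}t}\exp\left\{\left(A-\lambda_{j}\mathbb{I}\right)t\right\}x_{0,j}$, and because $\left(A-\lambda_{j}\mathbb{I}\right)^{r\left(\lambda_{j}\right)}$ annihilates $M_{\lambda_{j}}$ the series terminates, giving exactly $p_{j}\left(t\right)=\sum_{k=0}^{r\left(\lambda_{j}\right)-1}\left(A-\lambda_{j}\mathbb{I}\right)^{k}\frac{t^{k}}{k!}x_{0,j}$; summing over $j$ yields $\exp\left\{At\right\}x_{0}=\sum_{j}e^{\lambda_{j}t}p_{j}\left(t\right)$.

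The one genuinely delicate point is the identification $\mathrm{Range}\left(P_{j}\right)=\mathcal{N}\left[\left(A-\lambda_{j}\mathbb{I}\right)^{n_{j}}\right]=M_{\lambda_{j}}$ — in particular the claim that the ascending chain $\mathcal{N}\left[\left(A-\lambda_{j}\mathbb{I}\right)^{k}\right]$ cannot leak into the other summands, which rests on $A-\lambda_{j}\mathbb{I}$ being invertible on $\mathrm{Range}\left(P_{i}\right)$ for $i\neq j$; once the projections $P_{j}$ are in hand everything else is bookkeeping with $\sum_{j}P_{j}=\mathbb{I}$ and $P_{i}P_{j}=\delta_{ij}P_{j}$. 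Alternatively, one could simply invoke the existence of the Jordan canonical form $A=\mathscr{Z}\mathscr{J}\mathscr{Z}^{-1}$ reviewed in this appendix and read all the claims off the block structure of $\mathscr{J}$ and of $\exp\left\{\mathscr{J}t\right\}$.
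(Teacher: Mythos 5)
Your argument is correct, but note that the paper offers no proof of Proposition \ref{prop:gen-eig} to compare against: it is stated as a known background fact, quoted from Hale [III.4] in an appendix that reviews the Jordan canonical form, so any self-contained derivation is already ``a different route'' from the paper. What you give is the classical primary decomposition argument: build the spectral projections $P_{j}=a_{j}\left(A\right)q_{j}\left(A\right)$ from the Cayley--Hamilton theorem together with a Bezout identity for the coprime cofactors $q_{j}$, check $\sum_{j}P_{j}=\mathbb{I}$, $P_{i}P_{j}=\delta_{ij}P_{j}$, identify $\mathrm{Range}\left(P_{j}\right)$ with $M_{\lambda_{j}}$, and then terminate the exponential series on each invariant summand because $A-\lambda_{j}\mathbb{I}$ is nilpotent there. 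All steps are sound; the one place you compress is the identification $M_{\lambda_{j}}=\mathrm{Range}\left(P_{j}\right)$, which, as you flag, needs (a) invertibility of $A-\lambda_{j}\mathbb{I}$ on $\mathrm{Range}\left(P_{i}\right)$ for $i\neq j$ (nilpotent plus the nonzero scalar $\lambda_{i}-\lambda_{j}$), and (b) the standard fact that the kernel chain $\mathcal{N}\left[\left(A-\lambda_{j}\mathbb{I}\right)^{k}\right]$, once stationary, stays stationary, so that $r\left(\lambda_{j}\right)\leq n_{j}$ and $\mathcal{N}\left[\left(A-\lambda_{j}\mathbb{I}\right)^{r\left(\lambda_{j}\right)}\right]=\mathcal{N}\left[\left(A-\lambda_{j}\mathbb{I}\right)^{n_{j}}\right]$; both are routine and your sketch of them is adequate. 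Compared with simply invoking the Jordan form (Proposition \ref{prop:jor-can}), which is the shortcut your last sentence mentions and is closest in spirit to how the appendix treats such facts, your projection construction is more self-contained --- it does not presuppose the Jordan decomposition and in fact yields the invariant-subspace splitting on which the Jordan form is built --- at the cost of the extra bookkeeping with the polynomials $a_{j},q_{j}$.
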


For a complex number $\lambda$ a Jordan block $J_{r}\left(\lambda\right)$
of size $r\geq1$ is a $r\times r$ upper triangular matrix of the
form
\begin{gather}
J_{r}\left(\lambda\right)=\lambda\mathbb{I}_{r}+K_{r}=\left[\begin{array}{ccccc}
\lambda & 1 & \cdots & 0 & 0\\
0 & \lambda & 1 & \cdots & 0\\
0 & 0 & \ddots & \cdots & \vdots\\
\vdots & \vdots & \ddots & \lambda & 1\\
0 & 0 & \cdots & 0 & \lambda
\end{array}\right],\quad J_{1}\left(\lambda\right)=\left[\lambda\right],\quad J_{2}\left(\lambda\right)=\left[\begin{array}{cc}
\lambda & 1\\
0 & \lambda
\end{array}\right],\label{eq:Jork1a}
\end{gather}
\begin{equation}
K_{r}=J_{r}\left(0\right)=\left[\begin{array}{ccccc}
0 & 1 & \cdots & 0 & 0\\
0 & 0 & 1 & \cdots & 0\\
0 & 0 & \ddots & \cdots & \vdots\\
\vdots & \vdots & \ddots & 0 & 1\\
0 & 0 & \cdots & 0 & 0
\end{array}\right].\label{eq:Jork1k}
\end{equation}
The special Jordan block $K_{r}=J_{r}\left(0\right)$ defined by equation
(\ref{eq:Jork1k}) is an nilpotent matrix that satisfies the following
identities

\begin{gather}
K_{r}^{2}=\left[\begin{array}{ccccc}
0 & 0 & 1 & \cdots & 0\\
0 & 0 & 0 & \cdots & \vdots\\
0 & 0 & \ddots & \cdots & 1\\
\vdots & \vdots & \ddots & 0 & 0\\
0 & 0 & \cdots & 0 & 0
\end{array}\right],\cdots,\;K_{r}^{r-1}=\left[\begin{array}{ccccc}
0 & 0 & \cdots & 0 & 1\\
0 & 0 & 0 & \cdots & 0\\
0 & 0 & \ddots & \cdots & \vdots\\
\vdots & \vdots & \ddots & 0 & 0\\
0 & 0 & \cdots & 0 & 0
\end{array}\right],\quad K_{r}^{r}=0.\label{eq:Jord1a}
\end{gather}
A general Jordan $n\times n$ matrix $J$ is defined as a direct sum
of Jordan blocks, that is

\begin{equation}
J=\left[\begin{array}{ccccc}
J_{n_{1}}\left(\lambda_{1}\right) & 0 & \cdots & 0 & 0\\
0 & J_{n_{2}}\left(\lambda_{2}\right) & 0 & \cdots & 0\\
0 & 0 & \ddots & \cdots & \vdots\\
\vdots & \vdots & \ddots & J_{n_{q-1}}\left(\lambda_{n_{q}-1}\right) & 0\\
0 & 0 & \cdots & 0 & J_{n_{q}}\left(\lambda_{n_{q}}\right)
\end{array}\right],\quad n_{1}+n_{2}+\cdots n_{q}=n,\label{eq:Jork1b}
\end{equation}
where $\lambda_{j}$ need not be distinct. Any square matrix $A$
is similar to a Jordan matrix as in equation (\ref{eq:Jork1b}) which
is called \emph{Jordan canonical form} of $A$. Namely, the following
statement holds, \cite[3.1]{HorJohn}.
\begin{prop}[Jordan canonical form]
\label{prop:jor-can} Let $A$ be an $n\times n$ matrix. Then there
exists a non-singular $n\times n$ matrix $Q$ such that the following
block-diagonal representation holds
\begin{equation}
Q^{-1}AQ=J\label{eq:QAQC1a}
\end{equation}
where $J$ is the Jordan matrix defined by equation (\ref{eq:Jork1b})
and $\lambda_{j}$, $1\leq j\leq q$ are not necessarily different
eigenvalues of matrix $A$. Representation (\ref{eq:QAQC1a}) is known
as the \emph{Jordan canonical form} of matrix $A$, and matrices $J_{j}$
are called \emph{Jordan blocks}. The columns of the $n\times n$ matrix
$Q$ constitute the \emph{Jordan basis} providing for the Jordan canonical
form (\ref{eq:QAQC1a}) of matrix $A$.
\end{prop}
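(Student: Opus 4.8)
The plan is to reduce the assertion to the structure theorem for a single nilpotent operator by means of the generalized eigenspace decomposition already established in Proposition~\ref{prop:gen-eig}, and then to prove that nilpotent case by induction on dimension. Uniqueness of the Jordan form up to reordering of the blocks is not asserted in the statement, so it need not be addressed; only the existence of $Q$ and of the block-diagonal form \eqref{eq:Jork1b} is required.

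First I would invoke Proposition~\ref{prop:gen-eig}. Let $\lambda_{1},\ldots,\lambda_{p}$ be the distinct eigenvalues of $A$. Then $\mathbb{C}^{n}=M_{\lambda_{1}}\oplus\cdots\oplus M_{\lambda_{p}}$, each $M_{\lambda_{j}}$ is invariant under $A$, and on $M_{\lambda_{j}}$ the operator $N_{j}:=\left(A-\lambda_{j}\mathbb{I}\right)\big|_{M_{\lambda_{j}}}$ is nilpotent, since by the very definition of $r(\lambda_{j})$ one has $\left(A-\lambda_{j}\mathbb{I}\right)^{r(\lambda_{j})}=0$ on $M_{\lambda_{j}}$. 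Suppose we have shown the following: every nilpotent operator on a finite-dimensional vector space admits a basis in which its matrix is a direct sum of the nilpotent Jordan blocks $K_{r}$ of \eqref{eq:Jork1k}. Applying this inside each $M_{\lambda_{j}}$ yields a basis of $M_{\lambda_{j}}$ in which $N_{j}$ is block-diagonal with blocks $K_{r}$, hence in which $A\big|_{M_{\lambda_{j}}}=\lambda_{j}\mathbb{I}+N_{j}$ is block-diagonal with blocks $J_{r}(\lambda_{j})=\lambda_{j}\mathbb{I}_{r}+K_{r}$. Collecting these bases over $j$ gives a basis of $\mathbb{C}^{n}$ by the direct-sum decomposition; letting $Q$ be the matrix whose columns are these vectors, one obtains $Q^{-1}AQ=J$ with $J$ of the form \eqref{eq:Jork1b}, and its columns are by construction the Jordan basis. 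Thus the entire statement reduces to the nilpotent case.

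For the nilpotent case, let $N$ act on a finite-dimensional space $V\neq\{0\}$ with $N$ nilpotent, and argue by induction on $\dim V$ (the case $\dim V\leq 1$, where necessarily $N=0$ and the matrix is $K_{1}=[0]$, being immediate). Put $W:=NV=\operatorname{im}N$. Nilpotence forces $\ker N\neq\{0\}$, so rank--nullity gives $\dim W<\dim V$, and clearly $N(W)\subseteq W$; hence the induction hypothesis applied to $N\big|_{W}$ furnishes a basis of $W$ that is a disjoint union of Jordan chains, say $s$ chains, the $i$-th of the form $w_{i},Nw_{i},\ldots,N^{\ell_{i}-1}w_{i}$ with $N^{\ell_{i}}w_{i}=0$. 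Each top vector $w_{i}$ lies in $W=\operatorname{im}N$, so choose $v_{i}\in V$ with $Nv_{i}=w_{i}$ and prepend $v_{i}$ to the $i$-th chain, obtaining the longer chain $v_{i},w_{i},Nw_{i},\ldots,N^{\ell_{i}-1}w_{i}$ in $V$. The bottom vectors $N^{\ell_{i}-1}w_{i}$ all lie in $\ker N$ and are linearly independent (being part of a basis of $W$); extend them to a basis of $\ker N$ by adjoining vectors $u_{1},\ldots,u_{t}$, each treated as a length-one chain.

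The decisive point — and the step I expect to be the real obstacle — is verifying that the union of the $s$ prolonged chains and the $t$ trivial chains is a basis of $V$. For the count: the bottom vectors $N^{\ell_{i}-1}w_{i}$ together with the $u_{j}$ form a basis of $\ker N$, so $s+t=\dim\ker N$; the prolonged chains supply $\sum_{i}(\ell_{i}+1)=\dim W+s$ vectors since $\sum_{i}\ell_{i}=\dim W$; hence the total is $\dim W+\dim\ker N=\dim V$ by rank--nullity. For linear independence, take a vanishing linear combination of all these vectors and apply $N$: it annihilates the $u_{j}$ and the bottom vectors $N^{\ell_{i}-1}w_{i}$, and it carries the remaining vectors (namely $v_{i}$ and $w_{i},Nw_{i},\ldots,N^{\ell_{i}-2}w_{i}$, for each $i$) onto exactly the original chains that form the basis of $W$, which are linearly independent by the induction hypothesis; this forces all their coefficients to vanish. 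What remains is a vanishing combination of the $N^{\ell_{i}-1}w_{i}$ and the $u_{j}$, which are linearly independent in $\ker N$ by construction, so the remaining coefficients vanish too. Therefore the collection is a basis of $V$ in which $N$ is a direct sum of blocks $K_{r}$, completing the induction and, through the reduction above, the proof.
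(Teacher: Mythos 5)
Your argument is correct and complete, but it is worth noting that the paper itself does not prove Proposition \ref{prop:jor-can} at all: it is stated as a review item with a citation to Horn and Johnson, just as Proposition \ref{prop:gen-eig} is cited to Hale. What you have supplied is therefore a genuine self-contained proof rather than a variant of the paper's. Your route is the classical one: the generalized-eigenspace decomposition of Proposition \ref{prop:gen-eig} reduces the problem to a single nilpotent operator $N_{j}=\left(A-\lambda_{j}\mathbb{I}\right)\big|_{M_{\lambda_{j}}}$ (nilpotent precisely because $M_{\lambda_{j}}=\mathcal{N}\bigl[\left(A-\lambda_{j}\mathbb{I}\right)^{r\left(\lambda_{j}\right)}\bigr]$), and the nilpotent case is handled by induction on dimension through $W=\operatorname{im}N$, lifting the top of each chain in $W$ and completing the bottoms to a basis of $\ker N$; your dimension count and the ``apply $N$ and use independence of the basis of $W$'' independence argument are both sound, and you are right that uniqueness of the block structure is not asserted and need not be proved. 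The paper's approach buys brevity (the proposition is background for Lemma \ref{lem:log-mat} and the Floquet machinery), while yours buys self-containedness at the cost of about a page. One small bookkeeping remark: to land exactly on the upper-triangular blocks $K_{r}$ of (\ref{eq:Jork1k}) and $J_{r}\left(\lambda\right)$ of (\ref{eq:Jork1a}), each chain must be ordered from its kernel vector up to the lifted vector, i.e. $N^{\ell_{i}}v_{i},\,N^{\ell_{i}-1}v_{i},\ldots,v_{i}$, so that $N$ maps each basis vector to its predecessor; this is a trivial reordering of the basis you construct, not a gap.
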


A function $f\left(J_{r}\left(s\right)\right)$ of a Jordan block
$J_{r}\left(s\right)$ is represented by the following equation \cite[7.9]{MeyCD},
\cite[10.5]{BernM}

\begin{gather}
f\left(J_{r}\left(s\right)\right)=\left[\begin{array}{ccccc}
f\left(s\right) & \partial f\left(s\right) & \frac{\partial^{2}f\left(s\right)}{2} & \cdots & \frac{\partial^{r-1}f\left(s\right)}{\left(r-1\right)!}\\
0 & f\left(s\right) & \partial f\left(s\right) & \cdots & \frac{\partial^{r-2}f\left(s\right)}{\left(r-2\right)!}\\
0 & 0 & \ddots & \cdots & \vdots\\
\vdots & \vdots & \ddots & f\left(s\right) & \partial f\left(s\right)\\
0 & 0 & \cdots & 0 & f\left(s\right)
\end{array}\right].\label{eq:JJordf1a}
\end{gather}
Notice that any function $f\left(J_{r}\left(s\right)\right)$ of the
Jordan block $J_{r}\left(s\right)$ is evidently an upper triangular
Toeplitz matrix.

There are two particular cases of formula (\ref{eq:JJordf1a}) which
can be also derived straightforwardly using equations (\ref{eq:Jord1a}):
\begin{gather}
\exp\left\{ K_{r}t\right\} =\sum_{k=0}^{r-1}\frac{t^{k}}{k!}K_{r}^{k}=\left[\begin{array}{ccccc}
1 & t & \frac{t^{2}}{2!} & \cdots & \frac{t^{r-1}}{\left(r-1\right)!}\\
0 & 1 & t & \cdots & \frac{t^{r-2}}{\left(r-2\right)!}\\
0 & 0 & \ddots & \cdots & \vdots\\
\vdots & \vdots & \ddots & 1 & t\\
0 & 0 & \cdots & 0 & 1
\end{array}\right],\label{eq:Jord1c}
\end{gather}
 
\begin{gather}
\left[J_{r}\left(s\right)\right]^{-1}=\sum_{k=0}^{r-1}s^{-k-1}\left(-K_{r}\right)^{k}=\left[\begin{array}{ccccc}
\frac{1}{s} & -\frac{1}{s^{2}} & \frac{1}{s^{3}} & \cdots & \frac{\left(-1\right)^{r-1}}{s^{r}}\\
0 & \frac{1}{s} & -\frac{1}{s^{2}} & \cdots & \frac{\left(-1\right)^{r-2}}{s^{r-1}}\\
0 & 0 & \ddots & \cdots & \vdots\\
\vdots & \vdots & \ddots & \frac{1}{s} & -\frac{1}{s^{2}}\\
0 & 0 & \cdots & 0 & \frac{1}{s}
\end{array}\right].\label{eq:JJordf1b}
\end{gather}

\section{Companion matrix and cyclicity condition\label{sec:co-mat}}

The companion matrix $C\left(a\right)$ for monic polynomial
\begin{equation}
a\left(s\right)=s^{\nu}+\sum_{1\leq k\leq\nu}a_{\nu-k}s^{\nu-k}\label{eq:compas1a}
\end{equation}
where coefficients $a_{k}$ are complex numbers is defined by \cite[5.2]{BernM}
\begin{equation}
C\left(a\right)=\left[\begin{array}{ccccc}
0 & 1 & \cdots & 0 & 0\\
0 & 0 & 1 & \cdots & 0\\
0 & 0 & 0 & \cdots & \vdots\\
\vdots & \vdots & \ddots & 0 & 1\\
-a_{0} & -a_{1} & \cdots & -a_{\nu-2} & -a_{\nu-1}
\end{array}\right].\label{eq:compas1b}
\end{equation}
Notice that
\begin{equation}
\det\left\{ C\left(a\right)\right\} =\left(-1\right)^{\nu}a_{0}.\label{eq:compas1c}
\end{equation}

An eigenvalue is called \emph{cyclic (nonderogatory)} if its geometric
multiplicity is 1. A square matrix is called \emph{cyclic (nonderogatory)}
if all its eigenvalues are cyclic \cite[5.5]{BernM}. The following
statement provides different equivalent descriptions of a cyclic matrix
\cite[5.5]{BernM}.
\begin{prop}[criteria for a matrix to be cyclic]
\label{prop:cyc1} Let $A\in\mathbb{C}^{n\times n}$ be $n\times n$
matrix with complex-valued entries. Let $\mathrm{spec}\,\left(A\right)=\left\{ \zeta_{1},\zeta_{2},\ldots,\zeta_{r}\right\} $
be the set of all distinct eigenvalues and $k_{j}=\mathrm{ind}{}_{A}\,\left(\zeta_{j}\right)$
is the largest size of Jordan block associated with $\zeta_{j}$.
Then the minimal polynomial $\mu_{A}\left(s\right)$ of the matrix
$A$, that is a monic polynomial of the smallest degree such that
$\mu_{A}\left(A\right)=0$, satisfies
\begin{equation}
\mu_{A}\left(s\right)=\prod_{j=1}^{r}\left(s-\zeta_{j}\right)^{k_{j}}.\label{eq:compas1d}
\end{equation}
Furthermore, and following statements are equivalent:
\end{prop}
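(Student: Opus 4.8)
The plan is to prove the listed conditions mutually equivalent by a closed chain of implications, using the Jordan canonical form of Proposition \ref{prop:jor-can} and the formula (\ref{eq:compas1d}) for the minimal polynomial as the principal tools. The guiding observation is that, by definition, $A$ is cyclic exactly when each distinct eigenvalue $\zeta_j$ has geometric multiplicity $1$, which in Jordan-form language says that each $\zeta_j$ contributes exactly one Jordan block; that unique block then has size $k_j$ equal to the algebraic multiplicity $m_j$ of $\zeta_j$.

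First I would tie cyclicity to the degree of the minimal polynomial. By (\ref{eq:compas1d}) one has $\deg\mu_A=\sum_{j=1}^{r}k_j$, whereas $\deg\chi_A=\sum_{j=1}^{r}m_j=n$, and always $k_j\le m_j$. Hence $\deg\mu_A=n$ if and only if $k_j=m_j$ for every $j$, i.e. if and only if $A$ is cyclic; since $\mu_A$ divides $\chi_A$ and both are monic, this is precisely $\mu_A=\chi_A$. This settles the equivalences \emph{$A$ cyclic} $\Longleftrightarrow$ \emph{$\mu_A=\chi_A$} and \emph{$A$ cyclic} $\Longleftrightarrow$ \emph{every eigenvalue has a single Jordan block}.

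Next I would bring in the companion matrix of Appendix \ref{sec:co-mat}. One direction is a direct computation: for a monic $a(s)$ of degree $\nu$, the standard basis vector $e_1$ satisfies $C(a)^k e_1=e_{k+1}$ for $0\le k\le\nu-1$, so no nonzero polynomial in $C(a)$ of degree below $\nu$ can annihilate $e_1$; therefore $\mu_{C(a)}=\chi_{C(a)}=a$, the matrix $C(a)$ is cyclic, and $e_1$ is a cyclic vector for it. Consequently, if $A$ is similar to the companion matrix of $\chi_A$ then $A$ is cyclic and admits a cyclic vector. For the converse I would exhibit a cyclic vector explicitly: using the generalized-eigenspace decomposition $\mathbb{C}^n=\bigoplus_{j}M_{\zeta_j}$ from Proposition \ref{prop:gen-eig}, choose in each (now single-block) summand a root vector $x_j$ topping the Jordan chain of length $k_j$, set $x=\sum_j x_j$, and verify that $x,Ax,\ldots,A^{n-1}x$ are linearly independent; writing $A$ in this basis then displays $A$ as the companion matrix $C(\chi_A)$. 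Running the same change-of-basis argument both ways shows that \emph{$A$ has a cyclic vector} is equivalent to \emph{$A$ is similar to a companion matrix}, which closes the loop; should the statement also list the commutant characterization, it follows by observing that a matrix commuting with $C(\chi_A)$ is determined by its action on $e_1$, forcing it to be a polynomial in $C(\chi_A)$.

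I expect the main obstacle to be the implication \emph{cyclic $\Rightarrow$ existence of a cyclic vector}, that is, the construction of $x$ and the verification of linear independence, since it requires splicing the Jordan chains of the distinct generalized eigenspaces into a single chain for the whole space. The clean way to see the independence is spectral: the minimal polynomial of the restriction of $A$ to $V:=\mathrm{span}\{A^k x:\ k\ge0\}$ must be divisible by $(s-\zeta_j)^{k_j}$ for each $j$, because the $M_{\zeta_j}$-component of $x$ generates the full $k_j$-block there; hence that minimal polynomial is a multiple of $\prod_j(s-\zeta_j)^{k_j}=\mu_A=\chi_A$, which forces $\dim V=n$ and thus the linear independence of $x,Ax,\ldots,A^{n-1}x$. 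Once this and the companion-matrix computation are in place, every remaining implication is a routine translation through the Jordan-form dictionary.
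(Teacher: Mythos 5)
The paper itself offers no proof of this proposition: it is stated as review material with a citation to \cite[5.5]{BernM}, so any complete argument you give is necessarily your own route. Your route is the standard Jordan-form one and it is essentially sound: the chain (cyclic) $\Leftrightarrow$ (one Jordan block per eigenvalue) $\Leftrightarrow$ ($\deg\mu_{A}=n$, i.e. $\mu_{A}=\chi_{A}$) via $k_{j}\leq m_{j}$ is correct, and your construction of a cyclic vector for a cyclic $A$ (sum of top root vectors over the generalized eigenspaces, with the span argument through divisibility of the minimal polynomial of the restriction) is a legitimate way to get similarity to a companion matrix. Two points need repair, though both are local. First, the formula (\ref{eq:compas1d}) is part of the statement, not a given tool: you should derive it in one line from Proposition \ref{prop:jor-can}, namely $\mu_{J_{r}(\lambda)}(s)=(s-\lambda)^{r}$ and the fact that the minimal polynomial of a block-diagonal matrix is the least common multiple of those of its blocks. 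Second, your explicit computation for the companion matrix uses the wrong convention relative to the definition (\ref{eq:compas1b}): for that matrix (superdiagonal of ones, coefficients in the last row) one has $C(a)e_{1}=-a_{0}e_{\nu}$, so $e_{1}$ is not a cyclic vector when $a_{0}=0$; the correct choice is $e_{\nu}$, for which $C(a)^{k}e_{\nu}\in e_{\nu-k}+\mathrm{span}\{e_{\nu-k+1},\ldots,e_{\nu}\}$ gives the triangular independence you want (equivalently, note that the matrix for which $e_{1}$ works is the transpose of (\ref{eq:compas1b}), and transposition preserves cyclicity and both polynomials). The same convention issue appears at the end: in the basis $x,Ax,\ldots,A^{n-1}x$ the matrix of $A$ is the transposed companion form, so either reverse/adjust the basis or add the remark that the two companion forms are similar. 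With those adjustments your argument is a complete, self-contained proof of what the paper only quotes.
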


\begin{enumerate}
\item $\mu_{A}\left(s\right)=\chi_{A}\left(s\right)=\det\left\{ s\mathbb{I}-A\right\} $.
\item $A$ is cyclic.
\item For every $\zeta_{j}$ the Jordan form of $A$ contains exactly one
block associated with $\zeta_{j}$.
\item $A$ is similar to the companion matrix $C\left(\chi_{A}\right)$.
\end{enumerate}
\begin{prop}[companion matrix factorization]
\label{prop:cyc2} Let $a\left(s\right)$ be a monic polynomial having
degree $\nu$ and $C\left(a\right)$ is its $\nu\times\nu$ companion
matrix. Then, there exist unimodular $\nu\times\nu$ matrices $S_{1}\left(s\right)$
and $S_{2}\left(s\right)$, that is $\det\left\{ S_{m}\right\} =\pm1$,
$m=1,2$, such that
\begin{equation}
s\mathbb{I}_{\nu}-C\left(a\right)=S_{1}\left(s\right)\left[\begin{array}{lr}
\mathbb{I}_{\nu-1} & 0_{\left(\nu-1\right)\times1}\\
0_{1\times\left(\nu-1\right)} & a\left(s\right)
\end{array}\right]S_{2}\left(s\right).\label{eq:compas1e}
\end{equation}
Consequently, $C\left(a\right)$ is cyclic and
\begin{equation}
\chi_{C\left(a\right)}\left(s\right)=\mu_{C\left(a\right)}\left(s\right)=a\left(s\right).\label{eq:compas1f}
\end{equation}
\end{prop}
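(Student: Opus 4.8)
The plan is to bring $s\mathbb{I}_{\nu}-C\left(a\right)$ into the diagonal form $\mathrm{diag}\left(\mathbb{I}_{\nu-1},a\left(s\right)\right)$ by a sequence of elementary row and column operations over the polynomial ring $\mathbb{C}\left[s\right]$, each of determinant $\pm1$, and then to read off all three assertions from that reduction. Writing $s\mathbb{I}_{\nu}-C\left(a\right)$ out explicitly, its rows $1,\dots,\nu-1$ carry $s$ on the diagonal and $-1$ in position $\left(i,i+1\right)$, while its last row is $\left(a_{0},a_{1},\dots,a_{\nu-2},s+a_{\nu-1}\right)$; all remaining entries vanish.

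First I would run the column sweep $C_{k}\mapsto C_{k}+sC_{k+1}$ for $k=\nu-1,\nu-2,\dots,1$ (right multiplication by transvections of determinant $1$). Setting $P_{\nu}\left(s\right)=s+a_{\nu-1}$ and $P_{k}\left(s\right)=a_{k-1}+sP_{k+1}\left(s\right)$, an immediate induction on the sweep shows that, once step $k$ is done, column $k$ equals $-e_{k-1}+P_{k}\left(s\right)e_{\nu}$ for $2\le k\le\nu$ and, at the last step, the first column becomes $P_{1}\left(s\right)e_{\nu}=a\left(s\right)e_{\nu}$ (since $P_{1}=a$), while the $-1$'s in positions $\left(i,i+1\right)$, $1\le i\le\nu-1$, are left untouched. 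Thus after the sweep the matrix has $-1$ in each position $\left(i,i+1\right)$, the entry $a\left(s\right)$ in position $\left(\nu,1\right)$, the polynomials $P_{2},\dots,P_{\nu}$ elsewhere in the last row, and zeros elsewhere. Next I would use those $-1$'s: for $k=2,\dots,\nu$ replace row $\nu$ by row $\nu$ plus $P_{k}\left(s\right)$ times row $k-1$ (left multiplication by transvections), which clears every entry of the last row except the $a\left(s\right)$ in the corner. The remaining matrix has $-1$ in positions $\left(i,i+1\right)$, $1\le i\le\nu-1$, and $a\left(s\right)$ in position $\left(\nu,1\right)$; a cyclic permutation of its columns (determinant $\left(-1\right)^{\nu-1}$) turns it into $\mathrm{diag}\left(-1,\dots,-1,a\left(s\right)\right)$, and rescaling rows $1,\dots,\nu-1$ by $-1$ (determinant $\left(-1\right)^{\nu-1}$) produces $\mathrm{diag}\left(\mathbb{I}_{\nu-1},a\left(s\right)\right)$. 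Absorbing all the left operations into $S_{1}\left(s\right)$ and all the right operations into $S_{2}\left(s\right)$ yields $s\mathbb{I}_{\nu}-C\left(a\right)=S_{1}\left(s\right)\,\mathrm{diag}\left(\mathbb{I}_{\nu-1},a\left(s\right)\right)\,S_{2}\left(s\right)$ with $S_{1},S_{2}$ products of elementary polynomial matrices, hence $\det S_{m}=\pm1$, as required.

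For the consequences: taking determinants of both sides gives $\chi_{C\left(a\right)}\left(s\right)=\det\left\{ s\mathbb{I}_{\nu}-C\left(a\right)\right\} =\left(\det S_{1}\right)\left(\det S_{2}\right)a\left(s\right)=\pm a\left(s\right)$, and since $\chi_{C\left(a\right)}$ and $a$ are both monic of degree $\nu$ the sign is $+$, so $\chi_{C\left(a\right)}=a$. For the minimal polynomial I would exhibit $e_{\nu}$ as a cyclic vector of $C\left(a\right)$: from $C\left(a\right)e_{j}=e_{j-1}-a_{j-1}e_{\nu}$ for $2\le j\le\nu$ one gets by induction $C\left(a\right)^{k}e_{\nu}=e_{\nu-k}+v_{k}$ with $v_{k}\in\mathrm{span}\left\{ e_{\nu-k+1},\dots,e_{\nu}\right\}$ for $0\le k\le\nu-1$, so $e_{\nu},C\left(a\right)e_{\nu},\dots,C\left(a\right)^{\nu-1}e_{\nu}$ form, relative to the reversed basis $e_{\nu},\dots,e_{1}$, a triangular array with unit diagonal and are therefore linearly independent; hence no nonzero polynomial of degree $<\nu$ annihilates $C\left(a\right)$, i.e. $\deg\mu_{C\left(a\right)}\ge\nu$. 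Since $\mu_{C\left(a\right)}\mid\chi_{C\left(a\right)}=a$ by the Cayley--Hamilton theorem and both are monic of degree $\nu$, we conclude $\mu_{C\left(a\right)}=\chi_{C\left(a\right)}=a$, and then the equivalence of statements (1) and (2) in the preceding Proposition gives that $C\left(a\right)$ is cyclic. Alternatively, the invariant factors $1,\dots,1,a\left(s\right)$ read off the diagonal form already encode $\chi_{C\left(a\right)}$ as their product and $\mu_{C\left(a\right)}$ as the last one.

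The step I expect to be the main obstacle is the bookkeeping in the second paragraph: verifying that the right-to-left column sweep deposits exactly $a\left(s\right)$ — with no spurious leftovers — in the $\left(\nu,1\right)$ corner and leaves the superdiagonal $-1$'s intact, and then tracking signs carefully enough to guarantee that the accumulated $S_{1}$ and $S_{2}$ have determinant precisely $\pm1$ (each transvection contributes $1$, the cyclic permutation $\left(-1\right)^{\nu-1}$, the diagonal rescaling $\left(-1\right)^{\nu-1}$). Everything else is a routine application of the Smith-form reduction together with the preceding Proposition.
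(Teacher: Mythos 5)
Your proof is correct, and the verification you flagged as the main obstacle does go through: in the sweep $C_{k}\mapsto C_{k}+sC_{k+1}$ the already-updated column $k+1$ equals $-e_{k}+P_{k+1}(s)e_{\nu}$, whose entry in row $k-1$ is zero, so the superdiagonal $-1$ of column $k$ is indeed untouched, the diagonal $s$ cancels, and the Horner recursion deposits exactly $P_{1}=a(s)$ in the $(\nu,1)$ corner; the sign bookkeeping ($(-1)^{\nu-1}$ from the cyclic permutation and $(-1)^{\nu-1}$ from the rescaling, transvections contributing $1$) is also right, and monicity fixes the overall sign in $\chi_{C(a)}=a$. Note, however, that the paper offers no proof of this proposition at all: it is quoted as a standard fact in the style of \cite{BernM}, and the closest the paper comes to an argument is the matrix-polynomial identity (\ref{eq:CBA1f}) in Appendix \ref{sec:matpol}, where explicit unimodular matrices $\mathsf{E}(s)$, $\mathsf{F}(s)$ (built from the same Horner-type recursion $E_{j-1}=A_{j-1}+sE_{j}$ as your $P_{k}$) are written down and the equivalence is asserted to be "straightforward to verify"; specialized to $m=1$ that identity is precisely your factorization up to a permutation of the diagonal slots. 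So your route is a self-contained derivation of what the paper takes from the literature: you obtain the Smith-form reduction constructively by elementary operations, and you supply the missing bridge from the factorization to $\mu_{C(a)}=\chi_{C(a)}=a$ and cyclicity, either via the cyclic vector $e_{\nu}$ (your triangularity argument with respect to the reversed basis is sound, and combined with Cayley--Hamilton it pins down $\mu$) or via the invariant factors $1,\dots,1,a(s)$ together with the equivalences of Proposition \ref{prop:cyc1}. Either closing argument is acceptable; the cyclic-vector one has the advantage of not presupposing the divisibility theory of invariant factors beyond what the factorization itself provides.
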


The following statement summarizes important information on the Jordan
form of the companion matrix and the generalized Vandermonde matrix,
\cite[5.16]{BernM}, \cite[2.11]{LanTsi}, \cite[7.9]{MeyCD}.
\begin{prop}[Jordan form of the companion matrix]
\label{prop:cycJ} Let $C\left(a\right)$ be an $n\times n$ a companion
matrix of the monic polynomial $a\left(s\right)$ defined by equation
(\ref{eq:compas1a}). Suppose that the set of distinct roots of polynomial
$a\left(s\right)$ is $\left\{ \zeta_{1},\zeta_{2},\ldots,\zeta_{r}\right\} $
and $\left\{ n_{1},n_{2},\ldots,n_{r}\right\} $ is the corresponding
set of the root multiplicities such that
\begin{equation}
n_{1}+n_{2}+\cdots+n_{r}=n.\label{eq:compas2a}
\end{equation}
Then 
\begin{equation}
C\left(a\right)=RJR^{-1},\label{eq:compas2b}
\end{equation}
where
\begin{equation}
J=\mathrm{diag}\,\left\{ J_{n_{1}}\left(\zeta_{1}\right),J_{n_{2}}\left(\zeta_{2}\right),\ldots,J_{n_{r}}\left(\zeta_{r}\right)\right\} \label{eq:compas2c}
\end{equation}
is the the Jordan form of companion matrix $C\left(a\right)$ and
$n\times n$ matrix $R$ is the so-called generalized Vandermonde
matrix defined by
\begin{equation}
R=\left[R_{1}|R_{2}|\cdots|R_{r}\right],\label{eq:compas2d}
\end{equation}
 where $R_{j}$ is $n\times n_{j}$ matrix of the form
\begin{equation}
R_{j}=\left[\begin{array}{rrcr}
1 & 0 & \cdots & 0\\
\zeta_{j} & 1 & \cdots & 0\\
\vdots & \vdots & \ddots & \vdots\\
\zeta_{j}^{n-2} & \binom{n-2}{1}\,\zeta_{j}^{n-3} & \cdots & \binom{n-2}{n_{j}-1}\,\zeta_{j}^{n-n_{j}-1}\\
\zeta_{j}^{n-1} & \binom{n-1}{1}\,\zeta_{j}^{n-2} & \cdots & \binom{n-1}{n_{j}-1}\,\zeta_{j}^{n-n_{j}}
\end{array}\right].\label{eq:compas2f}
\end{equation}
As a consequence of representation (\ref{eq:compas2c}) $C\left(a\right)$
is a cyclic matrix.
\end{prop}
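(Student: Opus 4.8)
The plan is to exhibit an explicit Jordan basis for $C\left(a\right)$ whose matrix is exactly the generalized Vandermonde matrix $R$ of (\ref{eq:compas2d})--(\ref{eq:compas2f}), the idea being that the generalized eigenvectors of a companion matrix are produced by differentiating its geometric-progression eigenvector with respect to the eigenvalue parameter. First I would record what is already on the table: by Proposition \ref{prop:cyc2} the matrix $C\left(a\right)$ is cyclic and $\chi_{C\left(a\right)}\left(s\right)=a\left(s\right)$, so its Jordan form has exactly one block per distinct root $\zeta_{j}$ and that block has size equal to the multiplicity $n_{j}$ of $\zeta_{j}$ in $a\left(s\right)$; this fixes the candidate $J$ in (\ref{eq:compas2c}), and what remains is to verify the similarity (\ref{eq:compas2b}) with the specific $R$.

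The computational heart is the identity, valid for every scalar $s$,
\[
\left(C\left(a\right)-s\mathbb{I}\right)v\left(s\right)=-a\left(s\right)e_{n},\qquad v\left(s\right)=\bigl[1,\,s,\,\dots,\,s^{n-1}\bigr]^{\mathsf{T}},
\]
where $e_{n}$ is the last standard basis vector. This follows immediately from the shift structure of $C\left(a\right)$ in (\ref{eq:compas1b}), which makes rows $1,\dots,n-1$ of the left-hand side vanish, together with the definition (\ref{eq:compas1a}) of $a\left(s\right)$, which makes row $n$ equal to $-a\left(s\right)$.

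Next I would differentiate this identity $m$ times in $s$. Because $C\left(a\right)-s\mathbb{I}$ has constant derivative $-\mathbb{I}$, the Leibniz rule contributes only one extra term, and after dividing by $m!$ one gets
\[
\left(C\left(a\right)-s\mathbb{I}\right)\frac{\partial_{s}^{m}v\left(s\right)}{m!}-\frac{\partial_{s}^{m-1}v\left(s\right)}{\left(m-1\right)!}=-\frac{\partial_{s}^{m}a\left(s\right)}{m!}\,e_{n}.
\]
Since $\tfrac{1}{m!}\partial_{s}^{m}s^{i-1}=\binom{i-1}{m}s^{i-1-m}$, the vector $\tfrac{1}{m!}\partial_{s}^{m}v\left(s\right)$ evaluated at $s=\zeta_{j}$ is precisely the $\left(m+1\right)$-st column of the block $R_{j}$ in (\ref{eq:compas2f}). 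Evaluating the displayed identity at $s=\zeta_{j}$ and using that $\partial_{s}^{m}a\left(\zeta_{j}\right)=0$ for $0\le m\le n_{j}-1$ kills the right-hand side, leaving the Jordan-chain recursion $\left(C\left(a\right)-\zeta_{j}\mathbb{I}\right)c_{m}^{(j)}=c_{m-1}^{(j)}$ (with $c_{-1}^{(j)}=0$) for the columns $c_{m}^{(j)}$ of $R_{j}$; in matrix form, $C\left(a\right)R_{j}=R_{j}J_{n_{j}}\left(\zeta_{j}\right)$, and assembling the blocks yields $C\left(a\right)R=RJ$.

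It then remains to show that $R$ is nonsingular, which promotes $C\left(a\right)R=RJ$ to (\ref{eq:compas2b}). Each column $c_{m}^{(j)}$ lies in the generalized eigenspace $M_{\zeta_{j}}$ since the chain relations give $\left(C\left(a\right)-\zeta_{j}\mathbb{I}\right)^{m+1}c_{m}^{(j)}=0$; moreover $c_{0}^{(j)}=v\left(\zeta_{j}\right)\ne0$ is a genuine eigenvector, and a Jordan chain issuing from a nonzero eigenvector is linearly independent (apply ascending powers of $C\left(a\right)-\zeta_{j}\mathbb{I}$ to a vanishing linear combination). By Proposition \ref{prop:gen-eig} the subspaces $M_{\zeta_{1}},\dots,M_{\zeta_{r}}$ are linearly independent with $\mathbb{C}^{n}=M_{\zeta_{1}}\oplus\dots\oplus M_{\zeta_{r}}$, so the $n_{1}+\dots+n_{r}=n$ columns of $R$ are linearly independent and $R$ is invertible; hence $C\left(a\right)=RJR^{-1}$, which simultaneously identifies $J$ as the Jordan form of $C\left(a\right)$ and re-exhibits its cyclicity. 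I do not expect a genuine obstacle here: the only point to watch is the index bookkeeping in the differentiation step — matching $\tfrac{1}{m!}\partial_{s}^{m}s^{i-1}$ to the binomial entries of (\ref{eq:compas2f}) and tracking the index shift produced by the Leibniz rule — which is routine. (As an alternative to the eigenspace argument, invertibility of $R$ also follows from the confluent Vandermonde determinant $\det R=\pm\prod_{i<j}\left(\zeta_{j}-\zeta_{i}\right)^{n_{i}n_{j}}\ne0$, but establishing that formula is more work than invoking Proposition \ref{prop:gen-eig}.)
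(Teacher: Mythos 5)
Your proof is correct, and it is worth noting that the paper itself offers no proof of Proposition \ref{prop:cycJ} at all: the statement is simply quoted from the literature (Bernstein, Lancaster--Tismenetsky, Meyer), with equation (\ref{eq:compas3a}) recording, without derivation, that the columns of $R_{j}$ are the scaled derivatives $\frac{1}{m!}\partial_{s}^{m}Y\left(\zeta_{j}\right)$ of the eigenvector column. What you have supplied is essentially the standard argument behind those citations, made self-contained: the identity $\left(C\left(a\right)-s\mathbb{I}\right)v\left(s\right)=-a\left(s\right)e_{n}$ follows directly from (\ref{eq:compas1b}) and (\ref{eq:compas1a}); the $m$-fold differentiation with the one-term Leibniz correction is carried out correctly, and the entries $\binom{i-1}{m}\zeta_{j}^{i-1-m}$ do match the $\left(m+1\right)$-st column of (\ref{eq:compas2f}); vanishing of $\partial_{s}^{m}a\left(\zeta_{j}\right)$ for $0\le m\le n_{j}-1$ then yields the Jordan-chain recursion, i.e.\ $C\left(a\right)R_{j}=R_{j}J_{n_{j}}\left(\zeta_{j}\right)$; and your invertibility argument (chain independence within each generalized eigenspace plus the direct-sum decomposition of Proposition \ref{prop:gen-eig}, with the count $n_{1}+\cdots+n_{r}=n$) is sound and promotes $C\left(a\right)R=RJ$ to (\ref{eq:compas2b}). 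Two minor observations: your opening appeal to Proposition \ref{prop:cyc2} is only motivational, since your construction independently produces the Jordan form and in fact re-proves cyclicity (one block per $\zeta_{j}$), so there is no circularity; and your parenthetical confluent-Vandermonde alternative is correctly flagged as unnecessary extra work. Compared with the paper's bare citation, your route has the advantage of exhibiting the Jordan basis explicitly and explaining where the binomial entries of (\ref{eq:compas2f}) come from, which is precisely the content the paper leaves implicit in (\ref{eq:compas3a}).
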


As to the structure of matrix $R_{j}$ in equation (\ref{eq:compas2f}),
if we denote by $Y\left(\zeta_{j}\right)$ its first column then it
can be expressed as follows \cite[2.11]{LanTsi}:
\begin{equation}
R_{j}=\left[Y^{\left(0\right)}|Y^{\left(1\right)}|\cdots|Y^{\left(n_{j}-1\right)}\right],\quad Y^{\left(m\right)}=\frac{1}{m!}\partial_{s_{j}}^{m}Y\left(\zeta_{j}\right),\quad0\leq m\leq n_{j}-1.\label{eq:compas3a}
\end{equation}
In the case when all eigenvalues of a cyclic matrix are distinct then
the generalized Vandermonde matrix turns into the standard Vandermonde
matrix
\begin{equation}
V=\left[\begin{array}{rrcr}
1 & 1 & \cdots & 1\\
\zeta_{1} & \zeta_{2} & \cdots & \zeta_{n}\\
\vdots & \vdots & \ddots & \vdots\\
\zeta_{1}^{n-2} & \zeta_{2}^{n-2} & \cdots & \zeta_{n}^{n-2}\\
\zeta_{1}^{n-1} & \zeta_{2}^{n-1} & \cdots & \zeta_{n}^{n-1}
\end{array}\right].\label{eq:compas3c}
\end{equation}

\section{Matrix polynomials\label{sec:matpol}}

An important incentive for considering matrix polynomials is that
they are relevant to the spectral theory of the differential equations
of the order higher than 1, particularly the Euler-Lagrange equations
which are the second-order differential equations in time. We provide
here selected elements of the theory of matrix polynomials following
mostly to \cite[II.7, II.8]{GoLaRo}, \cite[9]{Baum}. General matrix
polynomial eigenvalue problem reads
\begin{equation}
A\left(s\right)x=0,\quad A\left(s\right)=\sum_{j=0}^{\nu}A_{j}s^{j},\quad x\neq0,\label{eq:Aux1a}
\end{equation}
where $s$ is complex number, $A_{k}$ are constant $m\times m$ matrices
and $x\in\mathbb{C}^{m}$ is $m$-dimensional column-vector. We refer
to problem (\ref{eq:Aux1a}) of funding complex-valued $s$ and non-zero
vector $x\in\mathbb{C}^{m}$ as polynomial eigenvalue problem. 

If a pair of a complex $s$ and non-zero vector $x$ solves problem
(\ref{eq:Aux1a}) we refer to $s$ as an \emph{eigenvalue} or as a\emph{
characteristic value} and to $x$ as the corresponding to $s$ \emph{eigenvector}.
Evidently the characteristic values of problem (\ref{eq:Aux1a}) can
be found from polynomial \emph{characteristic equation}
\begin{equation}
\det\left\{ A\left(s\right)\right\} =0.\label{eq:Aux1b}
\end{equation}
We refer to matrix polynomial $A\left(s\right)$ as \emph{regular}
if $\det\left\{ A\left(s\right)\right\} $ is not identically zero.
We denote by $m\left(s_{0}\right)$ the \emph{multiplicity} (called
also \emph{algebraic multiplicity}) of eigenvalue $s_{0}$ as a root
of polynomial $\det\left\{ A\left(s\right)\right\} $. In contrast,
the \emph{geometric multiplicity} of eigenvalue $s_{0}$ is defined
as $\dim\left\{ \ker\left\{ A\left(s_{0}\right)\right\} \right\} $,
where $\ker\left\{ A\right\} $ defined for any square matrix $A$
stands for the subspace of solutions $x$ to equation $Ax=0$. Evidently,
the geometric multiplicity of eigenvalue does not exceed its algebraic
one, see Corollary \ref{cor:dim-ker}. 

It turns out that the matrix polynomial eigenvalue problem (\ref{eq:Aux1a})
can be always recast as the standard ``linear'' eigenvalue problem,
namely
\begin{equation}
\left(s\mathsf{B}-\mathsf{A}\right)\mathsf{x}=0,\label{eq:Aux1c}
\end{equation}
where $m\nu\times m\nu$ matrices $\mathsf{A}$ and $\mathsf{B}$
are defined by
\begin{gather}
\mathsf{B}=\left[\begin{array}{ccccc}
\mathbb{I} & 0 & \cdots & 0 & 0\\
0 & \mathbb{I} & 0 & \cdots & 0\\
0 & 0 & \ddots & \cdots & \vdots\\
\vdots & \vdots & \ddots & \mathbb{I} & 0\\
0 & 0 & \cdots & 0 & A_{\nu}
\end{array}\right],\quad\mathsf{A}=\left[\begin{array}{ccccc}
0 & \mathbb{I} & \cdots & 0 & 0\\
0 & 0 & \mathbb{I} & \cdots & 0\\
0 & 0 & 0 & \cdots & \vdots\\
\vdots & \vdots & \ddots & 0 & \mathbb{I}\\
-A_{0} & -A_{1} & \cdots & -A_{\nu-2} & -A_{\nu-1}
\end{array}\right],\label{eq:CBA1b}
\end{gather}
with $\mathbb{I}$ being $m\times m$ identity matrix. Matrix $\mathsf{A}$,
particularly in monic case, is often referred to as \emph{companion
matrix}. In the case of \emph{monic polynomial} $A\left(\lambda\right)$,
when $A_{\nu}=\mathbb{I}$ is $m\times m$ identity matrix, matrix
$\mathsf{B}=\mathsf{I}$ is $m\nu\times m\nu$ identity matrix. The
reduction of original polynomial problem (\ref{eq:Aux1a}) to an equivalent
linear problem (\ref{eq:Aux1c}) is called \emph{linearization}.

The linearization is not unique, and one way to accomplish is by introducing
the so-called known ``\emph{companion polynomia}l'' which is $m\nu\times m\nu$
matrix
\begin{gather}
\mathsf{C}_{A}\left(s\right)=s\mathsf{B}-\mathsf{A}=\left[\begin{array}{ccccc}
s\mathbb{I} & -\mathbb{I} & \cdots & 0 & 0\\
0 & s\mathbb{I} & -\mathbb{I} & \cdots & 0\\
0 & 0 & \ddots & \cdots & \vdots\\
\vdots & \vdots & \vdots & s\mathbb{I} & -\mathbb{I}\\
A_{0} & A_{1} & \cdots & A_{\nu-2} & sA_{\nu}+A_{\nu-1}
\end{array}\right].\label{eq:CBA1a}
\end{gather}
Notice that in the case of the EL equations the linearization can
be accomplished by the relevant Hamilton equations.

To demonstrate the equivalency between the eigenvalue problems for
$m\nu\times m\nu$ companion polynomial $\mathsf{C}_{A}\left(s\right)$
and the original $m\times m$ matrix polynomial $A\left(s\right)$
we introduce two $m\nu\times m\nu$ matrix polynomials $\mathsf{E}\left(s\right)$
and $\mathsf{F}\left(s\right)$. Namely,
\begin{gather}
\mathsf{E}\left(s\right)=\left[\begin{array}{ccccc}
E_{1}\left(s\right) & E_{2}\left(s\right) & \cdots & E_{\nu-1}\left(s\right) & \mathbb{I}\\
-\mathbb{I} & 0 & 0 & \cdots & 0\\
0 & -\mathbb{I} & \ddots & \cdots & \vdots\\
\vdots & \vdots & \ddots & 0 & 0\\
0 & 0 & \cdots & -\mathbb{I} & 0
\end{array}\right],\label{eq:CBA1c}\\
\det\left\{ \mathsf{E}\left(s\right)\right\} =1,\nonumber 
\end{gather}
where $m\times m$ matrix polynomials $E_{j}\left(s\right)$ are defined
by the following recursive formulas
\begin{gather}
E_{\nu}\left(s\right)=A_{\nu},\quad E_{j-1}\left(s\right)=A_{j-1}+sE_{j}\left(s\right),\quad j=\nu,\ldots,2.\label{eq:CBA1d}
\end{gather}
Matrix polynomial $\mathsf{F}\left(s\right)$ is defined by
\begin{gather}
\mathsf{F}\left(s\right)=\left[\begin{array}{ccccc}
\mathbb{I} & 0 & \cdots & 0 & 0\\
-s\mathbb{I} & \mathbb{I} & 0 & \cdots & 0\\
0 & -s\mathbb{I} & \ddots & \cdots & \vdots\\
\vdots & \vdots & \ddots & \mathbb{I} & 0\\
0 & 0 & \cdots & -s\mathbb{I} & \mathbb{I}
\end{array}\right],\quad\det\left\{ \mathsf{F}\left(s\right)\right\} =1.\label{eq:CBA1e}
\end{gather}
Notice, that both matrix polynomials $\mathsf{E}\left(s\right)$ and
$\mathsf{F}\left(s\right)$ have constant determinants readily implying
that their inverses $\mathsf{E}^{-1}\left(s\right)$ and $\mathsf{F}^{-1}\left(s\right)$
are also matrix polynomials. Then it is straightforward to verify
that
\begin{gather}
\mathsf{E}\left(s\right)\mathsf{C}_{A}\left(s\right)\mathsf{F}^{-1}\left(s\right)=\mathsf{E}\left(s\right)\left(s\mathsf{B}-\mathsf{A}\right)\mathsf{F}^{-1}\left(s\right)=\left[\begin{array}{ccccc}
A\left(s\right) & 0 & \cdots & 0 & 0\\
0 & \mathbb{I} & 0 & \cdots & 0\\
0 & 0 & \ddots & \cdots & \vdots\\
\vdots & \vdots & \ddots & \mathbb{I} & 0\\
0 & 0 & \cdots & 0 & \mathbb{I}
\end{array}\right].\label{eq:CBA1f}
\end{gather}
The identity (\ref{eq:CBA1f}) where matrix polynomials $\mathsf{E}\left(s\right)$
and $\mathsf{F}\left(s\right)$ have constant determinants can be
viewed as the definition of equivalency between matrix polynomial
$A\left(s\right)$ and its companion polynomial $\mathsf{C}_{A}\left(s\right)$. 

Let us take a look at the eigenvalue problem for eigenvalue $s$ and
eigenvector $\mathsf{x}\in\mathbb{C}^{m\nu}$ associated with companion
polynomial $\mathsf{C}_{A}\left(s\right)$, that is
\begin{gather}
\left(s\mathsf{B}-\mathsf{A}\right)\mathsf{x}=0,\quad\mathsf{x}=\left[\begin{array}{c}
x_{0}\\
x_{1}\\
x_{2}\\
\vdots\\
x_{\nu-1}
\end{array}\right]\in\mathbb{C}^{m\nu},\quad x_{j}\in\mathbb{C}^{m},\quad0\leq j\leq\nu-1,\label{eq:CBAx1a}
\end{gather}
where
\begin{equation}
\left(s\mathsf{B}-\mathsf{A}\right)\mathsf{x}=\left[\begin{array}{c}
sx_{0}-x_{1}\\
sx_{1}-x_{2}\\
\vdots\\
sx_{\nu-2}-x_{\nu-1}\\
\sum_{j=0}^{\nu-2}A_{j}x_{j}+\left(sA_{\nu}+A_{\nu-1}\right)x_{\nu-1}
\end{array}\right].\label{eq:CBAx1b}
\end{equation}
With equations (\ref{eq:CBAx1a}) and (\ref{eq:CBAx1b}) in mind we
introduce the following vector polynomial
\begin{equation}
\mathsf{x}_{s}=\left[\begin{array}{c}
x_{0}\\
sx_{0}\\
\vdots\\
s^{\nu-2}x_{0}\\
s^{\nu-1}x_{0}
\end{array}\right],\quad x_{0}\in\mathbb{C}^{m}.\label{eq:CBAx1c}
\end{equation}
Not accidentally, the components of the vector $\mathsf{x}_{s}$ in
its representation (\ref{eq:CBAx1c}) are in evident relation with
the derivatives $\partial_{t}^{j}\left(x_{0}\mathrm{e}^{st}\right)=s^{j}x_{0}\mathrm{e}^{st}$.
That is just another sign of the intimate relations between the matrix
polynomial theory and the theory of systems of ordinary differential
equations, see Appendix \ref{sec:dif-jord}. 
\begin{thm}[eigenvectors]
\label{thm:matpol-eigvec} Let $A\left(s\right)$ as in equations
(\ref{eq:Aux1a}) be regular, that $\det\left\{ A\left(s\right)\right\} $
is not identically zero, and let $m\nu\times m\nu$ matrices $\mathsf{A}$
and $\mathsf{B}$ be defined by equations (\ref{eq:Aux1b}). Then
the following identities hold
\begin{equation}
\left(s\mathsf{B}-\mathsf{A}\right)\mathsf{x}_{s}=\left[\begin{array}{c}
0\\
0\\
\vdots\\
0\\
A\left(s\right)x_{0}
\end{array}\right],\;\mathsf{x}_{s}=\left[\begin{array}{c}
x_{0}\\
sx_{0}\\
\vdots\\
s^{\nu-2}x_{0}\\
s^{\nu-1}x_{0}
\end{array}\right],\label{eq:CBAx1d}
\end{equation}
\begin{gather}
\det\left\{ A\left(s\right)\right\} =\det\left\{ s\mathsf{B}-\mathsf{A}\right\} ,\quad\det\left\{ \mathsf{B}\right\} =\det\left\{ A_{\nu}\right\} ,\label{eq:CBAx1g}
\end{gather}
where $\det\left\{ A\left(s\right)\right\} =\det\left\{ s\mathsf{B}-\mathsf{A}\right\} $
is a polynomial of the degree $m\nu$ if $\det\left\{ \mathsf{B}\right\} =\det\left\{ A_{\nu}\right\} \neq0$.
There is one-to-one correspondence between solutions of equations
$A\left(s\right)x=0$ and $\left(s\mathsf{B}-\mathsf{A}\right)\mathsf{x}=0$.
Namely, a pair $s,\:\mathsf{x}$ solves eigenvalue problem $\left(s\mathsf{B}-\mathsf{A}\right)\mathsf{x}=0$
if and only if the following equalities hold
\begin{gather}
\mathsf{x}=\mathsf{x}_{s}=\left[\begin{array}{c}
x_{0}\\
sx_{0}\\
\vdots\\
s^{\nu-2}x_{0}\\
s^{\nu-1}x_{0}
\end{array}\right],\quad A\left(s\right)x_{0}=0,\quad x_{0}\neq0;\quad\det\left\{ A\left(s\right)\right\} =0.\label{eq:CBAx1e}
\end{gather}
\end{thm}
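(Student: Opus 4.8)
The plan is to verify each of the three claims by direct computation, organizing the work around the explicit block-structured expression \eqref{eq:CBAx1b} for $\left(s\mathsf{B}-\mathsf{A}\right)\mathsf{x}$. First I would establish the key identity in \eqref{eq:CBAx1d}: substituting the special vector $\mathsf{x}_s=\left(x_0,sx_0,\ldots,s^{\nu-1}x_0\right)^{\mathsf{T}}$ into \eqref{eq:CBAx1b}, the first $\nu-1$ block-rows read $s\cdot s^{j}x_0-s^{j+1}x_0=0$ for $j=0,\ldots,\nu-2$, so they vanish identically; the last block-row evaluates to $\sum_{j=0}^{\nu-2}A_j s^{j}x_0+\left(sA_\nu+A_{\nu-1}\right)s^{\nu-1}x_0=\sum_{j=0}^{\nu}A_j s^{j}x_0=A\left(s\right)x_0$. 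This is the whole content of \eqref{eq:CBAx1d}, and it is a routine but clean telescoping argument.

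Next I would prove the determinant identities \eqref{eq:CBAx1g}. The cleanest route is to invoke the equivalence \eqref{eq:CBA1f}, namely $\mathsf{E}\left(s\right)\left(s\mathsf{B}-\mathsf{A}\right)\mathsf{F}^{-1}\left(s\right)=\mathrm{diag}\left\{A\left(s\right),\mathbb{I},\ldots,\mathbb{I}\right\}$, together with the stated facts $\det\left\{\mathsf{E}\left(s\right)\right\}=\det\left\{\mathsf{F}\left(s\right)\right\}=1$ from \eqref{eq:CBA1c} and \eqref{eq:CBA1e}. Taking determinants of both sides of \eqref{eq:CBA1f} immediately gives $\det\left\{s\mathsf{B}-\mathsf{A}\right\}=\det\left\{A\left(s\right)\right\}$. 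The identity $\det\left\{\mathsf{B}\right\}=\det\left\{A_\nu\right\}$ follows by direct expansion of the block-triangular matrix $\mathsf{B}$ in \eqref{eq:CBA1b} along its identity blocks. The degree statement is then the observation that $\det\left\{A\left(s\right)\right\}$ has degree $m\nu$ precisely when the leading coefficient, which is $\det\left\{A_\nu\right\}$, is nonzero; regularity guarantees the polynomial is not identically zero.

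Finally I would establish the one-to-one correspondence \eqref{eq:CBAx1e}. For the forward direction, suppose $\left(s\mathsf{B}-\mathsf{A}\right)\mathsf{x}=0$ with $\mathsf{x}=\left(x_0,x_1,\ldots,x_{\nu-1}\right)^{\mathsf{T}}$. The first $\nu-1$ block-rows of \eqref{eq:CBAx1b} force $x_{j+1}=sx_j$ for $j=0,\ldots,\nu-2$, hence by induction $x_j=s^{j}x_0$, so $\mathsf{x}=\mathsf{x}_s$; the last block-row then reads $A\left(s\right)x_0=0$ by the computation already done for \eqref{eq:CBAx1d}; and $\mathsf{x}\neq0$ forces $x_0\neq0$. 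For the converse, if $A\left(s\right)x_0=0$ with $x_0\neq0$, then $\mathsf{x}_s\neq0$ and \eqref{eq:CBAx1d} shows $\left(s\mathsf{B}-\mathsf{A}\right)\mathsf{x}_s=0$. The parenthetical $\det\left\{A\left(s\right)\right\}=0$ is just the restatement that such an $s$ is a root of the characteristic polynomial. The main obstacle here is essentially bookkeeping: one must be careful that the correspondence is genuinely bijective on the level of vectors (not merely that eigenvalues match), which is why the inductive step $x_j=s^j x_0$ extracted from the first $\nu-1$ block-rows is the crucial point — it shows the map $x_0\mapsto\mathsf{x}_s$ is onto the full eigenspace of the linearization, not just into it. No step presents a genuine analytic difficulty; the theorem is a structural fact about linearization, and the proof is a sequence of block-matrix verifications.
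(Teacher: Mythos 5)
Your proposal is correct and follows essentially the same route as the paper's own proof: the direct block-row verification of \eqref{eq:CBAx1d} via \eqref{eq:CBAx1b}, the determinant identities via the equivalence \eqref{eq:CBA1f} with $\det\left\{ \mathsf{E}\left(s\right)\right\} =\det\left\{ \mathsf{F}\left(s\right)\right\} =1$, and the two-directional argument for the correspondence using the recursion $x_{j}=s^{j}x_{0}$ extracted from the first $\nu-1$ block-rows. Your remark that the degree claim rests on the leading coefficient of $\det\left\{ A\left(s\right)\right\} $ being $\det\left\{ A_{\nu}\right\} $ is a slightly more explicit justification than the paper gives, but it is the same argument in substance.
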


\begin{proof}
Polynomial vector identity (\ref{eq:CBAx1d}) readily follows from
equations (\ref{eq:CBAx1b}) and (\ref{eq:CBAx1c}). Identities (\ref{eq:CBAx1g})
for the determinants follow straightforwardly from equations (\ref{eq:CBAx1c}),
(\ref{eq:CBAx1e}) and (\ref{eq:CBA1f}). If $\det\left\{ \mathsf{B}\right\} =\det\left\{ A_{\nu}\right\} \neq0$
then the degree of the polynomial $\det\left\{ s\mathsf{B}-\mathsf{A}\right\} $
has to be $m\nu$ since $\mathsf{A}$ and $\mathsf{B}$ are $m\nu\times m\nu$
matrices.

Suppose that equations (\ref{eq:CBAx1e}) hold. Then combining them
with proven identity (\ref{eq:CBAx1d}) we get $\left(s\mathsf{B}-\mathsf{A}\right)\mathsf{x}_{s}=0$
proving that expressions (\ref{eq:CBAx1e}) define an eigenvalue $s$
and an eigenvector $\mathsf{x}=\mathsf{x}_{s}$.

Suppose now that $\left(s\mathsf{B}-\mathsf{A}\right)\mathsf{x}=0$
where $\mathsf{x}\neq0$. Combing that with equations (\ref{eq:CBAx1b})
we obtain
\begin{gather}
x_{1}=sx_{0},\quad x_{2}=sx_{1}=s^{2}x_{0},\cdots,\quad x_{\nu-1}=s^{\nu-1}x_{0},\label{eq:CBAx2a}
\end{gather}
implying that
\begin{equation}
\mathsf{x}=\mathsf{x}_{s}=\left[\begin{array}{c}
x_{0}\\
sx_{0}\\
\vdots\\
s^{\nu-2}x_{0}\\
s^{\nu-1}x_{0}
\end{array}\right],\quad x_{0}\neq0,\label{eq:CBAx2b}
\end{equation}
 and 
\begin{equation}
\sum_{j=0}^{\nu-2}A_{j}x_{j}+\left(sA_{\nu}+A_{\nu-1}\right)x_{\nu-1}=A\left(s\right)x_{0}.\label{eq:CBAx2c}
\end{equation}
Using equations (\ref{eq:CBAx2b}) and identity (\ref{eq:CBAx1d})
we obtain
\begin{equation}
0=\left(s\mathsf{B}-\mathsf{A}\right)\mathsf{x}=\left(s\mathsf{B}-\mathsf{A}\right)\mathsf{x}_{s}=\left[\begin{array}{c}
0\\
0\\
\vdots\\
0\\
A\left(s\right)x_{0}
\end{array}\right].\label{eq:CBAx2d}
\end{equation}
 Equations (\ref{eq:CBAx2d}) readily imply $A\left(s\right)x_{0}=0$
and $\det\left\{ A\left(s\right)\right\} =0$ since $x_{0}\neq0$.
That completes the proof.
\end{proof}
\begin{rem}[characteristic polynomial degree]
\label{rem:char-pol-deg} Notice that according to Theorem \ref{thm:matpol-eigvec}
the characteristic polynomial $\det\left\{ A\left(s\right)\right\} $
for $m\times m$ matrix polynomial $A\left(s\right)$ has the degree
$m\nu$, whereas in linear case $s\mathbb{I}-A_{0}$ for $m\times m$
identity matrix $\mathbb{I}$ and $m\times m$ matrix $A_{0}$ the
characteristic polynomial $\det\left\{ s\mathbb{I}-A_{0}\right\} $
is of the degree $m$. This can be explained by observing that in
the non-linear case of $m\times m$ matrix polynomial $A\left(s\right)$
we are dealing effectively with many more $m\times m$ matrices $A$
than just a single matrix $A_{0}$.
\end{rem}

Another problem of our particular interest related to the theory of
matrix polynomials is eigenvalues and eigenvectors degeneracy and
consequently the existence of non-trivial Jordan blocks, that is Jordan
blocks of dimensions higher or equal to 2. The general theory addresses
this problem by introducing so-called ``Jordan chains'' which are
intimately related to the theory of system of differential equations
expressed as $A\left(\partial_{t}\right)x\left(t\right)=0$ and their
solutions of the form $x\left(t\right)=p\left(t\right)e^{st}$ where
$p\left(t\right)$ is a vector polynomial, see Appendix \ref{sec:dif-jord}
and \cite[I, II]{GoLaRo}, \cite[9]{Baum}. Avoiding the details of
Jordan chains developments we simply notice that an important to us
point of Theorem \ref{thm:matpol-eigvec} is that there is one-to-one
correspondence between solutions of equations $A\left(s\right)x=0$
and $\left(s\mathsf{B}-\mathsf{A}\right)\mathsf{x}=0$, and it has
the following immediate implication.
\begin{cor}[equality of the dimensions of eigenspaces]
\label{cor:dim-ker} Under the conditions of Theorem \ref{thm:matpol-eigvec}
for any eigenvalue $s_{0}$, that is $\det\left\{ A\left(s_{0}\right)\right\} =0$,
we have
\begin{equation}
\dim\left\{ \ker\left\{ s_{0}\mathsf{B}-\mathsf{A}\right\} \right\} =\dim\left\{ \ker\left\{ A\left(s_{0}\right)\right\} \right\} .\label{eq:CBAx2e}
\end{equation}
In other words, the geometric multiplicities of the eigenvalue $s_{0}$
associated with matrices $A\left(s_{0}\right)$ and $s_{0}\mathsf{B}-\mathsf{A}$
are equal. In view of identity (\ref{eq:CBAx2e}) the following inequality
holds for the (algebraic) multiplicity $m\left(s_{0}\right)$
\begin{equation}
m\left(s_{0}\right)\geq\dim\left\{ \ker\left\{ A\left(s_{0}\right)\right\} \right\} .\label{eq:CBAx2f}
\end{equation}
\end{cor}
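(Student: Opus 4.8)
The plan is to read the dimension equality (\ref{eq:CBAx2e}) directly off the one-to-one correspondence furnished by Theorem \ref{thm:matpol-eigvec}, after noting that this correspondence is not merely a set bijection but a \emph{linear} one; the inequality (\ref{eq:CBAx2f}) then follows by combining (\ref{eq:CBAx2e}) with the determinant identity $\det\{A(s)\}=\det\{s\mathsf{B}-\mathsf{A}\}$ from (\ref{eq:CBAx1g}) and the classical fact that the algebraic multiplicity of an eigenvalue dominates its geometric multiplicity.

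First, fix $s_{0}$ with $\det\{A\left(s_{0}\right)\}=0$ and consider the map $\iota:\mathbb{C}^{m}\rightarrow\mathbb{C}^{m\nu}$ sending $x_{0}$ to the vector $\mathsf{x}_{s_{0}}$ of the form (\ref{eq:CBAx1c}) with spectral parameter $s_{0}$. This map is manifestly linear, and it is injective since its first $m$-block recovers $x_{0}$. By the equivalence established in Theorem \ref{thm:matpol-eigvec} (specifically the polynomial identity (\ref{eq:CBAx1d}) together with the characterization (\ref{eq:CBAx1e})), a vector $\mathsf{x}\in\mathbb{C}^{m\nu}$ lies in $\ker\{s_{0}\mathsf{B}-\mathsf{A}\}$ if and only if $\mathsf{x}=\iota\left(x_{0}\right)$ for some $x_{0}\in\ker\{A\left(s_{0}\right)\}$. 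Hence $\iota$ restricts to a linear isomorphism of $\ker\{A\left(s_{0}\right)\}$ onto $\ker\{s_{0}\mathsf{B}-\mathsf{A}\}$, which gives (\ref{eq:CBAx2e}) and thereby the equality of the two geometric multiplicities of $s_{0}$.

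For the inequality (\ref{eq:CBAx2f}), recall from (\ref{eq:CBAx1g}) that $\det\{A\left(s\right)\}=\det\{s\mathsf{B}-\mathsf{A}\}$ as polynomials in $s$; therefore the (algebraic) multiplicity $m\left(s_{0}\right)$ of $s_{0}$ as a root of $\det\{A\left(s\right)\}$ equals the algebraic multiplicity of $s_{0}$ for the pencil $s\mathsf{B}-\mathsf{A}$. Under the hypothesis of Theorem \ref{thm:matpol-eigvec} that $\det\{\mathsf{B}\}=\det\{A_{\nu}\}\neq0$, we may factor $s\mathsf{B}-\mathsf{A}=\mathsf{B}\left(s\mathbb{I}-\mathsf{B}^{-1}\mathsf{A}\right)$, so that $s_{0}$ has the same algebraic and the same geometric multiplicity with respect to the single $m\nu\times m\nu$ matrix $\mathsf{B}^{-1}\mathsf{A}$; the standard Jordan-form fact (Appendix \ref{sec:jord-form}) that algebraic multiplicity is at least geometric multiplicity for a square matrix then yields $m\left(s_{0}\right)\geq\dim\{\ker\{s_{0}\mathsf{B}-\mathsf{A}\}\}$, and (\ref{eq:CBAx2e}) converts this into (\ref{eq:CBAx2f}).

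The argument is essentially bookkeeping on top of Theorem \ref{thm:matpol-eigvec}; the only point needing a little extra care is the inequality when $A_{\nu}$ is singular, so that $\mathsf{B}$ is singular and one cannot reduce the pencil to a single matrix by inversion. In that case I would instead invoke the Weierstrass canonical form of the regular pencil $s\mathsf{B}-\mathsf{A}$ (or, more elementarily, pass to a local coordinate $s=s_{0}+1/\mu$ near the finite eigenvalue $s_{0}$ and reduce to the matrix case for $\mu$) to conclude that algebraic multiplicity still dominates geometric multiplicity for a regular pencil. In the applications of interest here the relevant matrix polynomial $A_{\mathrm{B}}\left(s\right)$ of (\ref{eq:ebcha3akL}) is monic, so $\mathsf{B}=\mathbb{I}$ and this subtlety does not arise.
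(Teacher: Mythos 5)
Your proof is correct and follows essentially the same route as the paper, which presents the corollary as an immediate consequence of the one-to-one correspondence in Theorem \ref{thm:matpol-eigvec}: you merely make explicit that the correspondence $x_{0}\mapsto\mathsf{x}_{s_{0}}$ is a linear isomorphism between the two kernels, giving (\ref{eq:CBAx2e}), and then combine it with the determinant identity (\ref{eq:CBAx1g}) and the standard algebraic-versus-geometric multiplicity inequality to get (\ref{eq:CBAx2f}). Your extra care about singular $\mathsf{B}$ (via the Weierstrass form of the regular pencil, or the shortcut of noting that the columns of $\left(s\mathsf{B}-\mathsf{A}\right)$ restricted to a basis of the kernel are each divisible by $s-s_{0}$) fills in a point the paper leaves implicit, and is sound.
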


The next statement shows that if the geometric multiplicity of an
eigenvalue is strictly less than its algebraic one than there exist
non-trivial Jordan blocks, that is Jordan blocks of dimensions higher
or equal to 2.
\begin{thm}[non-trivial Jordan block]
\label{thm:Jord-block} Assuming notations introduced in Theorem
\ref{thm:matpol-eigvec} let us suppose that the multiplicity $m\left(s_{0}\right)$
of eigenvalue $s_{0}$ satisfies
\begin{equation}
m\left(s_{0}\right)>\dim\left\{ \ker\left\{ A\left(s_{0}\right)\right\} \right\} .\label{eq:CBAAx3a}
\end{equation}
Then the Jordan canonical form of companion polynomial $\mathsf{C}_{A}\left(s\right)=s\mathsf{B}-\mathsf{A}$
has a least one nontrivial Jordan block of the dimension exceeding
2.

In particular, if 
\begin{equation}
\dim\left\{ \ker\left\{ s_{0}\mathsf{B}-\mathsf{A}\right\} \right\} =\dim\left\{ \ker\left\{ A\left(s_{0}\right)\right\} \right\} =1,\label{eq:CBAAx3b}
\end{equation}
 and $m\left(s_{0}\right)\geq2$ then the Jordan canonical form of
companion polynomial $\mathsf{C}_{A}\left(s\right)=s\mathsf{B}-\mathsf{A}$
has exactly one Jordan block associated with eigenvalue $s_{0}$ and
its dimension is $m\left(s_{0}\right)$.
\end{thm}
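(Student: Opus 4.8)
The plan is to reduce the statement to an elementary counting fact about Jordan blocks, using the two bridges to the original matrix polynomial already established in the excerpt: the characteristic‑polynomial identity $\det\{s\mathsf{B}-\mathsf{A}\}=\det\{A(s)\}$ from Theorem \ref{thm:matpol-eigvec}, and the equality of kernel dimensions $\dim\{\ker\{s_{0}\mathsf{B}-\mathsf{A}\}\}=\dim\{\ker\{A(s_{0})\}\}$ from Corollary \ref{cor:dim-ker}. Once both multiplicities (algebraic and geometric) of the eigenvalue $s_{0}$ are known to be the same for the linearization $\mathsf{C}_{A}(s)=s\mathsf{B}-\mathsf{A}$ as for $A(s_{0})$, the conclusion is pure linear algebra.

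First I would fix the eigenvalue $s_{0}$, i.e. $\det\{A(s_{0})\}=0$, and record the two equalities. By Theorem \ref{thm:matpol-eigvec} the characteristic polynomial of $\mathsf{C}_{A}(s)$ coincides with $\det\{A(s)\}$, so the algebraic multiplicity of $s_{0}$ as a root of that characteristic polynomial equals $m(s_{0})$. By Corollary \ref{cor:dim-ker} the geometric multiplicity is preserved as well: set $g:=\dim\{\ker\{s_{0}\mathsf{B}-\mathsf{A}\}\}=\dim\{\ker\{A(s_{0})\}\}$. One sentence of care is needed here about the meaning of ``Jordan canonical form of $\mathsf{C}_{A}(s)$'': in the monic case $A_{\nu}=\mathbb{I}$ one has $\mathsf{B}=\mathsf{I}$, so $\mathsf{C}_{A}(s)=s\mathsf{I}-\mathsf{A}$ and the Jordan form in question is literally that of the companion matrix $\mathsf{A}$; in the regular case with $\det\{A_{\nu}\}\neq0$ the pencil is equivalent to $s\mathbb{I}-\mathsf{B}^{-1}\mathsf{A}$, which has the same Jordan structure, so one may work with $\mathsf{B}^{-1}\mathsf{A}$ throughout.

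Next I would invoke the standard description of the Jordan form of a square matrix $M$ at an eigenvalue $s_{0}$: the Jordan blocks attached to $s_{0}$ have sizes $d_{1}\geq d_{2}\geq\cdots\geq d_{g}\geq1$, where the \emph{number} of such blocks equals the geometric multiplicity $g=\dim\{\ker\{M-s_{0}\mathbb{I}\}\}$ and the \emph{sum} of their sizes equals the algebraic multiplicity, $m(s_{0})=d_{1}+\cdots+d_{g}$. Applying this with $M=\mathsf{A}$ (resp.\ $\mathsf{B}^{-1}\mathsf{A}$) and the identifications above, the hypothesis $m(s_{0})>\dim\{\ker\{A(s_{0})\}\}=g$ gives $d_{1}+\cdots+d_{g}>g$, which forces $d_{1}\geq2$; hence $\mathsf{C}_{A}(s)$ has at least one nontrivial Jordan block attached to $s_{0}$ (``dimension exceeding $2$'' in the statement should be read as ``dimension at least $2$''). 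For the sharper claim, assume in addition $\dim\{\ker\{s_{0}\mathsf{B}-\mathsf{A}\}\}=\dim\{\ker\{A(s_{0})\}\}=1$ and $m(s_{0})\geq2$; then $g=1$, so there is exactly one Jordan block attached to $s_{0}$, and since its single size $d_{1}$ equals $m(s_{0})$, that block has dimension $m(s_{0})$.

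I do not anticipate a real obstacle: the substantive work has already been done in Theorem \ref{thm:matpol-eigvec} and Corollary \ref{cor:dim-ker}, which transport both multiplicities from $A(s_{0})$ to the linearization. The only delicate point is the bookkeeping remark about $\mathsf{B}$ being invertible so that ``Jordan canonical form of $\mathsf{C}_{A}(s)$'' is literally a matrix Jordan form; restricting to the monic (or $\det\{A_{\nu}\}\neq0$) case, which is the case used everywhere in the body of the paper, removes any ambiguity.
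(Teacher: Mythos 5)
Your proof is correct and is essentially the argument the paper intends: the paper gives no details, asserting only that the result ``follows straightforwardly from the definition of the Jordan canonical form and its basic properties'' (plus the cyclicity remark and the proviso that $\mathsf{B}^{-1}$ exists), which is precisely the block-counting argument you make explicit by transporting the algebraic multiplicity via Theorem \ref{thm:matpol-eigvec} and the geometric multiplicity via Corollary \ref{cor:dim-ker}. Your reading of ``dimension exceeding $2$'' as ``dimension at least $2$'' is also the correct one, consistent with the paper's own definition of a nontrivial Jordan block.
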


The proof of Theorem \ref{thm:Jord-block} follows straightforwardly
from the definition of the Jordan canonical form and its basic properties.
Notice that if equations (\ref{eq:CBAAx3b}) hold that implies that
the eigenvalue $0$ is cyclic (nonderogatory) for matrix $A\left(s_{0}\right)$
and eigenvalue $s_{0}$ is cyclic (nonderogatory) for matrix $\mathsf{B}^{-1}\mathsf{A}$
provided $\mathsf{B}^{-1}$ exists, see Appendix  \ref{sec:co-mat}.

\section{Vector differential equations and the Jordan canonical form\label{sec:dif-jord}}

In this section we relate the vector ordinary differential equations
to the matrix polynomials reviewed in Appendix \ref{sec:matpol} following
to \cite[5.1, 5.7]{GoLaRo2}, \cite[II.8.3]{GoLaRo}, \cite[III.4]{Hale},
\cite[7.9]{MeyCD}.

Equation $A\left(s\right)x=0$ with polynomial matrix $A\left(s\right)$
defined by equations (\ref{eq:Aux1a}) corresponds to the following
$m$-vector $\nu$-th order ordinary differential
\begin{equation}
A\left(\partial_{t}\right)x\left(t\right)=0,\text{ where }A\left(\partial_{t}\right)=\sum_{j=0}^{\nu}A_{j}\partial_{t}^{j},\label{eq:Adtx1}
\end{equation}
where $A_{j}=A_{j}\left(t\right)$ are $m\times m$ matrices. Introducing
$m\nu$-column-vector function
\begin{equation}
Y\left(t\right)=\left[\begin{array}{c}
x\left(t\right)\\
\partial_{t}x\left(t\right)\\
\vdots\\
\partial_{t}^{\nu-2}x\left(t\right)\\
\partial_{t}^{\nu-1}x\left(t\right)
\end{array}\right]\label{eq:yxt1b}
\end{equation}
and under the assumption that matrix $A_{\nu}\left(t\right)$ is the
identity matrix the differential equation (\ref{eq:Adtx1}) can be
recast and the first order differential equation

\begin{equation}
\partial_{t}Y\left(t\right)=\mathsf{A}Y\left(t\right),\label{eq:yxt1ba}
\end{equation}
where $\mathsf{A}$ is $m\nu\times m\nu$ matrix defined by
\begin{gather}
\mathsf{A}=\mathsf{A}\left(t\right)=\left[\begin{array}{ccccc}
0 & \mathbb{I} & \cdots & 0 & 0\\
0 & 0 & \mathbb{I} & \cdots & 0\\
0 & 0 & 0 & \cdots & \vdots\\
\vdots & \vdots & \ddots & 0 & \mathbb{I}\\
-A_{0}\left(t\right) & -A_{1}\left(t\right) & \cdots & -A_{\nu-2}\left(t\right) & -A_{\nu-1}\left(t\right)
\end{array}\right],\quad A_{\nu}\left(t\right)=\mathbb{I}.\label{eq:yxt1bb}
\end{gather}

\subsection{Constant coefficients case}

Let us consider an important special case of equation (\ref{eq:Adtx1})
when matrices $A_{j}$ are $m\times m$ that do not depended on $t$.
Then equation (\ref{eq:Adtx1}) can be recast as
\begin{equation}
\mathsf{B}\partial_{t}Y\left(t\right)=\mathsf{A}Y\left(t\right),\label{eq:yxt1a}
\end{equation}
where $\mathsf{A}$ and $\mathsf{B}$ are $m\nu\times m\nu$ companion
matrices defined by equations (\ref{eq:CBA1b}) and

In the case when $A_{\nu}$ is an invertible $m\times m$ matrix equation
(\ref{eq:yxt1a}) can be recast further as
\begin{equation}
\partial_{t}Y\left(t\right)=\dot{\mathsf{A}}Y\left(t\right),\label{eq:yxt1c}
\end{equation}
where
\begin{gather}
\dot{\mathsf{A}}=\left[\begin{array}{ccccc}
0 & \mathbb{I} & \cdots & 0 & 0\\
0 & 0 & \mathbb{I} & \cdots & 0\\
0 & 0 & 0 & \cdots & \vdots\\
\vdots & \vdots & \ddots & 0 & \mathbb{I}\\
-\dot{A}_{0} & -\dot{A}_{1} & \cdots & -\dot{A}_{\nu-2} & -\dot{A}_{\nu-1}
\end{array}\right],\quad\dot{A}_{j}=A_{\nu}^{-1}A_{j},\quad0\leq\nu-1.\label{eq:yxt1d}
\end{gather}
Notice one can interpret equation (\ref{eq:yxt1c}) as particular
case of equation (\ref{eq:yxt1a}) where matrices $A_{\nu}$ and $\mathsf{B}$
are identity matrices of the respective dimensions $m\times m$ and
$m\nu\times m\nu$, and that polynomial matrix $A\left(s\right)$
defined by equations (\ref{eq:Aux1a}) becomes monic matrix polynomial
$\dot{A}\left(s\right)$, that is
\begin{gather}
\dot{A}\left(s\right)=\mathbb{I}s^{\nu}+\sum_{j=0}^{\nu-1}\dot{A}_{j}s^{j},\quad\dot{A}_{j}=A_{\nu}^{-1}A_{j},\quad0\leq\nu-1.\label{eq:yxt1e}
\end{gather}
Notice that in view of equation (\ref{eq:yxt1b}) one recovers $x\left(t\right)$
from $Y\left(t\right)$ by the following formula
\begin{equation}
x\left(t\right)=P_{1}Y\left(t\right),\quad P_{1}=\left[\begin{array}{ccccc}
\mathbb{I} & 0 & \cdots & 0 & 0\end{array}\right],\label{eq:yxt2b}
\end{equation}
where $P_{1}$ evidently is $m\times m\nu$ matrix.

Observe also that, \cite[Prop. 5.1.2]{GoLaRo2}, \cite[14]{LanTsi}
\begin{gather}
\left[\dot{A}\left(s\right)\right]^{-1}=P_{1}\left[\mathbb{I}s-\dot{\mathsf{A}}\right]^{-1}R_{1},\quad P_{1}=\left[\begin{array}{ccccc}
\mathbb{I} & 0 & \cdots & 0 & 0\end{array}\right],\quad R_{1}=\left[\begin{array}{c}
0\\
0\\
\vdots\\
0\\
\mathbb{I}
\end{array}\right],\label{eq:yxt2a}
\end{gather}
where $P_{1}$ and $R_{1}$ evidently respectively $m\times m\nu$
and $m\nu\times m$ matrices.

The general form for the solution to vector differential equation
(\ref{eq:yxt1c}) is
\begin{equation}
Y\left(t\right)=\exp\left\{ \dot{\mathsf{A}}t\right\} Y_{0},\quad Y_{0}\in\mathbb{C}^{m\nu}.\label{eq:yxt2c}
\end{equation}
Then using the formulas (\ref{eq:yxt2b}), (\ref{eq:yxt2c}) and Proposition
\ref{prop:gen-eig} we arrive the following statement.
\begin{prop}[solution to the vector differential equation]
\label{prop:dif-sol-g} Let $\dot{\mathsf{A}}$ be $m\nu\times m\nu$
companion matrix defined by equations (\ref{eq:yxt1d}), $\zeta_{1},\ldots,\zeta_{p}$
be its distinct eigenvalues, and $M_{\zeta_{1}},\ldots,M_{\zeta_{p}}$
be the corresponding generalized eigenspaces of the corresponding
dimensions $r\left(\zeta_{j}\right)$, $1\leq j\leq p$. Then the
$m\nu$ column-vector solution $Y\left(t\right)$ to differential
equation (\ref{eq:yxt1c}) is of the form
\begin{gather}
Y\left(t\right)=\exp\left\{ \dot{\mathsf{A}}t\right\} Y_{0}=\sum_{j=1}^{p}e^{\zeta_{j}t}p_{j}\left(t\right),\quad Y_{0}=\sum_{j=1}^{p}Y_{0,j},\quad Y_{0,j}\in M_{\zeta_{j}},\label{eq:yxt2d}
\end{gather}
where $m\nu$-column-vector polynomials $p_{j}\left(t\right)$ satisfy
\begin{gather}
p_{j}\left(t\right)=\sum_{k=0}^{r\left(\zeta_{j}\right)-1}\frac{t^{k}}{k!}\left(\dot{\mathsf{A}}-\zeta_{j}\mathbb{I}\right)^{k}Y_{0,j},\quad1\leq j\leq p.\label{eq:yxt2e}
\end{gather}
Consequently, the general $m$-column-vector solution $x\left(t\right)$
to differential equation (\ref{eq:Adtx1}) is of the form
\begin{gather}
x\left(t\right)=\sum_{j=1}^{p}e^{\zeta_{j}t}P_{1}p_{j}\left(t\right),\quad P_{1}=\left[\begin{array}{ccccc}
\mathbb{I} & 0 & \cdots & 0 & 0\end{array}\right].\label{eq:yxt2f}
\end{gather}
\end{prop}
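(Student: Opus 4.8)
The plan is to reduce the claim entirely to Proposition \ref{prop:gen-eig} together with the two elementary identities (\ref{eq:yxt2b}) and (\ref{eq:yxt2c}) recorded just above it. First I would recall that the constant-coefficient first-order system (\ref{eq:yxt1c}), $\partial_{t}Y=\dot{\mathsf{A}}Y$, has general solution $Y\left(t\right)=\exp\left\{ \dot{\mathsf{A}}t\right\} Y_{0}$ with $Y_{0}=Y\left(0\right)$ ranging over all of $\mathbb{C}^{m\nu}$; this is the standard matrizant representation, already stated as (\ref{eq:yxt2c}). Since $A_{\nu}$ is invertible, the map $x\mapsto Y=\left(x,\partial_{t}x,\ldots,\partial_{t}^{\nu-1}x\right)^{\mathsf{T}}$ of (\ref{eq:yxt1b}) is a bijection between the solutions of the $\nu$-th order equation (\ref{eq:Adtx1}) and the solutions of (\ref{eq:yxt1c}), with inverse $x=P_{1}Y$ as in (\ref{eq:yxt2b}); this equivalence --- which requires only checking that $\dot{A}_{j}=A_{\nu}^{-1}A_{j}$ is well defined and that differentiating the block components of $Y$ reproduces the companion structure (\ref{eq:yxt1d}) --- is what lets us transfer a description of all $Y$ into a description of all $x$.

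Next I would apply Proposition \ref{prop:gen-eig} verbatim, taking the matrix called $A$ there to be the companion matrix $\dot{\mathsf{A}}$ and its distinct eigenvalues to be $\zeta_{1},\ldots,\zeta_{p}$. That proposition gives $\mathbb{C}^{m\nu}=M_{\zeta_{1}}\oplus\cdots\oplus M_{\zeta_{p}}$, so every initial vector decomposes uniquely as $Y_{0}=\sum_{j=1}^{p}Y_{0,j}$ with $Y_{0,j}\in M_{\zeta_{j}}$, and it furnishes at once the expansion $\exp\left\{ \dot{\mathsf{A}}t\right\} Y_{0}=\sum_{j=1}^{p}e^{\zeta_{j}t}p_{j}\left(t\right)$ with $p_{j}\left(t\right)=\sum_{k=0}^{r\left(\zeta_{j}\right)-1}\frac{t^{k}}{k!}\left(\dot{\mathsf{A}}-\zeta_{j}\mathbb{I}\right)^{k}Y_{0,j}$, which is exactly (\ref{eq:yxt2d})--(\ref{eq:yxt2e}). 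The truncation of $p_{j}$ at order $r\left(\zeta_{j}\right)-1$ is automatic rather than something to be proved: by definition $M_{\zeta_{j}}=\mathcal{N}\left[\left(\dot{\mathsf{A}}-\zeta_{j}\mathbb{I}\right)^{r\left(\zeta_{j}\right)}\right]$, so $\left(\dot{\mathsf{A}}-\zeta_{j}\mathbb{I}\right)^{k}Y_{0,j}=0$ for every $k\geq r\left(\zeta_{j}\right)$, and the series (\ref{eq:Mgeneig1d}) restricted to $M_{\zeta_{j}}$ collapses to this finite sum.

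Finally I would obtain the formula for $x\left(t\right)$ by applying the projection $P_{1}$ to $Y\left(t\right)$: by (\ref{eq:yxt2b}) and linearity, $x\left(t\right)=P_{1}Y\left(t\right)=\sum_{j=1}^{p}e^{\zeta_{j}t}P_{1}p_{j}\left(t\right)$, which is (\ref{eq:yxt2f}); conversely every solution of (\ref{eq:Adtx1}) has this form because it equals $P_{1}Y\left(t\right)$ for the unique $Y$ solving (\ref{eq:yxt1c}) with initial data $Y\left(0\right)=\left(x\left(0\right),\ldots,\partial_{t}^{\nu-1}x\left(0\right)\right)^{\mathsf{T}}$. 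I do not anticipate a genuine obstacle: the whole argument is a substitution into Proposition \ref{prop:gen-eig}, and the only step deserving a careful sentence --- not a computation --- is making explicit the equivalence between the $\nu$-th order equation (\ref{eq:Adtx1}) and its companion first-order form (\ref{eq:yxt1c}).
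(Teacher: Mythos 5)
Your proposal is correct and follows essentially the same route as the paper, which derives the proposition directly from the matrizant representation (\ref{eq:yxt2c}), the projection formula (\ref{eq:yxt2b}), and Proposition \ref{prop:gen-eig} applied to the companion matrix $\dot{\mathsf{A}}$. The only addition is that you spell out the equivalence between (\ref{eq:Adtx1}) and its first-order companion form (\ref{eq:yxt1c}) and the truncation of $p_{j}$ at order $r\left(\zeta_{j}\right)-1$, details the paper leaves implicit.
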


Notice that $\chi_{\dot{\mathsf{A}}}\left(s\right)=\det\left\{ s\mathbb{I}-\dot{\mathsf{A}}\right\} $
is the characteristic function of the matrix $\dot{\mathsf{A}}$ then
using notations of Proposition \ref{prop:dif-sol-g} we obtain
\begin{equation}
\chi_{\dot{\mathsf{A}}}\left(s\right)=\prod_{j=1}^{p}\left(s-\zeta_{j}\right)^{r\left(\zeta_{j}\right)}.\label{eq:yxt3a}
\end{equation}
Notice also that for any values of complex-valued coefficients $b_{k}$
we have
\begin{gather}
\left(\partial_{t}-\zeta_{j}\right)^{r\left(\zeta_{j}\right)}\left[e^{\zeta_{j}t}p_{j}\left(t\right)\right]=0,\quad p_{j}\left(t\right)=\sum_{k=0}^{r\left(\zeta_{j}\right)-1}b_{k}t^{k},\label{eq:yxt3b}
\end{gather}
implying together with representation (\ref{eq:yxt3a})
\begin{gather}
\chi_{\dot{\mathsf{A}}}\left(\partial_{t}\right)\left[e^{\zeta_{j}t}p_{j}\left(t\right)\right]=0,\quad p_{j}\left(t\right)=\sum_{k=0}^{r\left(\zeta_{j}\right)-1}b_{k}t^{k}.\label{eq:yxt3c}
\end{gather}
Combing now Proposition \ref{prop:dif-sol-g} with equation (\ref{eq:yxt3c})
we obtain the following statement.
\begin{cor}[property of a solution to the vector differential equation]
\label{cor:dif-sol-g} Let $x\left(t\right)$ be the general $m$-column-vector
solution $x\left(t\right)$ to differential equation (\ref{eq:Adtx1}).
Then $x\left(t\right)$ satisfies
\begin{equation}
\chi_{\dot{\mathsf{A}}}\left(\partial_{t}\right)x\left(t\right)=0.\label{eq:yxt3d}
\end{equation}
\end{cor}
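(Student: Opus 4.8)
The plan is to deduce the statement from the explicit form of the general solution already recorded in Proposition~\ref{prop:dif-sol-g}, combined with the elementary annihilation identity~(\ref{eq:yxt3c}). First I would invoke Proposition~\ref{prop:dif-sol-g}, which writes any solution $x\left(t\right)$ of the vector equation~(\ref{eq:Adtx1}) as a finite sum $x\left(t\right)=\sum_{j=1}^{p}e^{\zeta_{j}t}P_{1}p_{j}\left(t\right)$, where $\zeta_{1},\ldots,\zeta_{p}$ are the distinct eigenvalues of the companion matrix $\dot{\mathsf{A}}$ and each $m\nu$-column-vector polynomial $p_{j}\left(t\right)$ has degree at most $r\left(\zeta_{j}\right)-1$. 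Since $P_{1}$ is a constant linear map, $q_{j}\left(t\right):=P_{1}p_{j}\left(t\right)$ is again a (vector-valued) polynomial of degree at most $r\left(\zeta_{j}\right)-1$, so no part of the degree bound is lost under projection, and $x\left(t\right)=\sum_{j=1}^{p}e^{\zeta_{j}t}q_{j}\left(t\right)$.

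Next I would apply the scalar constant-coefficient differential operator $\chi_{\dot{\mathsf{A}}}\left(\partial_{t}\right)$, acting componentwise on vector-valued functions. By~(\ref{eq:yxt3a}) it factors as $\chi_{\dot{\mathsf{A}}}\left(\partial_{t}\right)=\prod_{j=1}^{p}\left(\partial_{t}-\zeta_{j}\right)^{r\left(\zeta_{j}\right)}$, and these factors commute, so for each fixed index $\ell$ I may let the factor $\left(\partial_{t}-\zeta_{\ell}\right)^{r\left(\zeta_{\ell}\right)}$ act first on the term $e^{\zeta_{\ell}t}q_{\ell}\left(t\right)$. By the elementary fact underlying~(\ref{eq:yxt3b})--(\ref{eq:yxt3c}), namely that $\left(\partial_{t}-\zeta\right)^{r}$ annihilates $e^{\zeta t}$ times any polynomial of degree strictly less than $r$, that term is killed. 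Summing over $\ell$ by linearity gives $\chi_{\dot{\mathsf{A}}}\left(\partial_{t}\right)x\left(t\right)=0$, which is exactly~(\ref{eq:yxt3d}).

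I do not expect any genuine obstacle here: the entire content is carried by Proposition~\ref{prop:dif-sol-g}. The only point deserving a line of care is the bookkeeping that a scalar constant-coefficient operator acting entrywise commutes with the constant projection $P_{1}$ and preserves the degree of polynomial coefficients, so that the per-term annihilation really does pass through the sum. As a remark I would also note a shorter, more structural route: from $\partial_{t}Y\left(t\right)=\dot{\mathsf{A}}Y\left(t\right)$ one has $\chi_{\dot{\mathsf{A}}}\left(\partial_{t}\right)Y\left(t\right)=\chi_{\dot{\mathsf{A}}}\left(\dot{\mathsf{A}}\right)Y\left(t\right)=0$ by the Cayley--Hamilton theorem, and then $\chi_{\dot{\mathsf{A}}}\left(\partial_{t}\right)x\left(t\right)=P_{1}\chi_{\dot{\mathsf{A}}}\left(\partial_{t}\right)Y\left(t\right)=0$ using $x\left(t\right)=P_{1}Y\left(t\right)$ from~(\ref{eq:yxt2b}); this bypasses the eigenspace decomposition entirely.
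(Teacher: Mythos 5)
Your proof is correct and follows essentially the same route as the paper: it combines the explicit solution formula of Proposition \ref{prop:dif-sol-g} with the factorization (\ref{eq:yxt3a}) and the per-term annihilation identity (\ref{eq:yxt3b})--(\ref{eq:yxt3c}). Your closing remark via Cayley--Hamilton, namely $\chi_{\dot{\mathsf{A}}}\left(\partial_{t}\right)Y\left(t\right)=\chi_{\dot{\mathsf{A}}}\left(\dot{\mathsf{A}}\right)Y\left(t\right)=0$ followed by projection with $P_{1}$ from (\ref{eq:yxt2b}), is also valid and shorter, but the core argument you give is the paper's own.
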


\section{Floquet theory\label{sec:floquet}}

We provide here a concise review of the Floquet theory following to
\cite[III]{DalKre}, \cite[III.7]{Hale} and \cite[II.2]{YakSta}.
The primary subject of the Floquet theory is the general form of solutions
to the ordinary differential equations with periodic coefficients.
With that in mind suppose that: (i) $z$ is real valued variable,
(ii) $x\left(z\right)$ is $n$-vector valued function of $z$, (iii)
$A\left(z\right)$ is $n\times n$ matrix valued $\varsigma$-periodic
function of $z$ and consider the following homogeneous linear periodic
system:
\begin{equation}
\partial_{z}x\left(z\right)=A\left(z\right)x\left(z\right),\quad A\left(z+\varsigma\right)=A\left(z\right),\quad\varsigma>0.\label{eq:floq1a}
\end{equation}
We would like to give a complete characterization of the general structure
of the solutions to equation (\ref{eq:floq1a}). We start with the
following statement showing how to define the logarithm $B$ of a
matrix $C$ so that $C=\mathrm{e}^{B}$.
\begin{lem}[logarithm of a matrix]
\label{lem:log-mat} Let $C$ be is $n\times n$ matrix with $\det\left\{ C\right\} \neq0$.
Suppose $C=Z^{-1}JZ$ where $J$ is Jordan canonical form of $C$
as described in Proposition \ref{prop:jor-can}. Then using the block
representation (\ref{eq:Jork1b}) for $J$, that is
\begin{equation}
J=J=\mathrm{diag}\,\left\{ J_{n_{1}}\left(\zeta_{1}\right),J_{n_{2}}\left(\zeta_{2}\right),\ldots,J_{n_{r}}\left(\zeta_{r}\right)\right\} ,\quad n_{1}+n_{2}+\cdots n_{q}=n,\label{eq:floq1b}
\end{equation}
we decompose $J$ into its diagonal and nilpotent components:
\begin{equation}
J=\mathrm{diag}\,\left\{ \lambda_{1}\mathbb{I}_{n_{1}},\lambda_{2}\mathbb{I}_{n_{2}},\ldots,\lambda_{q}\mathbb{I}_{n_{q}}\right\} +K\label{eq:floq1c}
\end{equation}
where
\begin{gather}
D=\mathrm{diag}\,\left\{ \lambda_{1}\mathbb{I}_{n_{1}},\lambda_{2}\mathbb{I}_{n_{2}},\ldots,\lambda_{q}\mathbb{I}_{n_{q}}\right\} ,\quad K=\mathrm{diag}\,\left\{ K_{n_{1}},K_{n_{2}},\ldots,K_{n_{q}}\right\} ,\label{eq:floq1d}\\
K_{n_{j}}=J_{n_{j}}\left(\lambda_{j}\right)-\lambda_{j}\mathbb{I}_{n_{j}},\quad1\leq j\leq q.\nonumber 
\end{gather}
Then let $\ln\left(\ast\right)$ be a branch of the logarithm and
let
\begin{equation}
H=\ln J=\mathrm{diag}\,\left\{ \ln\left(\lambda_{1}\right)\mathbb{I}_{n_{1}},\ln\left(\lambda_{2}\right)\mathbb{I}_{n_{2}},\ldots,\ln\left(\lambda_{q}\right)\mathbb{I}_{n_{q}}\right\} +S\label{eq:floq1e}
\end{equation}
where $\mathbb{I}_{n_{j}}$ are identity matrices of identified dimensions
and
\begin{equation}
S=\mathrm{diag}\,\left\{ S_{n_{1}},S_{n_{2}},\ldots,S_{n_{q}}\right\} ,\quad S_{n_{j}}=\sum_{m=1}^{n_{j}-1}\left(-1\right)^{m-1}\frac{1}{m\lambda_{j}^{m}}K_{n_{j}}^{m},\quad1\leq j\leq q.\label{eq:floq1g}
\end{equation}
Then
\begin{equation}
C=\mathrm{e}^{B},\quad B=\ln C=Z^{-1}HZ,\label{eq:floq1h}
\end{equation}
where matrix $H$ is defined by equation (\ref{eq:floq1e}).
\end{lem}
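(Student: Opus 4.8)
The plan is to reduce everything to a single Jordan block, since conjugation by $Z$ and the block-diagonal structure commute with exponentiation: if $H=Z^{-1}\,\mathrm{diag}\{H_{n_1},\dots,H_{n_q}\}\,Z$ then $\mathrm{e}^{B}=Z^{-1}\,\mathrm{diag}\{\mathrm{e}^{H_{n_1}},\dots,\mathrm{e}^{H_{n_q}}\}\,Z$, so it suffices to verify $\mathrm{e}^{H_{n_j}}=J_{n_j}(\lambda_j)$ for each $j$. Fix one block; drop the subscript $j$ and write $J_n(\lambda)=\lambda\mathbb{I}_n+K_n$ with $K_n$ nilpotent of index $n$, as in equations (\ref{eq:Jork1a})--(\ref{eq:Jord1a}). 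The candidate logarithm is $H_n=\ln(\lambda)\mathbb{I}_n+S_n$ with $S_n=\sum_{m=1}^{n-1}(-1)^{m-1}\frac{1}{m\lambda^m}K_n^m$, which is exactly the truncation of the formal power series $\ln(1+u)=\sum_{m\ge1}(-1)^{m-1}u^m/m$ evaluated at $u=\lambda^{-1}K_n$; the series truncates because $K_n^n=0$, so there are no convergence issues.

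First I would record that $\ln(\lambda)\mathbb{I}_n$ and $S_n$ commute (both are polynomials in $K_n$), so $\mathrm{e}^{H_n}=\mathrm{e}^{\ln(\lambda)\mathbb{I}_n}\,\mathrm{e}^{S_n}=\lambda\,\mathrm{e}^{S_n}$. Thus the claim becomes $\mathrm{e}^{S_n}=\mathbb{I}_n+\lambda^{-1}K_n$. Both sides are polynomials in the single nilpotent matrix $N:=\lambda^{-1}K_n$, which satisfies $N^n=0$. The cleanest way to finish is to appeal to the formal power series identity $\exp(\ln(1+u))=1+u$ in $\mathbb{Q}[[u]]$: reducing this identity modulo $u^n$ gives a polynomial identity $\exp\big(\sum_{m=1}^{n-1}(-1)^{m-1}u^m/m\big)\equiv 1+u \pmod{u^n}$ in $\mathbb{Q}[u]/(u^n)$, and then substituting $u\mapsto N$ (a ring homomorphism $\mathbb{Q}[u]/(u^n)\to \mathbb{C}^{n\times n}$, well defined since $N^n=0$) yields $\mathrm{e}^{S_n}=\mathbb{I}_n+N$ exactly. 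Multiplying by $\lambda$ recovers $J_n(\lambda)$. Assembling the blocks and conjugating back gives $\mathrm{e}^{B}=Z^{-1}JZ=C$, and $\det\{C\}\neq0$ guarantees every $\lambda_j\neq0$, so each $\ln(\lambda_j)$ and each $\lambda_j^{-m}$ in (\ref{eq:floq1g}) makes sense.

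The only delicate point is the justification of the formal identity $\exp(\ln(1+u))=1+u$ and its legitimate reduction mod $u^n$; everything else is bookkeeping with commuting polynomials in a nilpotent matrix. If one prefers to avoid invoking formal power series, the alternative is a direct computation: expand $\mathrm{e}^{S_n}=\sum_{k\ge0}S_n^k/k!$ (a finite sum since $S_n^n=0$, as $S_n$ is a polynomial in $K_n$ with zero constant term), collect the coefficient of $K_n^p$ for each $1\le p\le n-1$, and check it vanishes for $p\ge2$ and equals $\lambda^{-1}$ for $p=1$; this coefficient identity is precisely the classical combinatorial fact underlying $\exp\circ\ln=\mathrm{id}$, so I would cite it from \cite[10.5]{BernM} or \cite[7.9]{MeyCD} rather than re-derive it. I expect the referee-facing write-up to take the formal-power-series route as the shortest rigorous argument, with the per-block reduction stated first to make the structure transparent.
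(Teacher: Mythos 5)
Your proposal is correct. Note that the paper itself supplies no proof of this lemma: it is stated as a standard fact (drawn from the Floquet-theory literature cited there), followed only by the remark that the expression for $S_{n_{j}}$ originates in the series $\ln\left(1+s\right)=\sum_{m\geq1}\left(-1\right)^{m-1}\frac{s^{m}}{m}$, which terminates because $K_{n_{j}}$ is nilpotent --- exactly the idea you develop. Your write-up (reduction to a single Jordan block via conjugation and block-diagonal structure, commutation of $\ln\left(\lambda\right)\mathbb{I}_{n}$ with $S_{n}$, and the formal identity $\exp\left(\ln\left(1+u\right)\right)=1+u$ reduced modulo $u^{n}$ and evaluated at the nilpotent $N=\lambda^{-1}K_{n}$, with $\lambda_{j}\neq0$ guaranteed by $\det\left\{ C\right\} \neq0$) is a complete and rigorous filling-in of that sketch, so it is entirely consistent with the paper's intent.
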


Note that matrix $S$ in equations (\ref{eq:floq1e}) and (\ref{eq:floq1g})
is associated with the nilpotent part of Jordan canonical form $J$.
The expression for $S_{n_{j}}$ originates in the series
\begin{equation}
\ln\left(1+s\right)=\sum_{m=1}^{\infty}\left(-1\right)^{m-1}\frac{1}{m}s^{m}=s-\frac{s^{2}}{2}+\frac{s^{3}}{3}+\cdots,\label{eq:floq2a}
\end{equation}
and it is a finite sum since $K_{n_{j}}$ is a nilpotent matrix such
that 
\begin{equation}
K_{n_{j}}^{m}=0,\quad m\geq n_{j},\quad1\leq j\leq q.\label{eq:floq2b}
\end{equation}

An $n\times n$ matrix $\Phi\left(z\right)$ is called \emph{matrizant
(matriciant)} of equation (\ref{eq:floq1a}) if it satisfies the following
equation:
\begin{equation}
\partial_{z}\Phi\left(z\right)=A\left(z\right)\Phi\left(z\right),\quad\Phi\left(0\right)=\mathbb{I},\quad A\left(z+\varsigma\right)=A\left(z\right),\quad\varsigma>0,\label{eq:floq2c}
\end{equation}
where $\mathbb{I}$ is the $n\times n$ identity matrix. Matrix $\Phi\left(z\right)$
is also called \emph{principal fundamental matrix} solution to equation
(\ref{eq:floq1a}). Evidently $x\left(z\right)=\Phi\left(z\right)x_{0}$
is the a solution to equation (\ref{eq:floq1a}) with the initial
condition $x\left(0\right)=x_{0}$. Using the fundamental solution
$\Phi\left(z\right)$ we can represent any matrix solution $\Psi\left(z\right)$
to equation (\ref{eq:floq1a}) based on its initial values as follows
\begin{equation}
\partial_{z}\Psi\left(z\right)=A\left(z\right)\Psi\left(z\right),\quad\Psi\left(z\right)=\Phi\left(z\right)\Psi\left(0\right).\label{eq:floq2d}
\end{equation}
In the case of $\varsigma$-periodic matrix function $A\left(z\right)$
the matrix function $\Psi\left(z\right)=\Phi\left(z+\varsigma\right)$
is evidently a solution to equation (\ref{eq:floq2d}) and consequently
\begin{equation}
\Phi\left(z+\varsigma\right)=\Phi\left(z\right)\Phi\left(\varsigma\right).\label{eq:floq2ca}
\end{equation}
It turns out that matrix $M_{\varsigma}=\Phi\left(\varsigma\right)$
called the \emph{monodromy matrix} is of particular importance for
the analysis of solutions to equation (\ref{eq:floq2c}) with $\varsigma$-periodic
matrix function $A\left(z\right)$. 

The monodromy matrix is integrated into the formulation of the main
statement of the Floquet theory describing the structure of solutions
to equation (\ref{eq:floq2d}) for $\varsigma$-periodic matrix function
$A\left(z\right)$.
\begin{thm}[Floquet]
\label{thm:floquet} Suppose that $A\left(z\right)$ is a $\varsigma$-periodic
continuous function of $z$. Let $\Phi\left(z\right)$ be the matrizant
of equation (\ref{eq:floq2c}) and let $M_{\varsigma}=\Phi\left(\varsigma\right)$
be the corresponding monodromy matrix. Using the statement of Lemma
\ref{lem:log-mat} we introduce matrix $\Gamma$ defined by
\begin{equation}
\Gamma=\frac{1}{\varsigma}\ln M_{\varsigma}=\frac{1}{\varsigma}\ln\Phi\left(\varsigma\right),\text{implying }M_{\varsigma}=\Phi\left(\varsigma\right)=\mathrm{e}^{\Gamma\varsigma}.\label{eq:floq2f}
\end{equation}
Then matrizant $\Phi\left(z\right)$ satisfies the following equation
called Floquet representation
\begin{equation}
\Phi\left(z\right)=P\left(z\right)\mathrm{e}^{\Gamma z},\quad P\left(z+\varsigma\right)=P\left(z\right),\quad P\left(0\right)=\mathbb{I},\label{eq:floq2g}
\end{equation}
where $P\left(z\right)$ is a differentiable $\varsigma$-periodic
matrix function of $z$. 
\end{thm}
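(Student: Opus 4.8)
The plan is to follow the classical Floquet construction: exhibit the periodic factor $P\left(z\right)$ explicitly as $P\left(z\right)=\Phi\left(z\right)\mathrm{e}^{-\Gamma z}$ and then verify, one at a time, the three properties asserted for it — the normalization $P\left(0\right)=\mathbb{I}$, differentiability, and $\varsigma$-periodicity — after which rearranging the definition yields the Floquet representation $\Phi\left(z\right)=P\left(z\right)\mathrm{e}^{\Gamma z}$.

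First I would establish that the monodromy matrix $M_{\varsigma}=\Phi\left(\varsigma\right)$ is invertible, so that Lemma~\ref{lem:log-mat} genuinely applies and $\Gamma=\frac{1}{\varsigma}\ln M_{\varsigma}$ with $\mathrm{e}^{\Gamma\varsigma}=M_{\varsigma}$ is well defined. This follows from the Abel--Liouville identity: from $\partial_{z}\Phi=A\left(z\right)\Phi$ one deduces $\partial_{z}\det\Phi\left(z\right)=\operatorname{tr}\!\left(A\left(z\right)\right)\det\Phi\left(z\right)$, hence $\det\Phi\left(z\right)=\exp\!\left(\int_{0}^{z}\operatorname{tr}A\left(s\right)\,\mathrm{d}s\right)\neq0$ for every $z$; in particular $\det M_{\varsigma}\neq0$.

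Next I would set $P\left(z\right):=\Phi\left(z\right)\mathrm{e}^{-\Gamma z}$. The normalization $P\left(0\right)=\Phi\left(0\right)\mathrm{e}^{0}=\mathbb{I}$ is immediate from $\Phi\left(0\right)=\mathbb{I}$, and $P$ is differentiable, being a product of the differentiable matrizant $\Phi\left(z\right)$ with the entire matrix function $z\mapsto\mathrm{e}^{-\Gamma z}$. For periodicity I would invoke the composition identity $\Phi\left(z+\varsigma\right)=\Phi\left(z\right)\Phi\left(\varsigma\right)$ already recorded in equation~(\ref{eq:floq2ca}), together with $\Phi\left(\varsigma\right)=M_{\varsigma}=\mathrm{e}^{\Gamma\varsigma}$, to compute
\[
P\left(z+\varsigma\right)=\Phi\left(z+\varsigma\right)\mathrm{e}^{-\Gamma\left(z+\varsigma\right)}=\Phi\left(z\right)\mathrm{e}^{\Gamma\varsigma}\mathrm{e}^{-\Gamma z}\mathrm{e}^{-\Gamma\varsigma}=\Phi\left(z\right)\mathrm{e}^{-\Gamma z}=P\left(z\right),
\]
where the middle step uses that $\mathrm{e}^{\Gamma\varsigma}$, $\mathrm{e}^{-\Gamma z}$ and $\mathrm{e}^{-\Gamma\varsigma}$ all commute, being power series in the single matrix $\Gamma$, and that $\mathrm{e}^{\Gamma\varsigma}\mathrm{e}^{-\Gamma\varsigma}=\mathbb{I}$. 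Rewriting $P\left(z\right)=\Phi\left(z\right)\mathrm{e}^{-\Gamma z}$ as $\Phi\left(z\right)=P\left(z\right)\mathrm{e}^{\Gamma z}$ then completes the argument.

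The only genuinely delicate point is the well-definedness of $\Gamma$: one needs $M_{\varsigma}$ invertible to take a matrix logarithm via Lemma~\ref{lem:log-mat}, and one must remember that this logarithm depends on a choice of branch, so $\Gamma$ — and hence $P\left(z\right)$ — is not unique. This is harmless, since the theorem only claims the existence of such a representation, and any admissible branch produces one. Everything else is routine bookkeeping with commuting exponentials of the fixed matrix $\Gamma$.
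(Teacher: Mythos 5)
Your proposal is correct and follows essentially the same route as the paper: define $P\left(z\right)=\Phi\left(z\right)\mathrm{e}^{-\Gamma z}$ and verify periodicity via the composition identity $\Phi\left(z+\varsigma\right)=\Phi\left(z\right)\Phi\left(\varsigma\right)$ together with $\Phi\left(\varsigma\right)=\mathrm{e}^{\Gamma\varsigma}$, then read off $P\left(0\right)=\mathbb{I}$. Your additional Abel--Liouville check that $\det M_{\varsigma}\neq0$, which justifies applying Lemma \ref{lem:log-mat}, is a sensible bit of extra care that the paper leaves implicit.
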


\begin{proof}
Let us define matrix $P\left(z\right)$ by the following equation
\begin{equation}
P\left(z\right)=\Phi\left(z\right)\mathrm{e}^{-\Gamma z}.\label{eq:floq3a}
\end{equation}
Then combining representation (\ref{eq:floq3a}) for $P\left(z\right)$
with equations (\ref{eq:floq2ca}) and (\ref{eq:floq2f}) we obtain
\begin{equation}
P\left(z+\varsigma\right)=\Phi\left(z+\varsigma\right)\mathrm{e}^{-\Gamma\left(z+\varsigma\right)}=\Phi\left(z\right)\Phi\left(\varsigma\right)\mathrm{e}^{-\Gamma\varsigma}\mathrm{e}^{-\Gamma z}=\Phi\left(z\right)\mathrm{e}^{-\Gamma z}=P\left(z\right),\label{eq:floq3b}
\end{equation}
that is $P\left(z\right)$ is a differentiable $\varsigma$-periodic
matrix function of $z$. Equality $P\left(0\right)=\mathbb{I}$ readily
follows from equation (\ref{eq:floq3a}) and equality $\varPhi\left(0\right)=\mathbb{I}$.
\end{proof}
The eigenvalues of the monodromy matrix $\Phi\left(\varsigma\right)=\mathrm{e}^{\Gamma\varsigma}$
are called\emph{ Floquet (characteristic) multipliers} and their logarithms
(not uniquely defined) are called\emph{ characteristic exponents.}
\begin{defn}[Floquet multipliers, characteristic exponents and eigenmodes]
\label{def:floqmul} Using notation of Theorem \ref{thm:floquet}
let us consider complex numbers $\kappa$, $s_{\kappa}$ and vector
$y_{\kappa}$ satisfying the following equations
\begin{equation}
\Gamma y_{\kappa}=\kappa y_{\kappa},\quad\Phi\left(\varsigma\right)y_{\kappa}=\mathrm{e}^{-\Gamma\varsigma}y_{\kappa}=s_{\kappa}y_{\kappa},\quad s_{\kappa}=\mathrm{e}^{\kappa\varsigma},\label{eq:Gamyk1a}
\end{equation}
where evidently $\kappa$ and $y_{\kappa}$ are respectively an eigenvalue
and the corresponding eigenvector of matrix $\Gamma$. We refer to
$\kappa$ and $s_{\kappa}$ respectively as the \emph{Floquet characteristic
exponent} and the \emph{Floquet (characteristic) multiplier}.

Using $\kappa$ and $y_{\kappa}$ defined above we introduce the following
special solution to the original differential equation (\ref{eq:floq1a}):
\begin{equation}
\psi_{\kappa}\left(z\right)=p_{\kappa}\left(z\right)\mathrm{e}^{\kappa z}=\Phi\left(z\right)y_{\kappa}=P\left(z\right)\mathrm{e}^{\Gamma z}y_{\kappa},\quad p_{\kappa}\left(z\right)=P\left(z\right)y_{\kappa},\label{eq:Gamyk1b}
\end{equation}
and refer to it as the \emph{Floquet eigenmode}. Note that $p_{\kappa}\left(z\right)$
in equations (\ref{eq:Gamyk1b}) is $\varsigma$-periodic vector-function
of $z$.
\end{defn}

\begin{rem}[Floquet eigenmodes]
 If $\psi_{\kappa}\left(z\right)$ is the Floquet eigenmode defined
by equations (\ref{eq:Gamyk1b}) and $\Re\left\{ \kappa\right\} >0$
or equivalently $\left|s_{\kappa}\right|>1$ then $\psi_{\kappa}\left(z\right)$
grows exponentially as $z\rightarrow+\infty$ and we refer to such
$\psi_{\kappa}\left(z\right)$ as \emph{exponentially growing Floquet
eigenmode}. In the case when $\Re\left\{ \kappa\right\} =0$ or equivalently
$\left|s_{\kappa}\right|=1$ function $\psi_{\kappa}\left(z\right)$
is bounded and we refer to such $\psi_{\kappa}\left(z\right)$ as
an \emph{oscillatory Floquet eigenmode}. 
\end{rem}

\begin{rem}[dispersion relations]
 \label{rem:disprel} In physical applications of the Floquet theory
$\varsigma$-periodic matrix valued function $A\left(z\right)$ in
differential equation (\ref{eq:floq1a}) depends on the frequency
$\omega$, that is $A\left(z\right)=A\left(z,\omega\right)$. In this
case we also have $\kappa=\kappa\left(\omega\right)$. If we naturally
introduce the wave number $k$ by
\begin{equation}
k=k\left(\omega\right)=-\mathrm{i}\kappa\left(\omega\right),\label{eq:Gamyk1c}
\end{equation}
then the relation between $\omega$ and $k$ provided by equation
(\ref{eq:Gamyk1c}) is called the \emph{dispersion relation}.
\end{rem}

\section{Hamiltonian systems of linear differential equations\label{sec:Ham}}

We follow here to \cite[I.8, V.1]{DalKre} and \cite[III]{YakSta}.
We introduce first\emph{ indefinite scalar product} $\left\langle x,y\right\rangle $
on the vector space $\mathbb{C}^{n}$ associated with a nonsingular
Hermitian $n\times n$ matrix $G$, namely
\begin{equation}
\left\langle x,y\right\rangle =\overline{\left\langle y,x\right\rangle }=x^{*}Gy,\quad G^{*}=G,\quad\det\left\{ G\right\} \neq0,\quad x,y\in\mathbb{C}^{n}.\label{eq:Gindef1a}
\end{equation}
We refer to matrix $G$ \emph{metric matrix}. We define then for any
$n\times n$ matrix $A$ another matrix $A^{\dagger}$ called adjoint
by the following relations:
\begin{equation}
\left\langle Ax,y\right\rangle =\left\langle x,A^{\dagger}y\right\rangle \text{ or equivalently }A^{\dagger}=G^{-1}A^{*}G.\label{eq:Gindef1b}
\end{equation}
Notice that relations (\ref{eq:Gindef1b}) readily imply
\begin{equation}
\left(AB\right)^{\dagger}=B^{\dagger}A^{\dagger}.\label{eq:Gindef1c}
\end{equation}
\begin{table}
\centering{}%
\begin{tabular}{|c|c|c|}
\hline 
$G$-unitary & $G$-skew-Hermitian & $G$-Hermitian\tabularnewline
\hline 
\hline 
$\left\langle Ax,Ay\right\rangle =\left\langle x,y\right\rangle $ & $\left\langle Ax,y\right\rangle =-\left\langle x,Ay\right\rangle $ & $\left\langle Ax,y\right\rangle =\left\langle x,Ay\right\rangle $\tabularnewline
\hline 
$A^{\dagger}A=G^{-1}A^{*}GA=\mathbb{I},$ & $A^{\dagger}=G^{-1}A^{*}G=-A$, & $A^{\dagger}=G^{-1}A^{*}G=A,$\tabularnewline
\hline 
$A^{*}=GA^{-1}G^{-1}$ & $GA+A^{*}G=0$ & $GA-A^{*}G=0$\tabularnewline
\hline 
$A^{*}GA=G$ & $A=\mathrm{i}G^{-1}H,\;H=H^{*}$ & $A=G^{-1}H,\;H=H^{*}$\tabularnewline
\hline 
\end{tabular}\vspace{0.3cm}
\caption{\label{tab:Gunit}$G$-unitary, $G$-skew-Hermitian and $G$-Hermitian
matrices. }
\end{table}

Let $G$ and $H\left(t\right)$ be Hermitian $n\times n$ matrices
and suppose that matrix $G$ is nonsingular. We defined \emph{Hamiltonian
system of equations} to be a system of the form.
\begin{equation}
-\mathrm{i}G\partial_{t}x\left(t\right)=H\left(t\right)x\left(t\right),\quad H^{*}\left(t\right)=H\left(t\right).\label{eq:Gindefe1d}
\end{equation}
If based on matrices $G$ and $H\left(t\right)$ we introduce $G$-skew-Hermitian
matrix
\begin{equation}
A\left(t\right)=\mathrm{i}G^{-1}H\left(t\right),\label{eq:Gindefe1da}
\end{equation}
we can recast the Hamiltonian system (\ref{eq:Gindefe1d}) in the
following equivalent form,
\begin{equation}
\partial_{t}x\left(t\right)=A\left(t\right)x\left(t\right),\quad A^{\dagger}\left(t\right)=-A\left(t\right).\label{eq:Gindefe1e}
\end{equation}
It turns out that the matrizant $\Phi\left(t\right)$ of equation
(\ref{eq:Gindefe1e}) with $G$-skew-Hermitian matrix $A\left(t\right)$
is a $G$-unitary matrix for each value of $t$. Indeed, using equation
(\ref{eq:Gindefe1e}) together with equations (\ref{eq:Gindef1b}),
(\ref{eq:Gindef1c}) we obtain
\begin{gather}
\partial_{t}\left[\Phi^{\dagger}\left(t\right)\Phi\left(t\right)\right]=\left\{ \partial_{t}\left[\Phi\left(t\right)\right]\right\} ^{\dagger}\Phi\left(t\right)+\Phi^{\dagger}\left(t\right)\partial_{t}\left[\Phi\left(t\right)\right]=\label{eq:Gindefe1f}\\
=-\Phi^{\dagger}\left(t\right)A\left(t\right)\Phi\left(t\right)+\Phi^{\dagger}\left(t\right)A\left(t\right)\Phi\left(t\right)=0,\nonumber 
\end{gather}
implying that matrizant $\Phi\left(t\right)$ satisfies
\begin{equation}
\Phi^{\dagger}\left(t\right)\Phi\left(t\right)=\mathbb{I},\text{or equivalently }\Phi^{*}\left(t\right)G\Phi\left(t\right)=G,\label{eq:Gindefe2a}
\end{equation}
implying that $\Phi\left(t\right)$ is a $G$-unitary matrix for each
value of $t$. Identity (\ref{eq:Gindefe2a}) implies in turn that
for any two solutions $x\left(t\right)$ and $y\left(t\right)$ to
the Hamiltonian system (\ref{eq:Gindefe1d}) we always have
\begin{equation}
\left\langle x\left(t\right),y\left(t\right)\right\rangle =x^{*}\left(t\right)Gy\left(t\right)=x^{*}\left(0\right)\Phi^{*}\left(t\right)G\Phi\left(t\right)y\left(0\right)=\left\langle x\left(0\right),y\left(0\right)\right\rangle ,\label{eq:Gindefe2aa}
\end{equation}
that is $\left\langle x\left(t\right),y\left(t\right)\right\rangle $
does not depend on $t$.

\subsection{Symmetry of the spectra}

$G$-unitary, $G$-skew-Hermitian and $G$-Hermitian matrices have
special properties described in Table \ref{tab:Gunit}. These properties
can viewed as symmetries and not surprising they imply consequent
symmetries of the spectra of the matrices. Let $\sigma$$\left\{ A\right\} $
denote the spectrum of matrix $A$. It is a straightforward exercise
to verify based on matrix properties described in Table \ref{tab:Gunit}
that the following statements hold.
\begin{thm}[spectral symmetries]
\label{thm:sp-sym} Suppose that matrix $A$ is either $G$-unitary
or $G$-skew-Hermitian or $G$-Hermitian. Then the following statements
hold:
\begin{enumerate}
\item If $A$ is $G$-unitary then $\sigma$$\left\{ A\right\} $ is symmetric
with respect to the unit circle, that is
\begin{equation}
\zeta\in\sigma\left\{ \Phi\right\} \Rightarrow\frac{1}{\bar{\zeta}}\in\sigma\left\{ \Phi\right\} .\label{eq:Gindef2b}
\end{equation}
\item If $A$ is $G$-skew-Hermitian then $\sigma$$\left\{ A\right\} $
is symmetric with respect the imaginary axis, that is
\begin{equation}
\zeta\in\sigma\left\{ \Phi\right\} \Rightarrow-\bar{\zeta}\in\sigma\left\{ \Phi\right\} .\label{eq:Gindef2ba}
\end{equation}
\item If $A$ is $G$-Hermitian then $\sigma$$\left\{ A\right\} $ is symmetric
with respect to real axis, that is
\begin{equation}
\zeta\in\sigma\left\{ \Phi\right\} \Rightarrow\bar{\zeta}\in\sigma\left\{ \Phi\right\} .\label{eq:Gindef2bb}
\end{equation}
\end{enumerate}
\end{thm}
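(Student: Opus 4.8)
The plan is to reduce all three claims to two elementary facts about spectra, combined with the algebraic identities recorded in Table~\ref{tab:Gunit}. The first fact is that for any $n\times n$ matrix $M$ one has $\sigma\left\{M^{*}\right\}=\overline{\sigma\left\{M\right\}}$, since $\det\left\{s\mathbb{I}-M^{*}\right\}=\overline{\det\left\{\bar{s}\mathbb{I}-M\right\}}$ and so $\zeta$ is a root of the characteristic polynomial of $M$ iff $\bar{\zeta}$ is a root of that of $M^{*}$. The second fact is that similar matrices have identical spectra, and that if $M$ is invertible then $\sigma\left\{M^{-1}\right\}=\left\{1/\zeta:\zeta\in\sigma\left\{M\right\}\right\}$. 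With these in hand, each of the three symmetries follows by expressing $A^{*}$ in terms of $A$ via the appropriate row of Table~\ref{tab:Gunit}.

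First, in the $G$-unitary case I would observe that $A^{*}GA=G$ together with $\det\left\{G\right\}\neq0$ forces $\left|\det\left\{A\right\}\right|^{2}\det\left\{G\right\}=\det\left\{G\right\}$, hence $\det\left\{A\right\}\neq0$ and $A$ is invertible. Then the identity $A^{*}=GA^{-1}G^{-1}$ exhibits $A^{*}$ as similar to $A^{-1}$, so
$\overline{\sigma\left\{A\right\}}=\sigma\left\{A^{*}\right\}=\sigma\left\{A^{-1}\right\}=\left\{1/\zeta:\zeta\in\sigma\left\{A\right\}\right\}$. Translating the set identity pointwise: if $\zeta\in\sigma\left\{A\right\}$ then $\bar{\zeta}=1/\mu$ for some $\mu\in\sigma\left\{A\right\}$, i.e. $1/\bar{\zeta}=\mu\in\sigma\left\{A\right\}$, which is precisely relation~(\ref{eq:Gindef2b}).

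Next, in the $G$-skew-Hermitian case the identity $GA+A^{*}G=0$ gives $A^{*}=-GAG^{-1}$, so $A^{*}$ is similar to $-A$ and $\overline{\sigma\left\{A\right\}}=\sigma\left\{A^{*}\right\}=\sigma\left\{-A\right\}=-\sigma\left\{A\right\}$; hence $\zeta\in\sigma\left\{A\right\}$ implies $-\bar{\zeta}\in\sigma\left\{A\right\}$, which is~(\ref{eq:Gindef2ba}). Likewise, in the $G$-Hermitian case $GA-A^{*}G=0$ gives $A^{*}=GAG^{-1}$, similar to $A$, so $\overline{\sigma\left\{A\right\}}=\sigma\left\{A^{*}\right\}=\sigma\left\{A\right\}$, giving $\zeta\in\sigma\left\{A\right\}\Rightarrow\bar{\zeta}\in\sigma\left\{A\right\}$, which is~(\ref{eq:Gindef2bb}).

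There is essentially no hard step in this argument; the proof is bookkeeping with the adjoint identities of Table~\ref{tab:Gunit}. The only points that warrant a word of care are the invertibility of $A$ in the $G$-unitary case, needed to make sense of $A^{-1}$ and of the reciprocal map on the spectrum, and keeping straight the order in which the two operations $\zeta\mapsto\bar{\zeta}$ and $\zeta\mapsto1/\zeta$ are composed when passing from the set-level identities to the pointwise statements~(\ref{eq:Gindef2b})--(\ref{eq:Gindef2bb}). As an alternative one could argue directly with eigenvectors: from $Ax=\zeta x$ and, say, $A^{*}G=GA^{-1}$ one checks that $Gx$ is an eigenvector of $A^{*}$ with eigenvalue $1/\bar{\zeta}$ (respectively $-\bar{\zeta}$, $\bar{\zeta}$ in the other two cases), whence $1/\bar{\zeta}\in\sigma\left\{A^{*}\right\}$ and therefore $1/\zeta\in\sigma\left\{A\right\}$; I would mention this variant but carry out the cleaner spectral-mapping version above.
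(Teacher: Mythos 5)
Your argument is correct and is essentially the route the paper intends: the paper offers no written proof, remarking only that the theorem is a straightforward exercise based on the properties in Table \ref{tab:Gunit}, and your spectral-mapping verification (the facts $\sigma\{A^{*}\}=\overline{\sigma\{A\}}$ and invariance of spectra under similarity and inversion, combined with $A^{*}=GA^{-1}G^{-1}$, $A^{*}=-GAG^{-1}$, $A^{*}=GAG^{-1}$ in the three cases) supplies exactly that verification, including the needed invertibility of $A$ in the $G$-unitary case. One small caution about the eigenvector variant you mention only in passing: from $A^{*}G=GA^{-1}$ and $Ax=\zeta x$ one gets $A^{*}(Gx)=\zeta^{-1}Gx$, so $Gx$ has $A^{*}$-eigenvalue $1/\zeta$ (not $1/\bar{\zeta}$), and it is then the identity $\sigma\{A^{*}\}=\overline{\sigma\{A\}}$ that converts $1/\zeta\in\sigma\{A^{*}\}$ into the desired $1/\bar{\zeta}\in\sigma\{A\}$; as written, that aside's bookkeeping would conclude $1/\zeta\in\sigma\{A\}$, which is not the stated symmetry, though your main proof is unaffected.
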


The following statement describes $G$-orthogonality of invariant
subspaces of $G$-unitary, $G$-skew-Hermitian and $G$-Hermitian
matrices, \cite[1.8]{DalKre}.
\begin{thm}[eigenspaces]
\label{thm:G-eig} Suppose that matrix $A$ is either $G$-unitary
or $G$-skew-Hermitian or $G$-Hermitian. Then the following statements
hold. Let $\Lambda\subset\sigma\left\{ A\right\} $ be a subset of
the spectrum $\sigma\left\{ A\right\} $ of the matrix $A$, and let
$\tilde{\Lambda}$ be the relevant symmetric image of $\Lambda$ defined
by
\[
\tilde{\Lambda}=\left\{ \begin{array}{rrr}
\left\{ \frac{1}{\bar{\zeta}}:\zeta\in\Lambda\right\}  & \text{if} & A\text{ is \ensuremath{G}-unitary }\\
\left\{ -\bar{\zeta}:\zeta\in\Lambda\right\}  & \text{if} & A\text{ is \ensuremath{G}-skew-Hermitian }\\
\left\{ \bar{\zeta}:\zeta\in\Lambda\right\}  & \text{if} & \text{ is \ensuremath{G}-Hermitian }
\end{array}\right..
\]
Let $\Lambda_{1},\Lambda_{2}\subset\sigma\left\{ A\right\} $ be two
subsets of the spectrum $\sigma\left\{ A\right\} $ so that $\tilde{\Lambda}_{1}$
and $\Lambda_{2}$ are separated from each other by non-intersecting
contours $\tilde{\Gamma}_{1}$ and $\varGamma_{2}$. Then the invariant
subspaces $E_{1}$ and $E_{1}$ of the matrix $A$ corresponding to
$\Lambda_{1}$ and $\Lambda_{2}$ are $G$-orthogonal.
\end{thm}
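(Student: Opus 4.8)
The plan is to realise $E_{1},E_{2}$ as ranges of Riesz spectral projections of $A$ and to reduce the asserted $G$-orthogonality to a single operator identity. For a spectral subset $\Lambda\subset\sigma\left\{ A\right\} $ enclosed (together with no other eigenvalue of $A$) by a contour $\Gamma$, put
\[
P_{\Lambda}=\frac{1}{2\pi\mathrm{i}}\oint_{\Gamma}\left(\zeta\mathbb{I}-A\right)^{-1}\,\mathrm{d}\zeta,
\]
so that $E_{j}=\operatorname{Ran}P_{\Lambda_{j}}$. For $x=P_{\Lambda_{1}}u$ and $y=P_{\Lambda_{2}}v$ one has $\left\langle x,y\right\rangle =u^{*}P_{\Lambda_{1}}^{*}GP_{\Lambda_{2}}v=u^{*}G\,P_{\Lambda_{1}}^{\dagger}P_{\Lambda_{2}}\,v$ with $\dagger$ the $G$-adjoint of (\ref{eq:Gindef1b}); since $G$ is nonsingular, the equality $\left\langle E_{1},E_{2}\right\rangle =\left\{ 0\right\} $ is equivalent to the identity $P_{\Lambda_{1}}^{\dagger}P_{\Lambda_{2}}=0$.

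The core of the argument is to identify $P_{\Lambda_{1}}^{\dagger}$ with a Riesz projection of $A$ itself. First I would transpose-and-conjugate the contour integral: reflecting $\Gamma_{1}$ across the real axis to $\overline{\Gamma}_{1}$ reverses its orientation, and this reversal is cancelled by $\overline{1/\left(2\pi\mathrm{i}\right)}=-1/\left(2\pi\mathrm{i}\right)$, so that $P_{\Lambda_{1}}^{*}=\frac{1}{2\pi\mathrm{i}}\oint_{\overline{\Gamma}_{1}}\left(\zeta\mathbb{I}-A^{*}\right)^{-1}\,\mathrm{d}\zeta$ is precisely the Riesz projection of $A^{*}$ onto $\overline{\Lambda}_{1}=\left\{ \bar{\zeta}:\zeta\in\Lambda_{1}\right\} $. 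Next I would use the defining similarity of $A$ from Table \ref{tab:Gunit}: $A^{*}=GA^{-1}G^{-1}$ in the $G$-unitary case (there $A$ is invertible since $\left|\det A\right|^{2}=1$), $A^{*}=-GAG^{-1}$ in the $G$-skew-Hermitian case, and $A^{*}=GAG^{-1}$ in the $G$-Hermitian case. In each case $\left(\zeta\mathbb{I}-A^{*}\right)^{-1}=G\left(\zeta\mathbb{I}-B\right)^{-1}G^{-1}$ with $B=A^{-1}$, $B=-A$, $B=A$ respectively, hence $P_{\Lambda_{1}}^{*}=G\,P_{\overline{\Lambda}_{1}}^{B}\,G^{-1}$. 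Since a spectral subset of $A^{-1}$ (resp. $-A$) corresponds under $\zeta\mapsto1/\zeta$ (resp. $\zeta\mapsto-\zeta$) to a spectral subset of $A$ with the same invariant subspace, composing this with $\zeta\mapsto\bar\zeta$ gives in all three cases $P_{\overline{\Lambda}_{1}}^{B}=P_{\tilde{\Lambda}_{1}}$, the Riesz projection of $A$ onto the symmetric image $\tilde{\Lambda}_{1}$ from the statement. Therefore $P_{\Lambda_{1}}^{\dagger}=G^{-1}P_{\Lambda_{1}}^{*}G=P_{\tilde{\Lambda}_{1}}$.

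It then remains to observe that $\tilde{\Lambda}_{1}\subset\sigma\left\{ A\right\} $ by the spectral symmetries of Theorem \ref{thm:sp-sym}, and that $\tilde{\Lambda}_{1}$ and $\Lambda_{2}$ are disjoint because they are separated by the non-intersecting contours $\tilde{\Gamma}_{1}$ and $\Gamma_{2}$. Spectral projections of one and the same operator attached to disjoint spectral subsets multiply to zero: choosing mutually exterior contours and applying the resolvent identity $\left(\zeta\mathbb{I}-A\right)^{-1}\left(\mu\mathbb{I}-A\right)^{-1}=\left(\mu-\zeta\right)^{-1}\left[\left(\zeta\mathbb{I}-A\right)^{-1}-\left(\mu\mathbb{I}-A\right)^{-1}\right]$ yields $P_{\tilde{\Lambda}_{1}}P_{\Lambda_{2}}=0$. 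Combined with $P_{\Lambda_{1}}^{\dagger}=P_{\tilde{\Lambda}_{1}}$ this gives $P_{\Lambda_{1}}^{\dagger}P_{\Lambda_{2}}=0$, i.e. $\left\langle E_{1},E_{2}\right\rangle =\left\{ 0\right\} $, which is the $G$-orthogonality of $E_{1}$ and $E_{2}$.

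I expect the main obstacle to be the bookkeeping in the middle step: fixing the orientation of the reflected contour so that $P_{\Lambda_{1}}^{*}$ is genuinely the spectral projection of $A^{*}$ onto $\overline{\Lambda}_{1}$ and not onto its complementary set, and then checking that the net effect of the two successive index maps (namely $\zeta\mapsto\bar{\zeta}$ from the transpose, followed by $\zeta\mapsto1/\zeta$, $\zeta\mapsto-\zeta$, or the identity coming from the $G$-similarity) is precisely the map defining $\tilde{\Lambda}_{1}$ in each of the three cases. The other steps — the reduction to $P_{\Lambda_{1}}^{\dagger}P_{\Lambda_{2}}=0$ and the vanishing of a product of spectral projections attached to disjoint spectral sets — are routine.
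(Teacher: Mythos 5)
Your argument is correct. Note, however, that the paper contains no proof of this statement to compare against: Theorem \ref{thm:G-eig} is quoted there with a citation to \cite[1.8]{DalKre}, so your write-up is in effect supplying the omitted proof, and it does so along the standard lines of that reference. The chain of identities is sound: $\langle P_{\Lambda_1}u, P_{\Lambda_2}v\rangle = u^{*}GP_{\Lambda_1}^{\dagger}P_{\Lambda_2}v$ reduces the claim to $P_{\Lambda_1}^{\dagger}P_{\Lambda_2}=0$; the conjugate-transposed contour integral (with the orientation reversal cancelled by $\overline{1/(2\pi\mathrm{i})}=-1/(2\pi\mathrm{i})$) identifies $P_{\Lambda_1}^{*}$ with the Riesz projection of $A^{*}$ onto $\overline{\Lambda}_1$; the Table \ref{tab:Gunit} similarities $A^{*}=GA^{-1}G^{-1}$, $-GAG^{-1}$, $GAG^{-1}$ convert this to $GP_{\tilde{\Lambda}_1}G^{-1}$, since the spectral subspaces of $A^{-1}$, $-A$, $A$ at $\mu$ coincide with those of $A$ at $1/\mu$, $-\mu$, $\mu$, and the composite index maps reproduce exactly the three definitions of $\tilde{\Lambda}$; finally $\tilde{\Lambda}_1\cap\Lambda_2=\emptyset$ (which is the content of the separation hypothesis, with $\tilde{\Lambda}_1\subset\sigma\{A\}$ guaranteed by Theorem \ref{thm:sp-sym} and the involutive character of the symmetry) gives $P_{\tilde{\Lambda}_1}P_{\Lambda_2}=0$ by the resolvent identity. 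The only points worth making explicit in a final version are the ones you already flag as bookkeeping, plus the observation that the symmetry map is a bijection of $\sigma\{A\}$, so $\tilde{\Lambda}_1$ is genuinely an isolated spectral subset and $P_{\tilde{\Lambda}_1}$ is well defined. An alternative, more elementary route — prove $\langle x,y\rangle=0$ first for eigenvectors from the relations $\langle Ax,y\rangle=\langle x,A^{\dagger}y\rangle$ and then climb the Jordan chains by induction — avoids contour integrals but is lengthier; your projection argument handles root subspaces in one stroke and is the cleaner choice here.
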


The statement below describes a special property of eigenvectors of
a $G$-unitary matrix.
\begin{lem}[isotropic eigenvector]
\label{lem:isoeig} Let $A$ be a $G$-unitary matrix and $\zeta$
be its eigenvalue that does not lie on the unit circuit, that $\left|\zeta\right|\neq1$.
Then if $x$ is the eigenvector corresponding to $\zeta$ it is isotropic,
that is
\begin{equation}
\left\langle x,x\right\rangle =x^{*}Gx=0,\quad Ax=\zeta x,\quad\left|\zeta\right|\neq1.\label{eq:Gindef2d}
\end{equation}
\end{lem}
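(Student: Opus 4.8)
The plan is to derive the isotropy of $x$ directly from the defining identity of a $G$-unitary matrix, namely $\langle Ax,Ay\rangle=\langle x,y\rangle$ for all $x,y\in\mathbb{C}^{n}$ (equivalently $A^{*}GA=G$), as recorded in Table~\ref{tab:Gunit}. Once that identity is in hand the argument is a three-line computation, so essentially all of the ``work'' is invoking the correct definition from Appendix~\ref{sec:Ham}.

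First I would specialize the $G$-unitarity relation to the case $y=x$, obtaining $\langle Ax,Ax\rangle=\langle x,x\rangle$. Next I would substitute the eigenvalue equation $Ax=\zeta x$ into the left-hand side and use the sesquilinearity of the indefinite scalar product in (\ref{eq:Gindef1a}) --- conjugate-linear in the first argument, linear in the second --- to pull the scalars out: $\langle\zeta x,\zeta x\rangle=\bar{\zeta}\zeta\langle x,x\rangle=|\zeta|^{2}\langle x,x\rangle$. Equating this with $\langle x,x\rangle$ gives $\left(|\zeta|^{2}-1\right)\langle x,x\rangle=0$.

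Finally, the hypothesis $|\zeta|\neq1$ forces the scalar $|\zeta|^{2}-1$ to be nonzero, so it can be divided out, leaving $\langle x,x\rangle=x^{*}Gx=0$, which is exactly the assertion (\ref{eq:Gindef2d}). There is no genuine obstacle in this proof; the only step demanding care is the conjugation bookkeeping in the second move, i.e.\ ensuring that the scalar extracted from the \emph{first} slot of $\langle\cdot,\cdot\rangle$ is $\bar{\zeta}$ rather than $\zeta$, so that the resulting coefficient is $|\zeta|^{2}$ and not $\zeta^{2}$ --- a slip there would wreck the conclusion. Everything else follows immediately from the definitions.
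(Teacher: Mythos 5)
Your proposal is correct and follows exactly the same route as the paper's own proof: specialize $\left\langle Ax,Ax\right\rangle =\left\langle x,x\right\rangle$ to the eigenvector, pull out $\left|\zeta\right|^{2}$ by sesquilinearity, and divide by $\left|\zeta\right|^{2}-1\neq0$. Nothing is missing.
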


\begin{proof}
Since $Ax=\zeta x$ and $A$ is a $G$-unitary we have

\[
\left\langle Ax,Ax\right\rangle =\left\langle \zeta x,\zeta x\right\rangle =\left|\zeta\right|^{2}\left\langle x,x\right\rangle ,\quad\left\langle Ax,Ax\right\rangle =\left\langle x,x\right\rangle .
\]
Combining the two equation above with $\left|\zeta\right|\neq1$ we
conclude that $\left\langle x,x\right\rangle =0$ which is the desired
equation (\ref{eq:Gindef2b}).
\end{proof}
\textbf{\vspace{0.1cm}
}

\textbf{DATA AVAILABILITY:} The data that supports the findings of
this study are available within the article.

\end{document}